\documentclass[a4paper,10pt]{article}
\usepackage{comment}
\usepackage[utf8]{inputenc}
\usepackage{enumerate}
\usepackage[OT4]{fontenc}
\usepackage{graphicx}
\usepackage{amssymb}
\usepackage{amsmath}
\usepackage{amsfonts}
\usepackage{amsthm}
\usepackage{algorithm}
\usepackage[noend]{algpseudocode}
\usepackage{algorithmicx}
\usepackage{tikz}
\usepackage[all,cmtip]{xy}
\usepackage{caption}
\usepackage{tikzscale}
\usepackage{authblk}
\usepackage{fullpage}
\usepackage{enumitem}

\usetikzlibrary{backgrounds}
\usetikzlibrary{patterns}
\usetikzlibrary{decorations.pathmorphing}
\usetikzlibrary{decorations.pathreplacing}

\newlength{\xywd}
\newcommand{\xyrightarrow}[2][]{%
  \sbox{0}{$\scriptstyle#1$}%
  \xywd=\wd0
  \sbox{0}{$\scriptstyle#2$}%
  \ifdim\wd0>\xywd \xywd=\wd0 \fi
  \xymatrix@C\dimexpr\xywd+1em\relax{{}\ar[r]^{#2}_{#1}&{}}%
}

\newtheorem{theorem}{Theorem}[section]
\newtheorem{lemma}[theorem]{Lemma}

\newtheorem{fact}[theorem]{Fact}
\newtheorem{definition}[theorem]{Definition}
\newtheorem{corollary}[theorem]{Corollary}

\theoremstyle{remark}
\newtheorem{remark}[theorem]{Remark}

\newcommand{\dual}[1]{\ensuremath{#1^{*}}}

\newcommand{\contr}[2]{\ensuremath{#1 / #2}}
\newcommand{\remove}[2]{\ensuremath{#1 - #2}}

\newcommand{\actcols}[1]{\ensuremath{\mathrm{act}(#1)}}
\newcommand{\aparts}{\ensuremath{\mathcal{A}}}
\newcommand{\prowmap}{\ensuremath{\mathrm{r}\pi}}
\newcommand{\pcolmap}{\ensuremath{\mathrm{c}\pi}}

\newcommand{\cset}{\ensuremath{\mathrm{can}}}

\newcommand{\rtree}{\ensuremath{\mathcal{T}}}
\newcommand{\bnd}[1]{\ensuremath{\partial #1}}

\newcommand{\intc}{\ensuremath{\mathrm{In}}}
\newcommand{\extc}{\ensuremath{\mathrm{Ex}}}
\newcommand{\gtc}{\ensuremath{\mathcal{R}}}
\newcommand{\leaftc}{\ensuremath{\mathrm{In}^*}}
\newcommand{\child}{\ensuremath{\mathrm{child}}}
\newcommand{\swon}[1]{\ensuremath{\overline{#1}}}
\newcommand{\curv}{\ensuremath{\mathcal{C}}}

\newcommand{\dirpath}[3]{\ensuremath{#1 \xyrightarrow{\scriptscriptstyle{#3}} #2}}

\newif\ifshort
\newif\iffull
\fulltrue

\pagestyle{plain}

\begin{document}

\begin{titlepage}
\date{}
  \title{Decremental Single-Source Reachability in Planar Digraphs}
  \author[1]{Giuseppe F. Italiano\thanks{Partly supported by the Italian Ministry of Education,
  University and Research under Project AMANDA (Algorithmics for MAssive and Networked DAta).}}
  \author[2]{Adam Karczmarz\thanks{Supported by the Polish National Science Center grant number 2014/13/B/ST6/01811.}}
  \author[3]{\\Jakub Łącki\thanks{This work was done when the author was partly supported by the EU FET project MULTIPLEX no. 317532, the Google Focused Award on "Algorithms for Large-scale Data Analysis" and Polish National Science Center grant number 2014/13/B/ST6/01811. Part of this work was done while Jakub Łącki was visiting the Simons Institute for the Theory of Computing.}}
  \author[2]{Piotr Sankowski\thanks{Supported by Polish National Science Center grant number 2014/13/B/ST6/01811. Part of this work was done while Piotr Sankowski was visiting the Simons Institute for the Theory of Computing.}\bigskip}

\affil[1]{University of Rome Tor Vergata, Italy}
\affil[ ]{\texttt{giuseppe.italiano@uniroma2.it}\medskip}

\affil[2]{University of Warsaw, Poland}
\affil[ ]{\texttt{\{a.karczmarz|sank\}@mimuw.edu.pl} \medskip}

\affil[3]{Google Research, New York, USA}
\affil[ ]{\texttt{jlacki@google.com}}

\maketitle

  \begin{abstract}
In this paper we show a new algorithm for the decremental single-source reachability problem in directed planar graphs.
It processes any sequence of edge deletions in $O(n\log^2{n}\log\log{n})$ total time and explicitly maintains the set of vertices reachable from a fixed source vertex.
Hence, if all edges are eventually deleted, the amortized time of processing each edge deletion is only $O(\log^2 n \log \log n)$, which improves upon a previously known $O(\sqrt{n}\,)$ solution.
We also show an algorithm for decremental maintenance of strongly connected components in directed planar graphs with the same total update time.
These results constitute the first almost optimal (up to polylogarithmic factors) algorithms for both problems.

To the best of our knowledge, these are the first dynamic algorithms
  with polylogarithmic update times on general directed planar graphs
  for non-trivial reachability-type problems, for which
  only polynomial bounds are known in general graphs.

\end{abstract}

\end{titlepage}

\section{Introduction}\label{sec:introduction}
The design of dynamic graph algorithms and data structures is one of
the classical areas in theoretical computer science.
Dynamic graph algorithms are designed to answer queries about a given property
while the underlying graph is subject to updates, such as inserting or deleting a vertex or an edge.
A dynamic algorithm is said to be \emph{incremental} if it handles only insertions,
\emph{decremental} if it handles only deletions,
and \emph{fully dynamic} if it handles both insertions and deletions.
Typically, one is interested in very small query times (either constant or polylogarithmic),
while minimizing the update times, whereas
the ultimate goal is to have both query and update times either constant or polylogarithmic.
This quest for obtaining polylogarithmic time algorithms has been so far successful only in few cases.
Indeed, efficient dynamic algorithms with \emph{polylogarithmic} time per update are known
only for few problems, and are mostly limited to undirected graphs, such as dynamic connectivity,
2-connectivity,  minimum spanning tree and maximal matchings
(see, e.g., \cite{bgs2015,HeTh97,HoLiTh01,Holm15,KaKiMo13,S16,Thorup00, Nilsen13}).
On the other hand, dynamic problems on directed graphs are notoriously harder.\footnote{
The only exception being incremental single-source reachability that has
trivial constant amortized update and query algorithm.}
For example, the fastest algorithms for basic dynamic problems like reachability
and transitive closure, have only \emph{polynomial} times per update
(see, e.g., \cite{DI08,K99,RZ08,S04}).
Similarly, polynomial algorithms are only known for dynamic shortest paths
that do not seem to be easier on undirected graphs than on directed ones~\cite{DI04,K99,T05}.

In this paper we consider the \emph{decremental single-source reachability} problem,
in which we are given a directed graph $G$ and a source node $s$ and the goal is to
maintain the set of vertices that are reachable from vertex $s$, subject to edge deletions.
Differently from undirected graphs, where polylog amortized bounds per update have been known
for more than one decade even in the fully dynamic setting (see, e.g., \cite{Thorup00}),
for directed graphs there has been very limited progress on this problem.
In fact, up until few years ago, the best known algorithm for decremental single-source reachability
had $O(1)$ query time and $O(mn)$ total update time
(i.e., $O(n)$ amortized time per update if all edges are deleted).
This bound was simply achieved with Even-Shiloach trees~\cite{ES81},
and it stood for over 30 years.
In a recent breakthrough, Henzinger et al.~\cite{HenzingerKN14} presented
a randomized decremental single-source reachability algorithm with
total update time $O(m n^{0.984 + o(1)})$, which
they later improved to $O(m n^{0.9 + o(1)})$ \cite{HenzingerKN15}.
Very recently, Chechik et al.~\cite{CHILP16} improved the total update time to
$\widetilde{O}(m \sqrt{n}\,)$.

A closely related problem to decremental single-source reachability
is \emph{decremental strong connectivity}, where we wish to answer queries of the form:
``Given two vertices $x$ and $y$, do $x$ and $y$ belong to the same strongly connected component?'', subject to edge deletions.
This problem is known to be almost equivalent to decremental single-source reachability
(see e.g.,~\cite{CHILP16,HenzingerKN14,RZ08}), and the randomized algorithm by
Chechik et al.~\cite{CHILP16} can solve also decremental strong connectivity in constant time per query and
$\widetilde{O}(\sqrt{n}\,)$ amortized time per update, over any sequence of $\Omega(m)$ deletions.
Again, the undirected version of this problem can be solved much faster, i.e.,
in polylog amortized time per update, even in the fully dynamic setting~\cite{Thorup00}.
Motivated by the limited progress on some dynamic graph problems,
in their seminal work Abboud and Vassilevska-Williams~\cite{AW14} proved conditional polynomial lower bounds,
based on popular conjectures, on several dynamic problems, including dynamic shortest paths,
dynamic single-source reachability and dynamic strong connectivity.

Similar situation holds even for planar graphs where dynamic problems have been studied extensively,
see e.g.~\cite{Abraham2012, Diks2007,  Eppstein96, Eppstein92,  Giammarresi96, Gustedt98, INSW11, scc-decomposition, steiner-tree, Lacki2011, decremental-connectivity,  Sub-ESA-93}.
Despite this effort, the best known algorithms for some basic problems on planar graphs,
such as dynamic shortest paths and decremental single-source reachability, 
still have polynomial update time bounds.
For instance, for dynamic shortest paths on planar graphs the best known bound per operation
is $\widetilde{O}(n^{2/3})$ amortized~\cite{FR06,GK16,INSW11,KMNS12,K05}.
Very recently, Abboud and Dahlgaard~\cite{AD16} proved polynomial update time lower bounds
for dynamic shortest paths on planar graphs, again based on popular conjectures.
In particular, they showed that obtaining $O(n^{1/2-\epsilon})$ bounds, for any $\epsilon>0$,
for dynamic shortest paths on planar graphs would yield a breakthrough for the all-pairs
shortest paths problem in general graphs.
Quite surprisingly, this lower bound almost matches
the best running times for two related problems: dynamic reachability~\cite{Diks2007}
and dynamic approximate shortest paths~\cite{Abraham2012}.
Hence, it might seem that the final answer to these problems is $\widetilde{O}(\sqrt{n})$.

Moreover, no polynomially faster algorithms are known even for decremental single-source
reachability on general planar graphs.
The algorithm by Łącki~\cite{scc-decomposition} solves both decremental single-source
reachability and decremental strong connectivity in a total of
\ifshort
\linebreak
\fi
$O(n\sqrt{n}\,)$ time, under any sequence of edge deletions. Note that this bound
is only logarithmic factors away from the $\widetilde{O}(m \sqrt{n}\,)$ bound on general graphs~\cite{CHILP16}.
Only in the very restricted case of $st$-planar graphs, i.e., planar acyclic
digraphs with exactly one source and exactly one sink, it was known for more than two decades
how to solve the dynamic reachability problem in $O(\log n)$ time per query and update~\cite{TP90,TT93},
provided that the edge insertions do not violate the embedding of the $st$-planar graph.

In this paper, we break through the natural $\widetilde{O}(\sqrt{n}\,)$ time barrier for
directed problems on planar graphs and present new decremental single-source reachability
and decremental strong connectivity algorithms for planar graphs with total update time of
\ifshort
\linebreak
\fi
$O(n\log^2{n}\log\log{n})$, i.e., in $O(\log^2{n}\log\log{n})$ amortized time per update,
over any sequence of $\Omega(n)$ deletions, and $O(1)$ time per query.
This result not only improves substantially the previously best known
$O(n\sqrt{n}\,)$ bound~\cite{scc-decomposition}, but it also
constitutes the first almost optimal (up to polylog factors) algorithm for those problems.
To the best of our knowledge, our result is the first nontrivial dynamic algorithm for
reachability problems on directed planar graphs with polylog update bound.
We hope that this result will pave a way for obtaining polylogarithmic
algorithms for dynamic directed problems on planar and possibly even general graphs.

\paragraph{Overview.}
Our improved algorithms for decremental
\ifshort
\linebreak
\fi
single-source reachability and decremental
\iffull
\linebreak
\fi
strongly connected components are based on several new ideas and techniques.
First, we explore in a somewhat non-traditional way
the relation between primal and dual graph.
Indeed, in Section~\ref{sec:switch-on}, we show a formal reduction from decremental single-source reachability
to the \emph{switch-on reachability} problem that we introduce.
In the switch-on reachability problem we are given a directed graph, where all edges are
initially off and can be switched on in a dynamic fashion.
The goal is to maintain, for each edge $uw$, whether there is a path from
$w$ to $u$ that consists solely of edges that are on.

In Section~\ref{sec:matrix}, we analyze the structural properties of a reachability matrix of a set of vertices\footnote{
A reachability matrix of a set $S$ is a submatrix of the transitive closure matrix
consisting of rows and columns corresponding to vertices belonging to $S$.} that, roughly speaking,
lie on a constant number of simple faces, i.e., faces bounded with a single simple cycle.
We call such a matrix a face reachability matrix.
The matrix viewpoint allows us to obtain properties that are algorithmically useful and otherwise not easy to capture
using the previously used \emph{separating path} approach to
reachability in planar digraphs \cite{Diks2007, Sub-ESA-93, Thorup:2004}.
Those results are instrumental for 
designing a novel algorithm that can incrementally maintain the transitive closure
of a graph defined by the union of two face reachability matrices that undergo incremental updates.
This algorithm is described in Section~\ref{sec:monge-trans} and is perhaps one of the the key technical achievements of this paper.

In order to solve the switch-on reachability problem, we combine the algorithm of Section~\ref{sec:monge-trans}
with a \emph{recursive decomposition} of a planar graph $G$,
which is a tree-like hierarchy of subgraphs of a graph $G$ (pieces) built by
recursively partitioning $G$ with small separators.
For each piece $H$, the size of the boundary $\bnd{H}$ of $H$ (i.e. the set of vertices of $H$ shared with pieces that are not descendants of $H$) is small.
Moreover, we require the boundary of $H$ to lie on a constant number of faces of $H$.
Such decompositions proved very useful in obtaining near-linear planar
graph algorithms, e.g., \cite{Borradaile:2015, INSW11, Lacki:2012, Lacki2011},
as well as dynamic planar graph algorithms \cite{Diks2007, FR06, Mozes:2012}.
For each piece $H$ of the decomposition, we dynamically maintain the face reachability matrix of
the set $\bnd{H}$ in both $H$ and its complement $G-(H-\bnd{H})$.
The idea of maintaining the information about the complements of the pieces of the decomposition
has been used previously \cite{Borradaile:2015, Lacki:2012, Mozes:2012}.
However, to the best of our knowledge, in our paper this idea is used for the first time in a dynamic setting.
Moreover, we observe that in order to benefit from the Monge property of paths in the complement $G-(H-\bnd{H})$,
the bounding cycles of faces constituting the boundary of $H$ have to be simple.
Unfortunately, none of the state-of-the-art recursive decomposition algorithms \cite{Borradaile:2015, Klein:13}
produces decompositions with such a property.
To overcome this issue, in Section~\ref{sec:preliminaries} we
propose an improved recursive decomposition that guarantees this property by possibly extending the input graph~$G$.
We believe that this issue has been overlooked in the previous works that
used the Monge property for complements of pieces \cite{Borradaile:2015, Lacki:2012, Mozes:2012}.
Finally, Section~\ref{sec:dag-trans} puts all the ingredients together
to obtain our $O(n\log^2{n}\log\log{n})$ time algorithm for the switch-on reachability problem.

Our result can be further extended to decrementally
maintain~the \emph{maximal (wrt. inclusion) 2-edge-connected subgraphs} of a planar digraph.
For $C\subseteq V$, an induced subgraph $G[C]$ is 2-edge-connected
iff it is strongly connected and contains no edges
whose removal would make $G[C]$ no longer strongly connected.
The 2-edge-connectivity in digraphs has been often studied in recent
years (see e.g., \cite{Chechik:2017, Georgiadis:2016, Henzinger:2015}).
In particular, the \emph{static} computation of the maximal 2-edge-connected subgraphs
proved to be a challenging task.
The best known bounds for general digraphs are $O(n^2)$ for dense graphs \cite{Henzinger:2015}
and $O(m^{3/2})$ for sparse graphs \cite{Chechik:2017}.
We show that for planar digraphs, the maximal 2-edge-connected subgraphs
can be computed in $O(n\log^2\log\log{n})$ time.
Moreover, they can be maintained subject to edge deletions
within the same total time bound.

Another extension is an incremental transitive closure algorithm,
in which the embedding of the final graph is given upfront.
In other words, we are given a directed planar graph in which all edges are off,
each update operation switches some edge on and the goal is to answer reachability
queries with respect to the edges that are currently on.
We note that the same ``switch-on'' model was considered in~\cite{Gustedt98} for undirected reachability.
The reachability queries are answered in $\widetilde{O}(\sqrt{n})$ time, and any sequence of updates is
handled in $O(n\log^2{n}\log\log{n})$ time.
This result shows that progress can be made also in the case of
incremental planar graph transitive closure.
Moreover, it pinpoints the hardness of incremental planar graphs problems,
where few results have been obtained so far, as it is not known how to incrementally maintain small separators.

\section{Preliminaries}\label{sec:preliminaries}
Let $G = (V, E)$ be a graph.
We let $V(G)$ and $E(G)$ denote the vertex and edge set of $G$, respectively.
Assume that $G$ is a \emph{directed} graph (digraph).
A set of vertices $S \subseteq V$ is \emph{strongly connected} if there is a path in $G$ between every two vertices of $S$.
A \emph{strongly connected component} (SCC) of $G$ is a
maximal (w.r.t. inclusion) strongly connected subset of vertices.

Throughout, 
we use the term planar digraph to denote a directed planar multigraph.
In particular, we allow the digraphs to have parallel edges and self-loops.
Formally, there might exist multiple edges $e_1,e_2,\ldots$, such that
each $e_i$ connects the same pair of vertices $u,v\in V$.
We use the notation $uv\in E$ to denote any of the edges $uv$,
whereas when we write $uv=e\in E$, we mean some specific edge $e$
going from $u$ to $v$.

Even though we allow the graphs not to be simple, we assume throughout
the paper that $|E|=O(|V|)$.
This is justified by the fact that in the problems we solve,
self-loops and parallel edges can be ignored.
However, parallel edges and self-loops might arise as we transform our problem.
To simplify the presentation, we do not prune them, but instead guarantee that
at any time the number of edges remains linear in the number of vertices.

Let $e \in E(G)$.
We denote by $\remove{G}{e}$ the graph obtained from $G$ by removing $e$ and by $\contr{G}{e}$ the graph obtained by contracting $e$.

Finally, if $G$ is a directed graph and $u, w \in V(G)$, we use $\dirpath{u}{w}{G}$ to denote a directed path from $u$ to $w$ in $G$.
The graph is sometimes omitted, if it is clear from the context.
\paragraph{Planar Graph Embeddings and Duality.}
We provide intuitive definitions below.
For formal definitions, see e.g.~\cite{diestel}. An embedding of a planar graph is a mapping of its vertices to distinct points
and of its edges to non-crossing curves in the plane.
We say that a planar graph $G$ is \emph{plane embedded} (or \emph{plane}, in short),
if some embedding of $G$ is assumed.
The face of a connected plane $G$ is a maximal open connected set of points that are not
in the image of any vertex or edge in the embedding of $G$.
There is exactly one \emph{unbounded} face.

The \emph{bounding cycle} of a bounded (unbounded, resp.)
face $f$ is a cycle consisting of edges bounding
$f$ in clockwise (counterclockwise, resp.) order. Here, we ignore the directions of edges.
The face is called \emph{simple} if and only if its bounding
cycle is a simple cycle.

By the Jordan Curve Theorem, a Jordan curve $\curv$ partitions
$\mathbb{R}^2\setminus \curv$ into two connected
regions, a bounded one $B$ and an unbounded one $U$.
We say that a set of points $P$ is \emph{strictly inside} (\emph{strictly outside}, resp.) $\curv$,
if and only if $P\subseteq B$ ($P\subseteq U$, resp.).
$P$ is \emph{weakly inside} (\emph{weakly outside}, resp.) if and only if $P\subseteq B\cup\curv$
($P\subseteq U\cup\curv$, resp.).

Let $G=(V,E)$ be a plane embedded planar digraph.
The \emph{dual graph} of $G$, denoted by $\dual{G}$, is a plane directed graph, whose set of vertices is the set of faces of $G$.
Moreover, for each edge $uw=e\in E(G)$, $\dual{G}$ contains a directed edge from the face left of $e$ (looking from $u$ in the direction of $w$) to the face right of $e$.
We denote this edge in the dual as $e^*$.

We use the fact that removing an edge in the primal graph corresponds to contracting its dual edge in the dual graph. Formally:

\begin{fact}\label{fact:remove-contr}
$\dual{(\remove{G}{e})} = \contr{\dual{G}}{\dual{e}}$.
\end{fact}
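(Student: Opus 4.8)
The plan is to build an explicit isomorphism of plane digraphs between the two sides. First I would fix embeddings so that the statement even makes sense: the embedding of $\remove{G}{e}$ is the one inherited from $G$ by erasing the open curve of $e$, so that a face of $\remove{G}{e}$ is a maximal connected region of $\mathbb{R}^2$ missing all vertices and all remaining edge curves. Erasing the curve of $e$ can only affect the two faces incident to $e$, namely $f_\ell$ (the face left of $e$ looking from $u$ towards $w$, where $uw=e$) and $f_r$ (the face to the right); every other face of $G$ is still a face of $\remove{G}{e}$. Consequently there is a natural surjection $\mu$ from the faces of $G$ onto the faces of $\remove{G}{e}$ that is the identity away from $\{f_\ell,f_r\}$ and sends both $f_\ell$ and $f_r$ to the single face of $\remove{G}{e}$ into which they merge.

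Next I would observe that $\mu$ is exactly the map on vertex sets induced by contracting $\dual{e}$ in $\dual{G}$. Indeed, $\dual{e}$ is the edge $f_\ell f_r$ of $\dual{G}$. If $f_\ell\neq f_r$, then erasing the curve of $e$ unites $f_\ell$, the curve of $e$, and $f_r$ into one face, so $\mu$ identifies precisely $f_\ell$ with $f_r$; on the other side, contracting $\dual{e}=f_\ell f_r$ also identifies precisely $f_\ell$ with $f_r$. If instead $f_\ell=f_r$ (which, for connected $G$, happens exactly when $e$ is a bridge), then $e$ bounds a single face on both sides, erasing its curve does not change the set of faces, so $\mu$ is a bijection; and $\dual{e}$ is a self-loop, whose contraction in the multigraph model simply deletes it, again without changing the vertex set. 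In both cases $\mu$ is a bijection onto $V(\contr{\dual{G}}{\dual{e}})$ that agrees with the contraction map, so the vertex sets of $\dual{(\remove{G}{e})}$ and $\contr{\dual{G}}{\dual{e}}$ are canonically identified.

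It remains to match edges. The edge set of $\remove{G}{e}$ is $E(G)\setminus\{e\}$, so the edges of $\dual{(\remove{G}{e})}$ are exactly $\{\dual{g}:g\in E(G)\setminus\{e\}\}$, as are the edges of $\contr{\dual{G}}{\dual{e}}$. Fix $g=xy\in E(G)\setminus\{e\}$. In $\dual{(\remove{G}{e})}$ the edge $\dual{g}$ runs from the face of $\remove{G}{e}$ left of $g$ to the one right of $g$; since $\mu$ is precisely the face-gluing described above, these two faces are $\mu$ applied to the left and right faces of $g$ in $G$. On the other side, contracting $\dual{e}$ replaces the endpoints of $\dual{g}$ in $\dual{G}$ by their $\mu$-images. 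As the orientation of $\dual{g}$ is determined by the same left/right rule (looking from $x$ towards $y$) in $G$ and in $\remove{G}{e}$, the two copies of $\dual{g}$ have the same ordered pair of endpoints. Hence the vertex and edge bijections respect incidences and orientations, which is exactly the asserted equality.

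I expect the only slightly delicate point to be the bookkeeping around bridges and self-loops: one must notice that deleting a bridge leaves the set of faces unchanged while turning its dual edge into a self-loop, and that contracting a self-loop in a multigraph just removes it — so primal deletion and dual contraction still agree. Everything else is the routine check that the canonical face-gluing map coincides with the dual contraction map and is compatible with the left/right orientation convention.
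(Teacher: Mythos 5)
The paper itself offers no proof of Fact~\ref{fact:remove-contr}: it is stated as a folklore fact about planar duality and used as a black box, so there is no ``paper proof'' to compare against line by line. Your argument is a correct and essentially complete rendering of the standard proof: the face-gluing map $\mu$ (identity away from $f_\ell,f_r$, merging $f_\ell$ and $f_r$) is exactly the vertex identification performed by contracting $\dual{e}$, the edge sets on both sides are canonically $\{\dual{g}:g\in E(G)\setminus\{e\}\}$, and the left/right orientation rule is a local condition at $g$ unaffected by erasing the curve of $e$, so $\mu$ carries the head and tail of each $\dual{g}$ in $\dual{(\remove{G}{e})}$ to those in $\contr{\dual{G}}{\dual{e}}$. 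You also correctly isolate the only delicate case, $f_\ell=f_r$ (e.g.\ $e$ a bridge), where deletion leaves the face set unchanged and $\dual{e}$ is a self-loop whose contraction is, by convention, its deletion; note that this convention is the one the statement implicitly assumes, and that your case split on $f_\ell=f_r$ versus $f_\ell\neq f_r$ works verbatim for the possibly disconnected multigraphs with self-loops and parallel edges that the paper allows (the parenthetical ``for connected $G$'' is only needed for the bridge characterization, not for the proof). In short: no gap; this is the argument the paper is tacitly relying on.
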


Let $G$ be a directed graph.
We say that $uw=e \in E$ is an \emph{inter-SCC} edge if $u$ and $w$ belong to distinct SCCs of $G$, and an \emph{intra-SCC} edge otherwise. Note that $uw$ is an inter-SCC edge iff
there is no path from $w$ to $u$ in $G$.
In our algorithm we use the following 
relation.

\begin{lemma}[folklore, see e.g.~\cite{Kao:93}]\label{lem:inter-dual}
Let $G$ be a plane embedded digraph and $e \in E(G)$.
Then, $e$ is an inter-SCC edge of $G$ iff $\dual{e}$ is an intra-SCC edge of $\dual{G}$.
\end{lemma}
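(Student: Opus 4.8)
The plan is to reduce the statement to the characterization recorded just before the lemma: for any digraph $H$ and edge $uw \in E(H)$, the edge $uw$ is inter-SCC in $H$ iff there is no directed path from $w$ to $u$ in $H$. Applying this to both $G$ and $\dual{G}$, it suffices to show: there is no directed $w \to u$ path in $G$ avoiding the edge\footnote{One has to be slightly careful with parallel edges; since we only use the existence of \emph{some} $w \to u$ path, the edge $e$ itself may be used, so ``there is a path from $w$ to $u$ in $G$'' is literally the negation of ``$e$ is inter-SCC''.} if and only if there \emph{is} a directed path from the tail of $\dual{e}$ (the face left of $e$) to its head (the face right of $e$) in $\dual{G} - \dual{e}$, equivalently some $\mathrm{head}(\dual{e}) \to \mathrm{tail}(\dual{e})$-closing directed cycle through $\dual{e}$ exists in $\dual{G}$.

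The heart of the argument is the classical planar duality between directed cycles and directed cuts. Concretely I would prove the following two implications.

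\emph{If $e$ is inter-SCC in $G$, then $\dual{e}$ is intra-SCC in $\dual{G}$.} Since $u,w$ lie in different SCCs of $G$, there is no $w \to u$ path, so the set $R$ of vertices reachable from $w$ in $G$ does not contain $u$; hence $\delta^+(R)$, the set of edges leaving $R$, is a nonempty directed cut (no edge enters $R$ by maximality of $R$ under reachability... actually edges may enter $R$; the relevant point is that no edge \emph{leaves} along a $w$-reachable certificate) — more cleanly, take the set $S$ of vertices that \emph{can reach} $u$ in $G$; then $w \notin S$, $u \in S$, and no edge of $G$ leaves $S$, so $e \in \delta^-(S)$, i.e. $e$ enters $S$, while $\delta^+(S) = \emptyset$. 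In the plane, the edges of $\delta^-(S)$ — all oriented the same way across the cut — dualize to a directed closed curve in $\dual{G}$, i.e. a directed cycle, and $\dual{e}$ lies on it (this is the standard fact that a directed simple cut in a plane digraph corresponds to a directed cycle in the dual, via the bijection $e \mapsto \dual{e}$ and the orientation convention ``left face $\to$ right face''). That directed cycle through $\dual{e}$ witnesses that the two faces $\dual{e}$ connects lie in one SCC of $\dual{G}$, so $\dual{e}$ is intra-SCC.

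\emph{Conversely, if $\dual{e}$ is intra-SCC in $\dual{G}$, then $e$ is inter-SCC in $G$.} Apply the forward direction to the plane digraph $\dual{G}$ and its edge $\dual{e}$, using the standard identifications $\dual{(\dual{G})} = G$ and $\dual{(\dual{e})} = e$ (with the orientation convention chosen consistently, dualizing twice returns the original oriented edge, possibly after a global reflection of the embedding, which does not affect SCC structure): the contrapositive of the forward implication gives that if $e$ is \emph{not} inter-SCC in $G$ then $\dual{e}$ is \emph{not} intra-SCC in $\dual{G}$, which is exactly what we need. Thus the lemma follows by symmetry of duality, and no separate argument for the converse is really needed.

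\emph{Main obstacle.} The only genuinely nontrivial ingredient is the cut/cycle duality in the oriented setting: that a minimal directed edge-cut $\delta(S)$ in a plane digraph maps, under $e \mapsto \dual{e}$, to a directed cycle in $\dual{G}$, with all dual edges oriented consistently around it. This requires checking that the orientation convention in the definition of $\dual{G}$ (edge from the face left of $e$ to the face right of $e$) is compatible with traversing the boundary curve between $S$ and $V \setminus S$ in a fixed rotational direction — i.e. that every primal edge crossing from $S$ to $\bar S$ has its dual pointing ``the same way'' along that curve. Handling multigraphs (parallel edges, self-loops) adds only bookkeeping: self-loops are never inter-SCC and their duals are never intra-SCC (a self-loop's dual is a bridge, always inter-SCC... one checks the degenerate cases directly), and parallel edges behave like their representative. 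I would state the cut/cycle fact as a short lemma and prove it by tracing the boundary of $S$ in the embedding, which is where the (routine but careful) case analysis lives.
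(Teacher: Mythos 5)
The paper does not actually prove Lemma~\ref{lem:inter-dual}: it is stated as folklore with a pointer to the literature (Kao), so there is no in-paper argument to compare yours against. Your route --- reduce ``inter-SCC'' to ``no directed $w\to u$ path'' (which the paper itself records just before the lemma) and then invoke the duality between one-directional cuts in a plane digraph and directed cycles in $\dual{G}$ --- is exactly the standard folklore proof, and the outline is sound, including the observation that the converse follows from the forward direction via double duality (a global edge reversal or reflection does not change SCC structure, so the convention issue you flag is harmless).

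Two points need repair, though. First, your cut bookkeeping is consistently mirrored: with $S$ the set of vertices that can reach $u$, \emph{no edge enters} $S$ (if $xy\in E$ and $y\in S$ then $x\in S$), while $e=uw$ \emph{leaves} $S$ since $u\in S$, $w\notin S$; you assert the opposite ($\delta^+(S)=\emptyset$, $e\in\delta^-(S)$), and your first attempt with $R$ = vertices reachable from $w$ is likewise garbled (there $\delta^+(R)=\emptyset$ and $e$ enters $R$). Either set works, but the directions as written are wrong and must be fixed before the cut/cycle duality can be applied with the correct orientation of the dual cycle. Second, the dual of the edge set crossing such a cut is in general a union of edge-disjoint directed closed walks rather than a single directed cycle; this is harmless, since $\dual{e}$ lies on one of them and that already certifies that $\dual{e}$ is intra-SCC, but the statement of your auxiliary cut/cycle lemma should say so (and should assume, or reduce to, the connected case, where $\dual{(\dual{G})}$ is honestly $G$ up to reversal). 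With these adjustments the plan goes through; the degenerate multigraph cases (self-loops, parallel edges) are indeed only bookkeeping, as you note.
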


\paragraph{Subgraphs, etc.}
Let $G=(V,E)$ be a plane (directed or undirected) graph.
A plane graph $G'=(V',E')$ is called a \emph{subgraph} of $G$
if $V'\subseteq V$ and $E'\subseteq E'$.
$G'$ inherits the embedding of $G$.
For a subset $F\subseteq E$ we define an \emph{edge-induced subgraph} $G[F]$
to be a subgraph $(V_F,F)$ of $G$ such that $V_F$ is the set of endpoints of the edges~$F$.

For two graphs $G_1=(V_1,E_1), G_2=(V_2,E_2)$, we define
their \emph{union} $G_1\cup G_2$ as the graph $(V_1\cup V_2, E_1\cup E_2)$.
Analogously, we define the \emph{intersection} $G_1\cap G_2$ as
the graph $(V_1\cap V_2,E_1\cap E_2)$.

Let $H$ be a subgraph of $G$.
We define a \emph{hole} of $H$ a face of $H$
that is not a face of $G$.
A \emph{simple hole} is a hole of $H$ that is a simple
face of $H$.

\paragraph{Planar Graph Decompositions.}\label{sec:decomp}
Let $G=(V,E)$ be a connected, undirected plane graph. Let $n=|V|$.
A \emph{recursive decomposition} of $G$ is a collection of \emph{connected}, edge-induced subgraphs
of $G$ organized in a binary tree $\rtree(G)$.
We write $H \in \rtree(G)$ to denote that $H$, which is a subgraph of $G$, is a node of $\rtree(G)$.
The \emph{level} of $H \in \rtree(G)$ is defined as the number of edges on the path from the root to $H$.
The tree $\rtree(G)$ has the following properties:
\begin{itemize}
\item The root of $\rtree(G)$ is the graph $G$ itself.
\item A non-leaf subgraph $H\in\rtree(G)$ has exactly two children $\child_1(H),\child_2(H)$
  such that $\child_1(H)\cup \child_2(H)=H$.
\item We define the set of \emph{boundary vertices} of a subgraph
\ifshort
  \linebreak
\fi
$H\in\rtree(G)$ (denoted by $\bnd{H}$) inductively:
  \begin{itemize}
  \item If $H$ is the root of $\rtree(G)$, that is, $H = G$, then $\bnd{H}=\emptyset$.
  \item Otherwise, let $P$ and $S$ be the parent and the sibling of $H$ in $\rtree(G)$, resp.
    Then $\bnd{H}=V(H)\cap (V(S)\cup\bnd{P})$.
  \end{itemize}
\item Each $H\in\rtree(G)$ has $O(1)$ holes and all vertices of $\bnd{H}$ lie on the holes of $H$.
\iffull
\item If $i$ is the level of $H$, then $|\bnd{H}|=O(\sqrt{n}/c^i)$, for some $c>1$.
\fi
\item $\sum_{H\in\rtree(G)} |\bnd{H}|^2=O(n\log{n})$.
\item There are $O(n)$ leaf subgraphs in $\rtree(G)$ and each leaf subgraph has $O(1)$ edges.
\item $\rtree(G)$ has $O(\log{n})$ levels.
\end{itemize}
\begin{remark}
The definition of a recursive decomposition naturally extends to directed plane graphs.
However, each $H\in\rtree(G)$ is connected in the undirected sense, i.e., weakly-connected.
\end{remark}
\begin{corollary}\label{cor:rtree-bounds}
  Let $\rtree(G)$ be a recursive decomposition of a plane graph $G=(V,E)$. Then
  each $e\in E$ is contained in some leaf subgraph of $\rtree(G)$.
  Moreover, $\sum_{H\in\rtree(G)} |E(H)|=O(n\log{n})$.
\end{corollary}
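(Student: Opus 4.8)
The plan is to prove the two assertions separately, using only the tree structure of $\rtree(G)$ together with the identity $E(H)=E(\child_1(H))\cup E(\child_2(H))$, which is implied by $\child_1(H)\cup\child_2(H)=H$ (since the children are edge-induced subgraphs of $G$).

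For the first assertion I would descend a root-to-leaf path. The root of $\rtree(G)$ is $G$ itself, so it contains every $e\in E$. If $H\in\rtree(G)$ contains $e$ and $H$ is not a leaf, then $e\in E(\child_1(H))$ or $e\in E(\child_2(H))$, so we may replace $H$ by such a child. Because $\rtree(G)$ has only $O(\log n)$ levels this descent terminates, and it can only terminate at a leaf subgraph, which therefore contains $e$.

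For the second assertion I would sum level by level: $\sum_{H\in\rtree(G)}|E(H)|=\sum_{i}\sum_{H:\,\mathrm{level}(H)=i}|E(H)|$, and then show that distinct subgraphs on a common level are edge-disjoint. Indeed, if $e\in E(H_1)\cap E(H_2)$ with $H_1\neq H_2$ on level $i$, let $A$ be their lowest common ancestor; walking up from each $H_j$ to the child of $A$ that it descends from, and repeatedly using $E(\mathrm{parent})\supseteq E(\mathrm{child})$, one gets $e\in E(\child_1(A))\cap E(\child_2(A))$. So the claim reduces to the fact that the two children of any internal node have disjoint edge sets — which is how recursive decompositions are built: each split of a piece is performed along a vertex separator and assigns every edge strictly to one side. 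Granting this, each level contributes at most $|E|=O(n)$, and since there are $O(\log n)$ levels we obtain $\sum_{H\in\rtree(G)}|E(H)|=O(n\log n)$. (As a byproduct, this upgrades the first assertion: each $e\in E$ lies in exactly one leaf subgraph, since the set of $H$ containing $e$ is closed under taking parents and, by the disjointness just shown, meets every level in at most one node, hence is a path from the root to a unique leaf.)

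The only delicate ingredient is the edge-disjointness of siblings: without it the corollary would actually be false, since an edge present in both children at every level would be counted $\Omega(n)$ times and the sum could reach $\Theta(n^2)$. Hence this has to be read off from the construction of $\rtree(G)$ rather than from the abstract properties listed in the definition. Everything else is a pair of short inductions and a summation over levels; no facts about embeddings, holes, or the boundary sizes $|\bnd{H}|$ are needed.
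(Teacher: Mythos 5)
Your first claim (every edge reaches some leaf) is fine. The gap is in the second part: you rest the whole count on the edge-disjointness of the two children of an internal node, and that property is simply not available here. The corollary is stated for an arbitrary recursive decomposition as defined in Section~\ref{sec:preliminaries}; edge-disjointness of siblings is \emph{not} among the listed properties --- it is introduced only later, as condition~2 of Definition~\ref{def:simple-decomp}, precisely because general decompositions lack it. Indeed, in the decompositions of Klein et al.\ that the paper builds on, the children of $H$ genuinely overlap: by Lemma~\ref{lem:cycle-separator}, $E(\child_1(H))\cap E(\child_2(H))$ consists exactly of the edges on the edge-parts of the cycle separator $C_H$. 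So ``read off from the construction'' does not rescue the step; your level-by-level disjointness claim is false for the object the corollary speaks about, and your byproduct (each edge lies in a \emph{unique} leaf) also only holds for simple decompositions.

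Your remark that without sibling disjointness the corollary ``would actually be false'' is also incorrect; the bound follows from the abstract properties alone, and this is the missing argument. Since $E(H)=E(\child_1(H))\cup E(\child_2(H))$, every node $H$ with $e\in E(H)$ has a leaf descendant containing $e$, and distinct nodes on the same level have disjoint sets of leaf descendants. Hence the number of level-$i$ nodes containing $e$ is at most $\lambda(e)$, the number of leaves containing $e$, for every $i$. Summing over the $O(\log n)$ levels, $e$ appears in $O(\lambda(e)\log n)$ nodes, so
\[
\sum_{H\in\rtree(G)}|E(H)| \;=\; \sum_{e\in E}\#\{H: e\in E(H)\}\;\le\; O(\log n)\sum_{e\in E}\lambda(e)\;=\;O(\log n)\sum_{\text{leaves }L}|E(L)|\;=\;O(n\log n),
\]
using that there are $O(n)$ leaves with $O(1)$ edges each. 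Your hypothetical $\Theta(n^2)$ blow-up cannot occur because massive duplication of an edge at internal levels would force that edge into many leaves, which the $O(n)$ bound on total leaf edges forbids.
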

\begin{definition}\label{def:simple-decomp}
We call a recursive decomposition $\rtree(G)$ \emph{simple} if and only if for each $H\in\rtree(G)$:
\begin{enumerate}
\item 
 the holes of $H$ are all simple and vertex-disjoint,
\item 
 if $H$ has a sibling $S$ in $\rtree(G)$, then $E(H)\cap E(S)=\emptyset$.
\end{enumerate}
\end{definition}

An algorithm for building a recursive decomposition of an $n$-vertex planar graph in $O(n \log n)$ time essentially follows from~\cite{Borradaile:2015, Klein:13}.
However, these decompositions do not guarantee that the holes are simple, which is essential when using ``external'' reachability matrices.
\ifshort
In the full version of this paper 
\fi
\iffull
In Sections~\ref{sec:dag-trans}~and~\ref{sec:nonsimple}
\fi
we show how a planar graph $G$ can be augmented into a graph $G'$
so that $G'$ has a simple recursive decomposition and
all the interesting reachability properties of $G$ are preserved in $G'$.
Both the augmentation and the decomposition can be computed in $O(n\log{n})$ time.
\paragraph{Problem Definitions.}
This paper deals with two decremental graph problems.
In both problems the input is a digraph and the goal is to design a data structure that supports two operations.
The first operation deletes a given edge from $G$.
The second operation is a query operation and its meaning is problem-dependent.

In the \emph{decremental strongly connected components (decremental SCC)} problem the query operation is given two vertices $u, w \in V(G)$ as parameters.
The goal is to determine whether $u$ and $w$ are in the same SCC of $G$.

In the \emph{decremental single-source reachability (decremental SSR)} problem a source vertex $s \in V(G)$ is given in the input in addition to the graph $G$.
Each query operation has a vertex $v \in V(G)$~as parameter and asks to determine whether there exists a path $\dirpath{s}{v}{G}$.

The efficiency measure of such data structures is the total time needed to process all edge deletions.
Unless stated otherwise, the queries are answered in constant time.

\section{The Switch-On Reachability Problem}\label{sec:switch-on}
The core part of our algorithm solves the following problem, which we call \emph{switch-on reachability} problem.
The input is a digraph $G=(V,E)$.
Each edge of this digraph is either \emph{on} or \emph{off}.
Initially, each edge is of $G$ off and an update operation may turn an edge on.
Turning edges off is not allowed.
The goal is to maintain, for each edge $uw$ of $G$ (regardless of whether it is on or off), whether there is a directed path from $w$ to $u$ consisting solely of edges that are on.
Note that since turning edges off is not allowed, once such a path appears, it may never disappear.

We first show that the decremental SCCs problem in planar digraphs can be reduced to the switch-on reachability problem.
\begin{lemma}\label{lem:switch-inter}
Assume there exists an algorithm for the switch-on reachability problem in planar digraphs.
Then, there exists an algorithm that maintains the set of inter-SCC edges of a planar digraph under edge deletions with the same asymptotic running time and space usage.
\end{lemma}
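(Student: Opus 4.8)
The plan is to exploit Fact~\ref{fact:remove-contr} and Lemma~\ref{lem:inter-dual} to translate the decremental problem on $G$ into an incremental (switch-on) problem on the dual $\dual{G}$. Recall that deleting an edge $e$ from $G$ corresponds to contracting $\dual{e}$ in $\dual{G}$, and that contracting $\dual{e}$ is — as far as reachability between the remaining dual vertices is concerned — equivalent to \emph{switching $\dual{e}$ on} in the sense that both endpoints of $\dual{e}$ become mutually reachable for free along that edge. More precisely, I would fix the embedding of $G$ once and for all, let $G_0$ be the initial graph, and build the dual $\dual{G_0}$. I then run the assumed switch-on reachability data structure on $\dual{G_0}$, where initially \emph{every} dual edge is off. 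When the decremental algorithm is asked to delete $uw = e$ from $G$, I issue a single update to the switch-on structure turning $\dual{e}$ on.

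The key correctness claim is the following invariant: after the same sequence of operations, for every edge $uw = e$ still present in the current graph $G$, $e$ is an inter-SCC edge of $G$ if and only if the switch-on structure on $\dual{G_0}$ reports that there is \emph{no} directed path of on-edges from the head of $\dual{e}$ back to its tail. To see this, note that by Lemma~\ref{lem:inter-dual}, $e$ is inter-SCC in $G$ iff $\dual{e}$ is intra-SCC in $\dual{G}$, i.e.\ iff there is a directed cycle through $\dual{e}$ in $\dual{G}$, i.e.\ iff there is a directed path in $\dual{G}$ from the head of $\dual{e}$ to its tail avoiding $\dual{e}$ itself. Now $\dual{G}$ is obtained from $\dual{G_0}$ by contracting exactly the set of dual edges whose primal counterparts have been deleted so far. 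Such a path in $\dual{G}$ lifts to a walk in $\dual{G_0}$ that uses, in addition to edges of $\dual{G}$, only contracted dual edges — which are precisely the on-edges of the switch-on instance; and conversely a walk of on-edges plus $\dual{G}$-edges from head to tail of $\dual{e}$ projects to such a path in $\dual{G}$. Hence the reachability reported by the switch-on structure for the edge $\dual{e}$ captures exactly whether $\dual{e}$ lies on a cycle in $\dual{G}$, i.e.\ whether $e$ is intra- or inter-SCC in $G$. Since the switch-on structure maintains this bit for every edge of $\dual{G_0}$ — in particular for every $\dual{e}$ with $e$ still in $G$ — the set of inter-SCC edges of $G$ is maintained explicitly.

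For the complexity bound, observe that $\dual{G_0}$ is a planar digraph with the same number of edges as $G_0$ and $O(n)$ faces, hence $O(n)$ vertices, so it falls within the size regime ($|E| = O(|V|)$) assumed throughout; each edge deletion in $G$ triggers exactly one switch-on update, and the translation between $e$ and $\dual{e}$ is done in $O(1)$ time using a precomputed correspondence. Therefore the total running time and the space are, up to constants, those of the switch-on reachability algorithm on an $O(n)$-size planar digraph, as claimed. The only subtlety — and the part I expect to need the most care — is handling parallel edges and self-loops created by dual contractions: a self-loop $\dual{e}$ in $\dual{G}$ corresponds to a bridge $e$ in $G$ (always inter-SCC), and parallel dual edges must not corrupt the ``on-edge'' bookkeeping; but since the paper explicitly allows multigraphs and guarantees $|E| = O(|V|)$ is preserved under these transformations, this is a matter of bookkeeping rather than a genuine obstacle, and the reduction goes through.
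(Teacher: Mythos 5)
Your high-level plan (primal deletion $\mapsto$ dual contraction $\mapsto$ switch-on operation) is the same as the paper's, but the concrete reduction you describe does not work. You run the switch-on structure on $\dual{G_0}$ with \emph{all} dual edges initially off and you switch on only $\dual{e}$ when $e$ is deleted. The structure then only ever sees paths consisting of duals of already-deleted edges. But the reachability you need is reachability in the \emph{current} dual graph $\dual{G}$, whose edges are exactly the duals of the edges \emph{not yet} deleted --- and those are off in your instance. Your own lifting argument betrays this: you say the witnessing path in $\dual{G}$ lifts to a walk in $\dual{G_0}$ using ``edges of $\dual{G}$'' in addition to contracted dual edges, and then conclude that the switch-on structure detects it; it cannot, since the problem definition only counts paths consisting solely of on-edges. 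Already at initialization the scheme is wrong: before any deletion nothing is on, so the structure reports ``no path'' for every edge, and your invariant would classify every edge of $G$ as inter-SCC, which is false whenever $G$ has a nontrivial SCC. (There is also a polarity slip: by Lemma~\ref{lem:inter-dual}, $e$ inter-SCC in $G$ corresponds to the \emph{existence} of a head-to-tail path for $\dual{e}$, i.e.\ $\dual{e}$ intra-SCC, the opposite of the ``no path'' condition you state; but this is minor compared to the structural problem.)

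There is a second, independent gap: even if you switched on all original dual edges at the start (which is what correctness of the ``$\dual{G}$-edges are usable'' part requires), a deletion of $e$ would then have nothing left to switch on, and a single directed edge $\dual{e}$ cannot simulate a contraction, which makes its two endpoints reachable from each other \emph{in both directions}. The paper's proof resolves exactly these two issues by working not on $\dual{G_0}$ but on an augmented graph $H=(V(\dual{G}),E(\dual{G})\cup E^R(\dual{G}))$ containing a reversed copy of every dual edge: all of $E(\dual{G})$ is switched on during initialization (so the structure always sees the real dual edges), and deleting $e$ switches on the reverse edge ${\dual{e}}^R$, making the endpoints of $\dual{e}$ mutually reachable through on-edges and thereby simulating the contraction as far as reachability is concerned. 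The query is then read off directly: $\dual{e}=uw$ is intra-SCC in $\dual{G}$ iff there is a $w\to u$ path of on-edges in $H$, which is precisely the bit the switch-on structure maintains for the edge $\dual{e}$ (itself always on). Your proposal is missing this reverse-edge augmentation and the correct on/off initialization, and without them the reduction fails.
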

\begin{proof}
Consider a digraph $G$ subject to edge deletions. Our goal is
to maintain the set of inter-SCC edges in $G$. By Lemma~\ref{lem:inter-dual}, it suffices to maintain the set of intra-SCC edges in $\dual{G}$.
Recall that when $G$ undergoes deletions, $\dual{G}$ is subject to edge contractions (see Fact~\ref{fact:remove-contr}). To reduce to switch-on reachability, we build a graph $H$ such that (i) edge contractions in $\dual{G}$ can be simulated by switch-on operations on $H$, and (ii)
$H$ preserves the reachability information of $\dual{G}$ (i.e., for any pair of vertices $x$ and $y$, there is a path from $x$ to $y$ in $\dual{G}$ if and only if there is a path from $x$ to $y$ in $H$ consisting of edges that are on).

We do this as follows.
Let $H=(V(\dual{G}),E(\dual{G})\cup E^R(\dual{G}))$ where $E^R(\dual{G})$ is the set of \emph{reverse edges}, containing a unique edge ${e^{*R}}=wu$ for each edge $uw=\dual{e}\in E(\dual{G})$.
To initialize $H$, we start from all edges
switched off, and then switch on all the edges in $E(\dual{G})$. Note that, after this preprocessing, $H$ trivially preserves the reachability information of $\dual{G}$.

When an edge $e$ is deleted from $G$ and consequently its dual edge $\dual{e}$ is contracted in $\dual{G}$, we update $H$ by switching on the reverse edge ${\dual{e}}^R$.
As a result, $\dual{G}$ contains a single vertex created by the contraction, while $H$ contains two vertices that are 
mutually adjacent (through edges that are on).
Thus, 
throughout the sequence of updates (i.e., contractions in $\dual{G}$ and switch-on operations in $H$), $H$ keeps preserving the reachability information of $\dual{G}$.

Recall that an edge $e=uw$ of $\dual{G}$ is an intra-SCC edge iff there exists a path from $w$ to $u$ in $\dual{G}$.
This in turn happens if there is a path from $w$ to $u$ in $H$, as $e$ is also an edge of $H$ and it is always switched on.
In order to test this condition, we can run the algorithm for the switch-on reachability problem on $H$.
The algorithm maintains whether the endpoints of every edge are connected with a directed path.
This allows us to maintain the set of intra-SCC edges in $\dual{G}$, which in turn gives the desired set of inter-SCC edges in $G$.

We next analyze the running time. Building $\dual{G}$ takes linear time. Initializing $H$ takes linear time, plus the time required to switch on all the edges in $E(\dual{G})$.
Moreover, each edge deletion in $G$ can be translated to a corresponding operation of the algorithm for switch on-reachability in constant time, as we can precompute pointers between the corresponding edges of $G$, $\dual{G}$ and $H$.
Indeed, the edges of $\dual{G}$ map to the edges of $H$ in a natural way, as $\dual{G}$ is a minor of $H$ (i.e., $\dual{G}$ can be obtained from $H$ by deleting and contracting edges).
Thus, the running time and the space usage are dominated by the algorithm for switch-on reachability.
\end{proof}

\begin{lemma}\label{lem:inter-scc}
Assume there exists an algorithm that maintains the set of inter-SCC edges of a planar digraph under edge deletions.
Then, there exists an algorithm for the decremental SCC problem in planar digraphs with the same asymptotic total running time and space usage.
The query time of the decremental SCC algorithm is constant.
\end{lemma}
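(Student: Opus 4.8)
The plan is to show that once we can maintain the set of inter-SCC edges of $G$ under deletions, we can additionally maintain, at no asymptotic cost, a labeling of the vertices by their SCC so that an SCC query reduces to comparing two labels. First I would observe that the intra-SCC edges of $G$ induce exactly the partition of $V$ into SCCs: two vertices are in the same SCC iff they are connected in the subgraph $G' = G[\{e : e \text{ is intra-SCC}\}]$ (one direction is immediate; for the other, if $u,w$ are in the same SCC then the cycle through $u$ and $w$ in $G$ uses only intra-SCC edges, so $u$ and $w$ are even connected in the \emph{undirected} sense in $G'$). Hence maintaining SCC membership is the same as maintaining connected components of $G'$. Since the inter-SCC edge set only grows under deletions (deleting an edge can only split SCCs, never merge them), $G'$ is itself subject only to edge deletions: an edge leaves $G'$ either because it is deleted from $G$ or because it becomes inter-SCC. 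So we are reduced to a decremental connectivity problem on a planar (indeed arbitrary) graph derived from $G$.

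Next I would hook the inter-SCC maintenance algorithm to a decremental connectivity structure on $G'$. Whenever the inter-SCC algorithm reports that an edge $e$ has newly become inter-SCC, or $e$ is explicitly deleted from $G$, we delete $e$ from the connectivity structure. The total number of such deletions is $O(|E|) = O(n)$, so if we use any decremental connectivity data structure with $O(\mathrm{polylog})$ amortized time per deletion and $O(1)$ (or $O(\mathrm{polylog})$) query time, the overhead is subsumed in the stated bound. For planar graphs one can in fact use the $O(n)$-total-time decremental connectivity algorithm for plane graphs, or more simply just use Euler-tour / link-cut-based decremental connectivity, or even recompute components lazily; the key point for the lemma is only that it is \emph{no worse} asymptotically than the inter-SCC algorithm, whose total running time is already $\Omega(n)$ since it must at least read the input. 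An SCC query on $(u,w)$ is then answered in $O(1)$ by asking whether $u$ and $w$ are in the same component of the connectivity structure, which matches the promised constant query time.

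The one subtlety — and the step I expect to need the most care — is synchronizing the order of operations: when an edge $e$ is deleted from $G$, the inter-SCC algorithm may simultaneously flag several \emph{other} edges as having turned inter-SCC, and all of these, together with $e$ itself if it was intra-SCC, must be removed from $G'$ before the next query is answered. I would handle this by having the inter-SCC algorithm expose, after each deletion, the (possibly empty) list of edges whose inter/intra status just changed, and processing that whole batch of removals into the connectivity structure before returning control. Since each edge changes status at most once over the entire deletion sequence, these batches have total size $O(n)$ and the amortized cost is controlled. Space usage is clearly linear plus that of the inter-SCC algorithm and the connectivity structure, so the overall bounds match, completing the reduction.
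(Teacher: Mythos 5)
Your proposal is correct and follows essentially the same route as the paper: observe that SCC membership equals connectivity in the undirected intra-SCC subgraph, that this subgraph only loses edges (since inter-SCC status is monotone under deletions), and maintain it with a decremental connectivity structure, answering SCC queries as connectivity queries. The paper instantiates this with the linear-total-time, constant-query-time decremental connectivity algorithm of Łącki and Sankowski for planar graphs, which is the variant you should commit to (rather than a polylog-query alternative) so that the constant query time and ``same asymptotic total time'' claims hold verbatim.
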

\begin{proof}
Let $G$ be a planar digraph subject to edge deletions.
Define $\bar{G}$ to be the undirected graph obtained from $G$ by first removing inter-SCC edges and then ignoring edge directions.
Observe that the connected components of $\bar{G}$ are exactly the same as the SCCs~of~$G$.
Thus, to determine if two vertices are strongly connected in $G$ it suffices to check if they are in the same connected component of $\bar{G}$.

Observe that if $uv$ is an inter-SCC edge of $G$, then $uv$ is an inter-SCC edge in any subgraph of $G$ that contains it.
Thus, once an edge becomes an inter-SCC edge in $G$, it remains an inter-SCC edge until it is deleted.

The algorithm maintaining inter-SCC edges of $G$ allows us to maintain $\bar{G}$: whenever $e$ becomes an inter-SCC edge in $G$, it is deleted from $\bar{G}$.
Thus, edges may only be deleted from $\bar{G}$.
As a result, we can use the decremental connectivity algorithm of Łącki and Sankowski~\cite{decremental-connectivity} to maintain the connected components of $\bar{G}$.
This allows us to answer connectivity queries in $\bar{G}$, and, consequently, strong connectivity queries in $G$.
The total update time of the decremental connectivity algorithm is linear and each query is answered in constant time.
Thus, the update and query times and the space usage of the resulting algorithm are asymptotically the same as in the algorithm maintaining inter-SCC edges.
\end{proof}

\begin{remark}\label{rem:ext-scc}
The paper of Łącki and Sankowski~\cite{decremental-connectivity} also describes an algorithm that explicitly maintains the set of vertices in each connected component and runs in $O(n \log n)$ time.
By using their algorithm, we can obtain a decremental SCC algorithm that not only supports queries in constant time, but also explicitly maintains the set of vertices in each SCC.
\end{remark}

Moreover, a decremental SCC algorithm implies an algorithm for decremental single-source reachability.
In general graphs, this follows from a very simple reduction (see e.g., \cite{CHILP16}).
However, this reduction does not preserve planarity, so we need to provide a slightly more complicated one.
\begin{lemma}\label{lem:scc-ssr}
Assume there exists an algorithm for decremental SCC problem in planar digraphs that explicitly maintains the set of vertices in each SCC and runs in $\Omega(n \log n)$ time.
Then, there exists an algorithm for decremental single-source reachability problem in planar digraphs with the same asymptotic total update time, query time and space usage.
\end{lemma}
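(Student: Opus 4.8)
The plan is to reduce decremental SSR on $(G,s)$ to a single instance of decremental SCC, run on an augmented planar digraph $G'=G\cup S$. The textbook reduction for general graphs adds a back edge $v\to s$ from every vertex, making $v$ reachable from $s$ iff $v$ lies in the SCC of $s$; since a star into $s$ need not be embeddable together with $G$, I instead attach a planar ``return gadget'' $S$, consisting of $O(n)$ fresh vertices and edges with three properties: (i) $G'$ is planar, weakly connected, and of size $O(n)$; (ii) no edge of $S$ is ever deleted, and using only edges of $S$ every vertex of $G$ can reach $s$ in $G'$; (iii) the only edge of $S$ whose head lies in $V(G)$ enters $s$. Let $R_H$ denote the set of vertices reachable from $s$ in $H$. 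Granting (i)--(iii), I claim $R_{G'}\cap V(G)=R_G$ at all times: $R_G\subseteq R_{G'}$ is trivial, and conversely, given a walk from $s$ in $G'$ reaching some $v\in V(G)$, consider the portion of the walk after the last visit to $s$ preceding $v$ (or the whole walk); if this portion ever leaves $V(G)$ it can, by (iii), re-enter $V(G)$ only at $s$, contradicting the choice of that last visit (or forcing $v=s$), so the portion stays in $V(G)$, hence uses only edges of $G$ (since $S$ has no edge between two vertices of $V(G)$ other than the one into $s$), and therefore $v\in R_G$. Combined with (ii), a vertex $v\in V(G)$ lies in the same SCC as $s$ in $G'$ iff $s$ reaches $v$ in $G'$ (it always reaches $s$), i.e.\ iff $v\in R_G$. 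So I run the assumed decremental SCC algorithm on $G'$ and answer a reachability query for $v$ by testing whether $v$ and $s$ share an SCC.

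To build $S$: first I may assume $G$ is weakly connected, restricting attention to the weakly-connected component of $s$ and reporting every vertex outside it as permanently unreachable (deletions there are ignored). Fix a planar embedding of $G$. Blow up each vertex $v$ into a tiny ``wheel'' drawn in a private disk around $v$: a hub $x_v$, and for each corner at $v$ (the sector between two edges consecutive in the rotation) a rim vertex joined to $x_v$ by a spoke; then, inside each face $\phi$ of $G$, add a cycle through the rim vertices sitting in the corners along $\partial\phi$, drawn just inside $\phi$. The union $S$ of hubs, rims, spokes and face-cycles crosses no edge of $G$, has $O(n)$ vertices and edges, and is connected (consecutive vertices on any path in $G$ share a face). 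Now add $v\to x_v$ for every $v\in V(G)$, orient a spanning tree of $S$ as an in-arborescence toward $x_s$, and add the single edge $x_s\to s$; then (i) and (iii) are immediate, and (ii) holds via $v\to x_v\to\cdots\to x_s\to s$ along never-deleted edges. Each deletion in $G$ becomes one deletion in $G'$, located in $O(1)$ time through precomputed pointers. To maintain the reachable set explicitly we use the SCC algorithm's explicit SCC vertex sets: whenever a vertex leaves the SCC of $s$ we drop it from the set; as in Remark~\ref{rem:ext-scc} this bookkeeping costs $O(n\log n)$, which together with the $O(n)$ setup is absorbed by the assumed $\Omega(n\log n)$ running time, and queries then take $O(1)$ time.

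The step that needs the most care is establishing property (i) simultaneously with the connectivity of $S$ required for (ii): one must attach a never-deleted, connected ``return network'' to $G$ without crossing any edge of $G$. This is exactly why naive attempts fail --- adding back edges to $s$, or one new vertex inside each face, either breaks planarity or forces crossings once the pieces are connected across the edges separating faces --- and why the gadget above must be verified in detail: that each wheel and each face-cycle is drawable without crossings (they are, each living in a private disk or just inside its face), and that after orienting $S$ as an in-arborescence rooted at $x_s$ the only edge of $S$ with its head in $V(G)$ is $x_s\to s$, which is precisely what makes the augmentation neutral for reachability from $s$. The remaining parts --- correctness of the SCC-based query, the amortized maintenance of the reachable set, and the constant query bound --- are routine.
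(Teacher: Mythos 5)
Your reduction hinges on the existence of the planar ``return gadget'' $S$, and that is exactly where it breaks. Inside the private disk around a vertex $v$, the edges of $G$ incident to $v$ all emanate from $v$ and separate the corners (sectors) from one another; a hub $x_v$ lies in one sector, so every spoke from $x_v$ to a rim vertex in a different sector must cross an edge of $G$ incident to $v$ (routing outside the disk does not help, since distinct corners of $v$ lie in distinct faces of $G$ and any curve between different faces must cross the drawing of $G$). So the claim that ``the union $S$ of hubs, rims, spokes and face-cycles crosses no edge of $G$'' is false, and since the connectivity of $S$ (needed for property (ii)) relies precisely on the spokes, the gadget does not satisfy (i) and (ii) simultaneously. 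The failure is not fixable by a cleverer gadget: in any planar embedding of $G'=G\cup S$, a connected component of the subgraph induced on the new vertices is drawn as a connected set disjoint from the drawing of $G$, hence lies inside a single face $f_0$ of $G$; property (ii) forces all the vertices $x_v$ into one such component, and then every edge $v\to x_v$ forces $v$ onto the boundary of $f_0$. Thus every vertex of $G$ would have to lie on one face, i.e.\ $G$ would have to be outerplanar --- already $K_4$ (or any triangulation on at least four vertices) rules this out. This is precisely the obstruction the paper alludes to when it says the standard ``add back edges to $s$'' reduction does not preserve planarity.

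The paper's proof of Lemma~\ref{lem:scc-ssr} therefore takes a different route, which you will need: it maintains the condensation $cond(G)$ (a DAG, not necessarily planar --- planarity is no longer needed at this stage) using the explicitly maintained SCC vertex sets, and feeds its changes to a simple dynamic single-source reachability structure for DAGs that supports edge deletions and replacing a vertex by an acyclic subgraph, implemented by repeatedly discarding non-source vertices with no incoming edges. When an SCC $C$ splits into $C_1,\ldots,C_k$, the old vertex is reused for the largest piece and only the edges incident to the smaller pieces are rescanned; since an edge is rescanned only when the SCC containing one of its endpoints at least halves, the total overhead is $O(n\log n)$, which is absorbed by the assumed $\Omega(n\log n)$ running time, with $O(1)$ queries. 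Your bookkeeping paragraph (dropping vertices that leave the reachable set) is compatible with this, but the reduction preceding it must be replaced.
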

\begin{proof}
A \emph{condensation} of a directed graph $G$, denoted by $cond(G)$, is a graph obtained from $G$ by contracting all its SCCs.
The vertices of $cond(G)$ are sets of vertices of $G$ contained in the corresponding SCC.
Note that $cond(G)$ is a directed acyclic graph (DAG).

Our goal is to maintain the set of vertices of $cond(G)$, which are reachable from the vertex of $cond(G)$ containing the source vertex $s$.
The high-level idea is that in order to maintain SCCs of $G$ we use the decremental SCC algorithm, whereas to maintain the set of reachable vertices in $cond(G)$ we use the fact that it is a DAG, which makes the problem much easier.

We first describe a simple dynamic single-source reachability algorithm for DAGs.
This algorithm maintains the set of vertices reachable from a fixed source and supports two types of updates.
The first update removes a single edge, whereas the second one replaces a vertex $v$ with an acyclic subgraph $H$.
This happens in three steps.
First, some number of new vertices are added to the maintained graph $G$ (vertex $v$ is not deleted).
Second, some edges of $G$, whose endpoint is $v$ may change this endpoint to one of the newly added vertices.
Third, new edges can be added, but their endpoints can only be the newly added vertices and $v$.
Also, adding these edges may not introduce cycles.

Note that both operations are in a sense decremental, as once a vertex becomes unreachable from the source, it never becomes reachable again.
The set of vertices reachable from the source can be easily maintained by iteratively applying the following principle: if a vertex distinct from the source has no incoming edges, it is not reachable from the source and thus can be deleted from $G$.
The resulting algorithm runs in time which is linear in the size of the original graph and the number of vertices and edges added in the course of the algorithm.
See~\cite{scc-decomposition} for details.

It remains to describe how to maintain $cond(G)$ and funnel the updates to $cond(G)$ to the dynamic single-source reachability algorithm for DAGs.
We run the decremental SCC algorithm that maintains the SCCs explicitly (see Remark~\ref{rem:ext-scc}).
Whenever an inter-SCC edge of $G$ is deleted, it has a corresponding edge in $cond(G)$ and to update $cond(G)$ it suffices to remove this corresponding edge.
Otherwise, if an intra-SCC edge is removed from an SCC $C$, the SCC may decompose into SCCs $C_1, \ldots, C_k$.
Wlog. assume that $C_1$ is the largest one of these SCCs.
Thus, to update $cond(G)$ we add one new vertex for each of $C_2, C_3, \ldots, C_k$.
We do not need to add a vertex corresponding to $C_1$, as we update the vertex corresponding to $C$ (reuse it), so that it represents $C_1$.
Some edges that were incident to $C$ need to be updated, as after the edge deletion their endpoint is one of $C_2, \ldots, C_k$.
In order to do that, we iterate through all edges of $G$ incident to vertices contained in $C_2, \ldots, C_k$.
Similarly, by iterating through all these edges we may add all new inter-SCC edges that appear as a result of the edge deletion.

It follows that the total running time of the algorithm is dominated by the initial graph size and the total time of iterating through edges in the process of handling an edge deletion.
An edge $uw$ is considered only when the size of the SCC containing either $u$ or $w$ halves.
Thus, we spend $O(\log n)$ time for each edge, which gives $O(n \log n)$ total time.
\end{proof}

\section{Structural Properties of Reachability in Plane Digraphs}\label{sec:matrix}
In this section we present the structural properties of reachability in planar digraphs that we later exploit in our algorithms.
We show how to efficiently represent reachability information between a set of vertices that, roughly speaking, lie on a constant number of faces.
In fact, our analysis is more general and extends to sets of vertices that lie on a constant number of \emph{separator curves}, which we define below.
In the following definition, we consider undirected graphs or directed graphs where edge directions are ignored.
\begin{definition}\label{def:separator-curve}
  Let $G$ be a plane embedded graph. A Jordan curve $\curv$ is a \emph{separator curve} of $G$
if and only if one of the following holds: 
\begin{enumerate}
\item 
each connected component of $G$ lies either weakly inside $\curv$ or strictly outside $\curv$;
\item 
each connected component of $G$ lies either weakly outside $\curv$ or strictly inside $\curv$. 
\end{enumerate}
  Moreover, for each $e\in E(G)$, the interior of (the embedding of) $e$ is either a contiguous
  fragment of $\curv$ or is disjoint with $\curv$.
\end{definition}

\iffull
We will sometimes abuse the notation and identify the separator curve with the set of vertices lying on it.

\begin{fact}\label{fac:face-curve}
Let $f$ be a cycle bounding some simple face of a plane embedded graph $G$.
Then, the closed curve defined by the embedding of $f$ is a separator curve.
\end{fact}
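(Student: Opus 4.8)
The plan is to verify directly, for $\curv$ the Jordan curve traced by the embedding of the simple cycle $f$, the two defining properties in Definition~\ref{def:separator-curve}. Let $B$ and $U$ be the bounded and unbounded open regions of $\mathbb{R}^2\setminus\curv$ given by the Jordan Curve Theorem, and let $\phi$ be the simple face of $G$ whose bounding cycle is $f$. The starting observation is that, since the bounding cycle $f$ is simple, the bounding walk of $\phi$ visits each edge of $f$ exactly once and visits no other edge of $G$; equivalently, $\phi$ lies on exactly one side of each edge of $f$ and is incident to no edge outside $f$. From this one reads off that the topological boundary $\partial\phi$ is exactly the point set $\operatorname{im}(f)=\curv$: a boundary point of the open set $\phi$ must lie in the image of $G$ (a point of another face has an open neighbourhood missing $\phi$), and among images of edges and vertices precisely those belonging to $f$ bound $\phi$.

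The edge condition of Definition~\ref{def:separator-curve} is then immediate. For $e\in E(f)$, the (relative) interior of the embedding of $e$ is one of the arcs into which the images of the vertices of $f$ cut $\curv$, hence a contiguous fragment of $\curv$. For $e\notin E(f)$, the relative interior of $e$ is disjoint from the image of every edge of $f$ — distinct edges of a plane embedding meet only at common endpoints, which do not lie in relative interiors — and hence disjoint from $\curv$.

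For the component condition, I would first show that $\phi$ is one of $B,U$. Being a face, $\phi$ is open, connected and disjoint from $\operatorname{im}(G)\supseteq\curv$, so $\phi\subseteq B$ or $\phi\subseteq U$; assume $\phi\subseteq B$. If $\phi\subsetneq B$, then since $B$ is connected and $\phi$ is a nonempty open subset of $B$, $\phi$ cannot also be closed in $B$, so $\partial\phi\cap B\neq\emptyset$; but $\partial\phi=\curv$ is disjoint from $B$, a contradiction. Hence $\phi=B$, so no vertex or edge of $G$ lies strictly inside $\curv$, i.e.\ $\operatorname{im}(G)\subseteq U\cup\curv$; thus every connected component of $G$ lies weakly outside $\curv$ and the second alternative of Definition~\ref{def:separator-curve} is satisfied. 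The case $\phi\subseteq U$ is symmetric: then $\phi=U$, $\operatorname{im}(G)\subseteq B\cup\curv$, every component lies weakly inside $\curv$, and the first alternative holds. Either way $\curv$ is a separator curve.

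The one genuinely delicate step is the identification $\partial\phi=\curv$, which is precisely where the simplicity of the face (equivalently, of its bounding cycle) enters; the remainder is routine plane topology, for which I take the usual conventions about embeddings (edges are non-crossing simple arcs with pairwise-disjoint relative interiors avoiding all vertices, and faces are the connected components of the complement of the image of $G$).
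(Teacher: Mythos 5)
The paper states Fact~\ref{fac:face-curve} without proof, so there is no authors' argument to compare yours against; your verification is the natural one, and in the setting in which the paper actually invokes the fact (Lemma~\ref{lem:hole-curve}, applied to the connected pieces $H\in\rtree(G)$; note the paper defines faces only for \emph{connected} plane graphs) it is correct: with $G$ connected, the boundary of a simple face is exactly the image of its bounding cycle, the dichotomy $\phi=B$ or $\phi=U$ follows as you argue, and the edge condition is immediate from the embedding conventions you list.

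The caveat concerns precisely the step you flag as delicate, $\partial\phi=\curv$: it does not follow from simplicity of the face alone, but also uses connectivity of $G$, which you never invoke. Definition~\ref{def:separator-curve} explicitly allows several components, and the fact as literally stated does too. Take $G$ to be a simple cycle $f$ together with one isolated vertex $v$ embedded strictly inside it: the face $\phi$ inside $f$ has bounding edge set exactly $f$, so it is a simple face with bounding cycle $f$, yet $v\in\partial\phi\setminus\curv$, $\phi\subsetneq B$, and your conclusion for that case --- that every component of $G$ lies weakly outside $\curv$ --- is false. The fact itself still holds (here via the first alternative of Definition~\ref{def:separator-curve}); in the general disconnected case one should argue only that the component containing $f$ lies weakly on the side of $\curv$ opposite to $\phi$ (if an edge incident to a vertex $w$ of $f$ entered the $\phi$-side, a small sector at $w$ would force that edge to bound $\phi$, contradicting that the bounding edges of $\phi$ are exactly those of $f$), while every other component, which cannot meet $\curv$ because distinct edges meet only at shared endpoints and edge interiors avoid vertices, lies strictly on one side; this yields one of the two alternatives. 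So either state the connectivity hypothesis explicitly (it is what the paper's definition of a face presupposes, and every application satisfies it), or replace the global claim $\operatorname{im}(G)\subseteq U\cup\curv$ by this component-wise argument. Relatedly, your assertion that no vertex outside $f$ lies on $\partial\phi$ deserves its one-line justification via the same small-disk sector argument, since that is exactly where an isolated vertex slips through.
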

\fi

Consider a plane digraph $G$ and let $U=U_1\cup\ldots\cup U_\ell$ be a set of vertices of $G$ that lie on $\ell=O(1)$ \emph{pairwise disjoint} separator curves $\curv_1,\ldots,\curv_\ell$ of $G$.
We have $U_i=U\cap \curv_i$, which implies $U_i\cap U_j=\emptyset$ for $i\neq j$.
We define a total order $\prec$ on the elements of $U$, which satisfies the following property.
Consider the sequence $S(U)$ of elements of $U$ sorted according to $\prec$.
Then, elements of each $U_i$ form a contiguous fragment of $S(U)$ and are sorted in clockwise order.
There could be multiple total orders that satisfy this property.
Namely, for each $U_i$ the smallest vertex with respect to $\prec$ can be chosen arbitrarily and the contiguous fragments corresponding to sets $U_1, \ldots, U_\ell$ may come in any order.
However, this is not relevant to our later analysis.
Thus, in the following, we assume that each set $U$ has an associated total order $\prec$.
For $X,Y\subseteq U$ we also write $X\prec Y$ if for each $x\in X$, $y\in Y$, we have $x\prec y$.
Note that $X\prec Y$ implies $X\cap Y=\emptyset$.
\begin{definition}\label{def:reach-matrix}
  Let $A$ be a binary
  matrix with both its rows and columns indexed with the vertices of $U=U_1\cup\ldots\cup U_\ell$,
  such that the sets $U_i$ lie on $\ell$ pairwise disjoint separator curves $\curv_1,\ldots,\curv_\ell$
  of $G$, respectively. The rows and columns of $A$ are ordered according to the order $\prec$ induced
  by the sets $U_1,\ldots,U_\ell$.
  We say that $A$ is a \emph{reachability matrix} for $U$ if and only if for each $u,v\in U$,
  $A_{u,v}=1$ if and only if there exists a path $\dirpath{u}{v}{G}$.
\end{definition}
In the following we fix the sequences $U_1,\ldots,U_\ell$ and $\curv_1,\ldots,\curv_\ell$
and the order $\prec$ satisfying the assumption of Definition~\ref{def:reach-matrix}.
We set $U=U_1\cup\ldots\cup U_\ell$.
\begin{definition}
Let $X,Y\subseteq U$ and let $A$ be the reachability matrix of $U$.
  A binary 
  matrix $A^{X,Y}$ with rows indexed with $X$ and columns indexed with $Y$
  is called a \emph{reachability submatrix} for $X,Y$ iff $A^{X,Y}_{x,y}=A_{x,y}$
  for all $x\in X$ and $y\in Y$.
  Also, denote by $A^X$ the matrix~$A^{X,X}$.
\end{definition}
\begin{definition}
Let $X,Y\subseteq U$.
The subset of $Y$ containing those $y$ such that for some $x\in X$,
$A^{X,Y}_{x,y}=1$ is called the set of \emph{active columns} of $A^{X,Y}$ and
  is denoted by $\actcols{A^{X,Y}}$.
\end{definition}

\begin{definition}For a set $\aparts=\{A^{S_1,T_1},\ldots,A^{S_k,T_k}\}$ of reachability submatrices
and a row $s\in V$, we define
a \emph{row projection} $\prowmap_s(\aparts)$ to be the subset $\{A^{S_j,T_j}\in\aparts : s\in S_j\}$.
Similarly, for $t\in V$, we define a \emph{column projection} $\pcolmap_t(\aparts)=\{A^{S_j,T_j}\in\aparts : t\in T_j\}$.
\end{definition}
The following lemma provides a decomposition of the reachability matrix $A=A^U$
used in the next sections in a black-box~manner.
\begin{lemma}\label{lem:monge_reach}
  Let $G$ be a plane embedded digraph and let $\curv_1,\ldots,\curv_\ell$ be pairwise disjoint separator curves of $G$, where $\ell = O(1)$.
  Let $U$ be a subset of $V(G)$ of size $m$ such that $U=U_1\cup\ldots\cup U_\ell$ and the vertices of $U_i$ lie on $\curv_i$.

  Then, the reachability matrix $A$ of $U$ in $G$ can be partitioned into a set $\aparts = \{A^{S_1,T_1}, \ldots, A^{S_k,T_k}\}$
  of reachability submatrices such that:
  \begin{enumerate}
  \item for each $s,t\in U$, $s\neq t$, there exists exactly one such $A^{S_j,T_j}\in\aparts$ that $s\in S$ and $t\in T$,
  \item for each $A^{S_j,T_j}\in\aparts$ and $T'=\actcols{A^{S_j,T_j}}$, the ones in each
    row of $A^{S,T'}$ form $O(1)$ blocks,
  \item for any $s\in U$ and $t\in U$, the sets
    $\prowmap_s(\aparts)$ and $\pcolmap_t(\aparts)$ have size $O(\log{m})$.
  \end{enumerate}
  The sets $S_j$ and $T_j$
  that define the partition do not depend on the entries of $A^{U_i}$.
  The partition $\aparts$ can be computed in $O(m^2)$ time.
\end{lemma}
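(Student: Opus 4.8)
The plan is to build the partition $\aparts$ by a divide-and-conquer recursion on the cyclic intervals of $U$ along the separator curves, guided by the classical "separating path" structure of reachability in plane digraphs (\cite{Diks2007,Sub-ESA-93,Thorup:2004}), but recording the outcome as a set of submatrices rather than as a path hierarchy. First I would set up the base case: a submatrix $A^{S,T}$ is left alone once $|S|=1$ or $|T|=1$, in which case properties (1)–(3) hold trivially for it. For the recursive step, I would pick a vertex $v\in U$ that splits the cyclic order $\prec$ roughly in half (more precisely, splits whichever of $S$, $T$ we are currently recursing on), and use the fact that the $\curv_i$ are separator curves to argue that the set of vertices of $U$ reachable from $v$ (equivalently, that can reach $v$) forms a union of $O(\ell)=O(1)$ contiguous cyclic intervals of $S(U)$. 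This is the key planarity input: a directed path from $v$ to $u$ together with an arc of a separator curve bounds a Jordan region, so the "reachable-from-$v$" set is an up-set for a natural partial order on each curve, hence a union of $O(1)$ intervals — this is exactly what gives property (2), the $O(1)$-block structure of the active part of each row.

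Next I would recurse in the standard separator-hierarchy way. Having fixed the splitter $v$, I split $S$ into $S_{<v}$ and $S_{\ge v}$ (along $\prec$) and likewise for $T$, and I recurse on the three "off-diagonal" quadrants while the quadrant containing $v$ on the diagonal is handled by recursing on the two halves of $U$ containing $v$. The crucial structural claim is that a single $A^{S_j,T_j}$ output by the recursion has the property that the rows, restricted to the active columns, are "staircase-like": the ones in row $x$ occupy $O(1)$ contiguous blocks. I would prove this by induction, showing that each recursive call preserves a stronger invariant — roughly, that within the current $S\times T$ block, for every $x\in S$ the set $\{y\in T : A_{x,y}=1\}$ is already a union of $O(\ell)$ cyclic intervals of $T$ — and that the invariant is inherited by all four sub-blocks because intersecting a union of $O(\ell)$ intervals with a contiguous half is still a union of $O(\ell)$ intervals. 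Property (1), that every ordered pair $(s,t)$ with $s\neq t$ lands in exactly one output submatrix, follows from the fact that the quadrant decomposition is a partition of $(S\setminus\{v\})\times(T\setminus\{v\})\cup\ldots$ at each level and the recursion bottoms out. Property (3), the $O(\log m)$ bound on $\prowmap_s$ and $\pcolmap_t$, follows because the recursion tree has depth $O(\log m)$ (each split halves the relevant side) and a fixed row index $s$ survives into exactly one child at each level on the "row" axis and into $O(1)$ children on the "column" axis per level — so $s$ appears in $O(\log m)$ leaves total; symmetrically for columns.

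For the running-time and the "does not depend on the entries of $A^{U_i}$" claims: the splitters $v$ and the index sets $S_j,T_j$ are determined purely by the cyclic orders $\prec$ on $U_1,\ldots,U_\ell$ — i.e. by the embedding — and not by which reachability bits are set, which is immediate from how the recursion chooses splitters by position. The $O(m^2)$ time bound comes from the fact that the output submatrices partition the $m\times m$ grid of ordered pairs (by property (1), plus the diagonal), so their total size is $O(m^2)$; filling each entry takes $O(1)$ once $A$ itself is known, and $A$ can be computed in $O(m^2)$ by, e.g., a linear-time reachability computation from each of the $m$ vertices (or by any transitive-closure-type routine within the stated budget), and the bookkeeping for the recursion (choosing splitters, splitting index lists) is dominated by this.

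The main obstacle I anticipate is the geometric heart of the argument: proving cleanly that "reachable-from-$v$" is a union of $O(1)$ contiguous cyclic intervals simultaneously on all $\ell$ curves, and, more delicately, that this interval structure is stable under the recursive quadrant splitting so that the final per-row block count stays $O(1)$ rather than growing with the recursion depth. Handling the case $\ell>1$ (vertices spread over several disjoint curves) requires care: a path from $v$ on curve $\curv_i$ to $u$ on curve $\curv_j$ interacts with the nesting of the curves, so I would need to fix, as part of the invariant, a consistent choice of the "inside" side of each $\curv_i$ relative to the current block, and argue that crossing between curves contributes only a bounded additive number of extra intervals. I expect this to be the part where the separator-curve definition (Definition~\ref{def:separator-curve}), together with the Jordan Curve Theorem, must be used most carefully.
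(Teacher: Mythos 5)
There is a genuine gap, and it sits at exactly the point you flag as the ``geometric heart'' of your plan. Your recursion rests on the claim that the set of vertices of $U$ reachable from a fixed $v$ is a union of $O(\ell)=O(1)$ contiguous intervals of the cyclic order, and your inductive invariant asserts this for every raw row $\{y\in T: A_{x,y}=1\}$ of the current block. That claim is false. Take a single simple face with vertices $v,u_1,w_1,u_2,w_2,\ldots$ in clockwise order and put only the edges $vu_1,vu_2,\ldots$ inside the disk (a star from $v$ is planar); then the set reachable from $v$ is $\{u_1,u_2,\ldots\}$, which forms $\Omega(m)$ blocks. Planarity alone says nothing about a single row; what is true, and what the paper's property (2) is carefully phrased around, is a statement about rows \emph{after restricting to the active columns} $T'=\actcols{A^{S,T}}$, and even that only holds for blocks of a special shape: either bipartite blocks $S\prec T$ on one curve, where the Monge exchange property (two interleaving paths inside the curve must cross, Lemma~\ref{lem:monge}) yields a \emph{single} block per row among active columns (Lemma~\ref{lem:consec}), or whole cross-curve blocks $A^{U_i,U_j}$, where the paper cuts the graph open along a path from $S$ to $T$ to reduce to the single-curve case (Lemma~\ref{lem:intra}). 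Your invariant never mentions active columns, so the induction does not get started; and note that if your raw-row claim were true, no recursive bipartition of the diagonal blocks would be needed at all, which is a sign that the claim cannot be the right lever. The quantification matters too: activity is defined relative to the \emph{whole} submatrix $A^{S,T}$ in the partition (a column is active if \emph{some} row of $S$ reaches it), which is why interval structure for these blocks does not follow from, nor trivially restrict to, per-row statements.

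Structurally, the paper's construction is simpler than your quadrant recursion and avoids these pitfalls: it first splits $A$ into the $\ell^2$ blocks $A^{U_i,U_j}$; for $i=j$ it recursively halves $U_i$ in the order $\prec$ and emits the two bipartite off-diagonal submatrices at each level (giving the $O(\log m)$ bound on $\prowmap_s$ and $\pcolmap_t$ exactly as in your depth argument), and for $i\neq j$ it keeps the block whole and proves the $O(1)$-block property via the cut-open argument. Your proposal gets the bookkeeping parts right (the partition is determined by positions only, hence independent of the entries; total size $O(m^2)$; the $O(\log m)$ projection bound from recursion depth), but property (2) --- the one result that actually requires planarity --- is not established: you would need to replace the ``reachable set is $O(1)$ intervals'' claim with the Monge property for bipartite sub-blocks plus the active-column consecutive-ones lemma, and supply a separate argument (such as the paper's cutting construction) for pairs of distinct curves, where your ``bounded additive number of extra intervals'' remark is not a proof.
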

The remaining part of the section is devoted to proving Lemma~\ref{lem:monge_reach}.
\begin{definition}
The reachability submatrix $A^{S,T}$ is called \emph{bipartite},
  if $S$ and $T$ lie on a single separator curve (i.e., $S,T\subseteq U_i$ for $i\in \{1,\ldots,\ell\}$)
  and either $S\prec T$ or $T\prec S$.
\end{definition}
\begin{lemma}[Monge property]\label{lem:monge}
  Let $A^{S,T}$ be a bipartite reachability submatrix of $U$.
  Let $a,b\in S$ and $c,d\in T$ be such that $a\prec b$ and $c\prec d$.
  Suppose $A^{S,T}_{a,c}=1$ and $A^{S,T}_{b,d}=1$.
  Then $A^{S,T}_{a,d}=1$ and $A^{S,T}_{b,c}=1$ also hold.
\end{lemma}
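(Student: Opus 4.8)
The plan is to prove the Monge property by a planarity/crossing argument on directed paths. Since $A^{S,T}$ is bipartite, $S$ and $T$ both lie on a single separator curve, say $\curv_i$, and $S \prec T$ (the case $T \prec S$ is symmetric). Fix a directed path $P_{a,c}$ witnessing $A^{S,T}_{a,c}=1$, i.e.\ $P_{a,c} = \dirpath{a}{c}{G}$, and a directed path $P_{b,d} = \dirpath{b}{d}{G}$ witnessing $A^{S,T}_{b,d}=1$. The key geometric observation is that, because $a,b,c,d$ all lie on the Jordan curve $\curv_i$ and appear in the cyclic (here, linear within a contiguous fragment) order $a \prec b \prec c \prec d$, and because $\curv_i$ is a separator curve so that the whole graph $G$ lies weakly on one side of $\curv_i$, the two paths $P_{a,c}$ and $P_{b,d}$ must ``cross'': there is a vertex (or at least a point of the embedding) $z$ that lies on both $P_{a,c}$ and $P_{b,d}$. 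Intuitively, the curve $\curv_i$ together with the arc of it from $a$ to $c$ separates $b$ from $d$ among the points weakly inside (or outside) $\curv_i$, so any curve from $b$ to $d$ drawn in that region must meet $P_{a,c}$.

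Granting the crossing vertex $z$, the rest is routine path surgery. Split $P_{a,c}$ at $z$ as $P_{a,c} = \dirpath{a}{z}{} \cdot \dirpath{z}{c}{}$ and $P_{b,d}$ at $z$ as $P_{b,d} = \dirpath{b}{z}{} \cdot \dirpath{z}{d}{}$. Concatenating $\dirpath{a}{z}{}$ with $\dirpath{z}{d}{}$ yields a directed walk from $a$ to $d$, hence a directed path $\dirpath{a}{d}{G}$, so $A_{a,d}=1$; since $a,d \in S\cup T$ and the submatrix entries agree with $A$, this gives $A^{S,T}_{a,d}=1$. Symmetrically, $\dirpath{b}{z}{} \cdot \dirpath{z}{c}{}$ gives $A^{S,T}_{b,c}=1$. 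This concludes the proof modulo the crossing claim.

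The main obstacle is establishing the crossing vertex $z$ rigorously. One has to be careful about several points: (i) the embedded paths are curves that may touch $\curv_i$ (they start and end on it) and may run along edges whose interiors lie on $\curv_i$, so ``crossing'' should be phrased combinatorially — e.g.\ consider the closed curve formed by $P_{a,c}$ together with the arc of $\curv_i$ from $c$ back to $a$ not containing $b,d$, and apply the Jordan Curve Theorem to show $b$ and $d$ lie in different regions, while $P_{b,d}$ must stay in the region weakly on the correct side of $\curv_i$; (ii) one must handle the degenerate situations where a path passes through one of $a,b,c,d$ or shares vertices with $\curv_i$ other than its endpoints (these only help, since any shared vertex gives an even stronger common point); (iii) one must use the separator-curve hypothesis — that $G$ lies weakly inside (or weakly outside) $\curv_i$ — to confine $P_{b,d}$ to the correct half-plane so the Jordan-curve separation actually forces an intersection. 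I would isolate this as a short geometric sublemma: \emph{if $a \prec b \prec c \prec d$ all lie on $\curv_i$, all of $G$ lies weakly on one fixed side of $\curv_i$, and $P$ is an $\dirpath{a}{c}{}$ path and $Q$ a $\dirpath{b}{d}{}$ path in $G$, then $P$ and $Q$ share a vertex}, and prove the Monge property as the easy corollary above.
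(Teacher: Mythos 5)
Your proposal is correct and follows essentially the same route as the paper: planarity forces the $\dirpath{a}{c}{}$ and $\dirpath{b}{d}{}$ paths, both confined weakly to one side of $\curv_i$, to share a vertex $w$, and rerouting through $w$ yields the two missing entries — the paper likewise asserts the crossing directly from planarity (with a figure), so your extra care about making the Jordan-curve argument rigorous is, if anything, more detailed than the original. One tiny imprecision: the separator-curve definition only forces the connected components that meet $\curv_i$ (hence those containing $a,b,c,d$ and the two witness paths) to lie weakly on one side, not all of $G$, but that weaker statement is exactly what your confinement step uses.
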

\ifshort
\begin{proof}[Proof sketch]
Note that paths $\dirpath{a}{c}{G}$ and $\dirpath{b}{d}{G}$
cross at some vertex $w$ (see Figure~\ref{fig:monge}).
Thus, there also exist paths
\linebreak
$\dirpath{a}{d}{G}$ and $\dirpath{b}{c}{G}$.
\end{proof}
\fi
\iffull
\begin{proof}
We have $S,T\subseteq U_i$ for some $i$. Without loss of generality, assume that $S\prec T$ and that
all connected components of $G$ lie either weakly inside
of $\curv_i$ or strictly outside $\curv_i$.
Since $a,b,c,d$ lie on $\curv_i$, the connected components of $G$
containing any of these vertices are all weakly inside $\curv_i$.
  As $A^{S,T}_{a,c}=1$, there exists a simple path $P = \dirpath{a}{c}{G}$
contained entirely weakly inside $\curv_i$.
Similarly, there exists a path $Q = \dirpath{b}{d}{G}$ contained weakly inside $\curv_i$.
By planarity of $G$, we conclude that the paths $P$ and
$Q$ must have a common vertex $w$ (see Figure~\ref{fig:monge}).
Note that $w$ is reachable from both $a$ and $b$.
Analogously, both $c$ and $d$ are reachable from $w$.
Thus, there also exist paths $\dirpath{a}{d}{G}$ and $\dirpath{b}{c}{G}$.
\end{proof}
\fi

\begin{figure}[htb]
  \begin{minipage}{.5\columnwidth}
    \centering
    \ifshort
    \includegraphics[height=3.7cm]{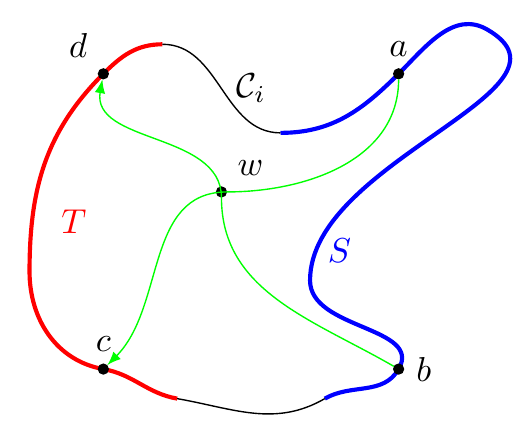}
    \fi
    \iffull
    \includegraphics[height=5cm]{figs/figure-monge}
    \fi
    \caption{\label{fig:monge}}
  \end{minipage}%
  \begin{minipage}{.5\columnwidth}
    \centering
    \ifshort
    \includegraphics[height=3.7cm]{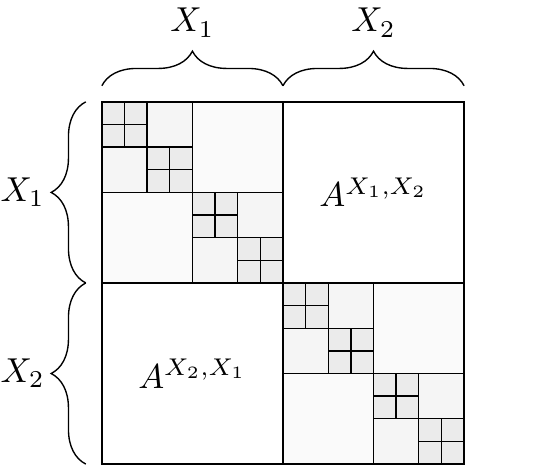}
    \fi
    \iffull
    \includegraphics[height=5cm]{figs/figure-part}
    \fi
    \caption{\label{fig:part}}
  \end{minipage}
\end{figure}

\begin{lemma}\label{lem:consec}
  Let $A^{S,T}$ be a bipartite reachability submatrix of $U$.
  Let $T'=\actcols{A^{S,T}}$.
  Denote by $t_1\prec\ldots\prec t_p$ the elements of $T'$.
  For each non-zero row $s$ of $A^{S,T'}$,
  the ones in that row form a single contiguous block, i.e., there exist
  $l_s$ and $r_s$ ($1\leq l_s\leq r_s\leq p$) such that $A^{S,T'}_{s,t}=1$
  holds if and only if $t\in \{t_{l_s},\ldots,t_{r_s}\}$.
\end{lemma}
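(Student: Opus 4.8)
The plan is to fix a non-zero row $s\in S$ and show that the set of active columns hit by $s$ is an interval of $T'=\{t_1\prec\ldots\prec t_p\}$. Suppose for contradiction that this fails, so there exist indices $i<j<k$ with $A^{S,T'}_{s,t_i}=1$, $A^{S,T'}_{s,t_k}=1$, but $A^{S,T'}_{s,t_j}=0$. Since $t_j\in T'=\actcols{A^{S,T}}$, by definition there is some row $s'\in S$ with $A^{S,T'}_{s',t_j}=1$. The idea is to compare the behaviour of rows $s$ and $s'$ at columns $t_i,t_j,t_k$ and derive a contradiction with the Monge property of Lemma~\ref{lem:monge}, after sorting the two rows correctly with respect to $\prec$.

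The key steps, in order: First, without loss of generality assume $s'\prec s$ (the case $s\prec s'$ is symmetric, swapping the roles). Next, apply Lemma~\ref{lem:monge} to the quadruple $a=s'$, $b=s$, $c=t_i$, $d=t_j$: we have $a\prec b$ and $c\prec d$, and $A_{s',t_j}=A^{S,T'}_{s',t_j}=1$. To invoke Monge I need a second $1$ of the form $A_{s,?}=1$ with the column before $t_j$ — and indeed $A_{s,t_i}=1$ with $t_i\prec t_j$. Wait: Monge as stated needs $A_{a,c}=1$ and $A_{b,d}=1$ to conclude $A_{a,d}=1$ and $A_{b,c}=1$; I instead have $A_{b,c}=1$ (i.e. $A_{s,t_i}=1$) and $A_{a,d}=1$ (i.e. $A_{s',t_j}=1$), which is the other diagonal. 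So I apply the contrapositive-style consequence: reading Monge with the pair $(a,d)$ and $(b,c)$ both equal to $1$ does not immediately give $A_{b,d}=1$. Instead, I use Monge on $a=s'$, $b=s$, $c=t_j$, $d=t_k$: here $A_{b,d}=A_{s,t_k}=1$, and I want $A_{a,c}=1$; that is not yet known. The cleaner route is: apply Monge to $a=s'$, $b=s$, $c=t_i$, $d=t_k$ using $A_{s',t_j}=1$ — no, $t_j\notin\{c,d\}$.

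Let me restate the correct argument: the real tool should be applied with rows $s',s$ and a column strictly between. Since $A_{s,t_i}=1$ and $A_{s,t_k}=1$ with $t_i\prec t_k$, and $A_{s',t_j}=1$ with $t_i\prec t_j\prec t_k$, I apply Lemma~\ref{lem:monge} twice. If $s'\prec s$: take $a=s'$, $b=s$, $c=t_j$, $d=t_k$; since $A_{a,c}=A_{s',t_j}=1$ and $A_{b,d}=A_{s,t_k}=1$, Monge yields $A_{b,c}=A_{s,t_j}=1$, contradicting $A^{S,T'}_{s,t_j}=0$. If instead $s\prec s'$: take $a=s$, $b=s'$, $c=t_i$, $d=t_j$; since $A_{a,c}=A_{s,t_i}=1$ and $A_{b,d}=A_{s',t_j}=1$, Monge yields $A_{a,d}=A_{s,t_j}=1$, again a contradiction. (If $s=s'$ there is nothing to prove since then $A_{s,t_j}=1$ directly.) In either case we reach a contradiction, so the ones in row $s$ of $A^{S,T'}$ form a contiguous block, and setting $l_s,r_s$ to be the least and greatest indices with a $1$ gives the claim. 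The main obstacle is purely bookkeeping: getting the orientation of $s$ versus $s'$ right so that the Monge inequality points in the useful direction, and checking that $A^{S,T'}$ genuinely inherits the Monge property of $A^{S,T}$ — which it does, because $A^{S,T'}$ is just $A^{S,T}$ restricted to a subset of columns, and Lemma~\ref{lem:monge} is stated for arbitrary $a,b\in S$, $c,d\in T$, hence applies to any subcolumns as well.
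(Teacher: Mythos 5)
Your final argument is correct and essentially identical to the paper's proof: assume a $1,0,1$ pattern in row $s$, use that the middle column is active to find a row $s'$ with a $1$ there, and apply the Monge property (Lemma~\ref{lem:monge}) to the appropriate pair of ones depending on whether $s\prec s'$ or $s'\prec s$, contradicting $A^{S,T'}_{s,t_j}=0$ (the paper just compresses the two cases into a ``wlog''). The exploratory false starts in your middle paragraph should be cut, but the cleaned-up version at the end is exactly the intended proof.
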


\begin{proof}
Let $s\in S$ be some row of $A^{S,T'}$ containing at least one non-zero entry.
Suppose there exist $a,b,c\in T'$ such that $a\prec b\prec c$ and $A^{S,T'}_{s,a}=1$,
$A^{S,T'}_{s,b}=0$ and $A^{S,T'}_{s,c}=1$.
By the definition of $T'$, there exists some other row $s'\in S, s'\neq s$, such that
$A^{S,T'}_{s',b}=1$.
Wlog. assume $s\prec s'$.
Thus, we have both $A^{S,T'}_{s,a}=1$ and $A^{S,T'}_{s',b}=1$.
By Lemma~\ref{lem:monge} this implies $A^{S,T'}_{s,b}=1$, a contradiction.
\end{proof}

\begin{lemma}\label{lem:part}
 Let $U_i$ be a set of vertices lying on a single separator curve in $G$ and $m=|U_i|$.
  The reachability submatrix $A^{U_i}$ can be partitioned into a set $\aparts^{U_i} = \{A^{S_1,T_1}, \ldots, A^{S_k,T_k}\}$
  of bipartite reachability submatrices such that:
  \begin{enumerate}
  \item for each $s,t\in U_i$, $s\neq t$, there exists exactly one such
  $A^{S_j,T_j}\in\aparts^{U_i}$ that
    $s\in S_j$ and $t\in T_j$,
  \item for any $s\in U_i$ and $t\in U_i$, the sets
    $\prowmap_s(\aparts^{U_i})$ and $\pcolmap_t(\aparts^{U_i})$ have size $O(\log{m})$.
  \end{enumerate}
  The sets $S_j$ and $T_j$
  that define the partition do not depend on the entries of $A^{U_i}$.
 The partition $\aparts^{U_i}$ can be computed in $O(m^2)$ time.
\end{lemma}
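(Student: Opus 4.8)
The plan is to build the partition $\aparts^{U_i}$ by a divide-and-conquer recursion on the cyclically ordered set $U_i$, in the style of a classical one-dimensional range decomposition, and to use the Monge/staircase structure (Lemmas~\ref{lem:monge} and~\ref{lem:consec}) only to control the running time, not the shape of the partition. Fix the linear order $\prec$ on $U_i$ coming from the clockwise order on $\curv_i$, and index $U_i$ as $v_1\prec v_2\prec\cdots\prec v_m$. First I would recursively split $U_i$ at its midpoint into a left half $L$ and a right half $R$ with $L\prec R$; the two ``cross'' submatrices $A^{L,R}$ and $A^{R,L}$ are bipartite by definition, so I add both of them to $\aparts^{U_i}$, and then recurse on $A^{L}$ and $A^{R}$. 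The base case is a singleton $\{v\}$, which contributes nothing (Property~1 only concerns $s\neq t$). This is exactly the standard segment-tree-like decomposition of the off-diagonal part of an $m\times m$ matrix into $O(m)$ bipartite blocks arranged over $O(\log m)$ levels.

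Next I would verify the three requirements. For Property~1: for any $s\neq t$ in $U_i$, walk down the recursion until the first level at which $s$ and $t$ land in different halves; at that unique node they contribute to exactly one of $A^{L,R}$ or $A^{R,L}$ depending on whether $s\prec t$ or $t\prec s$, and they never again appear together in any later submatrix because the recursion only ever pairs a half with itself. (A small wrinkle: since $\curv_i$ is a \emph{cycle}, ``$L\prec R$'' depends on where we cut the cyclic order; I fix an arbitrary starting vertex once and for all, as the statement permits, so that $\prec$ is a genuine linear order and each off-diagonal pair is covered exactly once.) For Property~2, a fixed row $s$ appears in the ``$S$''-side of a cross submatrix at most once per recursion level on the path from the root to the leaf $\{s\}$, and there are $O(\log m)$ levels, hence $|\prowmap_s(\aparts^{U_i})|=O(\log m)$; the argument for $\pcolmap_t$ is symmetric. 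The sets $S_j,T_j$ are determined purely by the recursive halving of $U_i$ and by the fixed order $\prec$, so they do not depend on the entries of $A^{U_i}$, as required.

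The remaining point, and the only place where anything nontrivial is needed, is the $O(m^2)$ time bound. Naively, summing $|S_j|\cdot|T_j|$ over all blocks already costs $\sum_{\text{levels}} O(m)\cdot(\text{block width})=O(m^2\log m)$ if we materialize each bipartite block densely, and we also need to actually produce the partition together with enough information to later exploit Property~2 of Lemma~\ref{lem:monge_reach} (the ``$O(1)$ blocks per row'' structure). Here I would use Lemmas~\ref{lem:monge} and~\ref{lem:consec}: restricted to the active columns, each cross block $A^{S,T'}$ is a staircase (each nonzero row is a single contiguous interval, and by the Monge property these intervals are monotone in $s$), so each such block can be represented in $O(|S|+|T|)$ space by listing, for each row $s$, the pair $(l_s,r_s)$, and can be computed incrementally from the blocks one level down in time linear in $|S|+|T|$ by merging the two child staircases. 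Summed over the $O(\log m)$ levels this gives $O(m\log m)$ for the compressed output; the dominant cost is the single $O(m^2)$-time computation of the reachability relation on $U_i$ itself (e.g.\ by $m$ graph searches), after which all the staircase bookkeeping is lower-order. I expect this running-time accounting — in particular, arranging the recursion so that active-column sets are passed up cheaply and the Monge monotonicity is maintained across merges — to be the main obstacle; the combinatorics of Properties~1 and~2 are essentially the correctness proof of a segment tree.
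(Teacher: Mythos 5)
Your construction is exactly the paper's: halve $U_i$ with respect to $\prec$, add the two cross submatrices $A^{X_1,X_2}$ and $A^{X_2,X_1}$, and recurse on the halves; your arguments for Property~1, Property~2 and the independence of the sets $S_j,T_j$ from the entries of $A^{U_i}$ also match the paper's. The only divergence is the running-time analysis, and there your reasoning contains a miscalculation which you then over-repair. The naive cost of materializing every block densely is not $O(m^2\log m)$: at recursion depth $\ell$ the cross blocks created have width $O(m/2^{\ell})$ and each vertex occurs as a row of at most one such block per level, so the per-level cost is $O(m\cdot m/2^{\ell})$ and the sum over levels is geometric, i.e.\ $O(m^2)$. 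Even more directly, your own Property~1 says that every ordered pair $s\neq t$ lands in exactly one block, hence $\sum_j |S_j|\cdot|T_j| = m(m-1)$, and the recursion runs in time proportional to this total size. That one-line counting observation is the paper's entire time analysis.

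Because of the miscalculation you invoke Lemmas~\ref{lem:monge} and~\ref{lem:consec} and a staircase-compression/merging scheme, and you explicitly flag this bookkeeping as the main obstacle without carrying it out; none of it is needed for this lemma, and as written it is the weakest part of the proposal (Lemma~\ref{lem:consec} only controls rows after restricting to the \emph{active} columns, and the claimed monotonicity of the row intervals across a merge of two child blocks would itself require proof). The aside that the reachability relation on $U_i$ can be computed in $O(m^2)$ time ``by $m$ graph searches'' is also unjustified ($m$ searches cost $O(m\,|E(G)|)$), but it is irrelevant here: the lemma takes the matrix $A^{U_i}$ as given and only asks for the partition. With the time analysis replaced by the counting argument above, your proof coincides with the paper's.
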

\begin{proof}
We give a recursive procedure to construct the partition $\aparts^{X}$, for $X\subseteq U_i$.
If $|X|=1$, the procedure exits immediately.
Otherwise, let $X=\{x_1,\ldots,x_k\}$, where $x_1\prec\ldots\prec x_k$ and $k\geq 2$.
  Let $q=\lfloor(k+1)/2\rfloor$.
  Set $X_1=\{x_1,\ldots,x_q\}$ and $X_2=\{x_{q+1},\ldots,x_k\}$.
  Note that $X_1\prec X_2$.
  Thus, both $A^{X_1,X_2}$ and $A^{X_2,X_1}$ are bipartite reachability submatrices.
  We add these matrices to the partition and recurse on the subsets
  $X_1$ and $X_2$ (see Figure~\ref{fig:part}).

  Note that for any $s,t\in U_i$, $s\neq t$, the last recursive call with both
  $s$ and $t$ in the input set $X$ places the entries $A^{U_i}_{s,t}$ and $A^{U_i}_{t,s}$
  in the bipartite reachability submatrices $A^{X_1,X_2}$ and $A^{X_2,X_1}$.

  Fix $x \in X$ and assume $|X| = k$.
  Let $f(k)$ be the number of bipartite reachability submatrices that are produced by the recursive algorithm and contain a row (column)
  corresponding to $x$.
  We have $f(k)\leq f(\lceil k/2\rceil)+1$ and thus it is easy to see that $f(k)=O(\log{k})$.
  Thus, we conclude that for any $s$, $|\prowmap_s(\aparts^{U_i})|=O(\log{m})$ and
  similarly $|\pcolmap_s(\aparts^{U_i})|=O(\log{m})$.

  The recursive procedure runs in time that is proportional to the total size of matrices that are produced.
  Recall that for each $s,t\in U_i$, $s\neq t$, there exists exactly one such $A^{S_j,T_j}\in\aparts^{U_i}$ that $s\in S_j$ and $t\in T_j$.
  It follows that the total running time is $O(m^2)$.
\end{proof}

An analogue of Lemma~\ref{lem:consec} can be shown for reachability
submatrices $A^{U_i,U_j}$, where $i\neq j$.
\ifshort
The proof is given in the full version of the paper.
\fi
\begin{lemma}\label{lem:intra}
Let $A^{S,T}$ be a reachability submatrix for the sets $S=U_i,T=U_j$, where $i\neq j$.
  Let $T'=\actcols{A^{S,T}}$.
  For each row $s$ of $A^{S,T'}$,
  the ones in that row form $O(1)$ connected blocks.
\end{lemma}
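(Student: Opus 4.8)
The plan is to mimic the argument used for Lemma~\ref{lem:consec}, replacing the single-curve Monge property (Lemma~\ref{lem:monge}) with a suitable crossing argument for paths that start on $\curv_i$ and end on $\curv_j$. The key structural fact I want to establish first is a two-curve analogue of the Monge property: if $a\prec b$ on $\curv_i$ (i.e.\ $a,b\in U_i$ consecutive in clockwise order) and $c\prec d$ on $\curv_j$, and there are paths $\dirpath{a}{c}{G}$ and $\dirpath{b}{d}{G}$, then at least one of the ``uncrossed'' pairs is also connected — but now, because $c$ and $d$ lie on a \emph{different} curve, the conclusion is weaker: we do not get both $\dirpath{a}{d}{G}$ and $\dirpath{b}{c}{G}$, but rather one or the other depending on how the two paths wind between the curves. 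The point is that a path from $\curv_i$ to $\curv_j$ must cross any simple closed curve separating the two, and two such paths with ``interleaved'' endpoints on $\curv_j$ are forced to intersect. This is exactly the kind of statement that bounds the number of ``sign changes'' in a row, which is what we need.

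First I would set up the geometry: since $\curv_i$ and $\curv_j$ are disjoint separator curves, exactly one of them — say $\curv_i$ after possibly swapping — contains the other strictly inside one of its complementary regions, or they are nested/side-by-side; in all cases there is a well-defined cyclic structure and a ``channel'' of the plane through which any $\dirpath{U_i}{U_j}{G}$ path must pass. Fix a row $s\in S=U_i$ and look at the sequence of active columns $t_1\prec\dots\prep p$ of $A^{S,T'}$ in clockwise order around $\curv_j$. Suppose the pattern of $A^{S,T'}_{s,\cdot}$ along this cyclic sequence has many ``blocks'' of ones separated by blocks of zeros. For each zero entry $t_m$ (with $t_m$ active), there is a witness row $s_m\ne s$ with $A^{S,T'}_{s_m,t_m}=1$. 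Now take a zero $t_m$ flanked by ones $t_l\prec t_m\prec t_r$ in row $s$; using the two-curve crossing lemma on the paths $\dirpath{s}{t_l}{G}$, $\dirpath{s}{t_r}{G}$ (sharing the common source $s$) together with $\dirpath{s_m}{t_m}{G}$, I would argue that $t_m$ cannot in fact be a zero in row $s$ — or, more carefully, that such a configuration can occur only $O(1)$ times, because the only obstruction is the cyclic wrap-around of $\curv_j$ and the at-most-two ``ends'' of $\curv_j$ as seen from the channel. Concretely, cutting the cyclic order of $U_j$ at the (at most two) points where the channel touches $\curv_j$ turns the row into $O(1)$ linear intervals, and on each interval the one-curve Monge argument of Lemma~\ref{lem:consec} applies verbatim to force a single contiguous block. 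Hence the total number of blocks in any row is $O(1)$.

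The main obstacle, and where I would spend the most care, is making the ``channel cut'' rigorous — i.e.\ pinning down precisely how many times and in which cyclic positions the interleaving argument can fail, so that the constant is genuinely $O(1)$ and not, say, growing with $\ell$ or with the embedding. This amounts to a careful planarity/Jordan-curve case analysis of the possible relative positions of $\curv_i$ and $\curv_j$ (nested, disjoint-exterior, with one inside a hole of the other), and of where paths realizing $A^{S,T'}$ are allowed to run given that $G$'s components are confined by the separator-curve property. Once that case analysis fixes the $O(1)$ cut points, everything else reduces to repeated application of Lemma~\ref{lem:monge}/Lemma~\ref{lem:consec} on linearized pieces, which is routine. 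I would also double-check the degenerate cases where a path touches both curves only at its endpoints versus where it runs along a curve, using the last clause of Definition~\ref{def:separator-curve} to rule out pathological overlaps.
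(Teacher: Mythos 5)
Your high-level instinct --- linearize the cyclic two-curve situation so that Lemma~\ref{lem:consec} can be applied on the pieces --- is the right one, but both concrete devices you propose break down. First, the ``two-curve crossing lemma'' you want as a base step is false: the region between the two disjoint curves is topologically an annulus, and two vertex-disjoint paths from $\curv_i$ to $\curv_j$ with interleaved endpoints need \emph{not} intersect (cut the annulus along one of them; the other lives in the resulting disk no matter where its endpoints sit). So there is no ``one or the other uncrossed pair is connected'' conclusion available, and the per-zero-entry interleaving argument has nothing to stand on. Second, the ``channel'' is not a well-defined object: $\curv_i$ and $\curv_j$ alone do not determine a region with $O(1)$ contact points on $\curv_j$ at which one could cut, and cutting the \emph{plane} along an arbitrary arc is useless anyway, because graph edges may cross the cut, so reachability in the cut object bears no relation to reachability in $G$. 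Your plan defers exactly this to ``routine'' case analysis, but it is the entire content of the lemma.

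What does work --- and what the paper does --- is to cut along an actual \emph{graph path}. Fix a row $s$ with a nonzero entry, take a shortest path $P=\dirpath{s}{y}{G}$ with $y\in T$, and let $Q$ be its suffix from the last $S$-vertex $q$ on $P$ to $y$, so $Q$ meets $\curv_i$ only at $q$ and $\curv_j$ only at $y$. Cutting $G$ along $Q$ (splitting each vertex of $Q$ into a left copy and a right copy that inherit the edges emanating on the corresponding side) yields a graph $G'$ in which $S\setminus\{q\}\cup\{q',q''\}$ and $T\setminus\{y\}$ lie on a \emph{single} separator curve. One must then prove that reachability is preserved: $t\in T\setminus\{y\}$ is reachable from $s$ in $G$ iff it is reachable from $s$, $q'$ or $q''$ in $G'$ (any witness path either avoids $Q$ and survives the cut, or hits $Q$ and then $t$ is reachable from $q$, hence from $q'$ or $q''$ in $G'$), and --- a point your plan skips entirely --- that the active column sets agree up to $y$, i.e.\ $\actcols{A^{S,T}}=\actcols{A^{S^*,T^*}}\cup\{y\}$; without this, Lemma~\ref{lem:consec} cannot be invoked, since contiguity is only guaranteed after restricting to active columns of the bipartite submatrix in the cut graph. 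With these two facts, the row $s$ of $A^{S,T'}$ is the union of the singleton column $\{y\}$ with at most three rows (those of $q'$, $q''$, and $s$) of a single-curve bipartite matrix, each consisting of one block up to the cyclic shift between the cut order and $\prec$, which gives the claimed $O(1)$ blocks. Your proposal is missing this cutting construction, the reachability-preservation argument, and the active-column bookkeeping, so as written it does not yield the lemma.
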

\iffull
\begin{proof}
  Let $s$ be some row of $A^{S,T}$ containing at least one non-zero entry.
  Equivalently, there exists a path $\dirpath{s}{y}{G}$ for some $y\in T$.
  Without loss of generality, assume that all connected components of $G$ lie either weakly inside
  $\curv_i$ or strictly outside $\curv_i$.
  As $s\in\curv_i$ and $y$ is in the same connected component of $G$ as $s$,
  $y$ lies weakly inside $\curv_i$.
  However, $y\in\curv_j$ and $\curv_i\cap\curv_j=\emptyset$, so entire $\curv_j$ lies
  strictly inside $\curv_i$.
  In particular, all vertices of $T$ lie strictly inside $\curv_i$.

  Let $P= \dirpath{s}{y}{G}$ be some shortest directed path from $s$ to a vertex of $T$.
  Clearly, $P$ is simple
  and $y$ is the only vertex of $P\cap T$.
  Let $q$ be the last vertex of $P$ such that $q\in S$ and denote
  by $Q$ the directed subpath $q=v_1\ldots v_k=y$ of $Q$.
  By the definition of a separator curve, $Q$ is weakly inside $\curv_i$, weakly outside
  $\curv_j$ and $Q\cap\curv_i=\{q\}$, $Q\cup\curv_j=\{y\}$.

  Let $G'$ be obtained by ``cutting'' $G$ along the path $Q$ as follows.
  We split each vertex $v_i$ of $Q$ into two vertices $v_i'$ and $v_i''$,
  that inherit the edges of $v_i$ strictly emanating left or
  right of $Q$, respectively.
  The created vertices are connected with directed paths $Q'=v_1'\ldots v_k'$
  and $Q''=v_1''\ldots v_k''$ (see Figure~\ref{fig:inter-curve}).
  Let $U^*=S\cup T\setminus Q\cup Q'\cup Q''$.
  Note that all the vertices of $U^*$ lie on a single separator curve $\curv^*$ of $G'$.
  This curve imposes some clockwise order on $U^*$.

  First we prove that any $t\in T\setminus\{y\}$ is reachable from $q$ in $G$
  if and only if $t$ is reachable from either $q'$ or $q''$ in $G'$.
  The ``if'' part is trivial.
  Suppose there is a path $R=\dirpath{q}{t}{G}$ and let $z$ be the last
  vertex that $R$ and $Q$ have in common.
  Note that $z$ is not the last vertex of $R$ and let the edge going out
  of $z$ emanate (wlog.) left of $R$.
  The subpath $\dirpath{z}{t}{}$ of $R$ only shares its starting point with $Q$ and
  thus corresponds to a path $\dirpath{z'}{t}{G'}$.
  Therefore, there exists a path $\dirpath{q'}{z'}{G'} \overset{\scriptscriptstyle{G'}}{\longrightarrow} t$.

  Next we prove that for $s\neq q$, any $t\in T\setminus\{y\}$ is reachable from
  $s$ in $G$ if and only if $t$ is reachable from either $q'$ or $q''$ or $s$ in $G'$.
  To see the ``if'' part, note that $t$ is reachable from $q'$ or $q''$ in $G'$
  iff it is reachable from $q$ in $G$, whereas $q$ is reachable from $s$ in $G$.
  For the ``only if'' part, suppose there is a path $\dirpath{s}{t}{G}$.
  Either this path intersects with $Q$ and then $t$ is reachable from $q$ in $G$
  (which implies that $t$ is reachable from $q'$ or $q''$ in $G'$), or
  the path does not intersect with $Q$ and is thus preserved in~$G'$.

  Now let $S^*=S\setminus\{q\}\cup\{q',q''\}$ and $T^*=T\setminus\{y\}$.
  Observe that the sets $S^*,T^*$ constitute
  contiguous fragments of the cycle containing the elements of $U^*$ ordered
  clockwise around $\curv^*$.
  Consider the bipartite reachability matrix $A^{S^*,T^*}$ of the graph $G'$.
  We show that $\actcols{A^{S,T}}=\actcols{A^{S^*,T^*}}\cup\{y\}$.
  Clearly, $\actcols{A^{S^*,T^*}}\cup\{y\}\subseteq \actcols{A^{S,T}}$.
  Let $x\in\actcols{A^{S,T}}$.
  There exists a path $\dirpath{S}{x}{G}$, where $x\in T\setminus\{y\}$.
  If this path does not intersect with $Q$, then it also exists in $G'$.
  Otherwise, $x$ is reachable from $q$ in $G$ and thus is reachable
  from either $q'$ or $q''$ in $G'$.

  By Lemma~\ref{lem:consec}, the ones in each row of $A^{S^*,{T^*}'}$, where
  ${T^*}'=\actcols{A^{S^*,T^*}}$ form at most one block.
  Now let $B$ be the matrix $A^{S^*,{T^*}'}$ but with columns ordered
  according to the order imposed on $T$.
  As the order of $T^*$ is just a cyclical shift of the order of $T\setminus\{y\}$,
  we conclude that for each row of $B$, the ones in that row form $O(1)$ blocks
  of columns.
  To finish the proof, observe that the ones in the row $s$ of $A^{S,T'}$
  can be seen as as the sum of:
  \begin{itemize}
    \item a single one in the column $y$ (as $y\in T'$),
    \item the ones in the row $q'$ of $B$,
    \item the ones in the row $q''$ of $B$,
    \item if $s\neq q$, the ones in the row $s$ of $B$,
  \end{itemize}
  which clearly make up at most $O(1)$ blocks of ones in the row $s$.
\end{proof}

\begin{figure}
\centering
\includegraphics{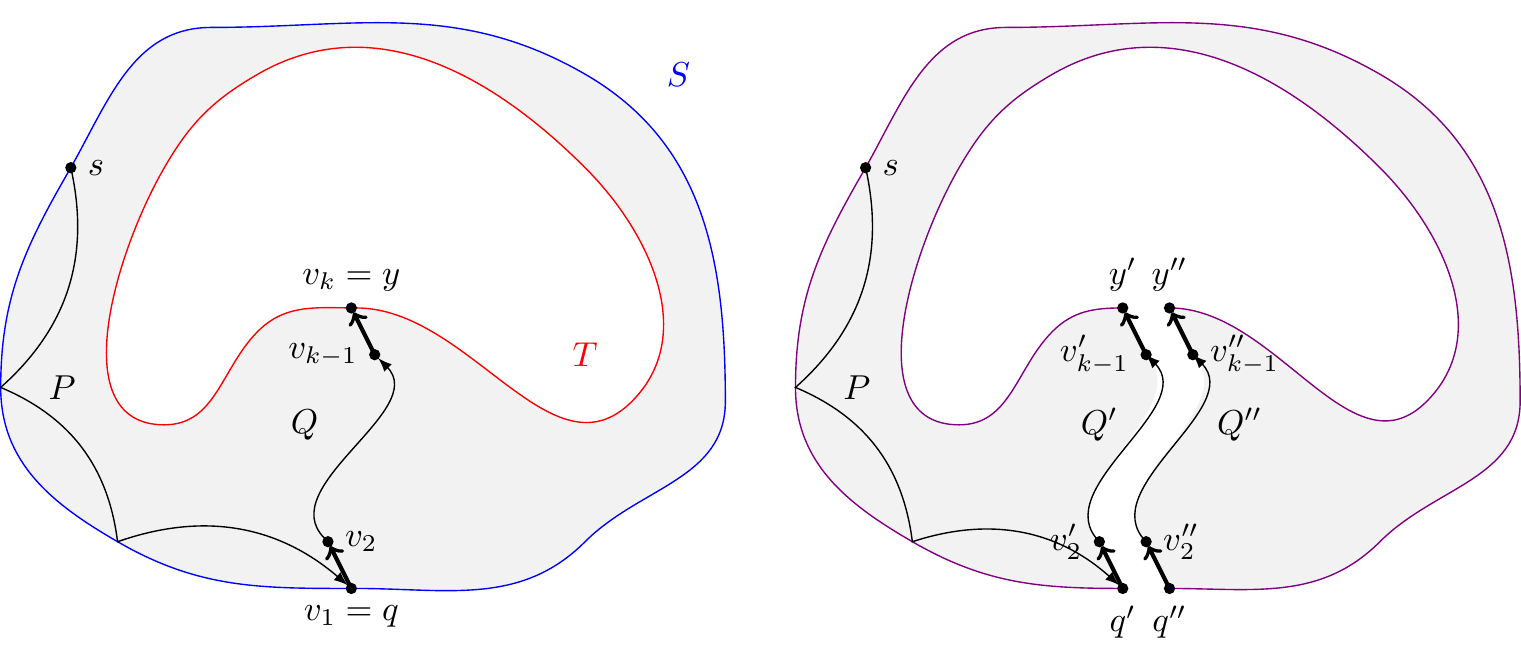}
\caption{}
\label{fig:inter-curve}
\end{figure}
\fi

\begin{proof}[Proof of Lemma~\ref{lem:monge_reach}]
  We first partition $A$ into $\ell^2$ reachability submatrices $A^{U_i,U_j}$,
  for all $i,j\in\{1,\ldots,\ell\}$.
  Each $A^{U_i,U_i}$ is then partitioned using Lemma~\ref{lem:part},
  whereas each $A^{U_i,U_j}$, for $i\neq j$, is included in $\aparts$
  without further partitioning.
  As each $u\in U$ belongs to exactly one $U_i$,
  $|\prowmap_u(\aparts)|=|\pcolmap_u(\aparts)|=O(\log{|U_i|})+\ell-1=O(\log{m})$.
  By Lemmas~\ref{lem:consec}~and~\ref{lem:intra}, for all $A^{S,T}\in\aparts$,
  the ones in each row of $A^{S,T'}$, where $T'=\actcols{A^{S,T}}$, form at most
  $O(1)$ connected blocks.
  The running time of this procedure is ${O(m^2 + \sum_i |U_i|^2)} = O(m^2)$.
\end{proof}

\section{The Switch-On Monge Transitive Closure}\label{sec:monge-trans}
Let $G_1$ and $G_2$ be two plane embedded digraphs
and let $U_i\subseteq V(G_i)$
be a set of vertices lying on a constant number of pairwise disjoint separator
curves of $G_i$, for $i=1,2$.
Moreover, assume that 
\ifshort
\linebreak
\fi
$V(G_1)\cap V(G_2)\subseteq U_1\cap U_2$.

In this section we consider the following problem.
Assume that the edges of $G_1$ and $G_2$ undergo switch-on operations.
Denote by $\swon{G_i}$ the subgraph of $G_i$ consisting
of the edges that are switched on.
This way, the entries of the reachability matrix $A_i$ of $U_i$ in $\swon{G_i}$
may change from $0$ to $1$.
We wish to maintain the reachability matrix $A=A^{U_1\cup U_2}$ of
the graph $\swon{G_1}\cup \swon{G_2}$ subject to updates to the matrices
$A_1$ and $A_2$.
Note that our assumptions imply that $A$ depends only on
$A_1$ and~$A_2$.

Observe that as $\swon{G_i}$ is a subgraph of $G_i$, the separator curves
of $G_i$ are also the separator curves of $\swon{G_i}$.
The remaining part of this section is devoted to proving the following theorem.
\begin{theorem}\label{thm:trans}
Let $U_i$ be a set of vertices lying on a constant number of separator curves of $G_i$ (for $i=1,2$),
such that $V(G_1)\cap V(G_2)\subseteq U_1\cap U_2$.
Let $m=|U_1\cup U_2|$.
Assume that the edges of $G_i$ undergo switch-ons and denote by $\swon{G_i}$
the subgraph of $G_i$ consisting of the edges that are on.
Let $A_i$ be the reachability matrix of $U_i$ in $\swon{G_i}$ and let
  $A$ be the reachability matrix of $U_1\cup U_2$ in $\swon{G_1}\cup \swon{G_2}$.
Suppose that after a switch-on we are given the list of entries in $A_1$ and $A_2$
that changed. Then we can initialize and maintain the matrix $A$ in $O(m^2\log{m}\log\log{m})$ total time.
\end{theorem}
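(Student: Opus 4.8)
The plan is to maintain $A$ incrementally by combining two reachability submatrix partitions of the form provided by Lemma~\ref{lem:monge_reach} with a data structure for transitive closure that exploits the $O(1)$-block structure guaranteed by that lemma. First I would apply Lemma~\ref{lem:monge_reach} to $G_1$ with the vertex set $U_1$ and to $G_2$ with $U_2$, obtaining partitions $\aparts_1=\{A_1^{S,T}\}$ and $\aparts_2=\{A_2^{S,T}\}$; since the sets $S,T$ defining each partition do not depend on the matrix entries, these can be computed once during initialization in $O(m^2)$ time and then remain fixed while the entries change monotonically from $0$ to $1$ under switch-ons. The key structural facts I will use are: (2) within each submatrix, restricting to its active columns, each row is a union of $O(1)$ contiguous blocks, and (3) every vertex lies in only $O(\log m)$ of the submatrices, as a row index and as a column index. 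Intuitively, $A$ is the transitive closure of the (bipartite-union) graph on vertex set $U_1\cup U_2$ whose arcs are exactly the $1$-entries of all the submatrices in $\aparts_1\cup\aparts_2$, since by the hypothesis $V(G_1)\cap V(G_2)\subseteq U_1\cap U_2$ any $u$-to-$v$ path in $\swon{G_1}\cup\swon{G_2}$ decomposes at the shared vertices in $U_1\cap U_2$ into alternating pieces each captured by one $A_i$; and each $A_i$-piece further decomposes via $\aparts_i$ into submatrix arcs.

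Next I would set up the incremental transitive-closure machinery. The natural approach is to maintain, for each vertex $u\in U_1\cup U_2$, the current reachable set $R(u)=\{v : A_{u,v}=1\}$, and to process newly-appearing $1$-entries in a BFS/worklist fashion: when a submatrix entry turns on (fed to us by the update), or when we discover a new reachable pair, we must propagate. The efficiency gain over naive propagation comes from the $O(1)$-block property: rather than touching reachable vertices one at a time, we manipulate reachable sets as unions of $O(\log m)$ intervals per separator curve (since each submatrix contributes $O(1)$ blocks to a row over its active columns, and each vertex participates in $O(\log m)$ submatrices), so a set union or "is this newly covered" test costs $O(\log m\log\log m)$ using an interval/predecessor structure. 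Each of the $\Theta(m^2)$ entries of $A$ becomes $1$ at most once over the whole sequence; charging $O(\log m\log\log m)$ work to each such event, plus the $O(\log m)$ submatrices that need to be revisited when a pair becomes reachable — with each revisit again costing polylogarithmic time — gives the claimed $O(m^2\log m\log\log m)$ total. The $\log\log m$ factor is exactly what one pays for a van Emde Boas–style predecessor structure over a universe of size $m$, which is needed to locate block boundaries and merge intervals.

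The main obstacle, and where the real work lies, is arguing that propagation can be done without ever paying per-vertex cost on dense reachable sets — in other words, that the block structure of the submatrices is preserved and usable throughout the incremental process, not just at a single snapshot. When $u$ can reach $w$ and $w$ is a row of some submatrix $A_i^{S,T}$, the vertices $w$ newly reaches (an $O(1)$-block set over $\actcols$) must be added to $R(u)$; but $R(u)$ restricted to $T$ need not itself be $O(1)$ blocks, since $u$ may reach many rows of that submatrix. The fix is to organize the propagation so that we work at the granularity of submatrices and their active-column blocks: for each submatrix $A_i^{S,T}$ we maintain which rows are "active", and we represent reachability through that submatrix compactly, pushing block-unions forward along the $O(\log m)$ submatrices incident to each vertex. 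Making this bookkeeping precise — defining the worklist items, proving termination, and proving the amortized bound by charging to entry-activations and to the $O(\log m)$ submatrix incidences — is the technical heart of the argument; the planar/Monge geometry enters only through Lemma~\ref{lem:monge_reach}, which we use as a black box, so the remainder is a careful amortized-analysis argument about propagating interval sets through a bounded-width layered structure.
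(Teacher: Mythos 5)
Your skeleton matches the paper's: apply Lemma~\ref{lem:monge_reach} as a black box to $A_1$ and $A_2$, observe that $A$ is the transitive closure of the graph whose arcs are the $1$-entries of the submatrices, run a queue-based incremental propagation, and charge work to the at most $m^2$ entry activations of $A$. But the step you defer as ``the technical heart'' is precisely the step that needs a new idea, and the sketch you give for it does not work. Your intermediate claim that the reachable sets $R(u)$ can be manipulated as unions of $O(\log m)$ intervals per separator curve is unjustified: the block/Monge structure of Lemma~\ref{lem:monge_reach} holds only for the rows of $A_1$ and $A_2$ restricted to their own active columns, not for the rows of $A$, which is the closure of the union of the two graphs and has no such structure in general. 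You then correctly identify the obstacle yourself (when propagating through a submatrix $A^{S,T}$, the set of columns already reachable from the source $a$ need not form $O(1)$ blocks), but the proposed fix --- maintaining ``active rows'' per submatrix and ``pushing block-unions forward'' --- gives no mechanism for extracting the \emph{new} columns of $Out(b)$ without paying for the already-reachable ones, which is exactly what would break the $O(m^2\,\mathrm{polylog}\,m)$ bound.

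The paper's resolution is the ingredient you are missing: for every source $a$ and every submatrix $A^{S,T}$ in the partition, maintain the \emph{complement} of the row of $A$ restricted to active columns, $\cset_a(A^{S,T})=\{t\in\actcols{A^{S,T}} : A_{a,t}=0\}$, as a van Emde Boas tree. Since $Out_b(A^{S,T})$ is $O(1)$ intervals over the active columns and $\cset_a(A^{S,T})\subseteq\actcols{A^{S,T}}$, the columns reachable from $b$ in one step but not yet from $a$ are produced by successor searches inside those intervals, at $O(\log\log m)$ per reported element; each reported element immediately becomes a new $1$-entry of $A$ and is deleted from $\cset_a(A^{S,T})$ never to return, so its cost is charged to that activation, while iterating over the $O(\log m)$ submatrices of $\prowmap_b(\aparts_i)$ adds $O(\log m)$ per dequeued pair. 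Together with the $O(m^2\log m)$ insertions/deletions into candidate sets (each column lies in $O(\log m)$ submatrices), this yields $O(m^2\log m\log\log m)$ total. The point is that no interval structure on $A$ itself is ever needed, and already-reachable vertices are never touched because they are simply absent from the candidate sets; without this (or an equivalent) device your propagation scheme is not bounded, so the proposal as written has a genuine gap.
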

\ifshort
\vspace{-2mm}
\fi
\subsection{Queue-Based Incremental Transitive Closure Algorithm}

\newcommand{\unreachable}{\mathit{Unreachable}}
\newcommand{\cannotreach}{\mathit{CannotReach}}
We first show and analyze a simple queue-based algorithm for updating the transitive closure of a graph after a set of edges is added.
Its pseudocode is given as Algorithm~\ref{alg:inc-tc}.
The algorithm should be considered folklore, but for sake of completeness we describe it in a detailed way, as we need its efficient implementation.

The transitive closure algorithm is based on the following idea.
Whenever it determines that a vertex $b$ is reachable from a vertex $a$, it infers that every vertex reachable from $b$ by an edge is also reachable from $a$ and every vertex that has an edge to $a$ can also reach $b$.
Then, it propagates this information using a queue.
In the pseudocode, we use $Out(b)$ to denote the set of heads of out-edges of $b$ and $In(a)$ to denote the set of tails of in-edges of $a$.
Moreover, $\unreachable(a, A)$ is the set of vertices $x$ such that $A_{a,x} = 0$ and $\cannotreach(b, A)$ is the set of vertices $x$ such that $A_{x,b} = 0$.
\ifshort
\vspace{-2mm}
\fi
\begin{algorithm}
\begin{algorithmic}[1]
\Require {A digraph $G$, such that $E^{+} \subseteq E(G)$ and the transitive closure $A$ of $\remove{G}{E^{+}}$.}
\Ensure {$A$ is the transitive closure of $G$.}
\Function{UpdateTransitiveClosure}{$G, A, E^{+}$}

\State{$Q := $ empty queue}
\For{$uw \in E^{+}$}
    \If{$A_{u,w} = 0$}
        \State{$A_{u,w} := 1$}
        \State{$Q.\textsc{Enqueue}(uw)$}
    \EndIf
\EndFor

\While{$Q$ is not empty}
    \State $ab := Q.\textsc{Dequeue}$
    \For {$x \in Out(b) \cap \unreachable(a, A)$}\label{alg:inc-tc:out}
        \State $A_{a,x} := 1$\label{l:set1}
        \State{$Q.\textsc{Enqueue}(ax)$}
    \EndFor
    \For {$x \in In(a) \cap \cannotreach(b, A)$}\label{alg:inc-tc:in}
        \State $A_{x,b} := 1$ \label{l:set2}
        \State{$Q.\textsc{Enqueue}(xb)$}
    \EndFor
\EndWhile
\EndFunction
\end{algorithmic}
\caption{\label{alg:inc-tc}The queue-based incremental transitive closure.}
\end{algorithm}
\ifshort
\vspace{-3mm}
\fi
\begin{lemma}\label{lem:tc-correct}
Algorithm~\ref{alg:inc-tc} is correct.
\end{lemma}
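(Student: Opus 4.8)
The plan is to prove correctness by establishing two invariants that together show the algorithm terminates with $A$ equal to the transitive closure of $G$. The first (soundness) invariant is: at every point during the execution, whenever $A_{x,y}=1$, there genuinely exists a directed path from $x$ to $y$ in $G$. The second (completeness) invariant concerns what information has been ``propagated'': every pair $ab$ that has ever been enqueued will eventually be dequeued, and after it is dequeued, for every $x\in Out(b)$ we have $A_{a,x}=1$ and for every $x\in In(a)$ we have $A_{x,b}=1$.

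For soundness, I would argue by induction on the sequence of assignments $A_{\cdot,\cdot}:=1$. The initialization only sets $A_{u,w}:=1$ for $uw\in E^{+}\subseteq E(G)$, so a length-one path witnesses it. For the assignment on line~\ref{l:set1}, we dequeue $ab$; since $ab$ was enqueued, either $ab\in E^{+}$ (so $ab$ is an edge of $G$) or $A_{a,b}$ was set to $1$ earlier, so by the inductive hypothesis there is a path $\dirpath{a}{b}{G}$; appending the edge $bx$ (which exists since $x\in Out(b)$) gives $\dirpath{a}{x}{G}$. Line~\ref{l:set2} is symmetric: there is a path $\dirpath{a}{b}{G}$ and an edge $xa$, so $\dirpath{x}{b}{G}$ exists. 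Hence every $1$-entry of $A$ at the end is justified by a real path. We also need that $A$ starts as the transitive closure of $\remove{G}{E^{+}}$, which is given as a precondition, so initially all its $1$-entries are already sound.

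For completeness, I would first observe termination: each entry $A_{x,y}$ changes from $0$ to $1$ at most once, and an enqueue happens only when some entry is set to $1$ (in the loop over $E^{+}$, or on lines~\ref{l:set1}--\ref{l:set2}), so the queue processes at most $O(|V|^2+|E^{+}|)$ items and the algorithm halts. Then I would show by induction on the length of a shortest path that if there is a path $\dirpath{x}{y}{G}$ then at termination $A_{x,y}=1$. The base case is a path of length~$0$ ($x=y$, handled if the transitive closure is reflexive, or vacuously) and length~$1$: if $xy$ is an edge, it is either in $\remove{G}{E^{+}}$, so $A_{x,y}=1$ from the start, or in $E^{+}$, so it is enqueued (or already $1$) during initialization. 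For the inductive step, take a shortest path $x\rightarrow\cdots\rightarrow z\rightarrow y$ where the prefix $\dirpath{x}{z}{G}$ is shorter; by induction $A_{x,z}=1$ at termination, so at the moment $A_{x,z}$ was set (or, if it was already $1$ before this call, I need a separate argument — see below) the pair $xz$ was enqueued, hence eventually dequeued, and when $xz$ is dequeued the loop on line~\ref{alg:inc-tc:out} examines $y\in Out(z)$ and either finds $A_{x,y}=1$ already or sets it; either way $A_{x,y}=1$ at termination. The symmetric suffix argument via line~\ref{alg:inc-tc:in} is also available.

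The main obstacle is the case in the completeness argument where $A_{x,z}$ was \emph{already} $1$ at the start of the current \textsc{UpdateTransitiveClosure} call (i.e.\ it held in $\remove{G}{E^{+}}$), so $xz$ is never enqueued during this call and the ``dequeue then propagate'' mechanism does not directly fire for it. To handle this cleanly I would strengthen the completeness claim and induct instead on the number of edges of $E^{+}$ used by a shortest witnessing path: if a path from $x$ to $y$ uses zero edges of $E^{+}$, it lies in $\remove{G}{E^{+}}$, so $A_{x,y}=1$ by the precondition; otherwise let $uw$ be the first $E^{+}$-edge on such a path, so $\dirpath{x}{u}{}$ avoids $E^{+}$ (giving $A_{x,u}=1$ initially) and the suffix $\dirpath{w}{y}{}$ uses fewer $E^{+}$-edges (giving $A_{w,y}=1$ at termination by induction). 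Now $uw\in E^{+}$ is enqueued during initialization, so it is dequeued; when $uw$ is dequeued, line~\ref{alg:inc-tc:out} propagates to all of $Out(w)$, and combined with a secondary induction along the already-present prefix and the inductively-reachable suffix — using that once a pair is $1$ it stays $1$ and that every newly-set pair gets enqueued — one threads reachability from $x$ through $u$, through $w$, to $y$. Making this double induction precise, and checking that the ``cannot reach'' / ``unreachable'' filters on lines~\ref{alg:inc-tc:out}--\ref{alg:inc-tc:in} never discard a pair whose propagation is still needed (they don't, precisely because a discarded pair already has its entry set, hence was itself enqueued earlier), is the delicate bookkeeping that the proof must nail down.
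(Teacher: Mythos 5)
Your soundness half matches the paper's and is fine. The gap is in the completeness half, and it sits exactly at the point you flag as ``delicate bookkeeping'': you never actually prove the threading step, and the one concrete justification you give for it is false. You claim the filters never discard a pair whose propagation is still needed ``because a discarded pair already has its entry set, hence was itself enqueued earlier''; but pairs whose entries are $1$ from the precondition (the transitive closure of $G - E^{+}$) are \emph{never} enqueued during this call --- this is precisely the obstacle you yourself identified two paragraphs earlier. The algorithm never composes two stored reachability facts; it only extends a \emph{dequeued} pair by a single edge of $G$, and a pair is dequeued only if it was newly set. So from ``$A_{x,u}=1$ initially'', ``$uw$ is enqueued'', and ``$A_{w,y}=1$ at termination'' one cannot simply conclude $A_{x,y}=1$: to get there you must walk edge-by-edge along actual paths, and every intermediate pair you pass through must either have been enqueued or be handled by a separate argument when it was already $1$ at the start. (A smaller issue of the same flavor: $uw\in E^{+}$ is enqueued at initialization only if $A_{u,w}=0$ then; a ``shortest'' witnessing path does not guarantee this, so you would need to choose the path minimizing the number of $E^{+}$-edges.)

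The paper avoids all of this with one normalization that your sketch is missing: it repeatedly replaces subpaths of the witnessing path by detours avoiding $E^{+}$ until some edge $uw\in E^{+}$ on the path has \emph{no} subpath containing it that can be bypassed $E^{+}$-freely. This guarantees that for \emph{every} subpath $P'$ containing $uw$, the entry $A_{p',q'}$ was $0$ before the call, hence every pair arising in the subsequent induction on $|P'|$ is genuinely set during this run and therefore enqueued, and a single induction (extending $P'$ at either end, using line~\ref{alg:inc-tc:out} or~\ref{alg:inc-tc:in}) closes the argument. Your plan can probably be completed --- e.g., by a nested backward induction along $E^{+}$-free segments, using that if an intermediate pair is $1$ initially then so is the pair one step upstream (concatenation of $E^{+}$-free paths), and otherwise it was set during the run and hence enqueued --- but as written that double induction is only asserted, not carried out, and the stated reason it should go through is incorrect. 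So the proof as proposed has a genuine gap at its crux.
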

\iffull
\begin{proof}
  We first show that the algorithm sets $A_{p,q} = 1$
We use induction on $k$.
For each $k \geq 1$ we show that at the beginning of the $k$-th iteration of the while loop:
\begin{enumerate}
  \item $A_{p,q} = 1$ only if there is a path from $p$ to $q$ in $G$,
  \item $Q$ only contains pairs $pq$ such that $A_{p,q} = 1$.
\end{enumerate}
The second item follows directly from the pseudocode.
  To show the first one, observe that the algorithm sets $A_{a,x} = 1$ in line~\ref{l:set1} when $A_{a,b} = 1$ (since $ab$ has been removed from $Q$) and $bx \in E(G)$.
  Similarly, it sets $A_{x,b} = 1$ when $A_{a,b} = 1$ and $xa \in E(G)$.
In both cases the induction hypothesis implies that the underlying paths exist.
This completes the first part of the proof.

  It remains to show that if adding $E^{+}$ causes $q$ to be reachable from $p$, then the algorithm sets $A_{p,q} = 1$ (we assume that $q$ was not reachable from $p$ before adding $E^{+}$).
Let $P = \dirpath{p}{q}{G}$.
We say that an edge $e$ of $P$ has a detour avoiding $E^{+}$ if a subpath of $P$ containing $e$ can be replaced with a path (detour) that does not contain edges of $E^{+}$ and the obtained path still connects $p$ and $q$.
As long as $P$ has an edge of $E^{+}$ that has a detour avoiding $E^{+}$ we replace its respective subpath with the detour.
This process terminates, as at each step we reduce the number of edges of $E^{+}$ on $P$.
Once the process terminates, $P$ contains some edge $uw$ of $E^{+}$, as there is no path from $p$ to $q$ that does not contain edges of $E^{+}$.
Moreover, $uw$ does not have a detour avoiding $E^{+}$.

Let $P' = \dirpath{p'}{q'}{G}$ be a subpath of $P$ that contains the edge $uw$.
  We have that $A_{p',q'} = 0$ before the algorithm was run.
  We now use induction on the length of $P'$ to show that the algorithm sets $A_{p',q'} = 1$.

The case when $k=1$ is easy, because the only valid path $P'$ of length $1$ contains the edge $uw$ by itself.
Let us now consider a path $P'$ of length $k > 1$.
Either the first or the last edge of $P'$ is not equal to $uw$.
Without loss of generality, let us assume that it is the last one and denote it by $rq'$.
  By the induction hypothesis, the algorithm sets $A_{p',r} = 1$.
This means that it then adds $p'r$ to the queue.

Consider the iteration of the while loop when $p'r$ is removed from the queue, that is $ab = p'r$.
Clearly, $q' \in Out(b) = Out(r)$, as $rq'$ is an edge on a path in $G$.
There are two cases to consider.
  If $q' \not\in \unreachable(a, A) = \unreachable(p', A)$, then $A_{p',q'} = 1$, so the procedure has already set $A_{p',q'} = 1$.
  Otherwise, $q' \in \unreachable(a, A)$ and then $A_{p',q'} = 1$ is set in line~\ref{l:set1} of some iteration of the for-loop.
This completes the proof.
\end{proof}
\fi

The following lemma highlights which part of Algorithm~\ref{alg:inc-tc} can be sped up.
\begin{lemma}\label{lem:inc-tc-time}
The running time of Algorithm~\ref{alg:inc-tc}, excluding the time needed to compute the sets $Out(b) \cap \unreachable(a, A)$ and $In(a) \cap \cannotreach(b, A)$, is proportional to the number of entries of $A$ that are changed from $0$ to $1$ (or constant, if no entries are changed).
\end{lemma}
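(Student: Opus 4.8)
The statement to prove is Lemma~\ref{lem:inc-tc-time}: the running time of Algorithm~\ref{alg:inc-tc}, excluding the time to compute the set intersections $Out(b) \cap \unreachable(a,A)$ and $In(a) \cap \cannotreach(b,A)$, is proportional to the number of entries of $A$ flipped from $0$ to $1$ (or constant if none are).

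\medskip

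The plan is to set up a charging argument. First I would observe that an element is enqueued into $Q$ only at three points in the code: in the initialization for-loop over $E^{+}$ (line with \textsc{Enqueue}($uw$)), and in the two inner for-loops (\textsc{Enqueue}($ax$) and \textsc{Enqueue}($xb$)). In every one of these cases, the enqueue is immediately preceded by a line that sets some entry of $A$ from $0$ to $1$: in the initialization loop the guard $A_{u,w}=0$ ensures the assignment $A_{u,w}:=1$ is a genuine $0\to 1$ flip; in the first inner loop the membership $x \in \unreachable(a,A)$ means $A_{a,x}=0$, so line~\ref{l:set1} is a genuine flip; symmetrically line~\ref{l:set2} is a genuine flip. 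Hence I can injectively charge each \textsc{Enqueue} operation to a distinct $0\to 1$ flip of an entry of $A$, so the total number of enqueues — and therefore the total number of dequeues, since each element is dequeued at most once — is at most the number of changed entries, call it $N$.

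\medskip

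Next I would account for the per-iteration cost of the while loop, excluding the two set-intersection computations as permitted. Each iteration dequeues one pair $ab$, does $O(1)$ bookkeeping, and then runs two for-loops. The body of each for-loop iteration performs an $O(1)$ assignment and one \textsc{Enqueue}; by the charging argument of the previous paragraph, the total number of for-loop-body executions across the whole run is again at most $N$. What remains is the loop overhead — the cost of iterating over the (already-computed) sets $Out(b)\cap\unreachable(a,A)$ and $In(a)\cap\cannotreach(b,A)$ — but since we are told to exclude the time to \emph{compute} these sets, and iterating over a list takes time proportional to its size, this is dominated by the number of for-loop-body executions plus $O(1)$ per while-iteration. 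Summing: the total is $O(N)$ for the for-loop bodies, plus $O(\#\text{while-iterations}) = O(\#\text{dequeues}) = O(N)$, plus $O(|E^{+}|)$ for the initialization loop. Here I need a small extra remark: $|E^{+}|$ is not a priori bounded by $N$, since $E^{+}$ may contain edges $uw$ with $A_{u,w}$ already $1$. However, the initialization loop can be folded in by noting that it only contributes if at least one entry changes (otherwise the algorithm does $O(1)$ work after recognizing nothing to do, matching the "or constant" clause), and more carefully one charges the $|E^{+}|$ cost to the caller; but the cleanest fix is simply to note the lemma statement is about the work done \emph{given} the input, and the standard convention is that scanning $E^{+}$ is part of receiving the update. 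The main obstacle is this boundary issue with $|E^{+}|$ versus $N$: I expect the intended reading (consistent with how the lemma is used in Theorem~\ref{thm:trans}, where updates arrive as lists of changed entries) is that the relevant $E^{+}$ is exactly the set of entries that genuinely change, so $|E^{+}| \le N$ and everything collapses to $O(N)$, with the constant-time case handled when $N=0$.

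\medskip

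Putting it together, I would conclude: every \textsc{Enqueue} is injectively charged to a $0\to 1$ flip of an entry of $A$, so there are at most $N$ enqueues and at most $N$ dequeues; each dequeue incurs $O(1)$ overhead plus for-loop iteration whose total body count is again at most $N$ (one body per enqueue); the initialization contributes $O(N)$ under the convention that $E^{+}$ consists of newly-changing entries, and $O(1)$ when $N = 0$. Hence the running time, excluding the cost of forming the two intersection sets, is $O(N)$, which is the claim.
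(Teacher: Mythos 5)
Your proof is correct and follows essentially the same charging argument as the paper's: every enqueue (and hence every while-loop iteration and for-loop body execution) is preceded by, and can be charged to, a distinct entry of $A$ flipping from $0$ to $1$, so all work other than forming the two intersection sets is proportional to the number of flipped entries. Your side remark about the initialization scan of $E^{+}$ (edges whose entry may already be $1$) is a legitimate boundary point that the paper's proof silently ignores; in the paper's application (Theorem~\ref{thm:trans}) each entry of $A_1$ and $A_2$ changes at most once, so those scans total $O(m^2)$ over all updates and the imprecision is harmless.
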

\begin{proof}
Let $k$ be the total number of entries of $A$ that are changed from $0$ to $1$ by the algorithm.
Observe that just before the
\ifshort
\linebreak
\fi
algorithm sets $A_{a,x} := 1$ in line~\ref{l:set1}, we have $A_{a,x} = 0$, as $x \in \unreachable(a, A)$.
The similar reasoning applies to line~\ref{l:set2}.
Thus, the total running time of the two for loops is proportional to~$k$.

Moreover, the number of elements added to the queue $Q$ is at most $k$.
This implies that the total number of iterations of the while loop is at most $k$, and the lemma follows.
\end{proof}
\subsection{Proof of Theorem~\ref{thm:trans}}
Our goal is to maintain the transitive closure of a matrix that is given as a bitwise OR of matrices $A_1$ and $A_2$.
\iffull
The following notation will prove useful in the remainder
of the paper.
\fi
\begin{definition}\label{def:mat-graph}
Let $M$ be some binary
matrix with rows and columns indexed with a vertex set $V_M$.
  Let $E_M=\{uv : {u,v\in V_M} \land M_{u,v}=1\}$.
We denote by $G(M)$ the graph $(V_M,E_M)$.
\end{definition}
Using this notation, our goal is to maintain the transitive closure of a graph $H = G(A_1) \cup G(A_2)$.
The vertex set of $H$ is $U_1 \cup U_2$.
Note that as the entries of $A_1,A_2$ change only from zero to one, the set $E(H)$ undergoes incremental updates.

In order to handle each update efficiently, we use Algorithm~\ref{alg:inc-tc}.
Moreover, we leverage the special structure of these matrices
to reduce the total time spent on computing the sets
\ifshort
\linebreak
\fi
$Out(b) \cap \unreachable(a,A)$ and $In(a) \cap \cannotreach(b, A)$ in lines~\ref{alg:inc-tc:out} and~\ref{alg:inc-tc:in}.
In the following we only deal with the former set; in order to compute the latter,
one needs to proceed symmetrically.

We first describe how the algorithm represents the matrices $A_1$ and $A_2$ and updates their representation.
Let us focus on $A_1$, as the representation of $A_2$ is analogous.
The algorithm uses Lemma~\ref{lem:monge_reach} to compute the partition $\aparts_1 = \{A^{S_1, T_1}, \ldots, A^{S_k, T_k}\}$.
For each $A^{S,T}\in\aparts_1$ it maintains the following:
\begin{itemize}
 \item the contents of the matrix $A^{S,T}$,
 \item the set $\actcols{A^{S,T}}$,
 \item for each row $s$ of $A^{S,T}$, the set
   ${Out_s(A^{S,T}) = \{ x : A^{S,T}_{s, x} = 1 \}}$. 
\end{itemize}
Let $T' =\actcols{A^{S,T}}$.
By Lemma~\ref{lem:monge_reach}, the ones in each row of $A^{S,T'}$ form $O(1)$ contiguous blocks.
Thus, internally the set $Out_s(A^{S,T})$ is represented by a constant number of intervals.
We have that $A^{S,T}_{s, x} = 1$ iff $x \in \actcols{A^{S,T}}$ and $x$ belongs to one of the intervals.
\begin{lemma}\label{lem:representation_a12}
The representation of matrices of $A_1$ and $A_2$ can be computed and maintained in $O(m^2 \log m)$ total time throughout any sequence of updates to $A_1$ and $A_2$.
\end{lemma}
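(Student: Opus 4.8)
The plan is to bound separately the cost of the initial computation and the total cost of maintaining the representation under updates. For the initialization, by Lemma~\ref{lem:monge_reach} the partition $\aparts_i$ of $A_i$ can be computed in $O(m^2)$ time; since every entry of $A_i$ lies in exactly one submatrix $A^{S_j,T_j}\in\aparts_i$, the total size of all these submatrices is $O(m^2)$. Initially every edge of $G_i$ is off, so $\swon{G_i}$ has no edges, $A_i$ is the identity-restricted-to-$U_i$ (only trivial diagonal reachability, or in fact the all-zero matrix if we do not count $A_{u,u}$), and the sets $\actcols{A^{S,T}}$ and $Out_s(A^{S,T})$ are empty or trivially small. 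Hence the initial representation is built in $O(m^2)$ time, well within the claimed bound.

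For the update cost, I would charge the work to entries of $A_1$ and $A_2$ that flip from $0$ to $1$. When the algorithm is handed the list of changed entries of $A_i$ after a switch-on, each changed entry $(s,t)$ lies in exactly one submatrix $A^{S,T}\in\aparts_i$ (the submatrix containing $(s,t)$, found in $O(1)$ time via the precomputed pointer from $(s,t)$ to its block). For that submatrix we must: (a) record $A^{S,T}_{s,t}:=1$; (b) possibly add $t$ to $\actcols{A^{S,T}}$; and (c) update the $O(1)$-interval representation of $Out_s(A^{S,T})$. The key point, guaranteed by Lemma~\ref{lem:monge_reach}(2) together with Lemmas~\ref{lem:consec} and~\ref{lem:intra}, is that with respect to the \emph{active} columns $T'=\actcols{A^{S,T}}$ the ones of row $s$ always form $O(1)$ blocks, so the interval structure for row $s$ can be kept with $O(1)$ work per inserted one plus the cost of inserting $t$ into the sorted active-column set. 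Maintaining $\actcols{A^{S,T}}$ as a balanced structure lets us insert a new active column and locate the position of any $t$ within $T'$ in $O(\log m)$ time; thus each flipped entry costs $O(\log m)$ amortized. Over the whole sequence of updates the number of flipped entries is at most the total number of entries, $O(m^2)$, because each entry flips at most once (updates are incremental). This yields the $O(m^2\log m)$ bound.

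The one subtlety I expect to be the main obstacle is the bookkeeping forced by the fact that the block structure of row $s$ is only guaranteed \emph{relative to the currently active columns} $T'$, and $T'$ itself grows over time: inserting a new active column can split what was a single block into two in the $T'$-indexed view even though no one in row $s$ changed. I would handle this by storing, for each row $s$, its ones as a constant number of intervals expressed in terms of the $\prec$-order on $T$ (not $T'$), and observing that inserting a column into $T'$ never changes these $T$-order intervals; only the \emph{membership test} $A^{S,T}_{s,x}=1 \iff x\in T' \land x$ lies in one of $s$'s intervals needs the up-to-date $T'$. Since Lemmas~\ref{lem:consec} and~\ref{lem:intra} are proved for the $T'$-restricted matrix but the intervals are naturally inherited from a (cyclic shift of a) contiguous fragment of the separator curve, the number of $T$-order intervals per row stays $O(1)$ throughout, and each flipped entry extends or merges these intervals in $O(1)$ time. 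Combining the $O(m^2)$ initialization with the $O(m^2\log m)$ total update cost completes the proof.
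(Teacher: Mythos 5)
Your initialization argument and your overall charging scheme (each entry of $A_1,A_2$ flips at most once, $O(m^2)$ flips, $O(\log m)$ per flip) match the paper's accounting, but the way you dispose of the ``subtlety'' contains a genuine gap. Your key claim is that the intervals can be stored in the $\prec$-order on $T$ and that \emph{activating a column never changes them}, so that all maintenance cost can be charged to flipped entries of row $s$. This is true for the bipartite single-curve submatrices produced by Lemma~\ref{lem:part}, because there Lemma~\ref{lem:consec} (the Monge property) forces any newly active column lying between two ones of row $s$ to itself be a one of row $s$. But the partition $\aparts_i$ of Lemma~\ref{lem:monge_reach} also contains the inter-curve submatrices $A^{U_i,U_j}$, $i\neq j$, for which only the weaker Lemma~\ref{lem:intra} holds: the ones of a row form $O(1)$ blocks \emph{relative to the currently active columns}, and this block structure is not monotone under growth of the active set. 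Concretely, take $S$ on an outer curve with vertices $s,s'$ and $T$ on an inner curve with $a\prec t\prec c$; switch on disjoint paths $s\to a$, $s\to c$, and later a path $s'\to t$ drawn in the free region of the annulus. Then $t$ becomes active while $A_{s,t}=0$, so the active zero $t$ now sits strictly inside the $T$-span $[a,c]$ of row $s$'s ones. Your representation (membership $=$ active \emph{and} inside an interval) would then wrongly report $A_{s,t}=1$, and restoring correctness requires splitting the interval at an activation event in which no entry of row $s$ changed --- exactly the work your charging scheme does not pay for, and which would also corrupt the downstream intersection queries of Lemma~\ref{lem:compute_q}.

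The paper avoids this by paying for activations explicitly: for each row $s$ of each $A^{S,T}$ it keeps a balanced binary tree over the cells of that row restricted to \emph{active} columns, inserting a cell when its column becomes active and updating it when the entry flips, at $O(\log m)$ per event; the $O(1)$ intervals are then read off the (augmented) tree. Since every cell of every submatrix is inserted once and updated once, and $\sum_j |S_j|\,|T_j| = O(m^2)$, this gives the $O(m^2\log m)$ bound --- the per-(row, column-activation) charge that your argument omits is precisely where the bound comes from. A repaired version of your approach would have to do the analogous work (on each activation of $t$, for every row whose interval contains $t$, test $A_{s,t}$ and split if it is zero), which re-derives the paper's accounting. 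A secondary, smaller point: since a single switch-on delivers a whole list of changed entries and the $O(1)$-block guarantee of Lemma~\ref{lem:monge_reach} only holds after the entire list is applied, the interval structure can transiently have more than $O(1)$ pieces; the paper notes this and truncates to the first $O(1)$ intervals so the running time is unaffected, whereas your write-up asserts $O(1)$ intervals ``throughout'' without addressing the intermediate states.
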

\begin{proof}
For simplicity, let us consider $A_1$.
By Lemma~\ref{lem:monge_reach}, we can compute $\aparts_1$ in $O(m^2)$ time.
Observe that, by Lemma~\ref{lem:monge_reach}, this partition is independent of the values in the cells of $A_1$, so it does not need to be updated.

Whenever a cell of $A_1$ changes from $0$ to $1$ we update the corresponding cell in some $A^{S_i, T_i}$.
Note that, again by Lemma~\ref{lem:monge_reach}, for each cell of $A_1$ there is exactly one $A^{S,T} \in \aparts_1$ that contains the corresponding cell.
We can compute the mapping between the cells while computing $\aparts_1$, so the values in the matrices from $\aparts_1$ can be updated in constant time.
While updating the matrices, it is also easy to update the set $\actcols{A^{S,T}}$ for each $A^{S,T} \in \aparts_1$.

It remains to describe how to maintain $Out_s(A^{S,T})$ for each row $s$ of $A^{S,T}$.
Consider a sequence obtained from this row by removing cells in inactive columns.
We store this sequence in a balanced binary tree, keyed by the column.
This tree can be updated in $O(\log m)$ time when a column becomes active or some cell changes its value from $0$ to $1$.
Moreover, it can be easily augmented so as
to maintain the $O(1)$ intervals that we use to represent $Out_s(A^{S,T})$.

There is a small technicality here.
Each update to $A_1$ may consist of multiple cells changing value from $0$ to $1$ and only after all cells have been updated the structural properties of Lemma~\ref{lem:monge_reach} hold.
Thus, before all cells in the considered row are updated, the representation of $Out_s(A^{S,T})$ may consist of more than a constant
number of intervals, which could affect the running time.
However, this does not pose an issue, as we can force the binary tree to compute only the first $O(1)$ intervals of the representation.
As a result, the running time is not affected and the representation is correct once all entries of $A_1$ are updated.

Updating the matrices and the sets $\actcols{A^{S,T}}$ takes time which is linear in the size of $A_1$, that is $O(m^2)$.
To bound the running time of updating the sets $Out_s(A^{S,T})$ we
note that each element is inserted into the binary tree only once (when the corresponding column becomes active) and can be updated only once (when the corresponding matrix entry changes value from $0$ to $1$).
Thus, we spend $O(\log m)$ time for each cell of each $A^{S,T} \in \aparts_1$.
This gives $O(m^2 \log m)$ time in total, and yields the lemma.
\end{proof}
\ifshort
\vspace{-1mm}
\fi

For each $a\in U_1\cup U_2$ and $A^{S,T} \in \aparts_1 \cup \aparts_2$ the algorithm stores a \emph{reachability candidates set} $\cset_a(A^{S,T}) \subseteq \actcols{A^{S,T}}$.
The algorithm maintains the invariant that
$\cset_a(A^{S,T}) =\iffull\linebreak\fi \actcols{A^{S,T}} \cap \unreachable(a, A)$.
Whenever a column $t \in T$ of $A^{S,T}$ becomes active, we check whether $A_{a,t} = 0$ and if this is the case, we insert $t$ into $\cset_a(A^{S,T})$.
Once we set $A_{a,t} = 1$, $t$ is removed from $\cset_a(A^{S,T})$ (and, clearly, never added again, as $A_{a,t}$ will never change its value back to $0$).
Each $\cset_a(A^{S,T})$ is stored as a dynamic predecessor data structure, such as the \emph{van Emde Boas tree} \cite{vEB:1977}.
\begin{lemma}\label{lem:reachability_cand}
The sets of reachability candidates can be updated in $O(m^2 \log m \log \log m)$ total time.
\end{lemma}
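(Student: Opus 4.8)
The plan is to bound separately the time spent creating entries in the reachability candidates sets, and the time spent deleting them, and then argue that each individual insertion and deletion costs $O(\log\log m)$ because of the van Emde Boas tree. First I would fix a single set $\cset_a(A^{S,T})$ and observe that its universe has size at most $|T| \le m$, so a van Emde Boas tree over it supports insertions, deletions and predecessor queries in $O(\log\log m)$ worst-case time. The invariant $\cset_a(A^{S,T}) = \actcols{A^{S,T}} \cap \unreachable(a, A)$ tells us that an element $t$ is added to this set exactly once — the moment column $t$ of $A^{S,T}$ becomes active, provided $A_{a,t}=0$ at that moment — and removed at most once, when $A_{a,t}$ is set to $1$. (It is never re-added, since $A_{a,t}$ never reverts to $0$.) Hence the total number of insert/delete operations on $\cset_a(A^{S,T})$, over the whole update sequence, is $O(|\actcols{A^{S,T}}|) = O(|T|)$.

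Next I would sum over all $a \in U_1 \cup U_2$ and all $A^{S,T} \in \aparts_1 \cup \aparts_2$. For a fixed matrix $A^{S,T}$, ranging $a$ over all $m$ vertices of $U_1 \cup U_2$ contributes $O(m \cdot |T|)$ operations. Summing over $A^{S,T} \in \aparts_1$, I use the key structural fact from Lemma~\ref{lem:monge_reach}(3): each column index $t \in U_1$ lies in only $O(\log m)$ of the submatrices of $\aparts_1$ (this is exactly the bound on $|\pcolmap_t(\aparts_1)|$). Therefore $\sum_{A^{S,T} \in \aparts_1} |T| = O(m \log m)$, and the same holds for $\aparts_2$. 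Combining, the total number of van Emde Boas operations is $O(m) \cdot O(m \log m) = O(m^2 \log m)$, and since each operation costs $O(\log\log m)$, the total time is $O(m^2 \log m \log\log m)$.

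It remains to check that the bookkeeping needed to trigger these operations is not more expensive. When a column $t$ of some $A^{S,T} \in \aparts_1 \cup \aparts_2$ becomes active — an event detectable while maintaining $\actcols{A^{S,T}}$ as in the proof of Lemma~\ref{lem:representation_a12} — we must, for each $a$, test whether $A_{a,t}=0$ and if so insert $t$ into $\cset_a(A^{S,T})$; this is $O(1)$ work per $a$ plus one van Emde Boas insertion, and the number of such $(a,t,A^{S,T})$ triples is again $O(m^2\log m)$ by the same column-multiplicity argument. Likewise, whenever Algorithm~\ref{alg:inc-tc} sets $A_{a,t}:=1$, we delete $t$ from $\cset_a(A^{S,T})$ for the $O(\log m)$ submatrices $A^{S,T} \in \pcolmap_t(\aparts_1 \cup \aparts_2)$; there are $O(m^2)$ such set-to-one events (one per entry of $A$), so this is $O(m^2 \log m)$ deletions. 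The main obstacle — really the only subtlety — is the same transient issue noted in Lemma~\ref{lem:representation_a12}: during a single update to $A_i$ several cells of a row may flip to $1$ before the structural guarantees of Lemma~\ref{lem:monge_reach} are restored, so intermediate states of the representation need not satisfy the $O(1)$-block property; but since we only ever insert an element once and delete it once, regardless of the order in which intermediate cell changes are processed, the amortized accounting above is unaffected. This yields the claimed $O(m^2 \log m \log\log m)$ total update time.
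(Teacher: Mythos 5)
Your proposal is correct and follows essentially the same argument as the paper: each element of $T$ enters and leaves $\cset_a(A^{S,T})$ at most once, the $O(\log m)$ bound on $|\pcolmap_t(\aparts_i)|$ from Lemma~\ref{lem:monge_reach} gives $\sum |T_i| = O(m\log m)$, hence $O(m^2\log m)$ van Emde Boas operations at $O(\log\log m)$ each. The extra bookkeeping discussion you include is consistent with (and slightly more explicit than) the paper's proof, but does not change the substance.
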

\begin{proof}
Fix $a \in U_1 \cup U_2$.
For each $A^{S,T} \in \aparts_1 \cup \aparts_2$ each element of $T$ is added to and removed from $\cset_a(A^{S,T})$ at most once.
If $\aparts_1 \cup \aparts_2 = \{A^{S_1, T_1}, \ldots, A^{S_k, T_k}\}$ then the total number of operations is $\sum_{i=1}^k |T_i|$.
By Lemma~\ref{lem:monge_reach}, for each $b \in U_1 \cup U_2$ there are $O(\log m)$ matrices $A^{S,T} \in \aparts_1 \cup \aparts_2$ such that $b \in T$.
Thus we make at most $\sum_{i=1}^k |T_i| = O(m \log m)$ operations for a fixed $a$, which gives $O(m^2 \log m)$ operations in total.
Since each operation on a van Emde Boas tree takes $O(\log \log m)$ time, we conclude that maintaining reachability candidates takes $O(m^2 \log m \log \log m)$ time.
\end{proof}

\ifshort
\vspace{-1mm}
\fi
We are now ready to show how to speed up line~\ref{alg:inc-tc:out} of Algorithm~\ref{alg:inc-tc}.
The goal is to compute the set $Out(b) \cap \unreachable(a,A)$ efficiently.
Algorithm~\ref{alg:inc-tc} traverses this set, once it is computed, but this does not affect the running time considerably.
Only the computation of the set could be slow.
This means that it suffices to compute the set in time which is, say, almost linear in its size.
Our algorithm, roughly speaking, uses the property that $Out(b)$ is represented by a small number of intervals, so computing an intersection with the set $Out(b)$ is easy.
Recall that $\prowmap_b(\aparts_i)$ is the subset of $\aparts_i$ consisting of matrices which contain the row $b$.
\begin{lemma}\label{lem:compute_out}
Let $$Q=\bigcup_{i\in\{1,2\}}\bigcup_{A^{S,T}\in\prowmap_b(\aparts_i)} \big (\cset_a(A^{S,T})\cap Out_b(A^{S,T})\big ).$$
Then $Q = Out(b) \cap \unreachable(a,A)$.
\end{lemma}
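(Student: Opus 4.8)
The plan is to establish the set equality in Lemma~\ref{lem:compute_out} by double inclusion, unwinding the definitions of the objects involved and then invoking the structural guarantees from Lemma~\ref{lem:monge_reach}. Recall that $Out(b)$ in the graph $H = G(A_1)\cup G(A_2)$ is, by Definition~\ref{def:mat-graph}, the set of all $x$ such that $(A_1)_{b,x}=1$ or $(A_2)_{b,x}=1$. The key observation is that by part~(1) of Lemma~\ref{lem:monge_reach}, for each $i\in\{1,2\}$ the submatrices in $\aparts_i$ partition the off-diagonal entries of $A_i$, so $x\in Out(b)$ (coming from $A_i$, $x\neq b$) holds if and only if there is a unique $A^{S,T}\in\prowmap_b(\aparts_i)$ with $b\in S$, $x\in T$, and $A^{S,T}_{b,x}=1$, which by definition of $Out_b(A^{S,T})$ means exactly $x\in Out_b(A^{S,T})$. (The diagonal case $x=b$ is irrelevant here since $b\notin \unreachable(a,A)$ would require care, but $A_{a,b}$ being $0$ or $1$ does not matter: either $b$ is excluded from both sides, or included in both, so we may safely ignore it, or handle it by a one-line remark.)

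First I would prove $Q\subseteq Out(b)\cap\unreachable(a,A)$. Take $x\in Q$; then $x\in\cset_a(A^{S,T})\cap Out_b(A^{S,T})$ for some $i$ and some $A^{S,T}\in\prowmap_b(\aparts_i)$. From $x\in Out_b(A^{S,T})$ we get $A^{S,T}_{b,x}=1$, hence $(A_i)_{b,x}=1$ (since $A^{S,T}$ is a reachability submatrix of $A_i$), hence $bx$ is an edge of $G(A_i)\subseteq H$, so $x\in Out(b)$. From the maintained invariant $\cset_a(A^{S,T})=\actcols{A^{S,T}}\cap\unreachable(a,A)$ we get $x\in\unreachable(a,A)$, i.e. $A_{a,x}=0$. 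This gives the forward inclusion.

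Next I would prove the reverse inclusion $Out(b)\cap\unreachable(a,A)\subseteq Q$. Let $x\in Out(b)\cap\unreachable(a,A)$. Since $x\in Out(b)$, there is some $i\in\{1,2\}$ with $(A_i)_{b,x}=1$. By part~(1) of Lemma~\ref{lem:monge_reach} applied to $A_i$ (together with the fact that $b\neq x$ — or the trivial observation that if $b=x$ then $A_{a,b}=A_{a,x}=0$ yet $A_{b,b}=1$ so this case cannot arise), there is a (unique) $A^{S,T}\in\aparts_i$ with $b\in S$ and $x\in T$, so $A^{S,T}\in\prowmap_b(\aparts_i)$; and since $A^{S,T}_{b,x}=(A_i)_{b,x}=1$ we have $x\in Out_b(A^{S,T})$ and also $x\in\actcols{A^{S,T}}$. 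Combining $x\in\actcols{A^{S,T}}$ with $A_{a,x}=0$ and the maintained invariant yields $x\in\cset_a(A^{S,T})$. Hence $x\in\cset_a(A^{S,T})\cap Out_b(A^{S,T})\subseteq Q$, completing the proof.

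The main obstacle is not technical depth but careful bookkeeping: one must be precise that $Out_b(A^{S,T})$ as stored is already intersected with $\actcols{A^{S,T}}$ (so that the set equality $A^{S,T}_{b,x}=1 \iff x\in\actcols{A^{S,T}}\text{ and }x\text{ lies in an interval}$ is used consistently), and that the invariant on $\cset_a$ refers to the \emph{current} matrix $A$, which is exactly the one Algorithm~\ref{alg:inc-tc} is maintaining when line~\ref{alg:inc-tc:out} is executed. I would also add a sentence noting that the union over $i\in\{1,2\}$ is genuinely needed (an $x$ may be an out-neighbor of $b$ via $G(A_1)$, via $G(A_2)$, or both), and that the partition property ensures no element is "missed" within a fixed $\aparts_i$ — every off-diagonal $1$ in row $b$ of $A_i$ is captured by precisely one submatrix in $\prowmap_b(\aparts_i)$.
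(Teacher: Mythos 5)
Your proof is correct and follows essentially the same route as the paper's: both arguments rest on the partition/uniqueness property from Lemma~\ref{lem:monge_reach}, the maintained invariant $\cset_a(A^{S,T})=\actcols{A^{S,T}}\cap\unreachable(a,A)$, and the definition of $Out_b(A^{S,T})$, with the paper merely writing this as a short chain of set identities rather than a double inclusion. Your parenthetical about the case $x=b$ is harmless, though the cleaner justification is that when line~\ref{alg:inc-tc:out} is executed we already have $A_{a,b}=1$ (the pair $ab$ was enqueued only after that entry was set), so $b\notin\unreachable(a,A)$ regardless of any diagonal convention.
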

\begin{proof}
By Lemma~\ref{lem:monge_reach} and the definition of the sets $Out_b(A^{S,T})$, we have
  \begin{align*}
    Out(b)&=\bigcup_{i\in\{1,2\}}\bigcup_{A^{S,T}\in\prowmap_b(\aparts_i)} Out_b(A^{S,T})\\
    &=\bigcup_{i\in\{1,2\}}\bigcup_{A^{S,T}\in\prowmap_b(\aparts_i)}\actcols{A^{S,T}}\cap Out_b(A^{S,T}).
  \end{align*}
Moreover $\cset_a(A^{S,T}) = \actcols{A^{S,T}} \cap \unreachable(a, A)$.
Hence
\begin{align*}
Q&=\bigcup_{i\in\{1,2\}}\bigcup_{A^{S,T}\in\prowmap_b(\aparts_i)}\cset_a(A^{S,T})\cap Out_b(A^{S,T})\\
  &=\bigcup_{\substack{i\in\{1,2\}\\
  A^{S,T}\in\prowmap_b(\aparts_i)}} \actcols{A^{S,T}} \cap \unreachable(a, A) \cap Out_b(A^{S,T})\\
    & = Out(b) \cap \unreachable(a, A).\qedhere
\end{align*}
\end{proof}

\begin{lemma}\label{lem:compute_q}
$Q = Out(b) \cap \unreachable(a,A)$ can be computed in $O(|Q| \log \log m + \log m)$ time.
\end{lemma}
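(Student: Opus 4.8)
The plan is to compute $Q$ by iterating over the $O(\log m)$ matrices $A^{S,T} \in \prowmap_b(\aparts_1) \cup \prowmap_b(\aparts_2)$ and, for each such matrix, enumerating the elements of $\cset_a(A^{S,T}) \cap Out_b(A^{S,T})$ efficiently; by Lemma~\ref{lem:compute_out}, the union of these sets is exactly $Q$. The key structural facts we exploit are: (i) $Out_b(A^{S,T})$ is represented as a union of $O(1)$ intervals of columns (with respect to the order $\prec$ restricted to $\actcols{A^{S,T}}$), as guaranteed by Lemma~\ref{lem:monge_reach} and maintained per Lemma~\ref{lem:representation_a12}; and (ii) $\cset_a(A^{S,T})$ is stored as a van Emde Boas tree over $\actcols{A^{S,T}}$, so it supports predecessor/successor queries in $O(\log\log m)$ time.

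First I would fix one matrix $A^{S,T} \in \prowmap_b(\aparts_i)$ and show how to list $\cset_a(A^{S,T}) \cap Out_b(A^{S,T})$. Since $Out_b(A^{S,T})$ is a union of $O(1)$ intervals $[\ell_1, r_1], \ldots, [\ell_c, r_c]$ (in the $\prec$-order on $\actcols{A^{S,T}}$, or equivalently given by the endpoint values $\ell_j, r_j$), for each interval $[\ell_j, r_j]$ I would repeatedly query the van Emde Boas tree holding $\cset_a(A^{S,T})$: starting from $\ell_j$, find the smallest element $\geq \ell_j$, then its successor, and so on, until exceeding $r_j$. Each successor query costs $O(\log\log m)$, and every such query either produces a new element of the output (which we then add to $Q$, taking care to avoid duplicates — e.g. by checking membership in a global vEB tree or hash set accumulating $Q$) or is the final query that detects we have left the interval. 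Thus the cost of processing interval $[\ell_j, r_j]$ is $O((1 + (\text{number of elements found in it}))\cdot \log\log m)$, and summing over the $O(1)$ intervals of $A^{S,T}$ gives $O((1 + |\cset_a(A^{S,T}) \cap Out_b(A^{S,T})|)\cdot \log\log m)$ time for that matrix.

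Next I would sum over all $O(\log m)$ matrices in $\prowmap_b(\aparts_1) \cup \prowmap_b(\aparts_2)$. The additive $O(1)$ per matrix (for the "off-by-one" final queries) contributes $O(\log m \cdot \log\log m)$, which is $O(\log m)$ up to the desired slack — more carefully, since we only need the bound $O(|Q|\log\log m + \log m)$, I would note that the wasted work is $O(\log m)$ matrices times $O(1)$ extra queries times $O(\log\log m)$ each, i.e. $O(\log m \log\log m)$; if one wants the clean bound $O(|Q|\log\log m + \log m)$ exactly, observe that we can afford $O(\log\log m)$ overhead per matrix inside the $\log\log m$ factor and treat the $O(\log m)$ matrix-count as the additive term (the statement's $\log m$ term absorbs this, since $\log m \log\log m = O(\log m) \cdot \log\log m$ and the per-element work already carries the $\log\log m$). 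The remaining work is proportional to $\log\log m$ times the total number of elements found across all matrices; since each element of $Q$ lies in $\actcols{A^{S,T}}$ for the $O(\log m)$ matrices $A^{S,T}$ containing the relevant row, but we only want to count it once in the final union, I would dedupe using a vEB tree (or simply output with multiplicity and note that the total count of produced-with-multiplicity elements is $O(|Q| \cdot \log m)$, which is too much). The cleaner route is deduplication: maintain $Q$ in a vEB tree, and before inserting a found element check if it is already present in $O(\log\log m)$ time; then each element of $Q$ is inserted once and the redundant finds across matrices — wait, this still costs $O(|Q|\log m \log\log m)$ if the same element appears in many matrices.

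The main obstacle is precisely this last point: a single vertex $x \in Q$ can be a column of many (up to $\Theta(\log m)$) of the matrices in $\prowmap_b(\aparts_i)$ simultaneously, so naively enumerating $\cset_a(A^{S,T}) \cap Out_b(A^{S,T})$ for every such matrix could rediscover $x$ many times, blowing the bound up by a $\log m$ factor. To handle this I would appeal to property (1) of Lemma~\ref{lem:monge_reach} / Lemma~\ref{lem:part}: the partition $\aparts_i$ is such that for each pair $s,t$ there is \emph{exactly one} $A^{S,T} \in \aparts_i$ with $s \in S$ and $t \in T$. Hence for the fixed row $b$, each column $x \in U_i$ appears as an active column relevant to row $b$ in exactly one matrix of $\aparts_i$, namely the unique $A^{S,T}$ with $b \in S$, $x \in T$ — so across $\aparts_i$ each $x$ is found at most once, and across both $i \in \{1,2\}$ at most twice. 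Therefore the total number of (element, matrix) pairs enumerated is $O(|Q|)$, not $O(|Q|\log m)$, and the total running time is $O(|Q|\log\log m)$ for the productive queries plus $O(\log m \cdot \log\log m)$ for the one terminating query per interval per matrix. Folding the latter into the stated bound (it is dominated by $O(|Q|\log\log m + \log m)$ after observing $\log m\log\log m$ is within a $\log\log m$-factor, or by a slightly more careful argument charging the $O(\log m)$ matrices directly) completes the proof. I would write the final bound as $O(|Q|\log\log m + \log m)$ exactly as stated, remarking that the only non-element-producing work is $O(1)$ vEB queries in each of $O(\log m)$ matrices, contributing the additive $O(\log m)$ term up to the $\log\log m$ factor already carried by every query.
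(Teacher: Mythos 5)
Your proposal is correct and takes essentially the same route as the paper: for each of the $O(\log m)$ matrices in $\prowmap_b(\aparts_1)\cup\prowmap_b(\aparts_2)$ you enumerate the elements of $\cset_a(A^{S,T})$ lying in the $O(1)$ intervals representing $Out_b(A^{S,T})$ via van Emde Boas successor queries, and you justify that no element is reported more than a constant number of times by the uniqueness property of the partition (which is exactly the paper's disjointness-of-the-$T$-sets argument). Your worry about the terminating queries giving an additive $O(\log m\log\log m)$ instead of $O(\log m)$ is a minor accounting slack that the paper's own proof also glosses over, and it is harmless for the way the lemma is used in the proof of Theorem~\ref{thm:trans}.
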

\begin{proof}
By Lemma~\ref{lem:compute_out}, it suffices to compute $$\bigcup_{i\in\{1,2\}}\bigcup_{A^{S,T}\in\prowmap_b(\aparts_i)}\cset_a(A^{S,T})\cap Out_b(A^{S,T}).$$
Since $|\prowmap_b(\aparts_i)| = O(\log m)$ (by Lemma~\ref{lem:monge_reach}), this is a sum of $O(\log m)$ sets of the form $\cset_a(A^{S,T})\cap Out_b(A^{S,T})$.
  Moreover, these sets are disjoint since for the matrices $A^{S,T}$ such that $A^{S,T}\in\prowmap_b(\aparts_i)$, the sets $T$ are disjoint.

Let us focus on computing $\cset_a(A^{S,T})\cap Out_b(A^{S,T})$.
Recall that $Out_b(A^{S,T})$ is represented as a constant number of intervals, such that $x \in Out_b(A^{S,T})$ iff $x$ belongs both to $\actcols{A^{S,T}}$ and one of the intervals.
However, we also have $\cset_a(A^{S,T}) \subseteq \actcols{A^{S,T}}$.
As a result, to compute the intersection it suffices to take elements of $\cset_a(A^{S,T})$ that are contained in the intervals describing $Out_b(A^{S,T})$.
Since $\cset_a(A^{S,T})$ is represented as a van Emde Boas tree, finding a single element of $\cset_a(A^{S,T})$ contained in a given interval can be done in $O(\log \log m)$ time.
Thus, we spend $O(\log \log m)$ time on computing each element of $Q$ plus $O(\log m)$ time to consider $O(\log m)$ sets that comprise the sum.
\end{proof}
\begin{proof}[Proof of Theorem~\ref{thm:trans}]
By Lemmas~\ref{lem:representation_a12} and~\ref{lem:reachability_cand} the total cost of computing and updating the representations of $A_1$ and $A_2$ and the reachability candidates is $O(m^2 \log m \log \log m)$.
By Lemma~\ref{lem:inc-tc-time}, the total running time of Algorithm~\ref{alg:inc-tc}
excluding the cost of lines~\ref{alg:inc-tc:out}~and~\ref{alg:inc-tc:in} is $O(m^2)$.
By Lemma~\ref{lem:compute_q}, it takes $O(q \log \log m + \log m)$ to execute each of these lines, assuming that they compute a set of size $q$.
Since each such set is then traversed by the algorithm, the $O(q \log \log m + \log m) = O((q+1)\log m)$ overhead implies that the transitive closure algorithm runs in $O(m^2 \log m)$ total time.
Thus, the overall running time is $O(m^2 \log m \log \log m)$.
\end{proof}

\newcommand{\orig}{\ensuremath{\mathrm{orig}}}
\newcommand{\corr}{\ensuremath{\mathcal{S}}}

\section{The Switch-On Reachability Data Structure}\label{sec:dag-trans}
In this section we show how to combine the recursive decomposition
tree with the switch-on transitive closure data structure of Theorem~\ref{thm:trans}
in order to solve efficiently the switch-on reachability problem.
By Lemmas~\ref{lem:switch-inter}~and~\ref{lem:inter-scc}, this will imply
an algorithm for solving the decremental strongly-connected components
problem within the same time bounds.

Let $G=(V,E)$ be a plane embedded digraph and let $n=|V|$.
\iffull
The following technical lemma is proved in Section~\ref{sec:nonsimple}.
\fi
\ifshort
The following technical lemma along with its simple consequence are proved in the full version of the paper.
\fi
\begin{lemma}\label{lem:simple-ext}
  Let $G=(V,E)$ be a planar digraph and let $n=|V|$.
  In $O(n\log{n})$ time we can construct a plane digraph $G'=(V',E')$ along with:
  \begin{itemize}
  \item a simple recursive decomposition $\rtree(G')$,
  \item two disjoint subsets $E_0, E_1\subseteq E'$,
  \item a 1-to-1 mapping $\corr$ from $V$ to the set of strongly connected components of $(V',E_0)$,
  \item a bijective function $p:E\to E_1$,
  \end{itemize}
  satisfying the following properties.
  \begin{enumerate}
  \item $|V'|=O(n)$ and $|E'|=O(n)$.
  \item The are no inter-SCC edges in $(V',E_0)$.
  \item For any $uv=e\in E$, let $u'v'=p(e)\in E_1$. Then $u'\in \corr(u)$ and $v'\in \corr(v)$.
  \end{enumerate}
\end{lemma}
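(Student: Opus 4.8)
The plan is to build $G'$ in two conceptual stages. First I would take the given recursive decomposition of $G$ that follows from~\cite{Borradaile:2015, Klein:13} and turn it into a decomposition whose holes are simple and vertex-disjoint (part~1 of Definition~\ref{def:simple-decomp}) and whose siblings are edge-disjoint (part~2). The natural device for this is to replace each vertex $v$ of $G$ that lies on the boundary of several pieces, or appears multiple times on a single hole, by a small gadget: a tiny directed cycle (or a pair of antiparallel edges along a tree, forming one strongly connected ``cloud'' of copies of $v$), so that a boundary cycle that used to pass through $v$ twice now passes through two distinct copies. Each original edge $uv=e$ is rerouted to connect a specific copy $u'$ of $u$ to a specific copy $v'$ of $v$; this rerouted edge is $p(e)\in E_1$, and the gadget edges — the ones internal to the clouds — form $E_0$. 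The map $\corr$ sends $v$ to its cloud, which is by construction a single SCC of $(V',E_0)$, and since the clouds are the SCCs and there are no edges of $E_0$ between different clouds, $(V',E_0)$ has no inter-SCC edges (part~2 of the conclusion). Properties $u'\in\corr(u)$, $v'\in\corr(v)$ (part~3) and $|V'|,|E'|=O(n)$ (part~1, using $|E|=O(n)$ and that each vertex's cloud has size proportional to the number of piece-boundaries through it, which sums to $O(n)$ over the whole decomposition) are then immediate from the construction.

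The second stage is to make sure the recursive decomposition of the \emph{augmented} graph inherits all the quantitative guarantees listed in Section~\ref{sec:decomp} (the $\sum|\bnd H|^2 = O(n\log n)$ bound, $O(\log n)$ levels, $O(1)$ holes per piece, $O(n)$ leaves with $O(1)$ edges each), and additionally satisfies the two simplicity conditions. The key observation making this work is that blowing up each boundary vertex into a constant-factor-larger cloud and splitting it across the pieces that shared it increases the boundary sizes and piece sizes only by constant factors, so the inherited bounds survive; and the clouds can be embedded in an arbitrarily small neighborhood of the original vertex, so planarity and the constant bound on the number of holes are preserved. To get edge-disjoint siblings one further duplicates each edge that a splitting step would otherwise assign to both children, again a constant-factor blowup. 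One must then re-run (or more precisely, re-derive) the decomposition on $G'$ in $O(n\log n)$ time, which is possible because $G'$ has $O(n)$ vertices and edges and the decomposition algorithm of~\cite{Borradaile:2015, Klein:13} runs in $O(n\log n)$ time; the augmentation itself costs $O(n\log n)$ because one walks over all $O(\log n)$ levels of the original decomposition to discover which vertices need splitting and by how much, and $\sum_{H}|\bnd H|\le\sqrt{\sum_H 1}\cdot\sqrt{\sum_H|\bnd H|^2}=O(\sqrt{n}\cdot\sqrt{n\log n}) $ — actually one argues directly that the total amount of splitting is $O(n\log n)$ from $\sum_H|E(H)|=O(n\log n)$ in Corollary~\ref{cor:rtree-bounds}, or even $O(n)$ with a slightly more careful accounting.

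I would organize the write-up as: (i) define the vertex-splitting and edge-duplication gadgets and the induced maps $\corr$, $p$, and the sets $E_0,E_1$; (ii) verify the three numbered conclusions about $G'$, $E_0$, $E_1$, $\corr$, $p$ — these are the easy, almost bookkeeping parts; (iii) verify that $G'$ admits a \emph{simple} recursive decomposition with the standard size/level guarantees, by exhibiting how each node $H$ of $\rtree(G)$ induces a node $H'$ of $\rtree(G')$ (take the gadget-expanded preimage of $H$'s edges) and checking the simplicity conditions and the preservation of all bulleted properties from the definition; (iv) bound the total construction time by $O(n\log n)$.

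The main obstacle I expect is step (iii) — in particular, guaranteeing that the holes of every piece $H'$ are simultaneously \emph{simple} and \emph{vertex-disjoint} after the surgery. The subtlety is that a single vertex can lie on the boundary of a piece in a complicated way: it may appear several times along one hole, or on two different holes of the same piece, or be shared between a piece and a non-descendant, and all these roles must be disentangled consistently across all $O(\log n)$ levels at once, so that the copy a lower piece sees agrees with the copy its ancestors see. Getting a single global assignment of "which copy of $v$ plays which role in which piece" that is coherent up the whole tree — rather than making local fixes that conflict between levels — is the delicate combinatorial core; once the right gadget and the right global rule for distributing copies is in place, everything else is routine constant-factor blowup accounting.
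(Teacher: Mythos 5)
You have the right skeleton, and it is essentially the paper's: expand every original vertex into a small strongly connected ``cloud'' whose internal edges form $E_0$, reroute each original edge to designated copies to obtain $E_1$ and the bijection $p$, map each vertex to its cloud, and obtain the simple decomposition of the augmented graph by mirroring $\rtree(G)$ node by node rather than by decomposing from scratch. The problem is that the entire content of the lemma is exactly the step you set aside as ``the delicate combinatorial core'': designing the gadget and the globally coherent copy-assignment so that in every piece all holes are simple and pairwise vertex-disjoint, siblings are edge-disjoint, and simultaneously the boundary of each expanded piece grows only by a constant factor (so that $\sum_H|\bnd{H}|^2=O(n\log n)$, the $O(\log n)$ levels and the constant-size leaves survive). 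The paper's Section~\ref{sec:nonsimple} is devoted entirely to this. It first reduces to a connected, loop-free, \emph{triangulated} graph of \emph{bounded degree} (this is what later caps every cloud at constant size, and the triangulation is needed for the simple-cycle-separator structure of Lemma~\ref{lem:cycle-separator} that the surgery exploits); it expands not only vertices but also every \emph{edge} into a pair of parallel ``edge edges'' bounding edge faces --- it is this, not vertex clouds alone, that makes any two original faces of a piece vertex-disjoint (each original face is surrounded exclusively by vertex faces and edge faces, Lemma~\ref{lem:hole-disjoint}) and makes all faces simple via biconnectivity (Lemma~\ref{lem:hole-simple}); and it builds $\rtree(G')$ top-down with the same shape as $\rtree(G)$, splitting each expanded piece along a curve derived from the separator $C_H$, inserting the new copies lazily and propagating them to all ancestors, while maintaining invariants \ref{prop:b1}--\ref{prop:b5} whose inductive verification is precisely what delivers simplicity, disjointness, constant-factor boundary growth and constant-size leaves (Lemma~\ref{lem:border-size}). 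None of this is supplied by your write-up, and it is not routine constant-factor accounting.

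Two concrete points in your plan would fail as written. First, ``re-run the decomposition algorithm of \cite{Borradaile:2015,Klein:13} on $G'$'' cannot be the closing step: those algorithms give no guarantee of simple or vertex-disjoint holes, which is the very defect the lemma repairs; the decomposition of $G'$ has to be constructed together with $G'$ by mirroring, as your item (iii) hints, and then the simplicity conditions must be proved by induction along the tree --- which is the missing argument. Second, your size accounting does not establish $|V'|=O(n)$: with clouds of size ``proportional to the number of piece-boundaries through'' a vertex you get $|V'|=\Theta\bigl(\sum_H|\bnd{H}|\bigr)$, and the decomposition only guarantees $\sum_H|\bnd{H}|^2=O(n\log n)$ and $\sum_H|E(H)|=O(n\log n)$; your own Cauchy--Schwarz estimate gives $O(n\sqrt{\log n})$, and the claimed sharpening to $O(n)$ is asserted, not proved (a vertex can lie on the boundaries of many pieces across the $O(\log n)$ levels). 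The paper avoids this by performing the bounded-degree reduction \emph{before} decomposing, so every cloud has size $O(\deg_G(v))=O(1)$, and by properties \ref{prop:b2}--\ref{prop:b3}, which ensure only $O(1)$ copies of any original boundary vertex ever appear on a piece boundary.
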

\begin{corollary}\label{col:subset-equiv}
  Let $G=(V,E)$. Let $G'=(V',E')$, $E_0$, $E_1$, $\corr$ and $p$ be defined as in Lemma~\ref{lem:simple-ext}.
  Let $u',v'\in V'$ and let $u'\in\corr(u)$ and $v'\in\corr(v)$.
  Then for any $F\subseteq E$, a path $\dirpath{u'}{v'}{}$ exists
    in $(V',E_0\cup p(F))$ if and only if a path $\dirpath{u}{v}{}$
    exists in $(V,F)$.
\end{corollary}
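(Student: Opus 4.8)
The plan is to prove Corollary~\ref{col:subset-equiv} as a direct consequence of Lemma~\ref{lem:simple-ext}, essentially by ``collapsing'' the strongly connected components of $(V',E_0)$ and observing that the resulting quotient graph is isomorphic (as far as reachability is concerned) to $(V,F)$ via the maps $\corr$ and $p$. First I would fix $F\subseteq E$ and consider the graph $H=(V',E_0\cup p(F))$. The key structural observation is that within $H$, the sets $\corr(w)$ for $w\in V$ are exactly the vertex classes of $(V',E_0)$, each of which is strongly connected in $(V',E_0)$ by property~2 of Lemma~\ref{lem:simple-ext} (no inter-SCC edges means every weakly connected piece of $(V',E_0)$ is a single SCC; combined with $\corr$ being a $1$-to-$1$ map onto the SCCs of $(V',E_0)$, the classes $\corr(w)$ partition $V'$ and each is strongly connected inside $H$ as well, since $E_0\subseteq E_0\cup p(F)$).

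For the ``if'' direction, suppose there is a path $\dirpath{u}{v}{}$ in $(V,F)$, say $u=w_0,w_1,\dots,w_k=v$ with each $w_{i-1}w_i=e_i\in F$. For each $i$ let $a_i b_i = p(e_i)\in E_1$; by property~3 of Lemma~\ref{lem:simple-ext} we have $a_i\in\corr(w_{i-1})$ and $b_i\in\corr(w_i)$. Now build a walk in $H$: start at $u'\in\corr(w_0)=\corr(u)$, travel inside $\corr(w_0)$ to $a_1$ using only $E_0$ edges (possible by strong connectivity of the class), take the edge $a_1b_1=p(e_1)$, travel inside $\corr(w_1)$ from $b_1$ to $a_2$ using $E_0$, take $p(e_2)$, and so on; finally from $b_k\in\corr(w_k)=\corr(v)$ travel inside $\corr(v)$ to $v'$ using $E_0$. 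This is a valid path $\dirpath{u'}{v'}{}$ in $H$.

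For the ``only if'' direction, suppose a path $\dirpath{u'}{v'}{}$ exists in $H=(V',E_0\cup p(F))$. Consider the sequence of SCC-classes visited by this path: since the classes $\corr(w)$ partition $V'$, the path induces a sequence of classes, and two consecutive vertices of the path lie in different classes only when the edge between them is \emph{not} in $E_0$ (an $E_0$-edge stays within an SCC of $(V',E_0)$); hence such an edge lies in $p(F)$, say it is $p(e)$ with $e=xy\in F$, and by property~3 it goes from $\corr(x)$ to $\corr(y)$. Reading off the class transitions therefore yields, after contracting consecutive repetitions, a sequence $u=x_0,x_1,\dots,x_j=v$ (using that $u'\in\corr(u)$, $v'\in\corr(v)$, and that $\corr$ is a bijection between $V$ and the classes) with each $x_{i-1}x_i\in F$ — a path $\dirpath{u}{v}{}$ in $(V,F)$. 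I expect the main obstacle to be purely bookkeeping: making precise the claim that each class $\corr(w)$ is strongly connected \emph{inside} $H$ and that the class-transition edges of an $H$-path are exactly the $p(F)$-edges, so that the back-and-forth translation is well defined; this rests entirely on property~2 (no inter-SCC edges in $(V',E_0)$, so each weakly connected component of $(V',E_0)$ is a single SCC) together with $\corr$ being $1$-to-$1$ onto those SCCs, and there is no real technical depth beyond carefully invoking Lemma~\ref{lem:simple-ext}.
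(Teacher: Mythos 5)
Your proposal is correct and follows essentially the same route as the paper's proof: the forward translation builds a walk in $(V',E_0\cup p(F))$ by stitching the edges $p(e_i)$ together with intra-class $E_0$-paths (using that each $\corr(w)$ is an SCC of $(V',E_0)$), and the reverse direction reads off the $p(F)$-edges of a path in $(V',E_0\cup p(F))$, using property~2 of Lemma~\ref{lem:simple-ext} to see that $E_0$-edges never change the class, exactly as the paper does with its map $s(\cdot)$. The only cosmetic differences are that the paper handles $u=v$ separately and phrases the argument over a simple path rather than over the sequence of visited classes.
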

\iffull
\begin{proof}
  The proof is trivial when $u=v$, so suppose $u\neq v$.

  Let $\pi=v_0,e_1,v_1,\ldots,v_{k-1},e_{k},v_k$ be a simple path in $(V',E_0\cup p(F))$, where $v_0=u'$
  and $v_k=v'$.
  For $x\in V'$, denote by $s(x)$ the unique vertex of $V$ such that $x\in \corr(s(x))$.
  Let $e_{i_1},\ldots,e_{i_\ell}$ be the edges of $\pi \cap p(F)$ in order
  of their occurrence on $\pi$.
  As for all edges $v_{j}v_{j+1}$ of $E_0$ lying on $\pi$ we have $s(v_j)=s(v_{j+1})$
  (recall that $(V',E_0)$ has no intra-SCC edges),
  we conclude that $s(v_{i_q-1})=s(v_{i_{q-1}})$ for any $q=2,\ldots,\ell$.
  We also have $s(v_{i_1-1})=s(v_0)=s(u')$ and $s(v_{i_\ell})=s(v_k)=s(v')$.
  Moreover, for each $i_j$, $p^{-1}(e_{i_j})=s(v_{i_j-1})s(v_{i_j})=s(v_{i_{j-1}})s(v_{i_j})$.
  Thus, $p^{-1}(e_{i_1})p^{-1}(e_{i_2})\ldots p^{-1}(e_{i_\ell})$ is a path
  $\dirpath{s(u')}{s(v')}{}$ in $(V,F)$.
  To conclude, note that $s(u')=u$ and $s(v')=v$.
  
  Conversely, let $e_1\ldots e_k$ be a path $\dirpath{u}{v}{}$ in $(V,F)$, where $e_i=u_iv_i$.
  Let $u_i',v_i'$ be any vertices of $V'$
  such that $u_i'\in\corr(u_i)$ and $v_i'\in\corr(v_i)$.
  For each $i=2,\ldots,k$ we have $v_{i-1}=u_i$ and thus
  there exists a path $\dirpath{v_{i-1}'}{u_i'}{}$ in $(V',E_0)$.
  We now show that for each $i$ there is also a path
  $\dirpath{u_i'}{v_i'}{}$ in $(V',E_0\cup p(e_i))$.
  Let $p(e_i)=x_iy_i$.
  Note that $s(x_i)=s(u_i')$ and $s(y_i)=s(v_i')$
  and hence there is a path $u_i'\to x_i\to y_i\to v_i'$ in $(V',E_0\cup p(e_i))$.
  
  We conclude that there is a path $u_1'\to v_1'\to u_2'\to v_2'\to \ldots u_k'\to v_k'$
  in $(V',E_0\cup p(F))$.
\end{proof}
\fi
The data structure presented in this section requires a simple recursive decomposition $\rtree(G)$.
We now prove that the general case of any planar digraph $G$ can be reduced
to the case when we are given a simple recursive decomposition without increasing the overall asymptotic
running time of the whole algorithm.
\begin{lemma}\label{lem:simple-reduction}
Let $G$ be a planar digraph and let $n=|V(G)|$.
In $O(n\log{n})$ time we can reduce the switch-on reachability problem on $G$
to the switch-on reachability problem on a plane graph $G'$ with a given
  simple recursive decomposition $\rtree(G')$ and such that $|V(G')|=O(n)$ and $|E(G')|=O(n)$.
\end{lemma}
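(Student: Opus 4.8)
The plan is to combine Lemma~\ref{lem:simple-ext} and Corollary~\ref{col:subset-equiv} with the observation that the switch-on reachability problem only asks about reachability between endpoints of edges, together with a simple trick to simulate switch-ons of the extra ``contraction'' edges $E_0$. First I would invoke Lemma~\ref{lem:simple-ext} on $G$ to obtain, in $O(n\log n)$ time, the plane graph $G'=(V',E')$, the simple recursive decomposition $\rtree(G')$, the disjoint edge sets $E_0,E_1\subseteq E'$, the correspondence $\corr:V\to\mathrm{SCC}((V',E_0))$, and the bijection $p:E\to E_1$. Since $|V'|=O(n)$ and $|E'|=O(n)$, the size bounds required by the statement are immediate, so it only remains to set up the reduction of the dynamic problem itself.

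The reduction runs as follows. In the switch-on instance on $G'$, the initial state is all edges off; we immediately switch on every edge of $E_0$. (There are $O(n)$ of them, and this is a one-time cost that is charged to the switch-on reachability algorithm on $G'$; note this cost is absorbed in exactly the same way as in the proof of Lemma~\ref{lem:switch-inter}.) The edges of $E_1$ stay off initially, and any edges of $E'\setminus(E_0\cup E_1)$ — which arise only from the augmentation and carry no reachability meaning for $G$ — are simply left off forever. Now, whenever the switch-on instance on $G$ turns on an edge $e\in E$, we turn on the edge $p(e)\in E_1$ in $G'$; by the bijectivity of $p$ this is a single switch-on operation, located in $O(1)$ time via precomputed pointers. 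Thus if $F\subseteq E$ is the set of on-edges of $G$ at some point in time, then the set of on-edges of $G'$ is exactly $E_0\cup p(F)$.

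To answer a query of the switch-on reachability problem on $G$ — namely, for an edge $uw\in E$, whether there is a path from $w$ to $u$ through on-edges — I would pick any $u'\in\corr(u)$ and $w'\in\corr(w)$ (a fixed representative for each vertex of $V$, precomputed in the $O(n\log n)$ construction time), and ask the switch-on reachability algorithm on $G'$ for the status of some edge whose endpoints witness the path $\dirpath{w'}{u'}{}$. By Corollary~\ref{col:subset-equiv}, a path $\dirpath{w'}{u'}{}$ exists in $(V',E_0\cup p(F))$ if and only if a path $\dirpath{w}{u}{}$ exists in $(V,F)$, which is exactly the answer we need. The one subtlety is that the switch-on reachability data structure maintains path-existence only between endpoints of \emph{edges} of $G'$; but for $uw\in E$ the edge $p(uw)=u'v'\in E_1$ has exactly the right endpoints up to the $E_0$-reachability inside $\corr(u)$ and $\corr(w)$, and since $E_0$ has no inter-SCC edges and is entirely switched on, reachability from $w'$ to $u'$ coincides with reachability between the endpoints of $p(uw)$ read in the appropriate direction; so we can read off each query answer from the state maintained for the edge $p(uw)$.

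The main obstacle — and it is a minor one — is bookkeeping the correspondence between queries on $G$ and edges of $G'$: one must argue that for every edge $uw$ of $G$ there genuinely is an edge of $G'$ whose maintained status encodes ``$w$ reaches $u$'', and that switching on $E_0$ up front is legitimate (it is, because switch-ons are monotone and the $E_0$-SCC structure is static). Everything else is routine: the preprocessing is $O(n\log n)$ by Lemma~\ref{lem:simple-ext}, each update on $G$ maps to one update on $G'$ in $O(1)$ time, and each query on $G$ maps to one query on $G'$ in $O(1)$ time, so the asymptotic running time and space are dominated by the switch-on reachability algorithm on $G'$, as claimed.
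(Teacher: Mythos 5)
Your proposal is correct and follows essentially the same route as the paper: apply Lemma~\ref{lem:simple-ext}, switch on all of $E_0$ at initialization, map each switch-on of $e\in E$ to a switch-on of $p(e)$, never switch on $E'\setminus(E_0\cup E_1)$, and invoke Corollary~\ref{col:subset-equiv} to identify reachability between the endpoints of $e$ in $(V,F)$ with reachability between the endpoints of $p(e)$ in $(V',E_0\cup p(F))$. The only difference is cosmetic: you introduce fixed representatives $u'\in\corr(u)$, $w'\in\corr(w)$ before observing that the endpoints of $p(e)$ already serve this role, which Corollary~\ref{col:subset-equiv} covers directly since it holds for arbitrary members of $\corr(u)$ and $\corr(v)$.
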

\begin{proof}
  Let $G'=(V',E')$ and $p$ be defined as in Lemma~\ref{lem:simple-ext}.
  By definition, $G'$ is accompanied with a simple recursive decomposition $\rtree(G')$.
  Hence, we only need to show how to translate the updates to $G$ into the updates to $G'$.

  We first switch on all edges of $E_0$. When an edge $e\in E$ of $G$ is switched on,
  we switch on $p(e)$ in $G'$.
  The edges of $E'\setminus(E_0\cup E_1)$ are never switched on in $G'$.
  Let $F\subseteq E$ be the subset of edges of $G$ that are switched on at some point of time.
  For any $uv=e\in E$, let $u'v'=e'=p(e)$.
  By Corollary~\ref{col:subset-equiv}, the path $v\to u$ exists in $(V,F)$ iff the path
  $v'\to u'$ exists in $(V',E_0\cup p(F))$.
  Thus, to track the reachability between the endpoints of $e$, we only need
  to track the reachability between the endpoints of $p(e)$.
  This is done by solving the switch-on reachability problem on $G'$.
\end{proof}
In the remaining part of this section we assume that we are given
a simple recursive decomposition $\rtree(G)$ of $G$.

For $H\in\rtree(G)$, denote by $G-H$ the edge-induced subgraph
\ifshort
\linebreak
\fi
$G[E\setminus E(H)]$.
\begin{lemma}\label{lem:boundary-diff}
Let $H\in\rtree(G)$. Then $V(H)\cap V(G-H)\subseteq \bnd{H}$.
\end{lemma}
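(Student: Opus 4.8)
The statement $V(H)\cap V(G-H)\subseteq\bnd{H}$ says that any vertex shared between a decomposition node $H$ and its complement $G-H$ must already be a boundary vertex of $H$. The plan is to induct on the level of $H$ in $\rtree(G)$, using the recursive definition $\bnd{H}=V(H)\cap(V(S)\cup\bnd{P})$, where $P$ is the parent and $S$ the sibling of $H$.

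First I would dispose of the base case: if $H$ is the root, then $E\setminus E(H)=\emptyset$, so $G-H$ has no edges and (under the edge-induced convention) no vertices, making the inclusion vacuous. For the inductive step, let $P=\child^{-1}(H)$ be the parent of $H$ and $S$ its sibling, so that $E(P)=E(H)\cup E(S)$ and, since $\rtree(G)$ is \emph{simple}, $E(H)\cap E(S)=\emptyset$. Take a vertex $v\in V(H)\cap V(G-H)$. Then $v$ lies on some edge $e\in E\setminus E(H)$; I would split into two cases according to whether $e\in E(P)$ or not. If $e\in E(P)$, then since $e\notin E(H)$ and $E(P)=E(H)\sqcup E(S)$, we get $e\in E(S)$, hence $v\in V(S)$, and therefore $v\in V(H)\cap V(S)\subseteq V(H)\cap(V(S)\cup\bnd{P})=\bnd{H}$. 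If $e\notin E(P)$, then $e\in E\setminus E(P)$, so $v\in V(P)\cap V(G-P)$; by the inductive hypothesis applied to $P$ (which is one level higher), $v\in\bnd{P}$, and since also $v\in V(H)$ we conclude $v\in V(H)\cap\bnd{P}\subseteq V(H)\cap(V(S)\cup\bnd{P})=\bnd{H}$.

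In both cases $v\in\bnd{H}$, completing the induction. The only place where any care is needed is the case split and the use of simplicity of the decomposition (property~2 of Definition~\ref{def:simple-decomp}) to guarantee $E(H)\cap E(S)=\emptyset$, so that an edge of $P$ outside $E(H)$ genuinely lies in $E(S)$; without disjointness one would only get $e\in E(S)$ up to a shared edge, which is still fine since then $e\in E(H)$, contradicting $e\notin E(H)$ — so in fact simplicity is not strictly required here, but it makes the argument cleanest. I expect no real obstacle: the main (trivial) subtlety is being careful that ``$V(G-H)$'' refers to the vertex set of the \emph{edge-induced} subgraph $G[E\setminus E(H)]$, so that membership of $v$ in it genuinely produces a witnessing edge $e\notin E(H)$.
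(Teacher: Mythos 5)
Your proof is correct and follows essentially the same route as the paper's: induction on the level of $H$, picking a witness edge $e\in E\setminus E(H)$ incident to $v$, and splitting on whether $e\in E(P)$ (giving $v\in V(S)$) or not (giving $v\in V(G-P)$ and invoking the inductive hypothesis), then applying the recursive definition of $\bnd{H}$. Your side remark is also accurate: neither your argument nor the paper's needs simplicity of the decomposition here, since $E(P)=E(H)\cup E(S)$ already forces $e\in E(S)$ when $e\in E(P)\setminus E(H)$.
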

\begin{proof}
The proof is by induction of the level of $H$ in $\rtree(G)$. If $H$ is the root, then $G-H$ is empty
and the statement is clearly true.

Consider now $H$ which is not the root of $\rtree(G)$.
Let $P$ be the parent of $H$ and assume $V(P)\cap V(G-P)\subseteq \bnd{P}$.
Let $S$ be the sibling of~$H$.
  Recall that, by definition, $\bnd{H}=(V(H)\cap\bnd{P})\cup(V(H)\cap V(S))$.
  Let $v\in V(H)\cap V(G-H)$.
  Clearly, $v\in V(P)$ since $E(H)\subseteq E(P)$.
  Moreover, $v$ is incident to at least one edge $e$ of $E\setminus E(H)$.
  If $e\in E\setminus E(P)$, then $v\in V(G-P)$
  and thus $v\in\bnd{P}\cap V(H)$.
  Otherwise, $e\in E(S)$ and hence $v\in V(S)\cap V(H)$.
\end{proof}

\begin{lemma}\label{lem:hole-curve}
  Let $H\in\rtree(G)$. A cycle bounding a hole of $H$ is a separator curve of $H$.
\end{lemma}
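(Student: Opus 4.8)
\textbf{Proof plan for Lemma~\ref{lem:hole-curve}.}
The plan is to reduce the statement to Fact~\ref{fac:face-curve}, which already tells us that the embedding of a cycle bounding a \emph{simple} face is a separator curve. So the first thing I would do is invoke the fact that $\rtree(G)$ is a \emph{simple} recursive decomposition (which we assumed at this point in the section): by Definition~\ref{def:simple-decomp}, every hole of $H$ is a simple face of $H$, i.e.\ its bounding cycle is a simple cycle. Thus, if $f$ is a cycle bounding a hole of $H$, then $f$ bounds a simple face of the plane graph $H$, and Fact~\ref{fac:face-curve} applied to $H$ (not to $G$) immediately gives that the closed curve defined by the embedding of $f$ is a separator curve of $H$. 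That is essentially the whole argument; the only subtlety is keeping straight \emph{which} graph we are talking about. A hole of $H$ is by definition a face of $H$ that is not a face of $G$, but for the purposes of this lemma we only care that it is a face of $H$, and the definition of separator curve (Definition~\ref{def:separator-curve}) as well as Fact~\ref{fac:face-curve} are both stated relative to a fixed plane graph, here $H$.

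If instead I had to argue Fact~\ref{fac:face-curve} itself from scratch (in case one prefers a self-contained proof here), I would proceed as follows. Let $\curv$ be the Jordan curve traced out by the embedding of the simple cycle $f$ bounding the face. By the Jordan Curve Theorem, $\mathbb{R}^2 \setminus \curv$ splits into a bounded region $B$ and an unbounded region $U$; the face itself is one of these two regions, say the bounded one $B$ (the unbounded case is symmetric and corresponds to the second alternative in Definition~\ref{def:separator-curve}). Since $f$ is a \emph{simple} cycle, no vertex or edge of $H$ other than those on $f$ touches $\curv$, so every edge of $H$ is either entirely contained in $\curv$ (if it lies on $f$) or disjoint from $\curv$; this establishes the last requirement of Definition~\ref{def:separator-curve}. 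For the component condition, observe that $B$ is a face of $H$, hence contains no vertex or edge of $H$ in its interior; therefore every connected component of $H$ lies in $U \cup \curv$, i.e.\ weakly outside $\curv$, except possibly a component that lies entirely inside $\curv$ — but no such component exists since $B$ is a face. So every component of $H$ lies weakly outside $\curv$, which is exactly alternative~2 of Definition~\ref{def:separator-curve}.

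The main obstacle, such as it is, is purely bookkeeping: making sure we apply the right definitions relative to $H$ rather than $G$, and correctly handling the dichotomy between a bounded hole (giving alternative~1 of the separator-curve definition, with the hole strictly inside) and the degenerate possibility of the unbounded face (giving alternative~2). There is no real combinatorial or topological difficulty here; the substance was already packaged into the notion of a \emph{simple} recursive decomposition and into Fact~\ref{fac:face-curve}. Consequently I expect the proof to be one or two sentences if we cite Fact~\ref{fac:face-curve}, and a short paragraph if we inline the Jordan-curve argument sketched above.
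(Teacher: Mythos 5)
Your proof is correct and matches the paper's: the paper's own argument is exactly the one-line appeal to Fact~\ref{fac:face-curve}, using the simplicity of $\rtree(G)$ so that each hole of $H$ is a simple face of $H$. Your additional inlined Jordan-curve justification of Fact~\ref{fac:face-curve} is sound but not needed for this lemma.
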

\iffull
\begin{proof}
  By Fact~\ref{fac:face-curve}, the cycle bounding a simple face of $H$ is
  a separator curve of $H$.
\end{proof}
\fi
\begin{lemma}\label{lem:hole-curve-diff}
  Let $H\in\rtree(G)$. A cycle bounding a hole of $H$ is a separator curve of $G-H$.
\end{lemma}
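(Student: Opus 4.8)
\textbf{Proof plan for Lemma~\ref{lem:hole-curve-diff}.}
The plan is to reuse the work already done in Lemma~\ref{lem:hole-curve} and Lemma~\ref{lem:boundary-diff}. Let $\curv$ be the cycle bounding a hole $f$ of $H$. By Lemma~\ref{lem:hole-curve}, $\curv$ is a separator curve of $H$; what we must show is that the \emph{same} Jordan curve is a separator curve of the complementary subgraph $G-H = G[E\setminus E(H)]$, which inherits the embedding of $G$. Since $H$ and $G-H$ share the embedding of $G$, the edge-interior condition in Definition~\ref{def:separator-curve} is immediate: for every $e\in E(G-H)\subseteq E(G)$, the interior of (the embedding of) $e$ is a fragment of $\curv$ or disjoint from $\curv$ exactly as it would be when viewed inside $H$ or inside $G$; this is a property of the embedded edge, independent of which subgraph we consider. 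So the only real content is the component-containment condition.

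First I would recall that $f$ is a face of $H$ but not a face of $G$, so all edges and vertices of $G$ lying strictly inside $\curv$ (if any) belong to $G-H$, not to $H$; symmetrically for the outside. The key observation is that every vertex of $G-H$ that lies on $\curv$ must in fact lie in $\bnd{H}$: such a vertex $v$ is on the bounding cycle of a hole of $H$, hence $v\in V(H)$, and it is incident to an edge of $E\setminus E(H)$, hence $v\in V(G-H)$; so $v\in V(H)\cap V(G-H)\subseteq\bnd{H}$ by Lemma~\ref{lem:boundary-diff}. Now consider any connected component $K$ of $G-H$. Its embedding is a connected region of the plane; since $\curv$ is a closed Jordan curve consisting of embedded edges/vertices of $G$, and $K$'s edges each meet $\curv$ only in a fragment-or-nothing way, $K$ cannot have edges whose interiors cross $\curv$. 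Hence $K$ lies entirely in the closure of one of the two open regions determined by $\curv$, i.e.\ $K$ is weakly inside or weakly outside $\curv$. To upgrade ``weakly'' to the dichotomy required by Definition~\ref{def:separator-curve}, I would use that $\curv$ bounds a hole of $H$: the side of $\curv$ ``facing into'' the hole $f$ contains no edges of $H$ but, because $f$ is not a face of $G$, may contain edges and vertices of $G-H$; the other side contains the bulk of $H$. A component $K$ of $G-H$ that touches $\curv$ at all does so only at vertices of $\bnd{H}\subseteq\curv$, and everything of $K$ off $\curv$ lies in a single region; so each component of $G-H$ lies weakly inside $\curv$ or strictly outside $\curv$ (those components drawn inside the hole, possibly meeting $\curv$ on the boundary, are weakly inside; all others are strictly outside), which is exactly alternative~(1) of Definition~\ref{def:separator-curve}. (If one instead orients $\curv$ so that $H$'s interior is the bounded side, one lands in alternative~(2); either way one of the two alternatives holds.)

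The step I expect to be the main obstacle is making precise why no component of $G-H$ can straddle $\curv$ — i.e.\ ruling out a component with some vertices strictly inside and some strictly outside. The argument is that such straddling would force an edge of that component to cross $\curv$ transversally, contradicting the last clause of Definition~\ref{def:separator-curve} applied via the embedding of $G$ (whose edges are non-crossing curves, so an edge of $G-H$ cannot cross the embedded cycle $\curv$), unless the component passes through a vertex \emph{on} $\curv$; but any such vertex is in $\bnd{H}$, and one must check that a connected subgraph meeting $\curv$ only at boundary vertices still cannot have pieces on both strict sides — here one uses that the holes of a \emph{simple} recursive decomposition are simple and vertex-disjoint (Definition~\ref{def:simple-decomp}), so the cycle $\curv$ is simple and ``separating'' in the topological sense, and a connected graph cut by a simple closed curve meeting it only at finitely many points decomposes accordingly; a short planarity/Jordan-curve argument finishes this. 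Everything else is bookkeeping with the definitions already set up in Section~\ref{sec:preliminaries}.
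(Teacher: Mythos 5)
Your overall plan has the right shape, and you correctly isolate the dangerous case: a component of $G-H$ that meets $\curv$ only at vertices of $\bnd{H}$ could in principle have pieces strictly inside and strictly outside $\curv$. But you do not actually close that case. The general principle you invoke --- that a connected graph cut by a simple closed curve meeting it only at finitely many points ``decomposes accordingly'' --- is false as stated: a two-edge path $a$--$v$--$b$ with $v$ on $\curv$, $a$ strictly inside and $b$ strictly outside meets the curve in a single point and still straddles it, and nothing in the simplicity of $\curv$, nor in Lemma~\ref{lem:boundary-diff}, rules this out. So the ``short planarity/Jordan-curve argument'' you defer to is exactly the missing content, not bookkeeping. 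The same gap affects your unexplained assertion that every component not drawn inside the hole is \emph{strictly} outside (rather than merely weakly outside), which is what alternative (1) of Definition~\ref{def:separator-curve} requires.

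The idea that closes the gap, and which is the opening step of the paper's proof, is that the holes of $H$ are unions of faces of $G$ obtained by deleting the edges of $G-H$; hence every edge of $G-H$ is embedded strictly inside a unique hole of $H$. Write $E_h$ for the edges of $G-H$ lying inside hole $h$, and let $\curv_h$ bound $h$. Because the holes of a simple decomposition are pairwise vertex-disjoint (Definition~\ref{def:simple-decomp}), the subgraphs $G[E_h]$ and $G[E_{h'}]$ are vertex-disjoint for $h\neq h'$ (a common vertex would lie on the bounding cycles of two distinct holes), so every weakly connected component of $G-H$ has all of its edges inside a single hole. Since the interior of $\curv_h$ coincides with $h$ itself (by Lemma~\ref{lem:hole-curve}, $H$ lies weakly on one side of $\curv_h$), the components contained in $G[E_h]$ lie weakly inside $\curv_h$, while those contained in $G[E_{h'}]$ with $h'\neq h$ lie strictly outside. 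In particular, at a vertex $v\in\curv_h$ every incident edge of $G-H$ lies in a hole incident to $v$, and the only such hole is $h$; this is the concrete planar fact that replaces your appeal to the false cutting principle, simultaneously excluding straddling components and yielding the strict-outside half of the dichotomy.
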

\begin{proof}
  Note that the holes of $H$ can be seen as unions of original faces of $G$, merged
  by removing the edges of $G-H$ from $G$.
  Thus, each edge of $G-H$ lies inside some unique hole $h$ of $H$.
  Let $E_h\subseteq E(G-H)$ be the subset of edges lying inside $h$.

  Let $\curv_h$ be the cycle bounding a hole $h$ of $H$.
  Assume wlog. that $h$ is a bounded face, then the inside of $\curv_h$ is
  the same as the inside of $h$ (the case when $h$ is unbounded
  is analogous; we replace each occurrence of
  ``inside of $\curv_h$'' with ``outside of $\curv_h$'').
  By Lemma~\ref{lem:hole-curve}, for each $\curv_h$, $H$ lies
  weakly outside $\curv_h$.
  If $h$ is the only hole of $H$, then clearly $G-H=G[E_h]$ lies weakly
  on one side of $\curv_h$ and the Lemma holds.

  Assume now that $h$ is not the only hole, and let $h'\neq h$ be some other hole of $H$.
  As $h'$ and $h$ are disjoint, $h'$ lies strictly outside
  $\curv_h$.
  Thus, all edges of $E_{h'}$ lie strictly outside $\curv_h$.
  Hence, $V(G[E_h])\cap V(G[E_{h'}])=\emptyset$ for $h'\neq h$.
  To conclude, note that for each weakly connected component of $G-H$,
  the edges of that component are contained in a unique subset $E_h$.
\end{proof}
\begin{remark}
The assumption that $\rtree(G)$ is simple is crucial to proving Lemma~\ref{lem:hole-curve-diff},
which does not hold if the holes of $H$ are not pairwise-disjoint or not necessarily
simple.
\end{remark}

\begin{lemma}\label{lem:boundary-curves}
Let $H\in\rtree(G)$. The set $\bnd{H}$ can be partitioned into $O(1)$ sets
  $\bnd_1{H},\ldots,\bnd_\ell{H}$ so that each $\bnd_i{H}$ lies on a curve $\curv_i$
such that $\curv_i$ is a separator curve of both $H$ and $G-H$.
\end{lemma}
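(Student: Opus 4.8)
\textbf{Proof plan for Lemma~\ref{lem:boundary-curves}.}
The plan is to use the fact that $\rtree(G)$ is a \emph{simple} recursive decomposition, so by Definition~\ref{def:simple-decomp} the piece $H$ has $O(1)$ holes and all of them are simple and pairwise vertex-disjoint. By the second bullet defining a recursive decomposition, all vertices of $\bnd{H}$ lie on the holes of $H$. Therefore I would first define, for each hole $h$ of $H$, the set $\bnd_h{H} := \bnd{H}\cap V(\curv_h)$, where $\curv_h$ is the bounding cycle of $h$. Since the holes are vertex-disjoint, these sets are pairwise disjoint; since there are $O(1)$ holes and $\bnd{H}$ is covered by the holes, the sets $\bnd_h{H}$ form the desired partition into $O(1)$ parts. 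Re-index them as $\bnd_1{H},\ldots,\bnd_\ell{H}$ with associated cycles $\curv_1,\ldots,\curv_\ell$.

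The remaining task is to argue that each $\curv_i$ is a separator curve of both $H$ and of $G-H$. For $H$: a hole of a subgraph is, by definition, a face of $H$ that is not a face of $G$, and since $\rtree(G)$ is simple this face is simple, so its bounding cycle is a simple cycle; by Fact~\ref{fac:face-curve} (the closed curve defined by a simple face of a plane graph is a separator curve) it is a separator curve of $H$ --- this is exactly Lemma~\ref{lem:hole-curve}. For $G-H$: this is precisely the content of Lemma~\ref{lem:hole-curve-diff}, which states that a cycle bounding a hole of $H$ is a separator curve of $G-H$; its proof already used that the holes of a simple decomposition are pairwise disjoint. So at this point both halves are immediate citations.

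I expect the only genuinely non-routine point to be a bookkeeping one: checking that the embedding of $\curv_i$ as a cycle of $H$ and its embedding as a cycle of $G-H$ describe \emph{the same} Jordan curve in the plane, so that ``$\bnd_i{H}$ lies on $\curv_i$'' is meaningful simultaneously for both graphs. This follows because both $H$ and $G-H$ inherit the embedding of $G$, and the vertices and edges forming $\curv_i$ (the bounding cycle of the hole $h$) lie in $H$ by construction of the hole --- they bound $h$ in $H$ --- while the points of $\mathbb{R}^2$ making up the curve are the same regardless of which subgraph we regard the cycle as living in. Hence $\curv_i$ is one fixed Jordan curve, and Lemmas~\ref{lem:hole-curve} and~\ref{lem:hole-curve-diff} certify that it is a separator curve of $H$ and of $G-H$ respectively. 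The bound $\ell = O(1)$ is inherited from the $O(1)$ bound on the number of holes of $H$ guaranteed by the recursive decomposition, completing the proof.
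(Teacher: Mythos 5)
Your proposal is correct and follows essentially the same route as the paper: partition $\bnd{H}$ according to the (simple, vertex-disjoint) holes of $H$ on which its vertices lie, then cite Lemma~\ref{lem:hole-curve} (via Fact~\ref{fac:face-curve}) for $H$ and Lemma~\ref{lem:hole-curve-diff} for $G-H$. The extra remark that the bounding cycle defines one and the same Jordan curve in both subgraphs (since both inherit the embedding of $G$) is a fine clarification that the paper leaves implicit.
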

\begin{proof}
  Let $h_1,\ldots,h_\ell$ be the holes of $H$.
  By Lemmas~\ref{lem:hole-curve}~and~\ref{lem:hole-curve-diff}, we can set $\bnd_i{H}$
  to be the subset of $\bnd{H}$ lying on $h_i$.
  By the definition of $\rtree(G)$, each $v\in\bnd{H}$ lies on a unique hole of $H$.
\end{proof}

Denote by $\swon{G}$ the subgraph of $G$ consisting of the edges
that are switched on.
Our strategy will be to maintain, for each \emph{leaf} subgraph $H\in\rtree(G)$,
a binary 
matrix $\gtc(H)$ with both rows and columns indexed with the vertices of $H$, such that
$\gtc(H)_{u,v}=1$ iff there exists a path $\dirpath{u}{v}{\swon{G}}$.
Recall that our goal is to track, for each $uv\in E$,
the information whether there is a path $\dirpath{v}{u}{\swon{G}}$ consisting of edges
that are switched on.
By the definition of $\rtree(G)$, each edge $uv\in E$ is contained in
some leaf subgraph $H\in\rtree(G)$ and thus all the needed information
is contained in the matrices $\gtc(\cdot)$.
To efficiently update the matrices $\gtc(\cdot)$ while the edges are switched on,
we maintain two types of auxiliary information for each $H\in\rtree(G)$: 
\begin{enumerate}
  \item 
  A binary 
  matrix $\intc(H)$ with both rows and columns indexed
    with the vertices of $\bnd{H}$, such that $\intc(H)_{u,v}=1$ iff
    $u,v\in\bnd{H}$ and there exists a path $\dirpath{u}{v}{}$ in $H\cap\swon{G}$;
  \item 
   A binary 
  matrix $\extc(H)$ with both rows and columns indexed
    with the vertices of $\bnd{H}$, such that $\extc(H)_{u,v}=1$ iff
    $u,v\in\bnd{H}$ and there exists a path $\dirpath{u}{v}{}$ in $(G-H)\cap\swon{G}$.
\end{enumerate}
Additionally, for each leaf subgraph $H$ we maintain the transitive closure
of $H\cap \swon{G}$ in a binary 
matrix $\leaftc(H)$.
Note that $\intc(H)$ is a subgraph of $\leaftc(H)$, namely $\intc(H) = \leaftc(H)[\bnd{H}]$.
In the following we discuss how the matrices $\intc(\cdot)$, $\extc(\cdot)$,
$\leaftc(\cdot)$ and $\gtc(\cdot)$ interplay and we show how they can be efficiently
maintained for any sequence of edge switch-ons.
Observe that all these matrices undergo monotone changes: the edge switch-ons
can only cause their entries to change from $0$ to $1$.
Therefore, an $m\times m$ matrix can be updated only $O(m^2)$ times.
\begin{lemma}\label{lem:leafcomp}
Let $H$ be a leaf subgraph of $\rtree(G)$. $\leaftc(H)$ can be initialized and maintained in
$O(1)$ total time subject to any sequence of switch-ons of edges of $H$.
\end{lemma}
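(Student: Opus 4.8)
\textbf{Proof plan for Lemma~\ref{lem:leafcomp}.}
The plan is to exploit the fact that a leaf subgraph $H\in\rtree(G)$ has only $O(1)$ edges, hence only $O(1)$ vertices (up to isolated vertices, which we may ignore), so the matrix $\leaftc(H)$ has constant size. First I would observe that since $H$ is a leaf of a simple recursive decomposition, by the properties listed in Section~\ref{sec:decomp} we have $|E(H)|=O(1)$, and therefore $|V(H)|=O(1)$ as well once we discard vertices incident to no edge of $H$. Thus $\leaftc(H)$ is a $k\times k$ binary matrix for some constant $k$, and it can hold at most $O(k^2)=O(1)$ ones.

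Next I would describe how to maintain it. Initialization: when no edges of $H$ are on, $H\cap\swon{G}$ has no edges, so $\leaftc(H)$ is the identity matrix on $V(H)$ (each vertex reaches itself), which is written down in $O(1)$ time. Update: when an edge $uw$ of $H$ is switched on, we invoke the queue-based incremental transitive closure routine (Algorithm~\ref{alg:inc-tc}) on the constant-size graph $H\cap\swon{G}$ with $E^{+}=\{uw\}$. Since the whole graph has $O(1)$ vertices and $O(1)$ edges, computing each of the sets $Out(b)\cap\unreachable(a,\leaftc(H))$ and $In(a)\cap\cannotreach(b,\leaftc(H))$ takes $O(1)$ time by brute force, and the remaining work is bounded by Lemma~\ref{lem:inc-tc-time} by the number of entries changed from $0$ to $1$, which is at most $O(k^2)=O(1)$ per update. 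Summing over the whole sequence of switch-ons of edges of $H$, each of the $O(1)$ entries of $\leaftc(H)$ changes at most once, so the total update time is $O(1)$, and the total time including initialization is $O(1)$.

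There is essentially no obstacle here: the only point requiring care is to make the dependence on $k$ truly constant, i.e. to confirm that ``$O(1)$ edges'' in the decomposition really does bound the number of relevant vertices and hence the matrix dimension by a constant, and that the overhead of Algorithm~\ref{alg:inc-tc} on a constant-size graph is genuinely $O(1)$ per update rather than, say, polylogarithmic. Both follow directly from the stated properties of $\rtree(G)$ and from Lemma~\ref{lem:inc-tc-time}. The lemma then follows, and it provides the base case of the recursive maintenance of the $\intc(\cdot)$, $\extc(\cdot)$, $\leaftc(\cdot)$ and $\gtc(\cdot)$ matrices over $\rtree(G)$.
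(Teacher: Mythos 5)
Your proposal is correct and rests on the same observation as the paper's one-line proof: a leaf subgraph has $O(1)$ edges, hence $\leaftc(H)$ has constant size and only $O(1)$ switch-ons can ever occur, so everything is $O(1)$ total. The only cosmetic difference is that the paper simply recomputes $\leaftc(H)$ from scratch after each switch-on rather than invoking Algorithm~\ref{alg:inc-tc}; both are trivially $O(1)$ per update.
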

\iffull
\begin{proof}
  As each leaf subgraph has constant size, once one of the $O(1)$ edges of $H$ is switched-on,
the matrix $\leaftc(H)$ can be recomputed from scratch in $O(1)$ time.
\end{proof}
\fi
In the following lemma we say that a matrix $M$ depends only on some matrices $M_1, \ldots, M_k$ if the information contained in
\ifshort
\linebreak
\fi
$M_1, \ldots, M_k$ is sufficient to compute $M$.

\begin{lemma}\label{lem:summatrices}
  Let $H$ be a subgraph of $\rtree(G)$ and let $m=|\bnd{H}|$.
\begin{enumerate}
\item If $H$ is a leaf subgraph, then $\intc(H)$ depends only on $\leaftc(H)$.
\item If $H$ is a non-leaf subgraph, then $\intc(H)$ depends only on $\intc(\child_1(H))$ and $\intc(\child_2(H))$.
\item If $H$ is a non-root subgraph, then let $P$ and $S$ be the parent and the sibling of $H$ in $\rtree(G)$, respectively.
  $\extc(H)$ depends only on $\extc(P)$ and $\intc(S)$.
\item If $H$ is a leaf subgraph of $\rtree(G)$, then $\gtc(H)$ depends only on
  $\leaftc(H)$ and $\extc(H)$.
\end{enumerate}
  Any of the matrices $\intc(H),\extc(H),\gtc(H)$ can be maintained in 
\ifshort
\linebreak
\fi
$O(m^2\log{m}\log\log{m})$
  total time subject to any sequence of updates to the matrices it depends on.
\end{lemma}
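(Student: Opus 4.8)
The plan is to establish the four dependency claims by exhibiting, in each case, a small plane digraph whose reachability matrix (restricted to the relevant boundary) is exactly the target matrix, and which is built only from the matrices listed. In every case the building blocks are graphs of the form $G(M)$ from Definition~\ref{def:mat-graph}, and the key structural input is Lemma~\ref{lem:boundary-curves}: for any $H\in\rtree(G)$ the set $\bnd{H}$ splits into $O(1)$ parts each lying on a separator curve of both $H$ and $G-H$. This puts every union we form into the hypothesis of Theorem~\ref{thm:trans}, so maintaining the transitive closure of the union costs $O(m^2\log m\log\log m)$ as soon as we can report the changed entries of the two summand matrices — which is exactly what Theorem~\ref{thm:trans} requires and what it delivers for the union in turn. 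For the leaf case the summand is a constant-size matrix and the changes can be reported trivially.

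For the four claims: (1) $\intc(H)=\leaftc(H)[\bnd H]$ literally, as noted just before the lemma, so it is a submatrix of $\leaftc(H)$; reporting changed entries of $\intc(H)$ is free once $\leaftc(H)$ is maintained (Lemma~\ref{lem:leafcomp}). (2) For a non-leaf $H$ with children $H_1,H_2$, a path between two vertices of $\bnd H$ inside $H\cap\swon G$ decomposes into maximal subpaths each lying in $H_1\cap\swon G$ or $H_2\cap\swon G$; consecutive subpaths meet at a vertex of $V(H_1)\cap V(H_2)$, and since $H_1,H_2$ are siblings we have $V(H_1)\cap V(H_2)\subseteq\bnd{H_1}\cap\bnd{H_2}$ (from the definition of $\bnd\cdot$). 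Hence $\intc(H)$ is the restriction to $\bnd H\subseteq\bnd{H_1}\cup\bnd{H_2}$ of the transitive closure of $G(\intc(H_1))\cup G(\intc(H_2))$; both vertex sets lie on $O(1)$ separator curves of the respective pieces by Lemma~\ref{lem:boundary-curves}, and $V(G(\intc(H_1)))\cap V(G(\intc(H_2)))\subseteq\bnd{H_1}\cap\bnd{H_2}$, so Theorem~\ref{thm:trans} applies. (3) For a non-root $H$ with parent $P$ and sibling $S$, write $G-H=(G-P)\cup S$; a path in $(G-H)\cap\swon G$ between boundary vertices of $H$ decomposes into subpaths alternately in $(G-P)\cap\swon G$ and $S\cap\swon G$, meeting at vertices of $V(P)\cap V(G-P)\subseteq\bnd P$ or in $V(S)$, all of which lie in $\bnd P\cup\bnd S$; thus $\extc(H)$ is the restriction to $\bnd H$ of the transitive closure of $G(\extc(P))\cup G(\intc(S))$, and again Lemma~\ref{lem:boundary-curves} (applied to $P$ for the curves of $G-P$, and to $S$) plus the intersection bound puts this under Theorem~\ref{thm:trans}. (4) For a leaf $H$, $\swon G=(H\cap\swon G)\cup((G-H)\cap\swon G)$, the two share only vertices in $V(H)\cap V(G-H)\subseteq\bnd H$ by Lemma~\ref{lem:boundary-diff}, so $\gtc(H)$ — indexed by $V(H)$ — is the transitive closure of $G(\leaftc(H))\cup G(\extc(H))$; here $\leaftc(H)$ is indexed by the constant-size set $V(H)$ which trivially lies on the holes of $H$, and $\extc(H)$ is indexed by $\bnd H$ on $O(1)$ separator curves of $G-H$, with intersection of vertex sets inside $\bnd H$, so Theorem~\ref{thm:trans} again applies with $m=|\bnd H|+O(1)=O(m)$.

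For the timing claim it remains to check correctness of the decompositions above — i.e. that the ``cut a path at crossings with a piece boundary'' argument is valid — which is routine from $E(H)\cup E(G-H)=E$, $E(H_1)\cup E(H_2)=E(H)$, and the fact that any directed path visiting edges from two sides must pass through a shared vertex, combined with the boundary-inclusion facts cited. Given correctness, each matrix is obtained as a transitive-closure-of-a-union maintained by Theorem~\ref{thm:trans}, whose cost is $O(m^2\log m\log\log m)$ provided we can feed it the list of changed entries of its two inputs after each switch-on; by induction on $\rtree(G)$ (base case Lemma~\ref{lem:leafcomp}, where the list is produced by recomputing the $O(1)$-size $\leaftc(H)$), each such list is itself produced by the Theorem~\ref{thm:trans} machinery one level down, so the bound propagates. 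The main obstacle is getting the case analysis of which vertex set the ``gluing'' vertices belong to exactly right — in particular verifying $V(H_1)\cap V(H_2)\subseteq\bnd{H_1}\cap\bnd{H_2}$ and the analogous statement for $(G-P)$ and $S$, and checking that the required separator-curve disjointness (needed for Theorem~\ref{thm:trans}) survives taking unions — i.e. that the curves carrying $\bnd{H_i}$ in $H_i$ remain separator curves of the relevant larger graph; this is precisely where simplicity of $\rtree(G)$ (via Lemma~\ref{lem:hole-curve-diff}) is indispensable, and it is the one place where a careless argument would break.
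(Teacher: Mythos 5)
Your proposal is correct and follows essentially the same route as the paper: verify the hypotheses of Theorem~\ref{thm:trans} via Lemma~\ref{lem:boundary-curves} and the boundary-intersection facts (definition of $\bnd{\cdot}$, Lemma~\ref{lem:boundary-diff}, and simplicity giving $G-H=(G-P)\cup S$), and read off $\intc(H)$, $\extc(H)$ as submatrices of the maintained union closures, with leaves handled by constant-size recomputation. The only (immaterial) deviation is in item (4), where you invoke Theorem~\ref{thm:trans} on $U_1=V(H)$ with the claim that $V(H)$ ``trivially lies on the holes of $H$'' --- which is not guaranteed --- whereas the paper simply recomputes the $O(1)$-size $\gtc(H)$ from scratch after each of the $O(1)$ updates, which also rescues your argument.
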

The dependencies stated in Lemma~\ref{lem:summatrices} 
are depicted in Figure~\ref{fig:dep}.
\begin{proof}
  \emph{(1)}\hspace{2pt}
  $\intc(H)$ is a submatrix of $\leaftc(H)$, which is of constant size.
  The total number of updates to $\leaftc(H)$ is $O(1)$.
  
  \emph{(2)}
  For $i=1,2$, set $G_i=\child_i(H)$, $U_i=\bnd{\child_i(H)}$, and ${A_i=\intc(\child_i(H))}$.
  Note that by Lemma~\ref{lem:boundary-curves}, the set $U_i$ lies on a constant number of
  separator curves $G_i$.
  We also have ${V(G_1)\cap V(G_2)}\subseteq {U_1\cap U_2}$ and $|U_1\cup U_2|=O(|U_1|+|U_2|)=O(m)$.
  By Theorem~\ref{thm:trans}, the reachability matrix $A$ of $U_1\cup U_2$ in
  $(\child_1(H)\cup\child_2(H))\cap\swon{G}$ can be maintained in $O(m^2\log{m}\log\log{m})$
  total time, subject to any sequence of updates to $A_1$ and $A_2$.
  As $\bnd{H}\subseteq\bnd{\child_1(H)}\cup\bnd{\child_2(H)}$,
  $\intc(H)$ is a submatrix of $A$ and thus can be maintained within
  the same time bounds.

  \emph{(3)}
Set $G_1=G-P$, $G_2=S$, $U_1=\bnd{P}$, $U_2=\bnd{S}$, $A_1=\extc(P)$ and $A_2=\intc(S)$.
  Note that by Lemma~\ref{lem:boundary-diff}, ${V(S)\cap V(G-P)}\subseteq {V(P)\cap V(G-P)}\subseteq \bnd{P}$
  and similarly $V(S)\cap V(G-P)\subseteq V(S)\cap V(G-S)\subseteq \bnd{S}$.
  Therefore, $V(G_1)\cap V(G_2)\subseteq U_1\cap U_2$.
  By Lemma~\ref{lem:boundary-curves}, the set $U_i$ lies on a constant number of
  separator curves of $G_i$.

  By Theorem~\ref{thm:trans}, the reachability matrix $A$ of ${U_1\cup U_2}$ in
  \ifshort
  \linebreak
 \fi
$(G_1\cup G_2)\cap\swon{G}$ can be maintained in $O(m^2\log{m}\log\log{m})$,
  since $|U_1\cup U_2|=O(|U_1|+|U_2|)=O(m)$.
  As $E(H)\cap E(S)=\emptyset$ and $E(H)\cup E(S)=E(P)$, $G_1\cup G_2=(G-P)\cup S=G-H$.
  Thus, $A$ actually represents the reachability between vertices $\bnd{P}\cup\bnd{S}$ in $G-H$.
  As clearly $\bnd{H}\subseteq\bnd{P}\cup\bnd{S}$, $\extc(H)$ is a submatrix
  of $A$ and thus can be maintained in $O(m^2\log{m}\log\log{m})$ time as well.

  \emph{(4)}
  We show that $\gtc(H)_{u,v}=1$ iff there exists a path $\dirpath{u}{v}{}$ in
  the graph $T_H=G(\leaftc(H))\cup G(\extc(H))$ (see Definition~\ref{def:mat-graph}).
  Note that $V(T_H)=V(H)$.

  Clearly, by the definitions of the matrices $\leaftc(H)$ and $\extc(H)$,
  if there exists a path $\dirpath{u}{v}{T_H}$, there also exists a path $\dirpath{u}{v}{\swon{G}}$.

  Now, let $P$ be some path $\dirpath{u}{v}{\swon{G}}$ such that $u,v\in V(H)$.
  Split $P$ into maximal subpaths $P_1,\ldots,P_k$ fully contained (as far as their edges are concerned)
  in either $E(H)$ or $E(G-H)$.

  Let $P_i$ be a subpath $\dirpath{a}{b}{}$ entirely contained in $G-H$.
  Then $a,b\in V(G-H)$.
  We prove that $a\in V(H)$. The proof that $b\in V(H)$ is analogous.
  We either have $i=1$ and $a=u$ implies $a\in V(H)$, or the path $P_{i-1}$ is fully contained in $H$
  and thus $a\in V(H)$ as well.
  Now, as $a,b\in V(G-H)\cap V(H)$, by Lemma~\ref{lem:boundary-diff},
  we have $a,b\in\bnd{H}$.
  Hence, there is an edge $ab$ in $G(\extc(H))$.

  By the definition of $\leaftc(H)$, for $P_i=\dirpath{a}{b}{H}$,
  there is an edge $ab$ in $G(\leaftc(H))$.
  We conclude that for each subpath $P_i= \dirpath{a}{b}{\swon{G}}$, there is an edge $ab\in E(T_H)$ and
  thus there is a path $\dirpath{u}{v}{T_H}$.

  As each of the matrices $\leaftc(H),\extc(H)$ is of
  constant size and $\ell=O(1)$, the matrix $\gtc(H)$ encoding the transitive closure of $T_H$
  can be recomputed in $O(1)$ time after any update to these matrices.
  Moreover, the total number of changes to the matrices $\leaftc(H),\extc(H)$
  is constant.
\end{proof}

\begin{figure}[hbt!]
\centering
\ifshort
  \includegraphics[scale=0.76]{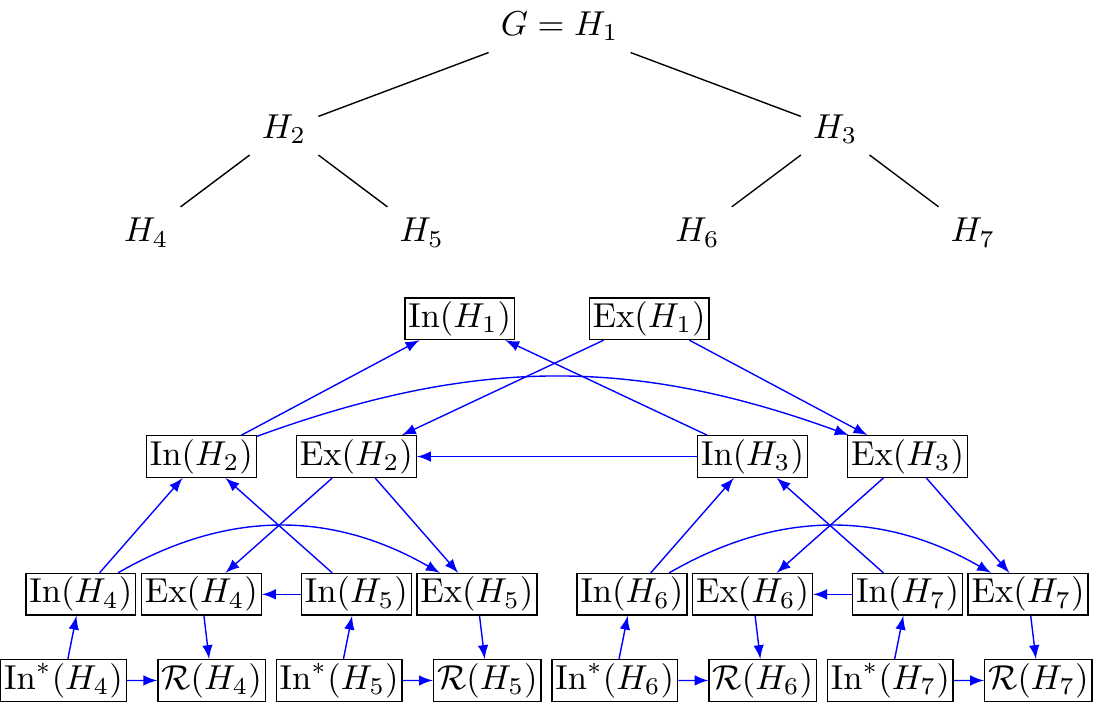}
\fi
\iffull
  \includegraphics{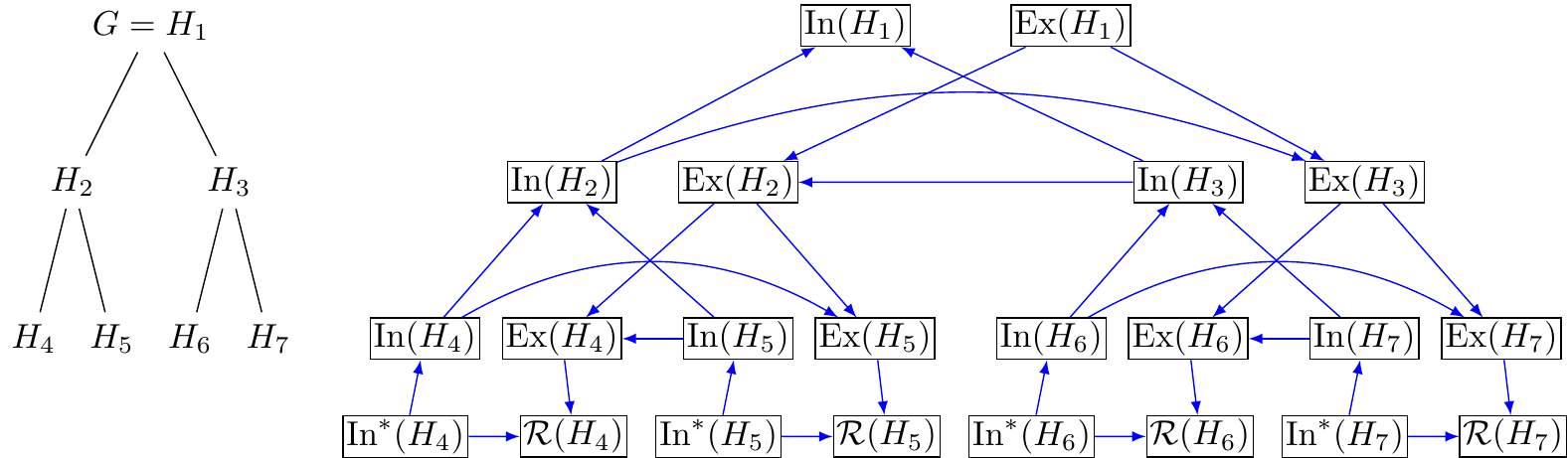}
\fi
\caption{An example decomposition $\rtree(G)$ (top) and the dependencies between
the corresponding matrices $\intc(H),\leaftc(H),\extc(H),\gtc(H)$, for $H\in\rtree(G)$ (bottom).}
\label{fig:dep}
\ifshort
  \vspace{-3mm}
\fi
\end{figure}
\newcommand{\deplist}{\ensuremath{\mathcal{L}}}
Note that each matrix depends directly on at most two other matrices.
The dependencies allow us to organize the matrices
\ifshort
\linebreak
\fi
$\bigcup_{H\in\rtree(G)}\{\leaftc(H),\intc(H),\extc(H),\gtc(H)\}$
in a \emph{dependency list} $\deplist$,
such that each matrix depends only on matrices earlier in the sequence.
The order of the elements of $\deplist$ is as follows.
The matrices form groups according to their type.
The order on groups is $\leaftc, \intc, \extc, \gtc$.
Matrices within the groups $\leaftc$, $\intc$ and $\gtc$ are sorted
  increasingly inclusion-wise by their corresponding subgraphs of $\rtree(G)$.
On the other hand, the matrices within the group $\extc$ are sorted
  decreasingly inclusion-wise by their corresponding subgraphs of $\rtree(G)$.
\begin{lemma}\label{lem:simple-switch-on}
  Let $G$ be a plane digraph and let $n=|V(G)|$.
  Let $\rtree(G)$ be a simple recursive decomposition of $G$.
  The switch-on reachability problem on $G$
can be solved in $O(n\log^2{n}\log\log{n})$ time.
\end{lemma}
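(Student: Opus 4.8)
The plan is to maintain, for every $H \in \rtree(G)$, the four matrices $\leaftc(H)$ (only for leaves), $\intc(H)$, $\extc(H)$, and $\gtc(H)$ (only for leaves), using the dependency list $\deplist$ described above so that each matrix depends only on matrices that precede it. First I would argue that the dependency structure is acyclic and respected by $\deplist$: within the group ordering $\leaftc, \intc, \gtc$ the matrices of a piece depend on matrices of its children (smaller inclusion-wise), which come earlier when the group is sorted increasingly by inclusion; the $\extc$ group must be sorted \emph{decreasingly} by inclusion because $\extc(H)$ depends on $\extc(P)$ where $P$ is the parent of $H$ (larger inclusion-wise), and $\extc(G)$ at the root is empty. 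Thus the whole collection can be updated by a single pass over $\deplist$, and Lemma~\ref{lem:summatrices} (parts 1--4) certifies that each matrix is a submatrix of a reachability matrix maintainable by Theorem~\ref{thm:trans}, so that any switch-on that changes entries of the matrices $M$ depends on can be propagated forward along $\deplist$.

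Next I would set up the data structure and show how a switch-on operation is handled. When an edge $uw = e$ of $G$ is switched on, by Corollary~\ref{cor:rtree-bounds} it lies in a unique leaf subgraph $H_e$; we recompute $\leaftc(H_e)$ in $O(1)$ time (Lemma~\ref{lem:leafcomp}), which produces a list of changed entries in $\leaftc(H_e)$, hence in $\intc(H_e)$. We then walk up and across $\rtree(G)$ following $\deplist$: each changed entry in a matrix $M$ triggers, via the maintenance algorithms of Lemma~\ref{lem:summatrices}, a (possibly empty) list of changed entries in the matrices that depend on $M$, and so on. Since all matrices undergo only $0 \to 1$ changes, the total work is bounded by summing, over all $H \in \rtree(G)$, the total-time bound $O(|\bnd{H}|^2 \log|\bnd{H}| \log\log|\bnd{H}|)$ for $\intc(H)$ and $\extc(H)$, plus $O(1)$ per leaf for $\leaftc(H)$ and $\gtc(H)$. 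Using $|\bnd{H}| = O(\sqrt{n})$ (so the $\log$ and $\log\log$ factors are $O(\log n)$ and $O(\log\log n)$) together with $\sum_{H \in \rtree(G)} |\bnd{H}|^2 = O(n\log n)$ from the recursive-decomposition properties, this telescopes to $O(n \log n \cdot \log n \cdot \log\log n) = O(n \log^2 n \log\log n)$; the $O(n)$ leaves contribute only $O(n)$ extra. Initialization is the same cost, since computing the partitions $\aparts$ via Lemma~\ref{lem:monge_reach} costs $O(|\bnd{H}|^2)$ per piece, again summing to $O(n\log n)$.

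Finally I would verify correctness of the answers: for each edge $uw = e \in E(G)$, $e$ lies in a unique leaf $H_e$, and $\gtc(H_e)_{w,u} = 1$ iff there is a path $\dirpath{w}{u}{\swon{G}}$ — this is exactly part (4) of Lemma~\ref{lem:summatrices}, whose proof shows $\gtc(H_e)$ is the transitive closure of $G(\leaftc(H_e)) \cup G(\extc(H_e))$, which captures reachability in $\swon{G}$ restricted to pairs of vertices of $H_e$. So the switch-on reachability output for $e$ can be read off in $O(1)$ time, and is updated whenever the relevant matrices change.

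The main obstacle I expect is \emph{bookkeeping the change propagation without blowing up the running time}: naively, a single switch-on could touch many matrices, and one must be careful that the amortized analysis — ``each $m \times m$ matrix changes only $O(m^2)$ times in total, and each such change costs $O(\log m \log\log m)$ amortized inside Theorem~\ref{thm:trans}'' — is actually charged correctly through the chain of dependencies, i.e. that feeding the list of changed entries of $A_1, A_2$ into the next matrix's maintenance routine genuinely fits the interface of Theorem~\ref{thm:trans} (which assumes exactly such an incremental-update stream). A secondary subtlety is ensuring the separator-curve hypotheses of Theorem~\ref{thm:trans} hold at every internal node — this is where simplicity of $\rtree(G)$ is used, via Lemmas~\ref{lem:boundary-curves}, \ref{lem:hole-curve}, and \ref{lem:hole-curve-diff} — but that has already been isolated into Lemma~\ref{lem:summatrices}, so here it reduces to invoking it.
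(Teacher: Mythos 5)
Your proposal follows essentially the same route as the paper: maintain $\leaftc,\intc,\extc,\gtc$ along the dependency list $\deplist$, charge all matrix maintenance to Lemma~\ref{lem:summatrices} and Theorem~\ref{thm:trans}, and sum $O(|\bnd{H}|^2\log n\log\log n)$ over $\rtree(G)$ using $\sum_H|\bnd{H}|^2=O(n\log n)$, which is exactly the paper's accounting. The one piece you leave open --- and explicitly flag as the main obstacle --- is the change-propagation mechanism itself, and here your phrase ``a single pass over $\deplist$'' would not work as stated: $\deplist$ has $\Theta(n)$ matrices, so scanning it once per switch-on costs $\Theta(n)$ per update even when nothing changes, i.e.\ $\Theta(n^2)$ overall. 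The paper closes this by making the propagation event-driven: it keeps a priority queue of matrices pending an update, keyed by position in $\deplist$; when a matrix $M$ changes, only the (at most two) matrices directly depending on $M$ are pushed. Since the total number of entry changes is bounded by the total size of all matrices plus leaf sizes, $O(n\log n)$ by Corollary~\ref{cor:rtree-bounds}, the queue handles $O(n\log n)$ insertions at $O(\log n)$ each, i.e.\ $O(n\log^2 n)$ extra, which fits the budget; correctness follows because the dependencies are acyclic and every affected $\leaftc(\cdot)$ is notified. Also note that uniqueness of the leaf containing a switched-on edge comes from simplicity of $\rtree(G)$ (disjoint sibling edge sets, Definition~\ref{def:simple-decomp}), not from Corollary~\ref{cor:rtree-bounds} alone, which only guarantees existence. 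With that scheduling argument added, your proof matches the paper's.
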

\begin{proof}
We initialize the data structures maintaining the matrices of the
dependency list.
When an edge $e$ is switched on, we create a priority queue $Q$
of matrices that potentially need updates.
The elements of $Q$ are keyed by their position in the list $\deplist$.
First, the unique matrix $\leaftc(H)$ such that $e\in E(H)$, is pushed to $Q$.
We repeatedly pop matrices $M$ out of $Q$ and process
either the edge switch-on (in the case of $\leaftc(\cdot)$ matrices)
or the changes in matrices $M'$ such that $M$ directly depends on $M'$.
If the matrix $M$ changes after switching $e$ on, we push
to $Q$ all matrices $M_2$ such that $M_2$ directly depends on $M$.
The correctness of this update procedure follows form the fact that the dependencies
  do not form cycles and all required matrices $\leaftc(\cdot)$ are notified about the switch-on.

Observe that as for each matrix $M\in\deplist$, there are at most
two matrices of $\deplist$ that depend directly on $M$,
the total number of times a matrix is inserted to $Q$
is proportional to the total size of the matrices of $\deplist$ plus
  the sum of sizes of the leaf subgraphs of $\rtree(G)$, i.e.,
  $O(\sum_{H\in\rtree(G)}|\bnd{H}|^2+|E(H)|)=O(n\log{n})$,
  by Corollary~\ref{cor:rtree-bounds}.
Thus, the cost of priority queue operations is $O(n\log^2{n})$.
  By Lemma~\ref{lem:summatrices} 
for $H\in\rtree(G)$, the total time needed to initialize and maintain the matrices $\{\leaftc(H),\intc(H),\extc(H),\gtc(H)\}$
  subject to the updates to matrices that they depend on, is
  $O(|\bnd{H}|^2\log{|\bnd{H}|}\log\log{|\bnd{H}|})=O(|\bnd{H}|^2\log{n}\log\log{n})$.
  Recall that $\sum_{H\in\rtree(H)}|\bnd{H}|^2=O(n\log{n})$.
  Thus, the total cost over all subgraphs $H\in\rtree(G)$ is
\ifshort
\linebreak
\fi
$O(n\log^2{n}\log\log{n})$.
\end{proof}
\begin{theorem}
  Let $G$ be a planar digraph and let $n=|V(G)|$. The switch-on reachability problem on $G$
can be solved in
\ifshort
\linebreak
\fi
$O(n\log^2{n}\log\log{n})$ time.
\end{theorem}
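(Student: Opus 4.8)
The plan is to reduce the general case to the special case already handled by Lemma~\ref{lem:simple-switch-on}, i.e.\ the case in which a simple recursive decomposition is available. This is exactly the content of Lemma~\ref{lem:simple-reduction}, so the proof is essentially a two-line composition argument. First I would invoke Lemma~\ref{lem:simple-reduction}: given an arbitrary planar digraph $G$ with $n$ vertices, in $O(n\log n)$ time we obtain a plane graph $G'$ with $|V(G')|=O(n)$, $|E(G')|=O(n)$, equipped with a \emph{simple} recursive decomposition $\rtree(G')$, together with the translation of switch-on updates on $G$ into switch-on updates on $G'$ that preserves exactly the reachability information we need to report (via the map $p$ and Corollary~\ref{col:subset-equiv}).

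Next I would apply Lemma~\ref{lem:simple-switch-on} to $G'$: since $G'$ comes with a simple recursive decomposition and has $O(n)$ vertices, the switch-on reachability problem on $G'$ can be solved in $O(n\log^2 n\log\log n)$ time. Finally I would observe that the preprocessing from Lemma~\ref{lem:simple-reduction} costs only $O(n\log n)$, and that each switch-on update on $G$ is translated into a single switch-on update on $G'$ in $O(1)$ time (using precomputed pointers realizing $p$), while queries about whether the endpoints of an edge $uv\in E(G)$ are connected by an on-path are answered by looking at the corresponding edge $p(uv)$ in $G'$. Summing the two contributions gives the claimed $O(n\log^2 n\log\log n)$ total time.

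There is no real obstacle here: all the work has already been done. The only thing to be careful about is that the reduction of Lemma~\ref{lem:simple-reduction} produces a graph on which \emph{additional} edges ($E_0$) must be switched on during initialization, so one should note that switching on the $O(n)$ edges of $E_0$ is already accounted for within the $O(n\log^2 n\log\log n)$ bound of Lemma~\ref{lem:simple-switch-on} (a switch-on data structure must in particular be able to absorb $O(n)$ switch-on operations within its stated total time). Given that, the theorem follows immediately.

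\begin{proof}
By Lemma~\ref{lem:simple-reduction}, in $O(n\log{n})$ time we can reduce the switch-on reachability problem on $G$ to the switch-on reachability problem on a plane graph $G'$ with $|V(G')|=O(n)$ and $|E(G')|=O(n)$, given together with a simple recursive decomposition $\rtree(G')$. In this reduction, initialization consists of switching on the $O(|E(G')|)=O(n)$ edges of $E_0$, and each subsequent switch-on update on $G$ is translated in $O(1)$ time into a single switch-on update on $G'$; reachability queries between endpoints of an edge $uv\in E(G)$ are answered by querying the corresponding edge $p(uv)\in E(G')$.

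By Lemma~\ref{lem:simple-switch-on}, the switch-on reachability problem on $G'$ (including the $O(n)$ initialization switch-ons) can be solved in $O(n\log^2{n}\log\log{n})$ total time. Adding the $O(n\log{n})$ cost of the reduction, the total running time is $O(n\log^2{n}\log\log{n})$.
\end{proof}
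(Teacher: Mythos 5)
Your proof is correct and follows exactly the paper's own argument: apply Lemma~\ref{lem:simple-reduction} to obtain a linear-size plane graph with a simple recursive decomposition in $O(n\log n)$ time, then run the data structure of Lemma~\ref{lem:simple-switch-on} on it. The extra remark that the $O(n)$ initialization switch-ons of $E_0$ are absorbed by the total-time bound is a nice explicit detail but does not change the argument.
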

\begin{proof}
We first apply Lemma~\ref{lem:simple-reduction} to reduce the problem
to the case with a given simple recursive decomposition $\rtree(G)$.
The reduction takes $O(n\log{n})$ time. Then, we apply Lemma~\ref{lem:simple-switch-on}.
\end{proof}


\section{Extensions of the Switch-On Reachability Data Structure}

\subsection{Computing Maximal 2-Edge Connected Subgraphs}
\begin{definition}
  Let $G=(V,E)$ be a digraph. We call an edge $e\in E$
  a \emph{strong bridge} of $G$ iff $G-e$ has more strongly connected
  components than $G$.
\end{definition}

\begin{lemma}\label{lem:strong-bridge}
  Let $G$ be a plane digraph and let
  $uv=e\in E(G)$ be an intra-SCC edge of $G$. Let $ab=\dual{e}\in \dual{E}$.
  The edge $e$ is a strong bridge of $G$ iff there exists a path $\dirpath{a}{b}{}$ in $\dual{G}-\dual{e}$.
\end{lemma}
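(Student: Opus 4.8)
The plan is to rewrite the strong-bridge condition as a primal reachability statement and then transfer it to the dual via Lemma~\ref{lem:inter-dual}, applied not to $e$ itself but to a reversed copy of $e$.

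\emph{Step 1 (primal reduction).} I would first prove that, since $uv=e$ is an intra-SCC edge of $G$, the edge $e$ is a strong bridge of $G$ if and only if there is \emph{no} directed path $\dirpath{u}{v}{}$ in $G-e$. The ingredients are: a simple directed path $\dirpath{v}{u}{}$ in $G$ cannot traverse $e=uv$ (it would revisit $u$), so such a path survives in $G-e$, and hence $u,v$ lie in a common SCC of $G-e$ iff $u$ reaches $v$ in $G-e$; moreover, if $u$ reaches $v$ in $G-e$ then every vertex $w$ of the SCC $C\ni u,v$ of $G$ also stays in that SCC of $G-e$, since simple paths $\dirpath{w}{u}{}$ and $\dirpath{v}{w}{}$ of $G$ likewise avoid $e$; finally, no SCC $C'\ne C$ is affected, because a path internal to $C'$ that used $e$ would force $u$, $v$ and the endpoints of that path into one SCC. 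Together these show the number of SCCs strictly increases exactly when $C$ breaks up, i.e. exactly when $u$ cannot reach $v$ in $G-e$. This SCC-counting is the step that needs the most care, though it is elementary.

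\emph{Step 2 (reversing $e$).} Let $G'$ be the plane digraph obtained from $G$ by replacing $e=uv$ with an edge $e'=vu$ routed along the same curve. Then $G'$ has the same faces as $G$, and $\dual{G'}$ is $\dual{G}$ with the single edge $\dual{e}=ab$ replaced by $\dual{e'}=ba$ (the two faces incident to the curve are unchanged; only ``left'' and ``right'' swap when the orientation flips). A simple path $\dirpath{u}{v}{}$ in $G'$ cannot use $e'=vu$, so $u$ reaches $v$ in $G'$ iff $u$ reaches $v$ in $G'-e'=G-e$; symmetrically, a simple path $\dirpath{a}{b}{}$ in $\dual{G'}$ cannot use $\dual{e'}=ba$, so $a$ reaches $b$ in $\dual{G'}$ iff $a$ reaches $b$ in $\dual{G'}-\dual{e'}=\dual{G}-\dual{e}$.

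\emph{Step 3 (chaining).} Finally I would assemble the equivalence: $e$ is a strong bridge of $G$ iff there is no path $\dirpath{u}{v}{}$ in $G-e$ (Step 1) iff there is no path $\dirpath{u}{v}{}$ in $G'$ (Step 2) iff $e'=vu$ is an inter-SCC edge of $G'$ (by the definition of inter-SCC edges: $vu$ is inter-SCC iff $u$ does not reach $v$) iff $\dual{e'}=ba$ is an intra-SCC edge of $\dual{G'}$ (Lemma~\ref{lem:inter-dual}) iff there is a path $\dirpath{a}{b}{}$ in $\dual{G'}$ (again by the definition of inter-SCC edges) iff there is a path $\dirpath{a}{b}{}$ in $\dual{G}-\dual{e}$ (Step 2). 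This is precisely the stated claim. The only genuinely delicate points are the SCC-counting in Step 1 and keeping the orientation of $\dual{e}$ straight through the reversal in Step 2; the remainder is routine bookkeeping with Lemma~\ref{lem:inter-dual}.
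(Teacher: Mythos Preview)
Your proof is correct and takes a genuinely different route from the paper's. The paper argues via a \emph{witness edge}: it observes that $e$ is a strong bridge iff some $f\neq e$ is intra-SCC in $G$ but inter-SCC in $G-e$, translates this by Fact~\ref{fact:remove-contr} and Lemma~\ref{lem:inter-dual} into ``$\dual{f}$ lies on a directed cycle in $\contr{\dual{G}}{\dual{e}}$ but not in $\dual{G}$'', and then notes that such a cycle, lifted to $\dual{G}$, is exactly a directed path $a\to b$ avoiding $\dual{e}$. Your argument instead first reduces everything to the single primal statement ``no $u\to v$ path in $G-e$'' and then transfers it to the dual by the elegant trick of reversing $e$ so that Lemma~\ref{lem:inter-dual} applies directly to the reversed edge. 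Your approach is more self-contained (it avoids the contraction duality and the search for the witness $f$), while the paper's approach is a bit shorter and makes the connection to Fact~\ref{fact:remove-contr} explicit, which is natural given how the surrounding algorithm is phrased. Both are clean; the reversal trick is a nice alternative worth knowing.
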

\begin{proof}
By Lemma~\ref{lem:inter-dual}, $\dual{e}$ is an inter-SCC edge of $\dual{G}$,
  or equivalently, there is no path $\dirpath{b}{a}{}$ in $\dual{G}$.
  Note that $e$ is a strong bridge iff some edge $f\in E\setminus\{e\}$ is intra-SCC
  in $G$ but is inter-SCC in $G-e$.
  Equivalently, by Fact~\ref{fact:remove-contr}, $\dual{f}$ is inter-SCC in $\dual{G}$
  and is intra-SCC in $\contr{\dual{G}}{\dual{e}}$.
  Consequently, there is a set of edges $C\subseteq E(\contr{\dual{G}}{\dual{e}})$, $f\in C$, such that
  $C$ forms a directed cycle in $\contr{\dual{G}}{\dual{e}}$, but does not form a directed
  cycle in $\dual{G}$.
  Thus, $C$ is in fact a directed path $a\to b$ in $\dual{G}$.
  However, $\dual{e}\notin C$, and hence the lemma follows.
\end{proof}

\begin{lemma}\label{lem:dec-strong-bridge}
Let $G$ be a planar digraph and let $n=|V|$. The set of strong bridges
  of $G$ can be maintained subject to edge deletions in $O(n\log^2{n}\log\log{n})$
  total time.
\end{lemma}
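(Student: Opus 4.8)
The plan is to reduce decremental maintenance of strong bridges to the already-established switch-on reachability machinery, via planar duality, in essentially the same way that Lemma~\ref{lem:switch-inter} reduces inter-SCC edge maintenance. By Lemma~\ref{lem:strong-bridge}, an intra-SCC edge $uv=e$ with $\dual{e}=ab$ is a strong bridge of $G$ iff there is a path $\dirpath{a}{b}{}$ in $\dual{G}-\dual{e}$. When $G$ undergoes edge deletions, $\dual{G}$ undergoes edge contractions (Fact~\ref{fact:remove-contr}), which — as in the proof of Lemma~\ref{lem:switch-inter} — we simulate by switch-on operations on an auxiliary graph. So the first step is: build $\dual{G}$, and build the graph $H=(V(\dual{G}),E(\dual{G})\cup E^R(\dual{G}))$ that augments $\dual{G}$ with a reverse copy ${\dual{e}}^R$ of every dual edge; initialize by switching on all edges of $E(\dual{G})$; and whenever $e$ is deleted from $G$, switch on ${\dual{e}}^R$ in $H$. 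As argued for Lemma~\ref{lem:switch-inter}, $H$ perpetually preserves the reachability information of the current $\dual{G}$ (two vertices merged by a contraction become mutually reachable via on-edges), and $|V(H)|,|E(H)|=O(n)$ and $H$ is planar, so running the switch-on reachability algorithm on $H$ costs $O(n\log^2 n\log\log n)$ total.

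Next I would explain how to read off the strong bridges. The switch-on reachability data structure maintains, for every edge $xy$ of $H$, whether there is an on-path $\dirpath{y}{x}{}$. For $\dual{e}=ab\in E(\dual{G})$, which is always switched on in $H$, the bit for this edge tells us whether there is an on-path $\dirpath{b}{a}{}$ — i.e.\ exactly whether $\dual{e}$ is an intra-SCC edge of $\dual{G}$ (equivalently, by Lemma~\ref{lem:inter-dual}, whether $e$ is an inter-SCC edge of $G$). For the strong-bridge test of Lemma~\ref{lem:strong-bridge} we need instead the existence of a path $\dirpath{a}{b}{}$ in $\dual{G}-\dual{e}$, i.e.\ in the graph obtained from $H$ by deleting $\dual{e}$ and keeping the on-edges. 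To obtain this bit directly, I would look at the edge $ba$: if $ba$ is already present in $\dual{G}$ as a distinct (parallel) edge we are done; in general, since we are free to augment the graph, we simply add to $H$ (before any updates) a single fresh edge $b\to a$ for each original dual edge $ab$, and keep that fresh edge permanently off. Then the switch-on reachability bit associated with the edge $b\to a$ reports precisely whether there is an on-path $\dirpath{a}{b}{}$ using the other (on) edges of $H$ — and one checks, exactly as in Lemma~\ref{lem:switch-inter}, that the on-edges reachable here faithfully encode $\dual{G}-\dual{e}$, because the only edges of $H$ that are off and could have been relevant are the not-yet-switched reverse edges, whose switch-on corresponds to a later contraction. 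Combining this with the intra-SCC bit for $e$ itself (available from the set of inter-SCC edges maintained via Lemma~\ref{lem:switch-inter}, or equivalently from the $ab$ bit above), we know for each $e$ at all times both whether $e$ is intra-SCC and whether $e$ is a strong bridge; maintaining the set of strong bridges is then a matter of $O(1)$ bookkeeping per bit flip, and the number of flips is $O(n)$ since bits are monotone.

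The one point that needs a little care — and I expect it to be the main (mild) obstacle — is verifying that the monotonicity works out: a priori an edge $e$ could be a strong bridge at some point and cease to be one later (because the SCC it lived in shrank), so the ``strong bridge'' status is \emph{not} monotone even though all the underlying switch-on reachability bits are. Concretely, $e=uv$ is a strong bridge at the current time iff $e$ is intra-SCC \emph{and} there is a path $\dirpath{a}{b}{}$ in $\dual{G}-\dual{e}$; once $e$ becomes inter-SCC it is no longer a strong bridge, and it may switch from strong-bridge to plain-intra-SCC-edge as more of $\dual{G}$ gets contracted. So I would maintain the status of $e$ as the conjunction of two monotone-changing events observed on $H$: (i) the $\dirpath{b}{a}{}$-reachability bit of $\dual{e}$ in $H$ goes from $0$ to $1$ (i.e.\ $e$ turns inter-SCC; once this happens $e$ is removed from the strong-bridge set forever), and (ii) the $\dirpath{a}{b}{}$-reachability bit (read off the permanently-off edge $b\to a$) goes from $0$ to $1$. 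When (ii) fires while (i) has not yet fired, $e$ becomes a strong bridge; when (i) fires, $e$ leaves the set. Both are monotone, so each causes at most one insertion and at most one deletion in the maintained set, for $O(n)$ total bookkeeping. Hence the total time is dominated by the switch-on reachability algorithm on $H$, namely $O(n\log^2 n\log\log n)$, and building $\dual{G}$ and $H$ with all the pointer maps (between edges of $G$, $\dual{G}$ and $H$) takes $O(n)$ additional time, giving the claimed bound.
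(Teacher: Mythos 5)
There is a genuine gap at the heart of your reduction. Lemma~\ref{lem:strong-bridge} requires testing whether there is a path $\dirpath{a}{b}{}$ in $\dual{G}-\dual{e}$, i.e.\ a path that \emph{avoids the edge $\dual{e}=ab$ itself}. Your device for reading off this bit --- adding a fresh, permanently-off edge $b\to a$ and querying its switch-on reachability bit --- does not enforce that exclusion: by definition of the switch-on reachability problem, the bit attached to the edge $b\to a$ records whether there is a path from $a$ to $b$ consisting of \emph{any} edges that are on, and the edge $ab=\dual{e}\in E(\dual{G})$ is switched on at initialization. Hence that path always exists (take the single edge $ab$), the bit is identically $1$ from the start, and your criterion ``(ii) has fired and (i) has not'' degenerates to ``$e$ is intra-SCC,'' which would wrongly declare every intra-SCC edge a strong bridge. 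The off/on status of the reverse edges is a red herring here; the problem is that the black-box switch-on reachability structure has no mechanism for excluding one specific \emph{on} edge from the paths it certifies, and no amount of adding extra off edges changes which paths it counts.

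This is precisely the difficulty the paper's proof is built around: it introduces an \emph{extended} switch-on reachability problem (maintain, for each edge $uv=e$, whether $\dirpath{u}{v}{}$ exists in $\swon{H}-e$) and solves it by opening up the data structure of Section~\ref{sec:dag-trans} rather than using it as a black box. After the reduction of Lemma~\ref{lem:simple-ext} (with Corollary~\ref{col:subset-equiv} applied to $F\setminus\{e\}$, so that avoiding $e$ in $H$ corresponds to avoiding $p(e)$ in $H'$), each edge $e$ lies in a unique constant-size leaf subgraph $L$ of the simple recursive decomposition; the desired bit for $e$ can then be recomputed in $O(1)$ time from $\extc(L)$ --- which never uses $e$, since $e\in E(L)$ and $\extc(L)$ only concerns $G-L$ --- together with $(L-e)\cap\swon{H}$, whenever either changes. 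That leaf-level modification is the missing idea in your proposal; without it (or some equivalent way to answer reachability queries that exclude a designated on edge), the duality-based bookkeeping you describe, including your correct observation about the non-monotonicity of strong-bridge status, does not yield the lemma.
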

\begin{proof}
  \newcommand{\gdr}{\ensuremath{G_+}}

  Similarly as in the proof of Lemma~\ref{lem:switch-inter},
  we reduce our problem to maintaining
  the graph $\gdr=(V(\dual{G}),E(\dual{G})\cup E^R(\dual{G}))$
  under edge switch-ons.
  A contraction of $e$
  in $\dual{G}$ is translated to a switch-on of $e^R$ in $\gdr$.
  By Lemma~\ref{lem:strong-bridge}, it is sufficient to solve
  the following extended switch-on reachability
  problem on the input graph $\gdr$.
  Given a planar digraph $H$,
  we need to maintain for each edge of $uv=e\in E(H)$ whether there is
  a path $\dirpath{u}{v}{}$ in $\swon{H}-e$, where $\swon{H}$ is
  an incremental, switched-on subgraph of $H$.

  The next step is to reduce such an extended switch-on reachability
  problem to the case when the underlying graph has a simple recursive decomposition $\rtree(H)$.
  To this end, we apply Lemma~\ref{lem:simple-ext} to $H$
  and obtain the graph $H'=(V',E')$,
  function $p$ and the subsets $E_0,E_1\subseteq E'$.
  Analogously as in the proof of Lemma~\ref{lem:simple-reduction}, the edges $E_0$
  are switched on during the initialization, the edges $E'\setminus(E_0\cup E_1)$ are never switched-on,
  whereas a switch-on of the edge $e\in E(H)$ is translated to a switch-on
  of $p(e)$ in $H'$.
  Note that, by Corollary~\ref{col:subset-equiv}, for any $F\subseteq E(H)$, and any $uv=e\in H$, a path $\dirpath{u}{v}{}$
  exists in $(V(H),F\setminus\{e\})$ if and only if a path $\dirpath{u'}{v'}{}$ exists
  in $(V',E_0\cup p(F\setminus\{e\}))=(V',E_0\cup p(F)\setminus\{p(e)\})$,
  where $p(e)=u'v'$.
  Hence, in order to maintain the reachability between the endpoints of $e$ in $\swon{H}-e$
  we only need to maintain the reachability between the endpoints of $p(e)$ in $\swon{H'}-p(e)$.
  Recall that the cost of applying Lemma~\ref{lem:simple-ext} is $O(n\log{n})$
  and the size of the obtained graph $H'$ is linear in the size of $H$.
  Therefore, to finish the proof, we only need to solve our extended switch-on
  reachability problem on a graph $H$ having a simple recursive decomposition $\rtree(H)$.

  To this end, we extend the data structure of Section~\ref{sec:dag-trans} as follows.
  Recall that, by the simplicity of $\rtree(H)$, for each edge $e\in H$ there is a unique leaf
  subgraph $L\in\rtree(H)$ containing $e$.
  Also recall that the reachability information
  between the vertices of $L$ in $\swon{H}$
  can be recomputed in $O(1)$ time whenever the graph $L\cap \swon{H}$
  or the matrix $\extc(L)$ changes.
  In fact, for all of $O(1)$ edges $e\in L$
  and vertex pairs $u,v\in V(L)$, whether the path $\dirpath{u}{v}{}$
  exists in $\swon{H}-e$ can be computed based only on $\extc(L)$ and $(L-e)\cap\swon{H}$,
  when any of them changes, again in $O(1)$ time (see the proof of Lemma~\ref{lem:summatrices}).
  As the total update time remains $O(n\log^2{n}\log\log{n})$,
  the lemma follows.
\end{proof}

\begin{theorem}
Let $G$ be a planar digraph and let $n=|V(G)|$.
The maximal 2-edge-connected subgraphs of $G$ can be maintained
subject to edge deletions in $O(n\log^2{n}\log\log{n})$ total time.
\end{theorem}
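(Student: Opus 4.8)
The plan is to combine the incremental maintenance of strong bridges established in Lemma~\ref{lem:dec-strong-bridge} with the decremental strong connectivity machinery of Section~\ref{sec:switch-on}, reducing the maintenance of maximal $2$-edge-connected subgraphs to tracking, for every vertex $v$, which SCC it currently lies in together with whether that SCC is currently ``clean'' (contains no strong bridge). Concretely, recall from the definition quoted in the introduction that $G[C]$ is $2$-edge-connected iff $G[C]$ is strongly connected and contains no strong bridge. Hence the maximal $2$-edge-connected subgraphs are obtained from the SCCs of $G$ by the following refinement: an SCC $C$ is a maximal $2$-edge-connected subgraph exactly when no edge of $G[C]$ is a strong bridge of $G[C]$; if $G[C]$ has a strong bridge, then deleting all strong bridges of $G[C]$ splits $C$ into smaller SCCs, and we recurse on each piece.

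First I would set up the decremental data structures. I run the decremental SCC algorithm (Lemma~\ref{lem:inter-scc}, with the explicit-components variant of Remark~\ref{rem:ext-scc}) and, in parallel, the strong-bridge maintenance algorithm of Lemma~\ref{lem:dec-strong-bridge}; both run in $O(n\log^2 n\log\log n)$ total time on the original edge-deletion sequence. The subtlety is that a strong bridge of $G$ need not be a strong bridge of $G[C]$ for the SCC $C$ containing it, and conversely — but in fact an edge $e=uv$ with $u,v$ in the same SCC $C$ is a strong bridge of $G$ iff it is a strong bridge of $G[C]$, since strong connectivity within $C$ is unaffected by vertices and edges outside $C$, and $e$ is intra-SCC precisely in $G[C]$. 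So the strong bridges of $G$ restricted to intra-SCC edges are exactly the union over SCCs $C$ of the strong bridges of $G[C]$. Thus the algorithm of Lemma~\ref{lem:dec-strong-bridge} directly hands us, at every moment, the set of ``bad'' edges that must be removed to refine SCCs into $2$-edge-connected pieces.

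Next I would run a \emph{second} layer of decremental strong connectivity on an auxiliary graph $\bar G$ obtained from $G$ by deleting all edges that are currently either inter-SCC edges of $G$ or strong bridges of $G$. As in the proof of Lemma~\ref{lem:inter-scc}, the connected components of $\bar G$ (viewed undirected) are exactly the SCCs of the graph $G$ with all strong bridges removed, which — by the recursive characterization above, applied once — are precisely the maximal $2$-edge-connected subgraphs of $G$. Both the inter-SCC set and the strong-bridge set grow monotonically under edge deletions (an inter-SCC edge stays inter-SCC; a strong bridge of $G$, once detected, stays a strong bridge or is deleted; and crucially, once an edge becomes a strong bridge within its SCC it remains so as the graph shrinks, since removing further edges cannot restore the $2$-edge-connectivity it witnesses breaking). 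Hence $\bar G$ only loses edges, so I maintain its connected components with the linear-time decremental connectivity structure of Łącki and Sankowski~\cite{decremental-connectivity}, which also explicitly maintains the vertex sets of the components. This yields the maximal $2$-edge-connected subgraphs explicitly and supports ``same component'' queries in $O(1)$ time, all within $O(n\log^2 n\log\log n)$ total time.

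The main obstacle I expect is justifying that a single refinement step suffices, i.e.\ that after deleting all strong bridges of $G$, no \emph{new} strong bridges appear inside the resulting smaller SCCs that would force a further round. This is the classical but non-obvious fact that the maximal $2$-edge-connected subgraphs are obtained in one shot by deleting all strong bridges — equivalently, that if $C'$ is an SCC of $G$ minus its strong bridges, then $G[C']$ has no strong bridge. I would prove this by the standard argument: suppose $e$ were a strong bridge of $G[C']$; since $C'\subseteq C$ for some SCC $C$ of $G$ and $G[C']$ is a subgraph of $G[C]$ obtained only by deleting strong bridges and the vertices thereby disconnected, one shows that the cycle in $G[C]$ through $e$ witnessing $e$ is not a strong bridge of $G[C]$ must use a strong bridge of $G[C]$, contradicting either minimality of the cycle or the fact that $C'$ is an SCC of $G[C]$ after strong-bridge deletion — a short combinatorial argument I would spell out carefully but which requires no planarity. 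Everything else — the monotonicity bookkeeping, the funnelling of updates between the three layers (decremental SCC of $G$, strong bridges of $G$, decremental connectivity of $\bar G$), and the running-time accounting — is routine and parallels Lemmas~\ref{lem:switch-inter}, \ref{lem:inter-scc}, and \ref{lem:dec-strong-bridge} almost verbatim.
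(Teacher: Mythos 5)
Your reduction hinges on the claim that a \emph{single} round of strong-bridge deletion suffices, i.e.\ that if $C'$ is an SCC of $G$ minus the strong bridges of $G$, then $G[C']$ has no strong bridge. This claim is false, and it is exactly the point where your construction breaks. Take $V=\{x,y,a,b\}$ with edges $x\to y$, $y\to x$, $x\to a$, $a\to y$, $y\to b$, $b\to x$. The graph is strongly connected; its strong bridges are precisely $x\to a$, $a\to y$, $y\to b$, $b\to x$ (each of $x\to y$ and $y\to x$ has a detour through $a$ or $b$, so neither is a strong bridge of $G$). Deleting the four strong bridges leaves the $2$-cycle on $\{x,y\}$ as an SCC, so your auxiliary graph $\bar G$ reports $\{x,y\}$ as a maximal $2$-edge-connected subgraph. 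But $G[\{x,y\}]$ is not $2$-edge-connected: both of its edges are strong bridges of $G[\{x,y\}]$ (they were protected in $G$ only by detours that pass through the strong bridges you just removed). The correct answer here has no nontrivial $2$-edge-connected subgraph. Your observation that an intra-SCC edge is a strong bridge of $G$ iff it is a strong bridge of the subgraph induced by its SCC is correct, but it does not imply the one-round property, and the ``short combinatorial argument'' you sketch cannot be completed because new strong bridges genuinely arise after each pruning round.

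The paper's proof takes the opposite stance: it invokes the known characterization (Georgiadis et al.) that the maximal $2$-edge-connected subgraphs are the SCCs of the graph obtained by \emph{repeatedly} removing all strong bridges until none are left, and it implements this iteration dynamically by feeding every detected strong bridge back into both decremental structures (the strong-bridge structure of Lemma~\ref{lem:dec-strong-bridge} and the decremental SCC structure) as an actual edge deletion. Since both structures are decremental and the total number of such induced deletions is $O(n)$, the iteration costs nothing beyond the stated $O(n\log^2 n\log\log n)$ bound. To repair your argument you would have to do the same: run the strong-bridge and SCC structures on the pruned graph (deleting detected bridges from them, not merely marking them in a static copy of $G$), and let the cascade of newly arising strong bridges drive further deletions until none remain after each user update.
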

\begin{proof}
It is known that the maximal 2-edge-connected subgraphs can be found
by repeatedly removing all the strong bridges of $G$ until none are left \cite{Georgiadis:2016}.
The maximal 2-edge-connected subgraphs of $G$ are defined as the strongly
connected components of the obtained subgraph.
Hence, we can combine Lemma~\ref{lem:dec-strong-bridge} with
our decremental strong connectivity algorithm
  to not only compute the maximal 2-edge-connected subgraphs (by repeatedly
  detecting and deleting the arising strong bridges)
but also to maintain them subject to edge deletions.
\end{proof}

\subsection{Incremental Switch-On Transitive Closure}\label{sec:inctc}
\newcommand{\drg}{\ensuremath{\mathit{DRG}}}
\newcommand{\reachg}[2]{\ensuremath{\drg(#1,#2)}}

In this section we show how to solve the incremental transitive closure problem in planar graphs, under the assumption that the embedding of the final planar graph is given upfront.
We formulate this as a dynamic graph problem, in which the input is a directed graph and each edge can be either on or off.
Initially all edges are off, and an update operation may turn a single edge on.
Each query asks about the existence of a directed path that connects two vertices and consists of edges that are on.
The goal of this section is to prove the following theorem.

\begin{theorem}\label{thm:switch-on-tc}
There exists an algorithm that can maintain a planar digraph under switching edges on and can answer reachability queries.
Its total update time is $O(n \log^2 n \log \log n)$ and the query time is $O(\sqrt{n} \log n \log \log n)$
\end{theorem}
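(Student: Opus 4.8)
The plan is to reuse the machinery of Section~\ref{sec:dag-trans} essentially verbatim, and only extend it so that it can answer an arbitrary reachability query $\dirpath{u}{v}{\swon{G}}$ rather than only the per-edge reachability needed for switch-on reachability. First I would apply Lemma~\ref{lem:simple-reduction} (or rather its underlying Lemma~\ref{lem:simple-ext} together with Corollary~\ref{col:subset-equiv}) to reduce, in $O(n\log n)$ time, to the case in which the input graph $G$ comes equipped with a simple recursive decomposition $\rtree(G)$; note that under this reduction a query about a pair $u,v$ in the original graph maps to a query about any pair $u'\in\corr(u)$, $v'\in\corr(v)$ in $G'$, and the set $\corr(u)$ of vertices representing $u$ can be precomputed, so it suffices to answer queries in $G'$. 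I would then run the data structure of Lemma~\ref{lem:simple-switch-on} on $G'$: it maintains, for each $H\in\rtree(G')$, the matrices $\intc(H)$, $\extc(H)$, and for each leaf $L$ the full transitive closure $\leaftc(L)$ of $L\cap\swon{G'}$, all in $O(n\log^2 n\log\log n)$ total update time. So the update bound is immediate; the work is entirely in the query.

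For a query $\dirpath{u}{v}{\swon{G'}}$, the natural approach is to walk the recursive decomposition down from the root toward $u$ (and toward $v$). Let $L_u$ and $L_v$ be leaf subgraphs containing $u$ and $v$ respectively, and let $R_0 = G', R_1, \ldots, R_k = L_u$ be the root-to-$L_u$ path in $\rtree(G')$, with $R_{i+1}=\child(R_i)$; similarly define the path to $L_v$. The key observation, in the same spirit as part~(4) of Lemma~\ref{lem:summatrices}, is that any path from $u$ to $v$ in $\swon{G'}$ can be decomposed into maximal subpaths that each lie wholly inside $L_u$, wholly inside $L_v$, or wholly inside $G'-L_u$ resp.\ $G'-L_v$; iterating the boundary-splitting argument along the decomposition path, reachability from $u$ to $v$ is equivalent to reachability in a small auxiliary graph built from $\leaftc(L_u)$, $\leaftc(L_v)$, and the matrices $\extc(R_i)$ (equivalently $\intc$ of the siblings of the $R_i$) along the two root-to-leaf paths. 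Concretely: $u$ can reach $v$ iff $u$ can reach some boundary vertex of some $R_i$ on its path and, stitching through the $O(\log n)$ boundaries $\bnd{R_1},\ldots,\bnd{R_k}$ and $\bnd{\cdot}$ on $v$'s side using the maintained $\intc$/$\extc$ matrices, one reaches $v$. The auxiliary graph has $O(\sum_i |\bnd{R_i}| + |V(L_u)| + |V(L_v)|)$ vertices and a number of edges that is at most the total size of the relevant matrices, i.e.\ $\sum_i |\bnd{R_i}|^2 = O(\sum_i n/c^{2i}) = O(n)$ by the size bound $|\bnd{R_i}|=O(\sqrt n/c^i)$ from the decomposition; running a BFS/DFS on it, using that each $\intc(R_i)$, $\extc(R_i)$ submatrix row is stored (via Lemma~\ref{lem:monge_reach}) as $O(1)$ intervals over $O(\sqrt n/c^i)$ active columns accessed through a van~Emde~Boas tree in $O(\log\log n)$ per successor, gives a query time of $O(\sqrt n\log n\log\log n)$ — the $\sqrt n$ coming from $|\bnd{R_1}|$, the $\log n$ from the number of levels and from interval/vEB bookkeeping, the $\log\log n$ from the vEB operations.

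The main obstacle I expect is making the query-side path-stitching argument fully rigorous: one must show that restricting attention to the two root-to-leaf paths in $\rtree(G')$ loses no $u$–$v$ path, i.e.\ that every maximal ``external'' subpath of a $u$–$v$ walk enters and leaves through $\bnd{R_i}$ for the appropriate $R_i$ on the path (this needs Lemma~\ref{lem:boundary-diff} applied repeatedly, as in part~(3) of Lemma~\ref{lem:summatrices}), and conversely that the auxiliary graph does not create spurious paths (which follows from the definitions of $\leaftc$, $\intc$, $\extc$ as before). A secondary, more mundane obstacle is the running-time accounting: one must be careful that traversing the out-neighborhood of a boundary vertex of $R_i$ through the interval/vEB representation costs $O((\text{output size})\cdot\log\log n + \log n)$ as in Lemma~\ref{lem:compute_q}, and that summing the dominant term over the $O(\log n)$ levels together with the $|\bnd{R_1}|=O(\sqrt n)$ vertices at the top yields exactly $O(\sqrt n\log n\log\log n)$ rather than something larger; the geometric decay of $|\bnd{R_i}|$ is what keeps the sum from blowing up by another logarithmic factor.
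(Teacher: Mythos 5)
Your plan is essentially the paper's proof: after the reduction via Lemma~\ref{lem:simple-ext} and Corollary~\ref{col:subset-equiv}, the paper maintains the $\intc(\cdot)$ matrices with their Monge/interval representation and answers a query by BFS on exactly the auxiliary graph you describe -- the two leaf pieces $H_u,H_w$ together with $G(\intc(\mathrm{sib}(H)))$ for all ancestors $H$ of the two leaves -- proving the path-stitching equivalence by the boundary argument (Lemma~\ref{lem:path_in_rtree}) and charging the BFS to van Emde Boas candidate sets over $O(\sqrt{n}\log n)$ total matrix rows (Lemmas~\ref{lem:reachg_matrices} and~\ref{lem:r-queries}), exactly as in your accounting. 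One correction: your parenthetical claiming that the matrices $\extc(R_i)$ along the root-to-leaf paths are an equivalent substitute for the sibling $\intc$ matrices is false -- $\extc(R_i)$ is indexed only by $\bnd{R_i}$, so a path that leaves $L_u$ and reaches $v$'s side through interior boundary vertices of a sibling piece, without returning to any $\bnd{R_i}$, would not be witnessed; only the union of $G(\intc(\mathrm{sib}(R_i)))$ (the paper's choice) makes the query graph complete.
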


An easy corollary is as follows.

\begin{theorem}
There exists an algorithm that can maintain a planar digraph under contracting edges and can answer reachability queries.
Its total update time is $O(n \log^2 n \log \log n)$ and the query time is $O(\sqrt{n} \log n \log \log n)$
\end{theorem}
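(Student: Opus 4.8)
The plan is to reduce edge-contraction reachability directly to the switch-on transitive closure problem of Theorem~\ref{thm:switch-on-tc}, mimicking the dual-graph reduction used in the proof of Lemma~\ref{lem:switch-inter}. Given the planar digraph $G=(V,E)$ whose edges are to be contracted, I would first build the graph $G'=(V,E\cup E^R)$, where $E^R$ contains one \emph{reverse edge} $e^R=vu$ for every edge $uv=e\in E$; drawing each $e^R$ parallel to $e$ in the given embedding of $G$ keeps $G'$ planar and supplies an embedding of $G'$ upfront, as required by Theorem~\ref{thm:switch-on-tc}. Note that $|V(G')|=|V|$ and $|E(G')|=2|E|=O(n)$. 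To initialize, I would start the switch-on data structure on $G'$ with all edges off and then switch on every edge of $E$, so that the on-subgraph $\swon{G'}$ is exactly $G$.

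Next I would describe how contractions are simulated: when an edge $uv=e$ of the current graph is contracted, switch on its reverse edge $e^R=vu$ in $G'$. After this step $u$ and $v$ are mutually reachable in $\swon{G'}$ (via $e$, which is always on, and $e^R$), so together they play the role of the single merged vertex. More generally, at any point in time the set $C$ of vertices that have been merged into one vertex of the contracted graph is strongly connected in $\swon{G'}$: its on-edges include the contracted original edges together with their reverses, and this set of edges connects $C$ in the undirected sense. Consequently, for any $x,y\in V$, there is a directed path from the vertex of the contracted graph containing $x$ to the one containing $y$ if and only if there is a directed path from $x$ to $y$ in $\swon{G'}$. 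Hence a reachability query is answered by forwarding it unchanged to the switch-on transitive closure data structure of Theorem~\ref{thm:switch-on-tc}.

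For the running time: building $G'$ and the pointers linking each edge of $E$ to its reverse takes $O(n)$ time; initialization performs $|E|=O(n)$ switch-ons, and there are at most $n-1=O(n)$ contractions, hence $O(n)$ switch-on operations altogether, which fits within the update budget of Theorem~\ref{thm:switch-on-tc}. Therefore the total update time is $O(n\log^2 n\log\log n)$ and each reachability query costs $O(\sqrt{n}\log n\log\log n)$, as claimed.

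The only point requiring (routine) care is the correctness equivalence in the second paragraph — namely verifying, exactly as in Lemma~\ref{lem:switch-inter}, that a walk in the contracted graph that ``teleports'' within a merged vertex corresponds to a genuine directed path in $\swon{G'}$, and conversely, that any directed path in $\swon{G'}$ projects to such a walk. This is the only real content of the proof, and it follows immediately from the observation that every set of merged vertices is strongly connected through on-edges; no ideas beyond the reduction already used in Lemma~\ref{lem:switch-inter} are needed, which is why the statement is indeed an easy corollary.
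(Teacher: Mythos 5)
Your proposal is correct and follows essentially the same route as the paper: the paper likewise augments $G$ with all reverse edges, feeds the result to the switch-on structure of Theorem~\ref{thm:switch-on-tc}, switches on all original edges initially, simulates each contraction by switching on the corresponding reverse edge, and answers queries via arbitrary representatives of the (strongly connected) merged groups. Nothing is missing.
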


\begin{proof}
Let $G$ be the input graph.
We build a graph $G_{sw}$ which consists of $G$ together with all reverse edges of $G$.
The graph $G_{sw}$ is used as an input to the algorithm of Theorem~\ref{thm:switch-on-tc}.
Initially, we switch on all edges of $G$ in $G_{sw}$.

Whenever an edge of $G$ is contracted, we switch on its reverse edge in $G_{sw}$.
This implies that at any point, each vertex $u$ of $G$ corresponds to a strongly connected set of vertices of $G_{sw}$.
We call this set a \emph{group} of $u$.
Clearly, the groups partition $V(G_{sw})$ and contracting all groups of $G_{sw}$ yields $G$.
Thus, in order to check whether there exists a path $\dirpath{u}{w}{G}$, it suffices to check whether there exists a path $\dirpath{u'}{w'}{G_{sw}}$, where $u'$ and $w'$ are arbitrary vertices contained in the groups of $u$ and $w$, respectively.
\end{proof}

Let us now proceed with the proof of Theorem~\ref{thm:switch-on-tc}.
We first recall some definitions from Section~\ref{sec:dag-trans}.
By $\swon{G}$ we denote the subgraph of $G$ consisting of the edges that are switched on.
Moreover, for each $H$ of a \emph{simple} recursive decomposition $\rtree(G)$
we define $\intc(H)$ to be a binary 
matrix with both rows and columns indexed with the vertices of $\bnd{H}$, such that $\intc(H)_{u,v}=1$ iff $u,v\in\bnd{H}$ and there exists a path $\dirpath{u}{v}{H\cap\swon{G}}$.
Note that we show that these matrices can be maintained under switch-ons in $O(n \log^2 n \log \log n)$ total time.
In addition, we can actually maintain the representation of these matrices used in the proof of Theorem~\ref{thm:trans} (see Lemma~\ref{lem:representation_a12}).

To complete the description of our algorithm, we show how to use these matrices to answer queries efficiently.
For a subgraph $H \in \rtree(G)$ we define the \emph{dense reachability graph}, denoted $\drg(H)$, to be a (possibly non-planar) graph on the vertices of $\bnd{H}$, such that $uv \in E(\drg(H))$ iff there is a path $\dirpath{u}{v}{H\cap\swon{G}}$.
Thus, $\drg(H)$ is fully described by $\intc(H)$, that is $\drg(H) = G(\intc(H))$ (recall Definition~\ref{def:mat-graph}).

In order to answer a query asking about the existence of a path $\dirpath{u}{w}{G}$, we consider a graph $\reachg{u}{w}$, which is a union of dense reachability graphs (we give its full definition later).
This graph has $O(\sqrt{n})$ vertices in total, but can have as much as $\Theta(n)$ edges.
Then, we show how to search for a path in such a graph in time that is almost linear in the number of its \emph{vertices}.
This is somewhat similar to the FR-Dijkstra algorithm for finding shortest paths in dense distance graphs~\cite{FR06}, but due to lack of edge weights our algorithm is slightly faster.

Let us now describe how $\reachg{u}{w}$ is constructed.
Let $anc(H)$ be the set of ancestors of $H$ in $\rtree(G)$, including $H$, but \emph{excluding} the root of $\rtree(G)$.
Moreover, let $sib(H)$ be the sibling of $H$ in $\rtree(H)$.
Recall that $G(\intc(H))$ is the graph on $\bnd{H}$, whose each edge corresponds to a path in $H \cap \swon{G}$.

We can now define $\reachg{u}{w}$.
Let $H_u$ be any leaf node of $\rtree(G)$ that contains $u$ and $H_w$ be a leaf node containing $w$.
We have
\[
    \reachg{u}{w} := H_u \cup H_w \cup \bigcup_{H \in anc(H_u) \cup anc(H_w)} G(\intc(sib(H))).
\]

\iffull
\begin{lemma}\label{lem:path_in_rtree}
Let $G$ be a planar digraph and $\rtree(G)$ be a simple recursive decomposition of $G$.
Let $u, w \in V(G)$ and $H_u$ and $H_w$ be leaf nodes of $\rtree(G)$ that contain $u$ and $w$.
Then, either $H_u = H_w$ and every path from $u$ to $w$ is contained in $H_u$ or every path from $u$ to $w$ contains a vertex $x$, such that $x \in \bnd{H'_u} \cap \bnd{H'_w}$, where $H'_u \in anc(H_u)$ and $H'_w \in anc(H_w)$.
\end{lemma}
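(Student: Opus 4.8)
The plan is to argue by induction on the structure of the recursive decomposition $\rtree(G)$, exploiting the fundamental separator property: for any non-root $H\in\rtree(G)$ with parent $P$, the set $\bnd{H}$ separates $E(H)$ from $E(G)\setminus E(H)$ in the sense made precise by Lemma~\ref{lem:boundary-diff} (namely $V(H)\cap V(G-H)\subseteq\bnd{H}$). First I would dispose of the trivial case: if $H_u=H_w$, then $u$ and $w$ lie in a common leaf subgraph $L$; but a leaf subgraph need not be "closed" under paths, so I actually need to be slightly careful here — the statement as written claims every path from $u$ to $w$ is contained in $H_u$ in this case, which is only plausible if $H_u=H_w$ is forced to contain all such paths. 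I would handle this by observing that if some path $\pi$ from $u$ to $w$ leaves $H_u$, then it must re-enter $H_u$ (since it ends at $w\in V(H_u)$), so it passes through a vertex of $V(H_u)\cap V(G-H_u)\subseteq\bnd{H_u}$; this puts us into the second case with $H'_u=H'_w=H_u$, so in fact the two alternatives of the lemma are not mutually exclusive and it suffices to prove: every path from $u$ to $w$ is either inside $H_u$ (when $H_u=H_w$) or meets $\bnd{H'_u}\cap\bnd{H'_w}$ for some ancestors $H'_u,H'_w$.

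The main argument I would give is the following. Fix a path $\pi$ from $u$ to $w$ and suppose $H_u\neq H_w$ (the case $H_u=H_w$ with $\pi$ escaping $H_u$ being subsumed, as noted). Walk along $\rtree(G)$ from $H_u$ toward the root: since $H_u\neq H_w$, the leaves $H_u$ and $H_w$ have a lowest common ancestor $Q\in\rtree(G)$, and $H_u$ lies in one child $C_1$ of $Q$ while $H_w$ lies in the other child $C_2$. Every edge of $\pi$ lies in some leaf subgraph, hence in $E(C_1)$ or in $E(C_2)$ (using $E(C_1)\cup E(C_2)=E(Q)$ and that $E(Q)$ covers all relevant edges, or more directly: every edge of $G$ lies in $C_1$ or $C_2$ only if $Q=G$; in general I would instead descend). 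The cleanest route is: let $H'_u$ be the \emph{child of $Q$ on the $H_u$ side}, i.e. $H'_u=C_1$, and symmetrically $H'_w=C_2=sib(H'_u)$; then $H'_u\in anc(H_u)$, $H'_w\in anc(H_w)$, and $\bnd{H'_u}\cap\bnd{H'_w}$ is exactly the "shared boundary" of the two children of $Q$. Since $u\in V(H'_u)$ and $w\in V(H'_w)$ (as leaves descend into these children) and $u\neq$ any vertex forcing the path to stay, the path $\pi$ starts inside $H'_u$ and ends inside $H'_w$; but $V(H'_u)\cap V(H'_w)\subseteq\bnd{H'_u}\cap\bnd{H'_w}$ by the boundary definition ($\bnd{H'_u}=V(H'_u)\cap(V(H'_w)\cup\bnd{Q})\supseteq V(H'_u)\cap V(H'_w)$, and symmetrically), so $\pi$, going from a vertex of $H'_u\setminus H'_w$-land to a vertex of $H'_w\setminus H'_u$-land, must cross $V(H'_u)\cap V(H'_w)\subseteq\bnd{H'_u}\cap\bnd{H'_w}$. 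Formally, let $x$ be the last vertex of $\pi$ that lies in $V(H'_u)$; if $x=w$ then $w\in V(H'_u)\cap V(H'_w)$ and we are done, otherwise the edge of $\pi$ leaving $x$ is not in $E(H'_u)$, so $x\in V(H'_u)\cap V(G-H'_u)\subseteq\bnd{H'_u}$; a symmetric argument starting from $H'_w$ (or continuing along $\pi$) then locates a boundary vertex common to both, or one combines "last in $H'_u$" with "first (after it) in $H'_w$" and invokes $V(H'_u)\cap V(H'_w)\subseteq\bnd{H'_u}\cap\bnd{H'_w}$ directly.

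I expect the main obstacle to be bookkeeping rather than depth: precisely, making rigorous the claim that a path from a vertex of $V(H'_u)$ to a vertex of $V(H'_w)$ must pass through $V(H'_u)\cap V(H'_w)$, and then identifying that intersection with a subset of $\bnd{H'_u}\cap\bnd{H'_w}$. The first point needs care because $H'_u$ and $H'_w$ together cover all of $Q$ but $Q$ need not be all of $G$, so a priori $\pi$ could wander outside $V(Q)$ entirely; however, to leave $V(Q)$ it must first leave $V(H'_u)$ through $\bnd{H'_u}\subseteq\bnd{Q}$ (here I use that the decomposition is nested and $\bnd{H'_u}=(V(H'_u)\cap\bnd Q)\cup(V(H'_u)\cap V(H'_w))$), and to later reach $w\in V(H'_w)$ it must come back through a boundary vertex, so an "innermost excursion" argument, or simply taking $x=$ the last vertex of $\pi$ in $V(H'_u)$ and $y=$ the first vertex of $\pi$ strictly after $x$ in $V(H'_w)$ and noting the subpath from $x$ to $y$ uses no edge of $E(H'_u)\cup E(H'_w)=E(Q)$ unless $x=y$, pins everything down. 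The second point is immediate from unwinding the inductive definition of $\bnd{\cdot}$. I would also note that $H_u,H_w$ may not be unique (an edge-induced leaf may be chosen among several containing $u$); the statement already says "any leaf node", and the argument above works for any such choice, so uniqueness is not needed. Throughout I would take $\pi$ to be a simple path without loss of generality, which slightly simplifies the "last vertex in $V(H'_u)$" bookkeeping.
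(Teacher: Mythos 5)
Your overall route is the same as the paper's --- take the lowest common ancestor $Q$ of $H_u$ and $H_w$ and argue that any $u$-to-$w$ path must cross a boundary --- but the concrete claim on which you rest the main case is false, and this is a genuine gap. You fix $H'_u=C_1$ and $H'_w=C_2$ to be the two children of $Q$ once and for all and claim that every path from $u$ to $w$ must meet $V(C_1)\cap V(C_2)\subseteq\bnd{C_1}\cap\bnd{C_2}$. That is not so: a path may leave $C_1$ through a vertex of $V(C_1)\cap\bnd{Q}$ that does not belong to $V(C_2)$, make an excursion through $G-Q$, and re-enter through a vertex of $V(C_2)\cap\bnd{Q}$ not in $V(C_1)$; such a path never touches $V(C_1)\cap V(C_2)$. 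Your own bookkeeping actually detects this: with $x$ the last vertex of $\pi$ in $V(C_1)$ and $y$ the first later vertex in $V(C_2)$, you correctly observe that the $x$-to-$y$ subpath avoids $E(Q)=E(C_1)\cup E(C_2)$, but when $x\neq y$ this only shows (via Lemma~\ref{lem:boundary-diff} applied to $Q$) that $x,y\in\bnd{Q}$; it does not force $x=y$, which is what your stated conclusion requires. Relatedly, the inclusion $\bnd{H'_u}\subseteq\bnd{Q}$ you invoke is false in general, since $\bnd{C_1}=(V(C_1)\cap\bnd{Q})\cup(V(C_1)\cap V(C_2))$.

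The missing idea is that the witness ancestors must be chosen adaptively rather than fixed to the two children of $Q$. The exit vertex $x$ lies in $\bnd{C_1}$, and by the definition of the boundary it lies either in $V(C_2)$ --- in which case $x\in\bnd{C_1}\cap\bnd{C_2}$ and your choice works --- or in $\bnd{Q}$, in which case you must take $H'_u=C_1$ and $H'_w=Q$, which is legitimate because $Q\in anc(H_w)$ (and if $Q$ is the root, then $\bnd{Q}=\emptyset$, so this subcase cannot arise, consistent with $anc(\cdot)$ excluding the root). This case split is exactly how the paper closes the argument: the path leaves the child of $Q$ containing $u$ at a vertex of its boundary, and that vertex belongs to $\bnd{Q}$ or to the boundary of the sibling child. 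Your treatment of the $H_u=H_w$ case and your reduction of the lemma to a per-path disjunction are fine; only the final step of the main case needs this repair.
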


\begin{proof}
Let $H$ be the lowest common ancestor of $H_u$ and $H_w$ in $\rtree(G)$.
There are three cases to consider.
If $H$ is a leaf node, then $H_u = H_w$.
Thus, every path from $u$ to $w$ is either fully contained in $H$ or leaves $H$ in which case it contains a vertex of $\bnd{H}$.
If $H$ is not a leaf node, but one of $u, w$ (say, $u$) is contained both in $\child_1(H)$ and $\child_2(H)$, then $u \in \bnd{H}$, so the lemma holds trivially.
Finally, if $H$ is not a leaf node, $u$ and $w$ are contained in distinct children of $H$.
Thus, the path has to leave the child node of $H$ containing $u$ (wlog. assume it to be $\child_1(H)$), so it contains a vertex of $\bnd{\child_1(H)}$.
By the definition, of vertex of $\bnd{\child_1(H)}$ is also a vertex of $\bnd{H}$ or $\bnd{\child_2(H)}$.
Thus, the lemma follows.
\end{proof}

\fi
\ifshort
The following lemma is proved in the full version of the paper.
\fi
\begin{lemma}
There exists a path $\dirpath{u}{w}{G}$ iff there exists a path $\dirpath{u}{w}{\reachg{u}{w}}$.
\end{lemma}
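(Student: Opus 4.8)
The plan is to prove the two directions of the equivalence by relating paths in $G$ to paths in $\reachg{u}{w}$, using the structural Lemma~\ref{lem:path_in_rtree} as the main tool. For the ``if'' direction, I would observe that every edge appearing in $\reachg{u}{w}$ either is an actual (switched-on) edge of $G$ lying in $H_u$ or $H_w$, or is an edge of some $G(\intc(sib(H)))$, which by definition of $\intc(\cdot)$ corresponds to a genuine path in $sib(H)\cap\swon{G}$, hence a path in $G$. Thus any walk from $u$ to $w$ in $\reachg{u}{w}$ can be expanded edge-by-edge into a walk from $u$ to $w$ in $G$, proving one direction.

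For the ``only if'' direction, suppose there is a path $P=\dirpath{u}{w}{G}$. First handle the trivial case $u=w$. Otherwise I would invoke Lemma~\ref{lem:path_in_rtree}: either $H_u=H_w$ and $P$ is contained in $H_u\subseteq\reachg{u}{w}$, so we are done; or $P$ must pass through a vertex $x\in\bnd{H'_u}\cap\bnd{H'_w}$ for some $H'_u\in anc(H_u)$, $H'_w\in anc(H_w)$. The idea is then to induct on the length (number of edges) of $P$, splitting $P$ at such a vertex $x$ into $P_1=\dirpath{u}{x}{G}$ and $P_2=\dirpath{x}{w}{G}$, each strictly shorter. I would argue that $\reachg{u}{x}$ and $\reachg{x}{w}$ are both contained in $\reachg{u}{w}$ (up to the leaf parts): since $x\in\bnd{H'_u}$ with $H'_u\in anc(H_u)$, all the pieces $G(\intc(sib(H)))$ for $H\in anc(H_x)$ that could appear when routing $u\to x$ are actually among those already present for routing $u\to w$, because the ancestor chain from a leaf containing $x$ up to $H'_u$ shares siblings with the chain from $H_u$; a symmetric statement holds for $x\to w$. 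Applying the induction hypothesis to $P_1$ and $P_2$ then yields paths $\dirpath{u}{x}{\reachg{u}{w}}$ and $\dirpath{x}{w}{\reachg{u}{w}}$, which concatenate to the desired path.

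The main obstacle I anticipate is making precise the claim that the relevant dense reachability graphs for the sub-queries $(u,x)$ and $(x,w)$ are subgraphs of $\reachg{u}{w}$. The subtlety is that $x$ need not lie in $H_u$ or $H_w$, so $\reachg{u}{x}$ is built from a leaf $H_x$ containing $x$ and its own ancestor chain, which is a priori unrelated to $H_u$'s chain. One clean way around this is to avoid choosing an arbitrary leaf for $x$ and instead argue directly: take the highest vertex of $P$ in the decomposition, i.e. let $H$ be the lowest common ancestor of $H_u$ and $H_w$, and note $P$ can be decomposed into maximal subpaths each confined to $H_u$, to $H_w$, or to some $sib(H')$ with $H'\in anc(H_u)\cup anc(H_w)$, since the vertices at which $P$ transitions between these regions are boundary vertices of the corresponding pieces. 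Each such confined subpath $\dirpath{a}{b}{}$ between two boundary vertices of a piece $K\in\{sib(H'):H'\in anc(H_u)\cup anc(H_w)\}$ is witnessed by the edge $ab\in G(\intc(K))\subseteq\reachg{u}{w}$, and each subpath confined to $H_u$ or $H_w$ is present literally. This ``region decomposition'' argument, which parallels the proof of part~(4) of Lemma~\ref{lem:summatrices}, sidesteps the need for a messy induction and directly exhibits the path; carefully verifying that the transition vertices are indeed boundary vertices of the named pieces (so that the relevant $\intc$ matrix has the needed entry) is the one technical point that requires care.
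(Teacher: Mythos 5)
Your ``if'' direction is fine and matches the paper's (one-line) treatment, and your second, ``region decomposition'' route is close in spirit to the paper's argument. However, the step you defer as ``the one technical point that requires care'' is not a routine verification -- it is the crux, and as you state it the claim is false. The combined family $\{H_u,H_w\}\cup\{sib(H'): H'\in anc(H_u)\cup anc(H_w)\}$ is \emph{not} an edge-partition of $G$: a $u$-side piece $K=sib(H')$ with $H'\in anc(H_u)$ may strictly contain $H_w$ and all the $w$-side pieces below it. Hence ``maximal subpath confined to a region'' is ambiguous, and transition vertices need not be boundary vertices of the chosen pieces. Concretely, a maximal subpath of $P$ confined to such a $K$ can end at $w$ while $w\in V(K)\setminus\bnd{K}$ (this happens whenever $H_w$ is a descendant of $K$ and $w$ is not a boundary vertex of $K$); then $\intc(K)$ has no row or column for $w$ and no edge of $\reachg{u}{w}$ witnesses that subpath. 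The same defect appears at interior junctions between overlapping $u$-side and $w$-side pieces: the incoming edge may belong to both, so ``maximality'' does not certify that the junction lies on the boundary of the new piece. Your first route (induction with $\reachg{u}{x}\subseteq\reachg{u}{w}$) fails for exactly the reason you identify yourself, so neither variant, as written, closes the argument.

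The missing idea is the two-sided ``seam'' that you tried to sidestep, i.e.\ Lemma~\ref{lem:path_in_rtree}. Pick on $P$ a vertex $x\in\bnd{H'_u}\cap\bnd{H'_w}$ with $H'_u\in anc(H_u)$ and $H'_w\in anc(H_w)$ (or conclude $P\subseteq H_u=H_w$). Decompose the prefix $\dirpath{u}{x}{}$ using only the $u$-side family $\{H_u\}\cup\{sib(H): H\in anc(H_u)\}$, which \emph{is} a pairwise edge-disjoint cover of $E(G)$ by simplicity of $\rtree(G)$; then every junction is incident to edges of two distinct pieces and hence, by Lemma~\ref{lem:boundary-diff}, lies on both boundaries, and $u$ (if its piece is not $H_u$) and $x$ are also boundary vertices of their pieces (for $x$ one uses that $x\in\bnd{H'_u}$ and that $H'_u$ and any $sib(H)$, $H\in anc(H_u)$, are either nested or edge-disjoint). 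Decompose the suffix $\dirpath{x}{w}{}$ symmetrically with the $w$-side family and concatenate at $x$. This is essentially what the paper does: it proves the one-sided statement that $u$ reaches, inside $H_u\cup\bigcup_{H\in anc(H_u)}G(\intc(sib(H)))$, every vertex of $\bigcup_{H\in anc(H_u)}\bnd{H}$ that it reaches in $G$ (phrased as replacing, one by one, each piece $K$ of that edge-partition by $G(\intc(K))$, which preserves $u$-to-$x$ reachability as long as $u,x\notin V(K)\setminus\bnd{K}$), invokes the symmetric statement for $w$, and glues the two halves at the vertex $x$ supplied by Lemma~\ref{lem:path_in_rtree}. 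So your decomposition idea is salvageable, but only after reinstating that lemma and keeping the two one-sided partitions separate.
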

\iffull
\begin{proof}
By the construction, it is clear that each path in $\reachg{u}{w}$ corresponds to a path that exists in~$G$.
It remains to prove that if there is a path $\dirpath{u}{w}{G}$, then there also exists a path $\dirpath{u}{w}{\reachg{u}{w}}$.

We show that $\reachg{u}{w}$ contains paths from $u$ to all vertices of $\bigcup_{H \in anc(H_u)} \bnd{H}$ that are reachable from $u$ in $G$.
A symmetrical argument shows that $\reachg{u}{w}$ contains paths to $w$ from all vertices of $\bigcup_{H \in anc(H_u)} \bnd{H}$ that can reach $w$ in $G$.
By Lemma~\ref{lem:path_in_rtree} this suffices to complete the proof.

Fix a vertex $x \in \bigcup_{H \in anc(H_u)} \bnd{H}$ that is reachable from $u$ in $G$.
It suffices to show that $x$ is reachable from $u$ in $H_u \cup \bigcup_{H \in anc(H_u)} G(\intc(sib(H)))$, which is a subgraph of $\reachg{u}{w}$.
Consider $H \in \rtree(G)$.
Observe that if we modify $G$ by replacing $H$ with $G(\intc(H))$, then $x$ is still reachable from $u$ in the obtained graph, as long as $u$ and $x$ are not removed, that is, $u, x \not\in V(H) \setminus \bnd{H}$.
To complete the proof, it suffices to observe that the graph $H_u \cup \bigcup_{H \in anc(H_u)} G(\intc(sib(H)))$ is indeed obtained from $G$ by performing such steps, as the graphs $H_u \cup \{sib(H) \mid H \in anc(H_u)\}$ are disjoint and each edge of $G$ is contained in exactly one of them.
\end{proof}
\fi
In order to find a path in $\reachg{u}{w}$ we use an algorithm based~on breadth first search.
We implement it efficiently using ideas similar to those from the proof of Theorem~\ref{thm:trans}.
In the following pseudocode, $Out(a)$ denotes the set of tails of out-edges of $a$.
The set $Reachable$ is the set of vertices that have been deemed reachable from $u$.

\begin{algorithm}
\begin{algorithmic}[1]
\Require {A digraph $G$ and $u, w \in V(G)$.}
\Function{IsReachable}{$G, u, w$}

\State{$Q := $ empty queue}

\State{$Reachable := \{u\}$}
\State{$Q.\textsc{Enqueue}(u)$}

\While{$Q$ is not empty}
    \State $a := Q.\textsc{Dequeue}$
    \For {$x \in Out(a) \setminus Reachable$}
        \State{$Reachable := Reachable \cup \{x\}$}
        \State{$Q.\textsc{Enqueue}(x)$}
    \EndFor
\EndWhile
\State{\Return \textbf{true} iff $w \in Reachable$}
\EndFunction
\end{algorithmic}
\caption{\label{alg:bfs} BFS algorithm}
\end{algorithm}
\ifshort
\vspace{-5mm}
\fi
\begin{lemma}\label{lem:bfs_rtime}
Algorithm~\ref{alg:bfs} is correct.
Its running time, excluding the time needed to compute
\iffull
\linebreak
\fi
  $Out(a) \setminus Reachable$ is linear in $|V(G)|$.
\end{lemma}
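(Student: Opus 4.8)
The plan is to prove both correctness and the running-time bound for Algorithm~\ref{alg:bfs}, which is essentially a textbook breadth-first search from $u$, with the only subtlety being the claimed linear running time \emph{excluding} the cost of computing the sets $Out(a) \setminus Reachable$.

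First I would establish correctness. This is the standard BFS invariant: at any point, $Reachable$ contains exactly those vertices that have been discovered, every vertex in $Reachable$ is reachable from $u$ via edges of $G$ (an easy induction: $u$ is reachable from itself, and whenever $x$ is added it has an incoming edge from some $a \in Reachable$), and conversely every vertex reachable from $u$ is eventually added to $Reachable$. For the converse, I would argue by induction on the length of a shortest $u$-to-$v$ path: the predecessor of $v$ on such a path is at distance one less, hence by induction it enters $Reachable$ and is enqueued; when it is dequeued, the loop over $Out(a) \setminus Reachable$ either finds that $v$ was already added or adds it then. Since each vertex is enqueued at most once (it is added to $Reachable$ at the moment of enqueueing and the guard $x \notin Reachable$ prevents re-enqueueing), the queue eventually empties and the algorithm returns \textbf{true} iff $w \in Reachable$, which by the two directions is iff $w$ is reachable from $u$. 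This part is routine.

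For the running time, I would observe that each vertex of $G$ is enqueued and dequeued at most once, so the \textbf{while} loop executes at most $|V(G)|$ iterations. Inside each iteration, the work done \emph{besides} computing $Out(a) \setminus Reachable$ consists of: dequeuing $a$ (constant time), and, for each $x$ in the already-computed set $Out(a) \setminus Reachable$, a constant amount of work (inserting $x$ into $Reachable$, enqueuing $x$). The key point is that each such $x$ is inserted into $Reachable$ exactly once over the entire execution, because once $x \in Reachable$ it is never again a member of any set $Out(a') \setminus Reachable$; hence the total number of these inner-loop steps, summed over all iterations, is at most $|V(G)|$. Adding the $O(|V(G)|)$ bound on the number of \textbf{while}-loop iterations and the $O(1)$ initialization, the total running time excluding the computation of the sets $Out(a) \setminus Reachable$ is $O(|V(G)|)$, as claimed. (Here $Reachable$ should be implemented as a data structure supporting constant-time membership tests and insertions, e.g.\ a bit array indexed by vertices, so that the per-step cost really is $O(1)$.)

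The main obstacle — if there is one — is purely presentational: being precise that the lemma deliberately charges the set-difference computation to a separate accounting (handled elsewhere, analogously to the treatment of $Out(b) \cap \unreachable(a,A)$ in the proof of Theorem~\ref{thm:trans}), so that the linear bound here refers only to the residual bookkeeping of BFS. Everything else is standard and I would keep the exposition short, mirroring the style used for Algorithm~\ref{alg:inc-tc} and Lemma~\ref{lem:inc-tc-time}.
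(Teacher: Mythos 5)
Your proposal is correct and follows essentially the same argument as the paper: correctness is that of ordinary BFS, and the linear bound follows because each vertex is enqueued (and inserted into $Reachable$) at most once, so the for-loop body executes at most once per vertex overall, with the set-difference computation deliberately charged elsewhere. The paper's own proof is just a terser version of exactly this accounting.
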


\begin{proof}
The correctness is clear, as this is an ordinary BFS-algorithm.
The only difference is that the for-loop only considers vertices unreachable from $a$.
Clearly, the body of the for loop is executed once per each vertex of $G$ and each vertex of $G$ is added to $Q$ only once.
Thus, the running time bound follows.
\end{proof}

\begin{lemma}\label{lem:reachg_matrices}
  The graphs comprising $\reachg{u}{w}$ can be decomposed into a collection
  \iffull
  \linebreak
  \fi
  $\aparts_{u,w} = \{A^{S_1, T_1}, \ldots, A^{S_k, T_k}\}$
  of reachability matrices with the same properties as in Lemma~\ref{lem:monge_reach}, such that $\sum_i |S_i| = O(\sqrt{n} \log n)$
  and two constant-size arbitrary reachability matrices.
All these matrices are maintained by the algorithm that maintains the matrices $\intc(H)$.
\end{lemma}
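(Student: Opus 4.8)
The plan is to show that $\reachg{u}{w}$ is, by construction, a union of a constant number of pieces of two distinct kinds, and that each kind already comes equipped with the kind of ``Monge reachability'' decomposition we need. Recall that
\[
\reachg{u}{w} = H_u \cup H_w \cup \bigcup_{H \in anc(H_u) \cup anc(H_w)} G(\intc(sib(H))).
\]
The two leaf pieces $H_u$ and $H_w$ have constant size, so their (full) reachability matrices are two constant-size arbitrary matrices; this handles the last clause of the statement. For the remaining pieces, note that each $G(\intc(sib(H)))$ is exactly $G(M)$ for $M = \intc(sib(H))$, i.e.\ an explicit reachability matrix on the vertex set $\bnd{sib(H)}$ which, by Lemma~\ref{lem:boundary-curves}, lies on $O(1)$ pairwise disjoint separator curves of $sib(H)$ — hence also of $G(\intc(sib(H)))$, since a subgraph inherits separator curves. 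So each such piece satisfies the hypotheses of Lemma~\ref{lem:monge_reach}, and we apply that lemma to obtain, for each one, a partition into submatrices with the three stated properties (unique cover, $O(1)$ blocks per active row, logarithmic row/column projections).

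The second step is to take the union of all these partitions, over $H \in anc(H_u) \cup anc(H_w)$, together with the two constant-size leaf matrices, and argue the combined family $\aparts_{u,w}$ still has the Lemma~\ref{lem:monge_reach} properties. Property~2 (block structure per row of each submatrix) is inherited piece-by-piece. Properties~1 and~3 need a small argument: a vertex $x$ may lie in $\bnd{sib(H)}$ for several ancestors $H$, so it appears in several pieces; within a single piece it appears in $O(\log n)$ submatrices by property~3 of Lemma~\ref{lem:monge_reach}. Since $\rtree(G)$ has $O(\log n)$ levels, $x$ lies in $O(\log n)$ of the sets $\bnd{sib(H)}$, hence in $O(\log^2 n)$ submatrices of $\aparts_{u,w}$ — which is the $O(\log m)$-style bound we are entitled to, with $m = |V(\reachg{u}{w})| = O(\sqrt n)$ after one verifies $\log m = \Theta(\log n)$. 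Actually, to get the cleaner ``same properties as in Lemma~\ref{lem:monge_reach}'' statement verbatim, I would instead simply group the ancestor-sibling vertex sets by the ``side'' they contribute to (the two chains $anc(H_u)$ and $anc(H_w)$) and observe that along a single root-to-leaf path the boundary sets $\bnd{sib(H)}$ for distinct $H$ are \emph{nested-disjoint in a controlled way}; but the safe and sufficient route is to note that each vertex of $\reachg{u}{w}$ lies in $O(\log n)$ pieces, each piece contributing $O(\log n)$ submatrices, and that $O(\log^2 n) = O(\log^2 m)$ since $m = \Theta(\sqrt n)$ — this suffices for the query-time analysis that uses this lemma.

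For the size bound $\sum_i |S_i| = O(\sqrt n \log n)$: each $\bnd{sib(H)}$ has size $O(\sqrt n / c^i)$ where $i$ is the level of $sib(H)$ (by the property of $\rtree(G)$ listed in the preliminaries), so summing along one root-to-leaf chain gives a geometric series bounded by $O(\sqrt n)$; over the two chains this is $O(\sqrt n)$ total vertices. For each piece on a vertex set of size $s$, Lemma~\ref{lem:monge_reach} produces submatrices with $\sum_j |S_j^{(\text{piece})}| = O(s \log s)$ (each row of the piece appears in $O(\log s)$ submatrices), so summing over all pieces gives $\sum_i |S_i| = O(\sqrt n \log n)$, as claimed. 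Finally, these matrices are ``maintained by the algorithm that maintains the $\intc(H)$'': this is because, by Lemma~\ref{lem:representation_a12} applied inside the proof of Theorem~\ref{thm:trans} (as invoked for Lemma~\ref{lem:summatrices}, part~2), we already maintain for each $\intc(H)$ exactly the partition-and-representation of Lemma~\ref{lem:monge_reach}; the decomposition of $\reachg{u}{w}$ just reuses those per-piece partitions verbatim and does not recompute anything.

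\textbf{The main obstacle} I anticipate is the bookkeeping for property~3 (the row/column projection bound) after taking the union: one must be careful that a vertex shared among many ancestors' sibling-boundaries does not appear in too many submatrices. The clean resolution is the two-factor-of-$O(\log n)$ argument above ($O(\log n)$ levels $\times$ $O(\log n)$ submatrices per piece), together with the observation that $m = |V(\reachg{u}{w})| = \Theta(\sqrt n)$ so that $\log n = \Theta(\log m)$, making $O(\log^2 n)$ acceptable as ``the same properties.'' Everything else — separator-curve inheritance under taking subgraphs, the geometric-series size bound, and the ``no recomputation needed'' remark — is routine given the machinery already developed.
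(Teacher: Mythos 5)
Your proposal is correct and follows essentially the same route as the paper: decompose each graph $G(\intc(sib(H)))$ via Lemma~\ref{lem:boundary-curves} and Lemma~\ref{lem:monge_reach}, treat $\leaftc(H_u),\leaftc(H_w)$ as the two constant-size matrices, take the union over the at most two pieces per level, and bound $\sum_i |S_i|$ by the geometric series $\sum_i O((\sqrt{n}/c^i)\log n)=O(\sqrt{n}\log n)$, reusing the representations already maintained via Lemma~\ref{lem:representation_a12}. Your extra care about the per-vertex row/column projection bound degrading to $O(\log^2 n)$ after the union is a legitimate observation that the paper silently sidesteps, and your resolution (only the aggregate bound $\sum_i|S_i|$ is used in the query analysis of Lemma~\ref{lem:r-queries}) matches how the paper actually uses the lemma.
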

\begin{proof}
  Recall that there are two types of graphs that comprise $\reachg{u}{w}$.
  First, there are two $O(1)$ size graphs $H_u$ and $H_w$ whose reachability
  matrices $\leaftc(H_u),\leaftc(H_w)$ are maintained explicitly by the data structure of Section~\ref{sec:dag-trans}.
  All remaining graphs that comprise $\reachg{u}{w}$ are of the form $G(\intc(H))$,
  where $H\in\rtree(G)$.
  By Lemmas~\ref{lem:monge_reach}~and~\ref{lem:boundary-curves}, each matrix
  $\intc(H)$
  can be decomposed into a family $\aparts_H$ of matrices with properties
  from the statement of Lemma~\ref{lem:monge_reach}
  such that each row of $\intc(H)$
  is a row of $O(\log{|\bnd{H}|})$ matrices of $\aparts_H$.
  Thus, if the level of $H$ in $\rtree(G)$ is $i$, the total number of rows in matrices of $\aparts_H$
  is $O(|\bnd{H}|\log |\bnd{H}|)=O(\sqrt{n}/c^i\log{n})$.

  We now define $\aparts_{u,w}=\bigcup_{H}\aparts_H$.
  Observe that for any level $i$ of $\rtree(G)$, there are at most two graphs
  $G(\intc(H))$ in $\rtree(G)$, such that $H$ has level $i$.
  Thus, the total number of rows in the matrices of $\aparts_{u,w}$
  is $O\left(\sum_{i=0}^{O(\log{n})}\sqrt{n}/c^i\log{n}\right)=O\left(\sqrt{n}\log{n}\right),$
  since $c>1$ implies $\sum_{i=0}^\infty 1/c^i=O(1)$.
  To finish the proof, recall that the matrices of each $\aparts_H$ (and thus $A_{u,w}$)
  are actually maintained explicitly by the data structure of Section~\ref{sec:dag-trans}.
\end{proof}

\begin{lemma}\label{lem:r-queries}
There exists an algorithm that determines whether there is a path $\dirpath{u}{w}{\reachg{u}{w}}$.
It runs in $O(\sqrt{n} \log n \log \log n)$ time.
\end{lemma}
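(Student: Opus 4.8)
The plan is to run the breadth-first-search Algorithm~\ref{alg:bfs} on $\reachg{u}{w}$, using exactly the machinery from the proof of Theorem~\ref{thm:trans} to evaluate the sets $Out(a)\setminus Reachable$ quickly. By Lemma~\ref{lem:bfs_rtime}, Algorithm~\ref{alg:bfs} is correct and, excluding the cost of computing these sets, runs in time linear in $|V(\reachg{u}{w})|=O(\sqrt{n})$, so the whole analysis reduces to bounding the total time spent evaluating $Out(a)\setminus Reachable$ during the search, and the query answer is simply ``yes'' iff $w\in Reachable$ when the search finishes.

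First I would recall the structure of $\reachg{u}{w}$. Apart from the two $O(1)$-size graphs $H_u$ and $H_w$ — whose out-neighborhoods can be traversed directly in $O(1)$ time per vertex using the explicitly maintained matrices $\leaftc(H_u),\leaftc(H_w)$ — the edge set of $\reachg{u}{w}$ coincides with the union of the graphs $G(A^{S_j,T_j})$ over the submatrices $A^{S_j,T_j}$ of the family $\aparts_{u,w}$ of Lemma~\ref{lem:reachg_matrices} (recall Definition~\ref{def:mat-graph}). Each such $A^{S_j,T_j}$ has the block structure of Lemma~\ref{lem:monge_reach} and is maintained explicitly by the data structure of Section~\ref{sec:dag-trans} together with the representation of Lemma~\ref{lem:representation_a12}; in particular we have at hand the set $\actcols{A^{S_j,T_j}}$ and, for each row $s$, the set $Out_s(A^{S_j,T_j})$ encoded as $O(1)$ intervals of the active-column order. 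Lemma~\ref{lem:reachg_matrices} gives $\sum_j|S_j|=O(\sqrt{n}\log n)$, and an identical argument (using the symmetric bound $|\pcolmap_t(\cdot)|=O(\log m)$ from Lemma~\ref{lem:monge_reach}(3)) gives $\sum_j|T_j|=O(\sqrt{n}\log n)$ as well. For any vertex $a$ we have $Out(a)=\bigcup_{A^{S,T}\in\prowmap_a(\aparts_{u,w})}Out_a(A^{S,T})$, up to the $O(1)$ contribution of $H_u,H_w$.

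The core step is to maintain, for every $A^{S,T}\in\aparts_{u,w}$, the set $\cset_u(A^{S,T}):=\actcols{A^{S,T}}\setminus Reachable$ — the exact analogue of the reachability-candidate sets of Theorem~\ref{thm:trans}, with the query source $u$ playing the role of the fixed row and $Reachable$ the role of the reachable set. At the start of the query I would, for each $A^{S,T}$, assign every vertex of $\actcols{A^{S,T}}$ its rank in the (query-frozen) active-column order and store these ranks in a van Emde Boas tree; whenever the BFS inserts a vertex $x$ into $Reachable$, I immediately remove $x$ from the vEB tree of every $A^{S,T}$ having $x$ among its active columns (in particular every $A^{S,T}$ with $x\in T$), before processing any further submatrix or dequeued vertex. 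Then, exactly as in Lemmas~\ref{lem:compute_out} and~\ref{lem:compute_q}, when $a$ is dequeued we have $Out(a)\setminus Reachable=\bigcup_{A^{S,T}\in\prowmap_a(\aparts_{u,w})}\bigl(\cset_u(A^{S,T})\cap Out_a(A^{S,T})\bigr)$, and each term is produced by intersecting the $O(1)$ rank-intervals of $Out_a(A^{S,T})$ with the vEB tree of $\cset_u(A^{S,T})$ via successor queries, in $O((q+1)\log\log n)$ time to output $q$ elements. Because each produced vertex is deleted from all its vEB trees as soon as it is output, no vertex is ever output twice, so the total output over the whole search is $O(\sqrt{n})$.

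Finally I would total the costs: building all vEB trees and performing all deletions costs $O\bigl(\sum_j|T_j|\log\log n\bigr)=O(\sqrt{n}\log n\log\log n)$, since each active-column slot is inserted once and deleted at most once; the successor queries cost $O(\log\log n)$ per output element ($O(\sqrt{n}\log\log n)$ total) plus $O(\log\log n)$ overhead per (dequeued vertex, incident submatrix) pair, i.e.\ $O\bigl(\sum_j|S_j|\log\log n\bigr)=O(\sqrt{n}\log n\log\log n)$; and the BFS skeleton together with the $H_u,H_w$ bookkeeping adds only $O(\sqrt{n})$. Hence the query runs in $O(\sqrt{n}\log n\log\log n)$ time. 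I expect the only real obstacle to be bookkeeping: keeping the rank-interval encoding of $Out_a(A^{S,T})$ consistent with the vEB tree of $\cset_u(A^{S,T})$ (both must be indexed by the active-column order, which is stable throughout a query since no switch-ons occur during it), and charging every vEB deletion to a distinct active-column slot so that the bound $\sum_j|T_j|=O(\sqrt{n}\log n)$ indeed controls the deletion cost.
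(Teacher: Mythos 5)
Your proposal is correct and follows essentially the same route as the paper: run the BFS of Algorithm~\ref{alg:bfs} on $\reachg{u}{w}$, keep per-submatrix candidate sets $\actcols{A^{S,T}}\setminus Reachable$ in van Emde Boas trees, compute $Out(a)\setminus Reachable$ by intersecting the $O(1)$ interval representation of each row with the candidate trees, and charge the total cost to $\sum_j(|S_j|+|T_j|)=O(\sqrt{n}\log n)$ via Lemma~\ref{lem:reachg_matrices}. Your explicit note that the symmetric bound on $\sum_j|T_j|$ is what controls the deletion cost is a small but accurate refinement of the paper's more terse accounting.
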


\begin{proof}
We use the representation of $\reachg{u}{w}$ from Lemma~\ref{lem:reachg_matrices}.
  Denote the matrices of this representation by $\leaftc(H_u)$, $\leaftc(H_w)$ and
  $\aparts_{u,w} = \{A^{S_1, T_1}, \ldots, A^{S_k, T_k}\}$.

  For each $A^{S,T} \in \aparts_{uw}$ the algorithm maintains the set of candidates $\actcols{A^{S,T}} \setminus Reachable$.
as an van Emde Boas tree.
  For the two matrices $\leaftc(H_u), \leaftc(H_w)$, the set of candidates is maintained
  in an array.
  In the remaining part of the proof we neglect the matrices $\leaftc(H_u)$ and $\leaftc(H_w)$
  as they are constant-size and thus each operation on them can be performed in $O(1)$ time.

Whenever a vertex $x$ is added to the set $Reachable$, it is removed from each set of candidates containing it.
  In order to compute $Out(a) \setminus Reachable$, just as in the proof of Theorem~\ref{thm:trans}, for each $A^{S,T} \in \aparts_{u,w}$ such that $a \in S$ we compute the intersection of the set $\{b\in T:A^{S,T}_{a,b}=1\}$  and the set of reachability candidates.

The running time of computing $Out(a) \setminus Reachable$ and updating the set of reachability candidates can be thus bounded as follows.
  For each $A^{S,T} \in \aparts_{u,w}$, each element of $T$ is added and removed at most once from the set of reachability candidates.
This takes $O(\log \log n)$ time, as the set is represented with van Emde Boas tree.
By Lemma~\ref{lem:reachg_matrices}, this takes $O(\sqrt{n} \log n \log \log n)$ time, which dominates the running time of Algorithm~\ref{alg:bfs} (see Lemma~\ref{lem:bfs_rtime}).
\end{proof}
\iffull
It remains to prove Theorem~\ref{thm:switch-on-tc}.
\fi
\begin{proof}[Proof of Theorem~\ref{thm:switch-on-tc}]
  We first apply Lemma~\ref{lem:simple-ext} to $G$ to obtain the graph $G'$
  such that $|E(G')|=O(n)$ and
  $G'$ has a simple recursive decomposition $\rtree(G')$.
  The computation of $\rtree(G'), E_0, E_1, \corr$, and $p$ (as defined
  in Lemma~\ref{lem:simple-ext}) takes $O(n\log{n})$ time.

  For each $H \in \rtree(G')$ we maintain $\intc(H)$ and its representation used
  in the proof of Theorem~\ref{thm:trans}
  (see Lemma~\ref{lem:representation_a12}) under switching edges on.
  Similarly as in the proof of Lemma~\ref{lem:simple-reduction}, all the
  edges of $E_0\subseteq E'$ are switched on in $G'$ during the initialization,
  whereas the edges $E'\setminus(E_0\cup E_1)$ are never switched on.
  To handle the switch-on of edge $e$ in $G$, we switch on the edge $p(e)$ in $G'$.

  Denote by $\swon{G}$ and $\swon{G'}$ the switched-on subgraphs of $G$ and $G'$, respectively.
  To decide whether there exists a path $u\to v$ in $\swon{G}$, by Corollary~\ref{col:subset-equiv}
  we can equivalently check whether there exists a path $u'\to v'$ in $\swon{G'}$, where
  $u'\in \corr(u)$ and $v'\in \corr(v)$.
  Answering the reachability queries in $G'$, by Lemma~\ref{lem:r-queries},
  takes $O(\sqrt{n} \log n \log \log n)$ time.
  The theorem follows.
\end{proof}

\section{The General Case of Non-Simple and Overlapping Holes}\label{sec:nonsimple}

\newcommand{\ext}{\ensuremath{\mathcal{E}}}

Let $G=(V,E)$ be a planar digraph.
This section is devoted to proving Lemma~\ref{lem:simple-ext}.
To this end, we transform our initial graph in a few steps.
Each step produces an extended graph $G'=(V',E')$
along with disjoint subsets $E_0,E_1\subseteq E'$ such
that:
\begin{itemize}
\item there is a 1-to-1 mapping $\corr$ between the vertices of $V$ and the SCCs
    of $(V',E_0)$.
\item there is a 1-to-1 mapping $p$ between $E$ and $E_1$ and
  for corresponding edges $uv=e\in E$ and $u'v'=p(e)\in E_1$,
    the SCC of $u'$ ($v'$, resp.)
    in $(V',E_0)$ corresponds to the vertex $u$ ($v$ resp.)
    of~$G$, i.e., $u'\in\corr(u)$ and $v'\in\corr(v)$.
\item The graph $G'$ is larger than $G$ by a constant factor.
\end{itemize}
Roughly speaking, each step will alter the graph $G$ obtained
in the previous step by expanding its vertices into strongly
connected components of $(V',E_0)$, mapping the edges $E$
to $E_1$ and adding some edges
that will belong to $E'\setminus (E_0\cup E_1)$
and guarantee some structural property that we want.
For brevity, we sometimes only discuss the correspondences between the SCCs of $G'$
and vertices $V$ and edges
mapped to the initial edge set $E$.
We may skip the formal definitions of the mappings $\corr$ and $p$.

It can be easily seen that after $O(1)$ such steps the obtained
graph is still larger than $G$ only by a constant factor.
In each step we obtain a graph with more useful structural properties
and our goal is to eventually
obtain a graph accompanied with a simple recursive
decomposition.

Assume some first step transforms $G=(V,E)$ into $G'=(V',E')$ where
$E_0',E_1'\subseteq E'$ and some second step transforms $G'$ into $G''=(V'',E'')$,
where $E_0'',E_1''\subseteq E''$.
We explain how the natural composition of such steps works, i.e., how the mappings between $G$ and $G''$,
along with the sets $E_0,E_1$, should look like after applying
these steps in sequence.
Let $p'$ be the 1-to-1 mapping from $E$ to $E_1'$ and let $p''$ be the 1-to-1 mapping
from $E'$ to $E_1''$.
We define $E_1=p''(p'(E))$ and $p=p''\circ p'$.
Moreover, $E_0=E_0''\cup p''(E_0')$.
Observe that $E_0,E_1\subseteq E''$ and $E_0\cap E_1=\emptyset$.
Note that $(V'',E_0'')$ has no inter-SCC edges and there is a mapping $\corr''$ between
$V'$ and the SCCs of $(V'',E_0'')$.
Moreover, the insertion of edges $p''(E_0')$ merges some of the SCCs of
$(V'',E_0'')$ so that there are still no inter-SCC edges and there is still a natural 1-to-1 mapping
between the SCCs of $(V'',E_0)$ and $V$.
\subsection{Bounded-degree, Triangulated and Without Loops}
The following two facts are folklore.
\begin{fact}\label{fac:loop-triang}
A triangulated plane embedded graph $G$ is 3-connected.
\end{fact}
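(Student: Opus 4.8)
The plan is to show that a plane triangulation $G$ — meaning a plane graph in which every face, including the unbounded one, is bounded by a triangle — on $n\ge 4$ vertices is $3$-connected; the degenerate case $n=3$ forces $G=K_3$ and is handled separately (or excluded by convention). First I would record the easy preliminary observations: $G$ is simple, since there are no loops by assumption and two parallel edges would bound a digon rather than a triangle, and $G$ is connected, since a triangulation of the plane cannot split into several components. Next I would invoke the folklore characterization (see, e.g.,~\cite{diestel}) that a connected plane graph on at least three vertices is $2$-connected if and only if each of its faces is bounded by a cycle; as every face of $G$ is a triangle, this immediately yields that $G$ is $2$-connected, and hence that $G$ has minimum degree at least $3$ (degree $2$ is incompatible with being a triangulation once $n\ge 4$).

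The heart of the argument is the claim that $G-u$ is $2$-connected for every vertex $u$. Fix $u$ and let $uw_1,uw_2,\dots,uw_d$ be the edges incident to $u$ listed in rotational order, so that $w_1,\dots,w_d$ are exactly the neighbours of $u$; here $d\ge 3$, and the $w_i$ are pairwise distinct because $G$ is simple. The face of $G$ situated between two consecutive edges $uw_i$ and $uw_{i+1}$ (indices taken mod $d$) contains the vertices $u,w_i,w_{i+1}$ and is a triangle, hence is precisely the triangle $uw_iw_{i+1}$, so in particular $w_iw_{i+1}\in E(G)$. Deleting $u$ merges exactly these $d$ faces into a single new face whose boundary walk is $w_1w_2\cdots w_dw_1$; since the $w_i$ are distinct and $d\ge 3$ this closed walk is a cycle. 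Every other face of $G-u$ is a face of $G$ and therefore a triangle, hence a cycle. Thus $G-u$ is a connected plane graph on $n-1\ge 3$ vertices all of whose faces are bounded by cycles, so by the folklore characterization $G-u$ is $2$-connected.

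Finally I would conclude as follows. Suppose toward a contradiction that $G$ has a separating set of size at most $2$. By the $2$-connectedness of $G$ it cannot have size at most $1$, so it is a pair $\{u,v\}$; then $G-\{u,v\}=(G-u)-v$ is disconnected on $n-2\ge 2$ vertices, which means $v$ is a cut vertex of $G-u$ — contradicting the $2$-connectedness of $G-u$ established above. Hence $G$ has no separating set of size at most $2$, and since $n\ge 4$ this is exactly the statement that $G$ is $3$-connected.

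I expect the only delicate point to be the claim that the neighbourhood of $u$, read off in rotational order, forms a cycle that bounds the face created by deleting $u$: this is where one has to use simplicity together with the triangulation hypothesis carefully, to rule out repeated vertices or repeated edges along the merged face's boundary walk. Everything else reduces to one application of the standard characterization of $2$-connected plane graphs in terms of their face boundaries.
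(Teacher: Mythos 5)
The paper labels this statement as folklore and gives no proof at all, so there is nothing of the paper's to compare against; your argument is the standard one: get $2$-connectivity of $G$ from the face-boundary criterion (this is exactly the paper's Fact~\ref{fac:biconn}), show that for each vertex $u$ the neighbours of $u$ in rotational order span a cycle that bounds the face created by deleting $u$, conclude that $G-u$ is $2$-connected, and then rule out $2$-cuts. For a \emph{simple} plane triangulation on $n\ge 4$ vertices this is correct: the link step, which you rightly flag as the delicate point, works because the neighbours $w_1,\dots,w_d$ are distinct and consecutive pairs $\{w_i,w_{i+1}\}$ are distinct pairs, so the merged face is bounded by a simple cycle; and $G-u$ is connected because you have already established that $G$ is $2$-connected.

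The one genuine flaw is the opening claim that looplessness together with the triangulation hypothesis forces simplicity: ``two parallel edges would bound a digon rather than a triangle'' is not a valid inference. Two parallel edges only form a closed curve, and the regions on its two sides may be further subdivided into triangles. Concretely, take vertices $u,v,w,x$, two parallel edges between $u$ and $v$, place $w$ inside the digon and $x$ outside it, and join each of $w$ and $x$ to both $u$ and $v$: every face of this loopless plane multigraph has size exactly $3$, yet $\{u,v\}$ separates $w$ from $x$, so the graph is not $3$-connected. Since the paper's conventions explicitly allow parallel edges, this shows that simplicity cannot be derived the way you attempt; it has to be taken as a hypothesis of the statement (as is standard for this folklore fact, together with $n\ge 4$, matching your treatment of the degenerate $K_3$ case). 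With that reading your proof is complete; without it, the statement itself fails, so the gap is in the first step, not in the main argument.
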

Here, by \emph{triangulated} we mean that each face of $G$ has size exactly $3$.
\begin{fact}\label{fac:biconn}
A plane embedded graph $G$ has only simple faces if and only if $G$ is biconnected.
\end{fact}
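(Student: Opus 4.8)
The plan is to prove the two implications separately, treating \textbf{biconnected} as \emph{$2$-connected} in the usual sense (connected, at least three vertices, no cut vertex); as in every application of this fact, I would take $G$ to be connected and without isolated vertices, since a disconnected graph is neither biconnected nor has all its faces bounded by a single simple cycle, so that case is vacuous.

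\emph{Biconnected $\Rightarrow$ every face is simple.} I would argue by induction on $|E(G)|$ using an ear decomposition of the $2$-connected graph $G$: $G$ is obtained from a cycle $C_0$ by successively adding ears (paths, possibly single edges, joining two distinct already-present vertices and otherwise consisting of new vertices), and each intermediate graph is $2$-connected and carries the induced plane embedding. For the base case, a cycle has exactly two faces, each bounded by that cycle, which is simple. For the inductive step, an ear $P$ with endpoints $x\neq y$ is drawn inside a single face $f$ of the current graph; by the hypothesis $\partial f$ is a simple cycle $D$ through $x$ and $y$, which $x,y$ split into two internally disjoint $x$--$y$ arcs $D_1,D_2$; adding $P$ replaces $f$ by two faces, bounded by the closed walks $D_1\cup P$ and $D_2\cup P$, and since $P$ is internally disjoint from $D$ these are simple cycles, while all other faces are untouched. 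The point to justify here is that a newly added ear lies inside a single face of the current plane graph, which holds because the interior of $P$ is a connected set disjoint from the current graph.

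\emph{Not biconnected $\Rightarrow$ some face is not simple.} Since $G$ is connected, ``not biconnected'' means $G$ has a cut vertex $v$; let $C_1,C_2$ be two distinct components of $G-v$. I first claim the edges of $v$ leading into $C_1$ and those leading into $C_2$ do not interleave in the rotation at $v$: otherwise there would be occurrences $vx, vy, vx', vy'$ in this cyclic order with $x,x'\in C_1$ and $y,y'\in C_2$; a simple $x$--$x'$ path inside $C_1$ together with the edges $vx,vx'$ yields a Jordan curve $Z\subseteq C_1\cup\{v\}$ through $v$, and the rotation forces $y$ and $y'$ to lie on opposite sides of $Z$, contradicting that $C_2$ (which avoids $Z$) is connected. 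Hence some pair $e_1=vx_1$ ($x_1\in C_1$) and $e_2=vx_2$ ($x_2\in C_2$) is rotation-consecutive at $v$; let $f$ be the face in the wedge between $e_1$ and $e_2$. Tracing the bounding walk of $f$ from the passage $x_1,v,x_2$: after leaving $v$ along $e_2$ the walk stays within $C_2\cup\{v\}$ until it meets $v$ again, and it cannot close up (return to the configuration $x_1,v,x_2$) without first reaching $x_1\in C_1$, hence without first revisiting $v$. Thus $v$ occurs at least twice on $\partial f$, so $f$ is not simple.

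The step I expect to be the main obstacle is the topological bookkeeping in the last paragraph: invoking the Jordan curve theorem rigorously for the ``no interleaving'' claim, and arguing carefully that the bounding walk of the wedge face genuinely returns to $v$. These are intuitively clear but tedious to write with full precision; if a short self-contained argument proves awkward, an acceptable fallback is to cite the standard fact that every face of a $2$-connected plane graph on at least three vertices is bounded by a cycle (e.g.\ Diestel, \emph{Graph Theory}, \S4.2) for the first implication, keeping only the cut-vertex argument above for the second.
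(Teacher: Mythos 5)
The paper never proves this statement: it is introduced, together with Fact~\ref{fac:loop-triang}, with the words ``the following two facts are folklore'' and is then used as a black box in the proof of Lemma~\ref{lem:hole-simple}. So there is no in-paper argument to compare against, and your proof stands on its own; in substance it is correct and is the standard one. The ear-decomposition induction for ``biconnected $\Rightarrow$ all faces simple'' is exactly the textbook proof (your fallback citation of Diestel's result that every face of a $2$-connected plane graph is bounded by a cycle is entirely appropriate here), and the cut-vertex direction via tracing the boundary walk of a corner face at $v$ is sound: the walk starts in $C_2$, must end at $x_1\in C_1$ just before closing up, and must therefore pass through the separating vertex $v$ in between, so $v$ occurs twice on the bounding walk.

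One deduction does not follow as written, though the repair is immediate and actually shortens the proof. From the non-interleaving of the edge-ends at $v$ into $C_1$ with those into $C_2$ you conclude that some edge into $C_1$ is rotation-consecutive with some edge into $C_2$; this fails if $G-v$ has a third component (or $v$ carries loops), since an edge-end into $C_3$ may sit between the $C_1$-block and the $C_2$-block in the rotation. Instead, drop the Jordan-curve ``no interleaving'' claim altogether: since $G$ is connected, every component of $G-v$ sends an edge to $v$, and as there are at least two components, some two rotation-consecutive edge-ends at $v$ enter \emph{different} components; relabel those components as $C_1,C_2$ and run your tracing argument verbatim, which only uses that $x_1$ and $x_2$ lie in distinct components of $G-v$. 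With that change the only topological ingredients left are the standard facts you already flag (an ear lies inside a single face; each corner at a vertex is traversed by exactly one face boundary walk), which matches the level of rigor the paper itself assumes by labelling the statement folklore. Your opening convention restricting to connected graphs without isolated vertices (and implicitly without loops and with at least three vertices) is also the right call: the statement is false for such degenerate plane multigraphs, and the paper only applies it to the connected, loopless, biconnected-by-construction graphs of Section~\ref{sec:nonsimple}.
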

The first step is to make $G$ connected by adding a minimal number of edges:
for the obtained $G'=(V',E')$,
where $V=V'$ and $E\subseteq E'$, we set $E_0=\emptyset$ and $E_1=E$.
Clearly, $G'$ is also planar.
In the following, we assume $G$ is connected.

We next fix some embedding of $G$ in $O(n)$ time \cite{Hopcroft:74}.
Define an \emph{edge ring} of $v$ to be the sequence of edges $e_1,\ldots,e_k$
incident to $v$ and ordered clockwise as implied by the embedding.
Note that if some $e_i$ is a loop, it appears in the edge ring of $v$ twice.

We compute $G'$ based on $G=(V,E)$ as follows.
Let $v$ be a vertex of $G$ with an edge ring $e^v_1,\ldots,e^v_k$.
Define $C(v)$ to be a directed cycle $v_1v_2\ldots v_{3k}$,
i.e., $C(v)$ contains directed edges $v_iv_{i+1}$ for each $i=1,\ldots,3k$,
where $v_{3k+1}=v_1$.
Define $V'=\bigcup_{v\in V} V(C(v))$ and $E_0=\bigcup_{v\in V} E(C(v))$.
We also let $\corr(v)=V(C(v))$ -- clearly for each $C(v)$, $V[C(v)]$ is strongly connected.

For any $uv=e\in E$, suppose the tail of $e$ appears as the $i$-th edge in the edge ring of $u$
and its head appears as the $j$-th edge in the edge ring of $v$.
This is also well-defined in the case when $e$ is a loop.
We include the edge $u_{3i}v_{3j}$ in $E_1$.
Clearly, the edge $u_{3i}v_{3j}$ goes from a vertex in a SCC of $(V',E_0)$
corresponding to $u$ to a vertex in a SCC corresponding to $v$.

Note that $H=(V',E_0\cup E_1)$ is connected and let it naturally inherit the embedding from $G$.
Observe that each face of $H$ either constitutes the interior of some cycle $C(v)$ and has
size at least $3$, or is an ``expanded'' version of one of the original faces
of $G$ and its bounding cycle has length at least $4$.
Hence, we conclude that each face of $H$ is of size at least $3$.

Additionally, note that each vertex of $H$ has degree at most $3$.
We finally obtain $G'$ from $H$ by triangulating each face $f=w_1,\ldots,w_k$, $k>3$, of $H$
by adding $k-3$ edges inside it,
so that no occurrence of a vertex of $f$ gets added more than $2$ edges.
This can be achieved e.g., by adding edges $w_1w_3, w_3w_k, w_kw_4, w_4w_{k-1},\ldots$ and so on.
These edges are included in $E'\setminus (E_0\cup E_1)$.
Observe that as the vertices of $H$ had degrees no more than $3$,
each vertex gains at most $6$ additional edges in $G'$.
Thus, $G'$ is triangulated and of constant degree.
It is also clear that $G'$ has no loops.

Clearly, the described step can be implemented in $O(n)$ time and
produces a graph $G'$ with $|V'|=O(|E|)=O(|V|)$ and $|E'|=O(|E|)$.
\subsection{Admitting a Simple Recursive Decomposition}
In this section we show how to transform a triangulated $G$ of bounded
degree into $G'$ such that we can expose a simple recursive decomposition of $G'$.

We first show how we build a (not necessarily simple) recursive decomposition of a planar digraph.
\begin{lemma}\label{lem:klein}
  Let $G=(V,E)$ be an undirected triangulated plane graph and let $n=|V|$.
  A recursive decomposition tree $\rtree(G)$ of $G$ can be computed in $O(n\log{n})$ time.
\end{lemma}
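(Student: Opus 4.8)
The plan is to build $\rtree(G)$ recursively using cycle separators, which is the standard approach underlying the decompositions of~\cite{Borradaile:2015, Klein:13}. The key tool is the following fact: every triangulated plane graph $G$ with $n$ vertices admits a Jordan curve separator that is a simple cycle of length $O(\sqrt n)$, passing through $O(\sqrt n)$ vertices, such that both the strictly-interior and the strictly-exterior parts contain at most $\frac{2}{3}n$ vertices. Such a cycle separator can be found in $O(n)$ time (Miller's algorithm). The recursion then takes $G$, finds such a separating cycle $\curv$, and forms the two children as the edge-induced subgraphs $\child_1(G)$ on the edges weakly inside $\curv$ and $\child_2(G)$ on the edges weakly outside $\curv$; each child inherits the planar embedding and each is connected (weakly connected in the directed case). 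We recurse on each child until the subgraph has $O(1)$ edges, at which point it becomes a leaf.

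First I would verify the two children cover $G$: $\child_1(G)\cup\child_2(G)=G$ because every edge is either weakly inside or weakly outside $\curv$ (edges on $\curv$ go to both). Then I would check the boundary bookkeeping. Defining $\bnd{H}$ inductively as in the statement, an inductive argument shows $\bnd{\child_i(G)} \subseteq \bnd{G} \cup V(\curv)$: a newly-introduced boundary vertex of a child must lie on the separating cycle used at the parent, or already be a boundary vertex inherited from above. Since $|\bnd{G}|$ at level $i$ satisfies a recursion $|\bnd{H}| = O(|\bnd{P}|) + O(\sqrt{|V(P)|})$ while $|V(P)| = O(n/(3/2)^{i})$ (the $2/3$-balance drives the vertex count down geometrically), one gets $|\bnd{H}| = O(\sqrt n / c^{i})$ for a suitable $c>1$; summing $|\bnd{H}|^2$ level by level gives $\sum_H |\bnd H|^2 = O(n\log n)$, and the tree has $O(\log n)$ levels and $O(n)$ leaves. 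The requirement that each $H$ have $O(1)$ holes, all of which carry $\bnd H$, follows because holes of $H$ correspond exactly to the separating cycles used on the path from the root to $H$ that cut $H$ off; with $O(\log n)$ ancestors one is tempted to say $O(\log n)$ holes, so the real argument (as in~\cite{Klein:13}) must be more careful — typically one alternates the separator's \emph{role} (sometimes separating by a cycle through a hole, sometimes through the interior) so that the hole count stays $O(1)$, and I would invoke that machinery here.

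The running time is $O(n\log n)$: at each level the total work to find all the cycle separators is $O(\text{sum of subgraph sizes at that level}) = O(n)$ by Corollary~\ref{cor:rtree-bounds}'s style of argument (each edge appears in $O(\log n)$ subgraphs total, so the per-level sum is $O(n)$ amortized across levels — more precisely, $\sum_{H\in\rtree(G)} |E(H)| = O(n\log n)$), and there are $O(\log n)$ levels, giving $O(n\log n)$ overall. I would also note that triangulation of $G$ is used precisely to guarantee the existence of short \emph{simple}-cycle separators with the balance property (Miller's theorem requires a 2-connected, or here triangulated, plane graph; Fact~\ref{fac:loop-triang} gives 3-connectivity).

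\medskip

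\noindent\textbf{The main obstacle} I expect is controlling the number of holes: a naive recursion produces subgraphs whose hole count grows with depth, since each ancestor's separating cycle can leave a new hole. Getting $O(1)$ holes per piece — which is essential for Definition~\ref{def:simple-decomp} and for everything downstream in Section~\ref{sec:dag-trans} — requires the more delicate separator-selection strategy of~\cite{Borradaile:2015, Klein:13}, where one maintains the invariant that each piece has a bounded number of holes and chooses the next separator (through a hole boundary, or a fresh cycle) so as to preserve that invariant while still achieving a size reduction. I would present this invariant explicitly and push the recursion through it, citing~\cite{Borradaile:2015, Klein:13} for the parts that are standard, and emphasizing that this lemma produces a decomposition that is \emph{not} yet simple (holes may coincide, may not be vertex-disjoint, and sibling edge sets may overlap) — making simplicity the business of the subsequent subsections.
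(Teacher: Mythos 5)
Your overall route is the same as the paper's: both defer the construction itself to the cycle-separator machinery of~\cite{Borradaile:2015,Klein:13} (the paper's proof says as much and only verifies the quantitative bounds), and your caveat about hole control matches how the paper handles it, namely by invoking the invariant-maintaining separator selection of those works. The gap is in the part you do argue yourself, the boundary bounds. The recursion you state, $|\bnd{H}|\le |\bnd{P}|+O\bigl(\sqrt{|V(P)|}\bigr)$ with $|V(P)|$ decaying geometrically, does \emph{not} give $|\bnd{H}|=O(\sqrt{n}/c^{i})$; it only gives the uniform bound $|\bnd{H}|=O(\sqrt{n})$, since the additive terms form a convergent geometric series on top of a boundary that is never shrunk. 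Geometric decay of the boundary requires that the separator periodically balance the \emph{boundary weight} itself, not just the vertex count; this is exactly the every-third-level alternation in Klein et al.\ that the paper's proof encodes in the recurrence $b(3j+2)\le \tfrac{2}{3}\,b(3j+1)+2\sqrt{2}\sqrt{n(3j+1)}$, with the other two levels only adding $O(\sqrt{n(\cdot)})$. You invoke role-alternation only for keeping the number of holes $O(1)$, but it is equally essential for the boundary decay, and without it your claimed bound fails.

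Second, even once $|\bnd{H}|=O(\sqrt{n}/c^{i})$ is established, your ``summing $|\bnd{H}|^2$ level by level'' step does not yield $\sum_{H\in\rtree(G)}|\bnd{H}|^2=O(n\log n)$: with $2^i$ pieces at level $i$ the level sum is $2^i\cdot n/c^{2i}$, which is only $O(n)$ per level when $c\ge\sqrt{2}$, and the construction does not guarantee that. The paper explicitly flags this (``note that this properties alone do not necessarily imply $\sum_{H\in\rtree(G)}|\bnd{H}|^2=O(n\log{n})$, as $c$ might be smaller than $2$'') and obtains the sum-of-squares bound from the detailed accounting of Borradaile et al.\ rather than from the decay estimate. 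So to repair your write-up you should (i) state the boundary-balancing separator steps and derive the decay from the three-level recurrences, and (ii) cite the sum-of-squares analysis of~\cite{Borradaile:2015} (or reproduce it) instead of deducing it from the per-level bound. The rest of your sketch, including the $O(n\log n)$ running time via $\sum_{H\in\rtree(G)}|E(H)|=O(n\log n)$ and the remark that simplicity of holes is deliberately not provided by this lemma, is consistent with the paper.
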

\begin{proof}
  The algorithm essentially follows from \cite{Borradaile:2015, Klein:13}.
  The decomposition tree that we use is in fact identical to the one used
  by Klein et al. \cite{Klein:13}.
  However, Klein et al. \cite{Klein:13} use advanced data structures to
  not build the decomposition tree explicitly, as their goal is to compute
  so-called \emph{$r$-divisions} in linear time.
  A trivial modification to their algorithm makes it possible to build
  the decomposition tree explicitly in $O(n\log{n})$ time (see Algorithm~1 in \cite{Klein:13}).

  The decomposition tree used by Borradaile et al. \cite{Borradaile:2015}, on the other hand,
  essentially contains every third level of the decomposition tree of \cite{Klein:13}
  and is thus of degree $>2$.
  However, their detailed analysis of the sum of sizes of all subgraphs of $\rtree(G)$
  and the sum of squares of boundary sizes, also
  easily applies to the recursive decomposition of \cite{Klein:13}.

  The algorithm of \cite{Klein:13} also guarantees that there exist constants $c,d>1$
  such that for any $H\in \rtree(G)$ on the $i$-th level of $\rtree(G)$, $|\bnd{H}|=O(\sqrt{n}/c^i)$
  and $|V(H)|=O(n/d^i)$.
  For completeness, we prove these bounds below.
  However, note that this properties alone do not necessarily imply $\sum_{H\in\rtree(G)}|\bnd{H}|^2=O(n\log{n})$,
  as $c$ might be smaller than $2$.

  The property $|V(H)|=O(n/d^i)$ is justified by the fact that the maximum number of
  vertices $n(i)$ of a level-$i$ subgraph satisfies $n(j)\leq n(j-1)$ and $n(3j+1)\leq \frac{2}{3}n(3j)+2\sqrt{2}\sqrt{n(3j)}$,
  for any valid $j$.
  Note that for $n(3j)\geq 100$, we have $n(3j+1)\leq \frac{19}{20}n(3j)$,
  and thus the desired property is satisfied with $d=\left(\frac{20}{19}\right)^{1/3}$.

  The maximum number of boundary vertices $b(i)$ of a level-$i$ subgraph in \cite{Klein:13}
  satisfies the following inequalities for any valid $j$.
  \begin{enumerate}
    \item $b(3j+1)\leq b(3j)+2\sqrt{2}\sqrt{n(3j)}$.
    \item $b(3j+2)\leq \frac{2}{3}b(3j+1)+2\sqrt{2}\sqrt{n(3j+1)}$.
    \item $b(3j+3)\leq b(3j+2)+2\sqrt{2}\sqrt{n(3j+2)}$.
  \end{enumerate}
  First note that $b(3j+k)=O(b(3j+2))$, for $k\in\{0,1\}$.
  Moreover, observe that there exist constants $q'>0,q>0,y>1$ such that
  $b(3j+2)\leq\frac{2}{3}b(3(j-1)+2)+q'\sqrt{n(3j)}\leq \frac{2}{3}b(3(j-1)+2)+q\sqrt{n}/y^j$.
  Now let $c\in (1,\min(3/2,y))$.
  Then we can prove $b(3j+2)\leq p\sqrt{n}/c^j$, for some $p>0$ as follows.

  Assume $b(2)\leq p\sqrt{n}$ for some $p>0$ to be chosen later.
  For $j>0$ we have $b(3j+2)\leq \frac{2}{3}b(3(j-1)+2)+q\sqrt{n}/y^j\leq \frac{2}{3}p\sqrt{n}/c^{j-1}+q\sqrt{n}/y^j\
  \leq \frac{2cp}{3}\sqrt{n}/c^j+q\sqrt{n}/y^j\leq(\frac{2cp}{3}+q)\sqrt{n}/c^j$.
  It suffices to pick $p$ such that $p\geq \frac{2cp}{3}+q$.
  \end{proof}

We refer to an important property of the algorithm of Klein et al. \cite{Klein:13}
that is used to build the recursive decomposition $\rtree(G)$.
Namely, in \cite{Klein:13}, the child subgraphs of $H\in\rtree(G)$ are obtained
by triangulating the holes of $H$ and separating $H$ using a simple cycle separator (see e.g., \cite{Miller:86}).
The following Lemma precisely describes the relation between the parent
and the children in the decomposition of \cite{Klein:13}.

\begin{lemma}\label{lem:cycle-separator}
  Let $G$ be a plane triangulated graph.
  Let $H$ be a non-leaf subgraph of $\rtree(G)$, where
  $\rtree(G)$ is a decomposition produced by Lemma~\ref{lem:klein}.
  There exists a Jordan curve $C_H$ (called a \emph{cycle separator})
  going through distinct vertices $v_1,\ldots,v_k\in V(H)$ (where $v_{k+1}=v_1$) such that:
  \begin{itemize}
    \item The children of $H$ are exactly the subgraphs of $H$ contained
      respectively weakly-inside or weakly-outside $C_H$.
    \item For each $i\in\{1,\ldots,k\}$, the part of $C_H$ going from $v_i$ to $v_{i+1}$ either
      follows some (undirected) edge $v_iv_{i+1}$ of $H$ (we call it an \emph{edge-part})
      or otherwise is strictly contained inside some hole $h$ of $H$ (we call it a \emph{hole-part}) of size $\geq 4$.
      Moreover the corners of $h$ connected by the hole part $v_iv_{i+1}$ are not neighboring
      on $h$.
    \item For each hole $h$ of $H$, there exists at most one hole-part of $C_H$
      inside $h$.
    \item For $i=1,2$, all vertices of $C_H$ belong to $\bnd{\child_i(H)}$
      and lie on a single hole $h^*$ of $\child_i(H)$.
      All holes $h\neq h^*$ of $\child_i(H)$ are also holes of $H$ and are located
      (weakly) on the same side of $C_H$ in $H$ as $\child_i(H)$.
    \item The set $E(\child_1(H))\cap E(\child_2(H))$ contains exactly
      the edges corresponding to the edge-parts of $C_H$,
      whereas $V(\child_1(H))\cap V(\child_2(H))=C_H$.
  \end{itemize}
  The decomposition algorithms of \cite{Borradaile:2015, Klein:13} can be modified to output all curves $C_H$
  as a by-product of computing $\rtree(G)$.
  The total running time of the algorithm remains $O(n\log{n})$.
\end{lemma}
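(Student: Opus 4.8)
The plan is to read off $C_H$ from the decomposition algorithm of \cite{Borradaile:2015,Klein:13} (Lemma~\ref{lem:klein}) and then verify the five bullets one at a time against the way that algorithm splits a node. Recall the split of a non-leaf $H\in\rtree(G)$: one first triangulates every hole of $H$ by inserting \emph{artificial} edges inside it --- I fix this triangulation to be a \emph{fan} from one corner of each hole, which does not affect the analysis in \cite{Borradaile:2015,Klein:13} --- obtaining, since the real faces of $H$ are triangles inherited from the triangulated $G$, a fully triangulated and hence biconnected plane graph $\widetilde H$ (Facts~\ref{fac:loop-triang}~and~\ref{fac:biconn}); one applies the simple cycle separator theorem to $\widetilde H$, getting a simple cycle $C$; and one sets $\child_1(H)=H[F_{\mathrm{in}}]$ and $\child_2(H)=H[F_{\mathrm{out}}]$, where $F_{\mathrm{in}},F_{\mathrm{out}}$ are the edges of $H$ lying weakly inside, respectively weakly outside, $C$. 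I take $C_H:=C$ as a Jordan curve and let $v_1,\dots,v_k$ be its (distinct) vertices in cyclic order; a fan triangulation adds no vertices, so every $v_i$ lies in $V(H)$. Bullet~1 is then exactly the recalled definition of the children.

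For the structure of $C_H$ (bullets~2--3): every edge of $C$ is either a genuine edge of $H$ --- an \emph{edge-part} --- or an artificial edge lying inside a unique hole $h$, and the maximal runs of artificial edges along $C$ are the \emph{hole-parts}; placing the $v_i$ between consecutive parts gives the decomposition of bullet~2. For bullet~3, $C$ is simple and visits each corner of $h$ at most once; since under a fan triangulation every artificial edge of $h$ is incident to the fan apex, $C$ uses at most two of them, so $C$ meets the interior of $h$ in at most one arc, i.e.\ at most one hole-part. For the ``non-neighbouring corners'' clause of bullet~2, if a hole-part joined corners $u,v$ adjacent on $h$, then $uv$ is a genuine edge of $H$ not already on $C$ (else $C$ would revisit $u$ or $v$), and swapping that hole-part for the edge $uv$ yields an equally good simple cycle separator with the same partition of $V(H)$ into the two sides (fans add no interior vertices, so no vertex changes side); after this normalization no hole-part joins adjacent corners, and a fortiori its hole has size $\ge 4$.

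For the two children (bullets~4--5): $\widetilde H$ is biconnected and $C$ is a simple cycle, so the subgraph of $\widetilde H$ weakly inside $C$ is connected, and deleting the artificial edges (each interior to a face region) preserves connectivity, so $\child_1(H)$ is weakly connected; symmetrically $\child_2(H)$. A vertex of $H$ lying strictly inside $C$ is incident only to $H$-edges that are weakly inside but not weakly outside $C$ (any $H$-edge leaving the open interior of $C$ would cross $C$), so it lies in $\child_1(H)$ but not $\child_2(H)$; symmetrically for strictly outside. Hence $V(\child_1(H))\cap V(\child_2(H))$ is exactly the vertex set of $C$, namely $C_H$, and an edge of $H$ is both weakly inside and weakly outside $C$ iff it lies on $C$, i.e.\ iff it is an edge-part --- this is bullet~5, and combined with the inductive definition $\bnd{\child_i(H)}\supseteq V(\child_i(H))\cap V(\child_{3-i}(H))$ it gives $C_H\subseteq\bnd{\child_i(H)}$. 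Finally, forming $\child_1(H)$ from $H$ fuses the exterior of $C$ together with the exterior-side portions of the holes pierced by hole-parts into a single new face $h^\ast$, whose bounding cycle alternates edge-parts of $C$ with sub-arcs of pierced holes and therefore carries all of $v_1,\dots,v_k$; the remaining faces of $\child_1(H)$ are precisely the holes of $H$ lying weakly inside $C$ and not pierced --- each of them a hole of $H$ on the same side of $C_H$ as $\child_1(H)$. Symmetrically for $\child_2(H)$, which is bullet~4.

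The algorithmic addendum is immediate: $C$ is already produced by \cite{Borradaile:2015,Klein:13}, so it suffices to additionally tag each of its edges as genuine or artificial (and, if artificial, with its hole), perform the adjacent-corner normalization, and emit $C_H$ as the resulting alternating sequence of edge-parts and hole-parts; this is $O(|C|)$ extra work per node and leaves the $O(n\log n)$ running time of Lemma~\ref{lem:klein} unchanged. The step I expect to need the most care is bullet~4 --- checking that cutting along a cycle that genuinely pierces several (possibly non-simple) holes fuses their exterior sides, together with the exterior of $C$, into one new face whose boundary carries all of $C_H$ --- since this is exactly the property a \emph{simple} recursive decomposition will later rely on.
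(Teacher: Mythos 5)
Your overall route is the intended one (the paper itself gives no written proof: it asserts these properties as a description of the algorithm of Klein et al.\ and Borradaile et al., obtained by triangulating the holes of $H$ and cutting along a simple cycle separator of the triangulated graph), but your verification has a concrete gap exactly at the point where the raw separator cycle fails to have the stated structure. With a fan triangulation the separator $C$ may use \emph{two} artificial edges of the same hole $h$, necessarily meeting at the fan apex $a$, which is a corner of $h$ and hence a vertex of $H$. Then the portion of $C$ inside $h$ is one arc of the closed region of $h$ but it is \emph{not} one hole-part in the sense of the lemma: it is two hole-parts ($v_i=a$ must appear among the $v_1,\dots,v_k$, and neither artificial edge is strictly inside $h$ once you insist the parts end at vertices of $V(H)$), so your argument for the third bullet does not go through. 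Worse, at such a vertex $a$ the curve enters and leaves through the interior of the same hole, so \emph{all} genuine $H$-edges at $a$ lie weakly on one side of $C$; since the children are edge-induced, $a$ then belongs to only one of $\child_1(H),\child_2(H)$, which breaks both $V(\child_1(H))\cap V(\child_2(H))=C_H$ and ``all vertices of $C_H$ belong to $\bnd{\child_i(H)}$'' for the raw curve. Your proof of that equality only establishes the inclusion $V(\child_1(H))\cap V(\child_2(H))\subseteq C_H$; the reverse inclusion needs every curve vertex to be incident to $H$-edges on both sides, which is false in the situation above. What is missing is a further normalization: replace each maximal run of artificial edges of $C$ inside a hole by a single arc strictly inside that hole (skipping the intermediate corner(s) such as $a$), and then re-verify that this rerouting changes neither child (the bypassed region of the hole contains no vertices or edges of $H$, and the skipped corner's genuine edges all lie on one side), after which bullets~2--5 can be checked for the rerouted curve. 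Your adjacent-corner swap is a different, essentially vacuous normalization under a fan triangulation and does not address this.

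A second, related gap is that this section exists precisely because the holes of $H$ in the decomposition of Lemma~\ref{lem:klein} may be \emph{non-simple} and may share vertices, yet your construction and case analysis are carried out for simple holes only: ``fan from one corner'' is not well defined (and can create parallel or degenerate edges feeding into the cycle-separator step) when the bounding walk of $h$ repeats vertices, and the claims ``artificial edges join non-neighboring corners'' and ``a hole-part joining adjacent corners can be swapped for the genuine boundary edge $uv$'' conflate corners (occurrences on the bounding walk) with vertices. To make the proof complete you would have to either argue the bullets directly for the triangulation actually used by \cite{Klein:13} in the presence of non-simple holes, or justify that your chosen triangulation is admissible there, and in both cases carry the rerouting argument above through the non-simple case. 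The algorithmic addendum and the first bullet are fine as you state them.
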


Let $G$ be a plane embedded, triangulated digraph with bounded degree.
Moreover, assume $G$ does not contain any loops.
Let $n=|V(G)|$ and let $\rtree(G)$ be a recursive decomposition build as in Lemma~\ref{lem:klein}
(and possessing the properties summarized by Lemma~\ref{lem:cycle-separator}).
We now show how to use $\rtree(G)$ to build a \emph{simple} (see Definition~\ref{def:simple-decomp})
recursive decomposition $\rtree(G')$ of a transformed graph $G'=(V',E')$.

The decomposition $\rtree(G')$ has the same shape as $\rtree(G)$ in the following sense.
\begin{itemize}
  \item There is a $1$-to-$1$ correspondence $\ext:\rtree(G)\to\rtree(G')$ between
    the subgraphs of $\rtree(G)$ and the subgraphs of $\rtree(G')$.
  \item For the root subgraph $G\in\rtree(G)$, $\ext(G)=G'$ is the root of $\rtree(G')$.
    Moreover, $H\in\rtree(G)$ has children iff $\ext(H)$ has children in $\rtree(G')$ and $\ext(\child_i(H))=\child_i(\ext(H))$ for $i=1,2$.
\end{itemize}
The graph $G'$ has the following additional properties which
make the transformation $G\to G'$ comply with the requirements
posed on a transformation step at the beginning of this section.
\begin{enumerate}
  \item Each vertex $v\in V'$ is mapped to a vertex $\orig(v)\in V(G)$.
  \item Denote by $E_0$ the subset of edges of $G'$ connecting vertices
    of $u,v$ of $G'$ such that $\orig(u)=\orig(v)$.
    There is a 1-to-1 mapping $\corr$ between the vertices $G$ and
    the strongly connected components of $(V',E_0)$.
  \item For any $uv=e\in E$ there is an edge $u'v'=p(e)\in E'\setminus E_0$ such that
    $\orig(u')=u$ and $\orig(v')=v$ and $p$ is injective.
\end{enumerate}

\subsubsection{Overview of the Construction}
For each $H\in\rtree(G)$, $H$ is conceptually extended, so that:
\begin{itemize}
  \item Each vertex $v\in V(H)$ is expanded to a directed cycle of sub-vertices of $\ext(H)$.
    The cycle is further split into smaller faces, corresponding
    to the expansions of $v$ in the descendants of $\ext(H)$ in $\rtree(G')$.
  \item Each edge $e\in E(H)$ is expanded to a \emph{undirected cycle} (i.e., a cycle,
    but when ignoring the direction of edges) in $\ext(H)$.
    The cycle is further split into smaller faces, corresponding
    to the expansions of $e$ in the descendants of $\ext(H)$.
  \item The subgraphs $\child_1(\ext(H))$ and $\child_2(\ext(H))$ are
    obtained from $\ext(H)$ using the same (in a sense) cycle separator
    as used to obtain $\child_1(H)$ and $\child_2(H)$ of $H$.
  \item There is a $1$-to-$1$ correspondence between the faces of $H$
    and so-called \emph{original} faces of $\ext(H)$.
    In particular, there is a $1$-to-$1$ correspondence
    between the holes of $H$ and $\ext(H)$.
  \item By introducing \emph{vertex faces} and \emph{edge faces}, we separate the original faces 
    (and thus the original \emph{holes} as well) of $\ext(H)$,
    so that they do not share vertices.
  \item The size of each original hole of $\ext(H)$ is larger by a constant factor than the
    size of the corresponding hole of $H$.
  \item The leaf subgraphs of $\ext(H)$ are of constant size as well.
\end{itemize}
Figure~\ref{fig:extended} shows how a subgraph $H\in\rtree(G)$ can be possibly expanded.
Figure~\ref{fig:extended-split} shows how obtaining the children of $H\in\rtree(G)$
with a cycle separator $C_H$ translates to obtaining the children of $\ext(H)$.

The remaining part of this section is devoted to providing the technical
details of the construction sketched above.

\subsubsection{Details of the Construction}
Let $H\in\rtree(G)$.
We first establish some structural requirements posed on the subgraphs of $\rtree(G')$.
Each vertex $v\in V(H)$ corresponds to a
simple cycle $Q^H(v)=v_1,\ldots,v_{k_v}$ of vertices of $V(\ext(H))$ (ordered clockwise).
For each of these $v_i$, we set $\orig(v_i)=v$.
Moreover, assuming $v_{k_v+1}=v_1$, we have a directed
edge $v_iv_{i+1}\in E(\ext(H))$ for each $i\in [1,k_v]$.
These edges are called \emph{border vertex edges} of $v$ (with respect to $H$).

Let $e_l,\ldots,e_d$ be the edge ring of $v$ in $H$.
For $i=1,\ldots,d$, there is a contiguous subsequence $Q^H_{e_i}(v)$ of
vertices of $Q^H(v)$.
The fragments $Q^H_{e_1}(v),\ldots,Q^H_{e_d}(v)$ are disjoint and put
on $Q^H(v)$ in the same order as the edges $e_1,\ldots,e_d$ lie in the edge-ring
of $v$.
However, there might be some other vertices in $Q^H(v)$ put between
any two neighboring fragments $Q^H_{e_i}(v)$ and $Q^H_{e_{i+1}}(v)$.
We call $\bigcup_{i=1}^d Q^H_{e_i}(v)$ the \emph{edge vertices} of $Q^H(v)$.
An edge vertex that is either first or last in some $Q^H_{e_i}(v)$
is called a \emph{border edge vertex} of $v$.
The remaining vertices $X^H(v)=Q^H(v)\setminus \bigcup_{i=1}^d Q^H_{e_i}(v)$ are called the \emph{extra vertices}
of $Q^H(v)$.
If $d=1$, then we require $X^H(v)\neq\emptyset$, so that the clockwise
order of $Q^H_{e_1}(v)$ is well-defined.
The extra vertices of $Q^H(v)$ can only be adjacent (in undirected sense) to other
vertices of $Q^H(v)$.

There may be some additional edges between the non-adjacent (but distinct) vertices
of $Q^H(v)$.
All such edges are always embedded inside the cycle $Q^H(v)$ and do not cross.
They partition the inside of all these cycles into faces, which
are collectively called the \emph{vertex faces} of $\ext(H)$.
All edges connecting pairs of vertices of $Q^H(v)$ are called \emph{vertex edges}
of $v$ (with respect to $H$).

Let $uv=e\in E(H)$. Then, the fragments $Q^H_e(u)=u_1,\ldots,u_l$ and $Q^H_e(v)=v_1,\ldots,v_l$ have
equal lengths and $l\geq 2$. Recall that $u\neq v$, as we assume that $G$ has no loops.

We have two edges $u_1v_l$ and $u_lv_1$, which are called the \emph{border edge edges} of $e$.
Additionally,
for each $i=2,\ldots,l-1$, there are two parallel edges $u_iv_{l-i+1}\in E(\ext(H))$.
These edges are called the \emph{internal edge edges} of $e$ (with respect to $H$).
The $l-1$ rectangular faces $u_iv_{l-i+1}v_{l-i}u_{i+1}$ (for $i=1,\ldots,l-1$)
and $l-2$ faces $u_iv_{l-i+1}$ of length $2$ are collectively
called the \emph{edge faces} of $e$ (with respect to $H$).
Consult Figure~\ref{fig:extended} for better understanding of how edge edges are constructed.

All the remaining faces of $\ext(H)$ that are neither vertex faces nor edge faces
are the \emph{original faces} of $\ext(H)$.
This name is justified by the fact that there is a 1-to-1 correspondence between
such faces and faces of $H$.

\begin{figure}
  \centering
  \includegraphics{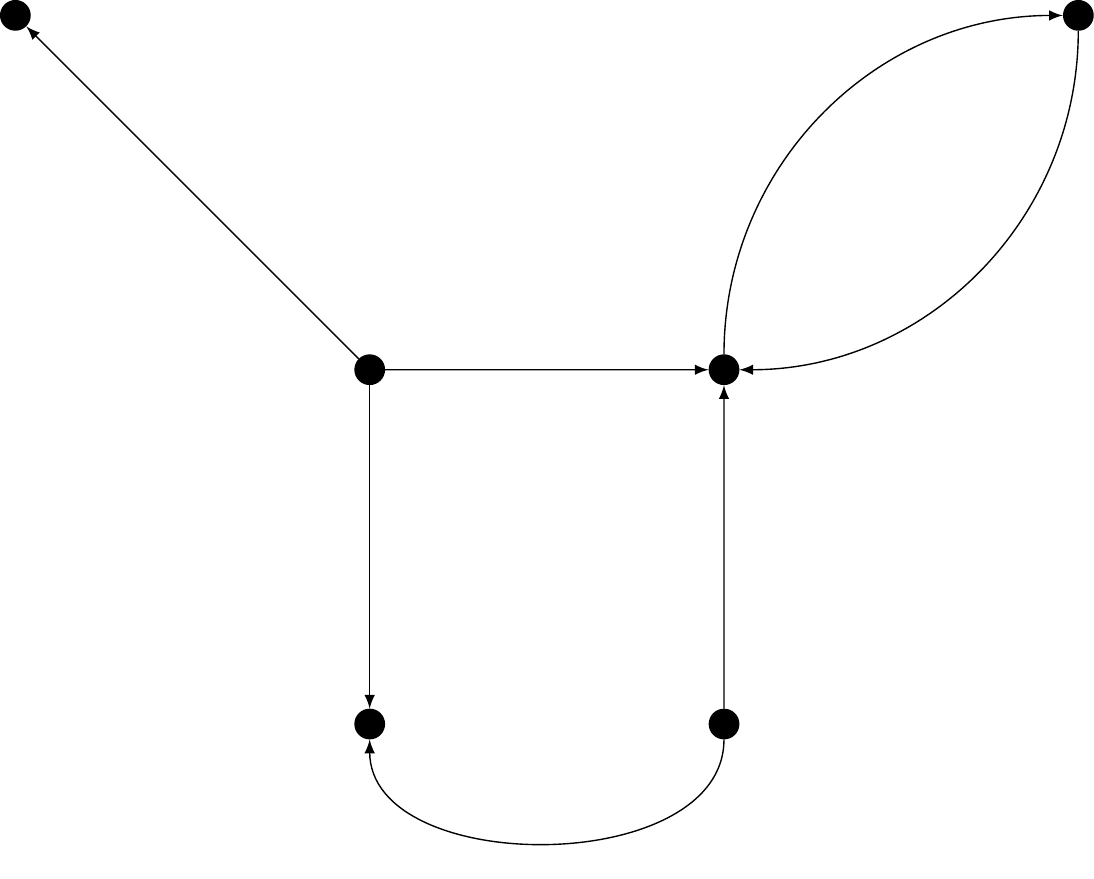}
  \includegraphics{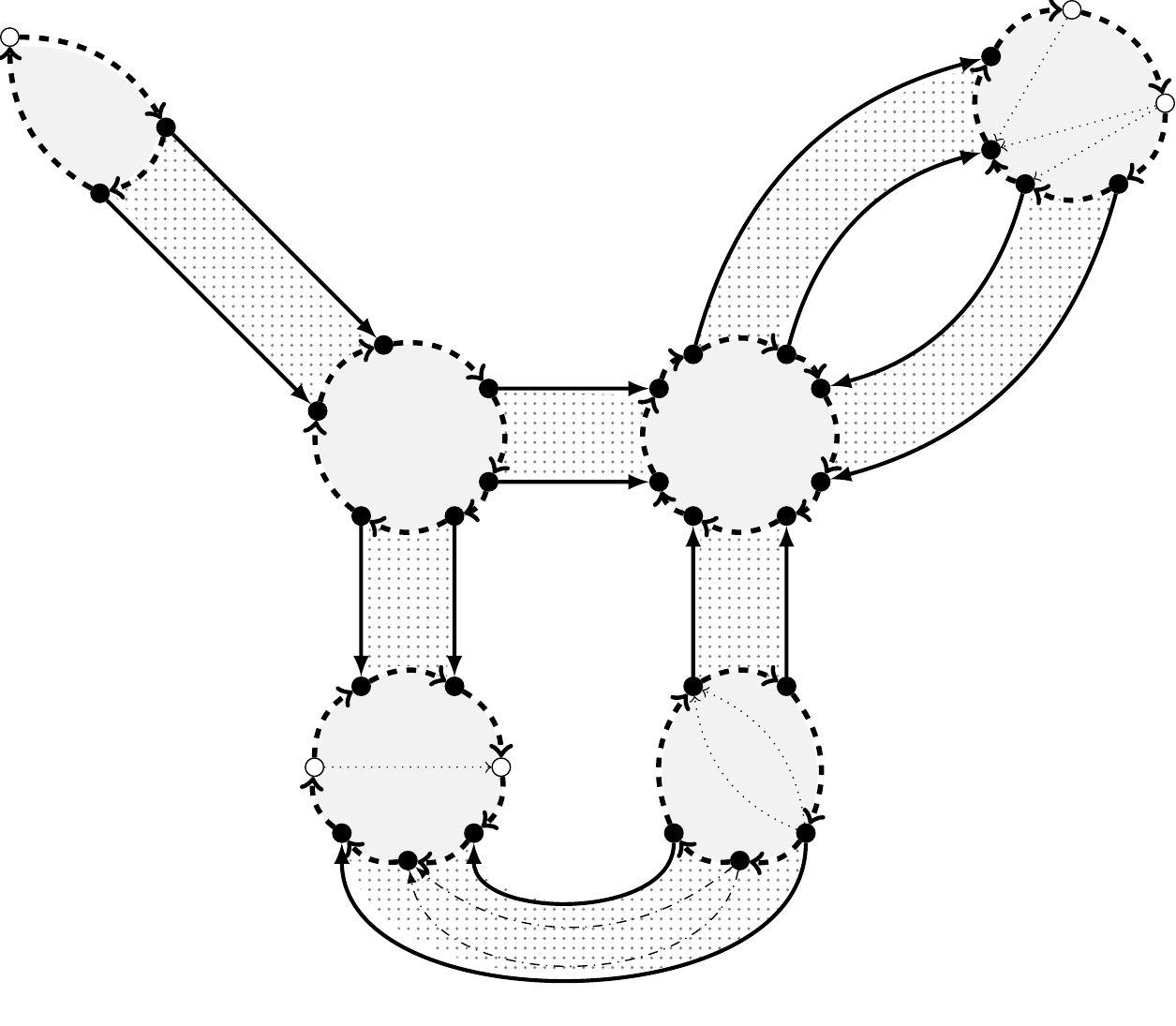}
\caption{A directed embedded plane graph $H$ (top) and an example extended graph $\ext(H)$ (bottom).
The non-filled vertices are the extra vertices.
Solid edges in the bottom picture are the border edge edges of $\ext(H)$, thick dashed edges
are the border vertex edges, thin dashed edge is a non-border edge edge, whereas
the dotted edges are non-border vertex edges.
The white faces are the original faces, gray faces are the vertex faces, while the dotted
faces are edge faces.
}
\label{fig:extended}
\end{figure}


\begin{lemma}\label{lem:hole-disjoint}
No two original faces of $\ext(H)$ share vertices.
\end{lemma}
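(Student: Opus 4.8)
The plan is to reduce the statement to a purely local claim: \emph{every vertex $w$ of $\ext(H)$ is incident to at most one original face of $\ext(H)$}. Granting this, the lemma follows at once, since two distinct original faces sharing a vertex would violate it. The proof of the local claim will be a short case analysis on $w$, reading the cyclic sequence of faces around $w$ directly off the construction and counting how many of them are original (as opposed to vertex faces or edge faces).

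First I would record the local picture around an arbitrary $w$. Every $w$ lies on a unique cycle $Q^H(v)$ with $v=\orig(w)$, and $w$ is exactly one of three kinds: an \emph{extra vertex} of $v$; an \emph{interior edge vertex}, i.e.\ an interior vertex of some fragment $Q^H_e(v)$ (these fragments have length $\ge 2$, so ``interior'' is well defined and disjoint from ``border''); or a \emph{border edge vertex}, the first or last vertex of some $Q^H_e(v)$. The edges of $\ext(H)$ at $w$ are: the two cycle edges of $Q^H(v)$ at $w$; the vertex chords of $v$ at $w$, all embedded strictly inside $Q^H(v)$; and, when $w$ is an edge vertex of $e$, the edge edges of $e$ at $w$, embedded strictly outside $Q^H(v)$ in the $e$-ladder region (the two parallel internal edge edges if $w$ is interior; the single border edge edge of $e$ if $w$ is a border edge vertex; none if $w$ is an extra vertex, since extra vertices are adjacent only to vertices of $Q^H(v)$). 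I would also note two incidence facts that drop straight out of the construction's definitions: a face of $\ext(H)$ is a vertex face of $v'$ only if every vertex on its bounding cycle lies on $Q^H(v')$, and it is an edge face of $e'=u'v'$ only if every such vertex lies on $Q^H_{e'}(u')\cup Q^H_{e'}(v')$.

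The core step is the case analysis. On the inside of $Q^H(v)$ the chords at $w$ cut the region into a block of faces, each bounded solely by vertex edges and border vertex edges of $v$, hence all vertex faces and none original. On the outside: (i) if $w$ is an extra vertex there are no outside edges at $w$, so the outside is a single face $F$; since $F$ touches $w$, which lies on $Q^H(v)$ but is not an edge vertex of any edge, $F$ is neither a vertex face of $v$ (those lie strictly inside $Q^H(v)$, while $F$ is outside) nor a vertex face of any $v'\ne v$ nor an edge face of any $e'$, so $F$ is original, and it is the only original face at $w$. (ii) If $w$ is an interior edge vertex of $e$, the two parallel rungs at $w$ split the outside into three faces, each being one of the ladder faces of $e$ (the two rectangles and the bigon listed in the construction), hence an edge face, so $w$ lies on no original face. (iii) If $w$ is a border edge vertex of $e$, the single border edge edge at $w$ splits the outside into two faces: one is a rectangular ladder face of $e$ (an edge face), and the other lies outside the $e$-ladder and is incident to that border edge edge, whose two endpoints lie on the two \emph{different} cycles $Q^H(u)$ and $Q^H(v)$, so by the two incidence facts it is neither a vertex face nor an edge face, hence original; again exactly one original face at $w$. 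In all cases $w$ meets at most one original face, which proves the local claim and the lemma. (The one degenerate configuration, a degree-one vertex $v$ of $H$, only forces the two border edge vertices of its single fragment to lie on the \emph{same} original face, which is consistent with ``at most one'', and no such $v$ exists when $\rtree(G)$ is built from a triangulated graph.)

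I expect the main obstacle to be bookkeeping rather than anything substantive: making sure the three face types are genuinely mutually exclusive and that the faces labelled ``original'' in cases (i) and (iii) cannot secretly be vertex faces or edge faces. This is exactly what the two incidence facts above buy us: a face that meets $w$ from outside $Q^H(v)$, or that meets a border edge edge joining two distinct cycles, always has a boundary vertex that no vertex face and no edge face is allowed to have. Nothing beyond the planar embedding of $\ext(H)$ and the construction's definitions is used; in particular, none of the reachability or Monge-property machinery of the earlier sections is needed here.
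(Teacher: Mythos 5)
Your proof is correct, but it takes a different route from the paper's. The paper argues face-by-face and edge-by-edge: every edge on the boundary of an original face of $\ext(H)$ is a border vertex edge or a border edge edge, and the face on the other side of such an edge is a vertex face or an edge face, respectively; hence no two original faces are edge-adjacent, from which the paper concludes they share no vertices. You instead prove the stronger, vertex-local statement that every vertex $w$ of $\ext(H)$ lies on at most one original face, by reading off the cyclic arrangement of faces around $w$ in the three cases (extra vertex, interior edge vertex, border edge vertex): the corners inside $Q^H(\orig(w))$ are all vertex faces, and the outside corners are either all edge faces of the relevant ladder or exactly one original face. This buys something real: strictly speaking, non-edge-adjacency of two faces does not by itself preclude their sharing a vertex (they could meet at $w$ with vertex/edge faces interleaved in the rotation around $w$), so the paper's last step implicitly relies on precisely the kind of around-the-vertex analysis you carry out explicitly; your version makes that step airtight at the cost of a longer case analysis, and it uses nothing beyond the construction and the embedding, exactly as the paper's does.

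One small point worth tightening: in case (iii) your two ``incidence facts'' alone do not exclude the possibility that the face on the far side of the border edge edge of $e$ is an edge face of $e$ itself, since both endpoints of that edge do lie in $Q^H_e(u)\cup Q^H_e(v)$. What closes this is your accompanying observation that the edge faces of $e$ are exactly the enumerated rectangles and bigons inside the $e$-ladder, while the face in question lies on the outer side of the border edge edge; you say this, but it should be flagged as the operative reason rather than the incidence facts.
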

\begin{proof}
  Each original face $f$ of $\ext(H)$ consists of border vertex edges and border edge edges only.
Note that a border vertex edge on $f$ has a vertex face on its other side.
Similarly, a border edge edge on $f$ has an edge face on its other side.
Thus, the faces adjacent to $f$ are all vertex- or edge faces
and hence an original face $f'\neq f$ is not adjacent to $f$.
We conclude that $f$ and $f'$ do not share vertices.
\end{proof}
\begin{lemma}\label{lem:hole-simple}
All faces of $\ext(H)$ are simple.
\end{lemma}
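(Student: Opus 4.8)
The plan is to prove the stronger statement that $\ext(H)$ is biconnected; the lemma then follows immediately from Fact~\ref{fac:biconn}. To do this I would exhibit $\ext(H)$ as a union of small \emph{gadgets}, each biconnected, glued together along a connected pattern in such a way that consecutive gadgets overlap in at least two vertices — a configuration that forces the union to remain biconnected.

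Concretely, for every vertex $v\in V(H)$ let the \emph{vertex gadget} $\Gamma_v$ be the simple cycle $Q^H(v)$ together with all vertex edges of $v$, and for every $uv=e\in E(H)$ let the \emph{edge gadget} $\Gamma_e^{+}$ be the border and internal edge edges of $e$ together with the two fragments $Q^H_e(u),Q^H_e(v)$ and the border vertex edges joining their consecutive vertices. Then $\ext(H)=\bigcup_{v\in V(H)}\Gamma_v\cup\bigcup_{e\in E(H)}\Gamma_e^{+}$: every vertex of $\ext(H)$ has a well-defined $\orig(\cdot)$, so it lies in exactly one $\Gamma_v$, and every edge is a vertex edge, a border vertex edge, or an internal/border edge edge. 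Each gadget contains a Hamiltonian cycle — $Q^H(v)$ spans $\Gamma_v$, while for $\Gamma_e^{+}$ one walks along the rail $Q^H_e(u)=u_1\ldots u_l$, crosses by the border edge edge $u_lv_1$, returns along the rail $Q^H_e(v)=v_1\ldots v_l$, and closes by the border edge edge $v_lu_1$ (these are two distinct edges since $l\ge 2$) — so each gadget is biconnected.

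Next I would verify the overlap structure: $\Gamma_e^{+}$ meets $\Gamma_u$ exactly in the path $Q^H_e(u)$ and meets $\Gamma_v$ exactly in $Q^H_e(v)$, each of which has at least two vertices; two distinct vertex gadgets are vertex-disjoint, and two distinct edge gadgets are vertex-disjoint because the fragments $Q^H_e(v),Q^H_{e'}(v)$ are disjoint for $e\ne e'$. Hence the gadget incidence pattern is precisely the (undirected) incidence graph of $H$, which is connected since $H$ is weakly connected. I then conclude with the routine fact that if $B_1,B_2$ are biconnected and $|V(B_1)\cap V(B_2)|\ge 2$ then $B_1\cup B_2$ is biconnected (for any vertex $z$, at least one of $B_1-z,B_2-z$ is biconnected, the other is connected, and they still share a vertex, so $(B_1\cup B_2)-z$ is connected): processing the gadgets in the order of a spanning-tree traversal of the incidence graph, each newly added gadget shares at least two vertices with the biconnected union built so far, so the union stays biconnected, and in the end $\ext(H)$ is biconnected. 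By Fact~\ref{fac:biconn} all of its faces are simple.

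I expect the main obstacle to be not a single deep step but the careful matching of this gadget description to the explicit construction of $\ext(H)$ — in particular confirming that the rails and rungs named above are indeed present (so the Hamiltonian cycle of $\Gamma_e^{+}$ exists despite the internal double rungs and the reversed $v$-indexing) and that the fragments always have length at least two — together with the degenerate cases, namely $H$ a single vertex or a leaf subgraph of $\rtree(G)$, where $\ext(H)$ has constant size and simplicity of its faces is checked by hand. As an alternative one could argue face by face using the three kinds of faces of $\ext(H)$: vertex faces are polygons cut from the simple cycle $Q^H(v)$ by non-crossing chords; edge faces are the explicitly listed quadrilaterals $u_iv_{l-i+1}v_{l-i}u_{i+1}$ and digons $u_iv_{l-i+1}$, whose vertices are pairwise distinct because the $u_i$'s (resp.\ the $v_j$'s) are distinct on their cycle and $\orig$ separates the two sides; and original faces are bounded only by border vertex edges and border edge edges, as noted in the proof of Lemma~\ref{lem:hole-disjoint}, so the remaining work would be to show their boundary walks repeat no vertex, which one gets by tracing them against the boundary walks of the corresponding faces of $H$ and using that distinct edge-occurrences map to distinct border edge edges, hence to distinct border edge vertices.
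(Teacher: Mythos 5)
Your proposal is correct and follows essentially the same route as the paper: the paper also proves Lemma~\ref{lem:hole-simple} by showing $\ext(H)$ is biconnected and invoking Fact~\ref{fac:biconn}, using exactly the same cycles (the vertex cycles $Q^H(v)$ with their chords, and the cycle through $Q^H_e(u)$, $Q^H_e(v)$ and the two border edge edges) and the connectivity of $H$ to glue everything together. The only difference is bookkeeping: the paper phrases the gluing via the equivalence relation of biconnectivity on edges, whereas you use the standard union-of-biconnected-graphs-sharing-two-vertices lemma over a gadget decomposition.
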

\begin{proof}
  We prove that $\ext(H)$ is biconnected, or equivalently,
  that any two distinct edges $e,e'\in E(\ext(H))$ (ignoring their directions)
  lie on a simple cycle (in an undirected sense).
  By Fact~\ref{fac:biconn}, this will imply that $\ext(H)$ has only simple faces.
  Recall that biconnectivity is an equivalence relation on edges.

  Clearly, if $e,e'$ are the edge edges of the same edge $e^*\in E(H)$, they lie
  on a simple cycle consisting only of edge edges of $e^*$.

  The border vertex edges of some $v$ all lie on a simple cycle by their
  definition.
  Moreover, each non-border vertex edge of $v$ is embedded inside this cycle
  and thus is biconnected with all the border vertex edges.
  This establishes the biconnectivity of vertex edges of a single vertex $v$.

  Note that the border edge edges of any $uv=e\in E(H)$ lie on a simple cycle
  that includes both the border vertex edges of $u$ and $v$.
  Thus, all edge edges of $e$ are biconnected with both the vertex edges of $u$
  and vertex edges of $v$.

  Observe that as $H$ is connected, the above is enough to conclude
  that $\ext(H)$ is biconnected.
\end{proof}
It is clear that $E_0=\{u'v'\in E':\orig(u)=\orig(v)\}$ contains only the
vertex edges of $E'$ and thus there is a 1-to-1 correspondence between
the vertices of $V$ and the SCCs of $(V',E_0)$.
For $e\in E$, we include in $E_1$ any edge edge $e'$ of $e$ and set $p(e)=e'$.

We now have to guarantee that $\rtree(G')$ has all properties of a recursive
decomposition, and to this end we pose additional requirements on $\ext(H)$.
Let $D$ be a constant bounding the degree of $G$.
\begin{enumerate}[label=B.\arabic*]
  \item If $v\in V(H)$ and $v\notin\bnd{H}$, then $\bnd{\ext(H)}\cap Q^H(v)=\emptyset$.\label{prop:b1}
  \item If $v\in\bnd{\ext(H)}$ and $v$ is an edge vertex of $Q^H(\orig(v))$, then
    $v$ is a border edge vertex of $Q^H(\orig(v))$.\label{prop:b2}
  \item For $v\in V(H)$, the set $X^H(v)$ has size at most $\deg_G(v)-\deg_H(v)+1$.\label{prop:b3}
  \item The holes of $\ext(H)$ are exactly the original faces of $\ext(H)$
    corresponding to the holes of $H$.\label{prop:b4}
  \item If $H$ is a leaf subgraph of $\rtree(G)$, then every edge vertex of $\ext(H)$
    is a border edge vertex and $\ext(H)$ has no non-border vertex edges.\label{prop:b5}
\end{enumerate}
\begin{lemma}\label{lem:border-size}
  The above requirements imply that:
  \begin{enumerate}
    \item\label{ei:sumsz} $\sum_{H\in\rtree(G)}|\bnd{\ext(H)}|^2=O(n\log{n})$.
    \item\label{ei:geom} For any $H\in\rtree(G)$ of the $i$-th level of $\rtree(G)$, $|\bnd{\ext(H)}|=O(\sqrt{n}/c^i)$,
      where $c=O(1)$.
    \item Each leaf subgraph of $\ext\rtree(G)$ has $O(1)$ edges.
  \end{enumerate}
\end{lemma}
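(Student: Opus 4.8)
The plan is to show that the extension blows up boundary sizes only by a constant factor, i.e.\ $|\bnd{\ext(H)}|=O(|\bnd{H}|)$ for every $H\in\rtree(G)$, and then to combine this with the estimates already available for $\rtree(G)$: the bound $\sum_{H\in\rtree(G)}|\bnd{H}|^2=O(n\log n)$ guaranteed by the definition of a recursive decomposition, and the per-level bound $|\bnd{H}|=O(\sqrt n/c^i)$ established in Lemma~\ref{lem:klein}. Since $\rtree(G')$ is built to have the same shape as $\rtree(G)$ through the correspondence $\ext$ (with $\ext(\child_i(H))=\child_i(\ext(H))$), the level of $\ext(H)$ in $\rtree(G')$ equals the level of $H$ in $\rtree(G)$, so items~\ref{ei:sumsz} and~\ref{ei:geom} will follow immediately by multiplying the two known estimates by the appropriate power of the constant factor.

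First I would establish the constant-factor bound. Every vertex of $\ext(H)$ lies on exactly one expansion cycle $Q^H(v)$ for some $v\in V(H)$, namely $v=\orig$ of that vertex; and by requirement~\ref{prop:b1}, no vertex of $Q^H(v)$ can belong to $\bnd{\ext(H)}$ unless $v\in\bnd{H}$. Hence $\bnd{\ext(H)}\subseteq\bigcup_{v\in\bnd{H}}Q^H(v)$, and it suffices to bound $|Q^H(v)\cap\bnd{\ext(H)}|$ for a fixed $v\in\bnd{H}$ by a constant. Split $Q^H(v)$ into its edge vertices and its extra vertices $X^H(v)$. By requirement~\ref{prop:b2}, an edge vertex of $Q^H(v)$ can lie in $\bnd{\ext(H)}$ only if it is a border edge vertex, and each of the $\deg_H(v)$ fragments $Q^H_e(v)$ contributes at most two border edge vertices (its first and last element), for a total of at most $2\deg_H(v)$; the extra vertices contribute at most $|X^H(v)|\le\deg_G(v)-\deg_H(v)+1$ by requirement~\ref{prop:b3}. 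Adding these, $|Q^H(v)\cap\bnd{\ext(H)}|\le\deg_H(v)+\deg_G(v)+1\le 2D+1=O(1)$, where $D$ is the constant bounding the degree of $G$, and summing over $v\in\bnd{H}$ gives $|\bnd{\ext(H)}|\le(2D+1)\,|\bnd{H}|$. Items~\ref{ei:sumsz} and~\ref{ei:geom} then follow as described above.

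For the last item I would invoke requirement~\ref{prop:b5}: in a leaf subgraph $H$ we have $|V(H)|,|E(H)|=O(1)$; since every edge vertex of $\ext(H)$ is forced to be a border edge vertex, each fragment $Q^H_e(\cdot)$ collapses to just its two endpoints, so every edge of $H$ contributes only $O(1)$ edge vertices together with $O(1)$ edge edges and edge faces. Likewise, by requirement~\ref{prop:b3} and $\deg_G(v)\le D$, each cycle $Q^H(v)$ has $O(1)$ vertices, and by requirement~\ref{prop:b5} the only vertex edges of $\ext(H)$ are the $|Q^H(v)|=O(1)$ border vertex edges of these cycles. Summing over the $O(1)$ vertices and edges of $H$ bounds $|E(\ext(H))|$ by a constant, which is item~3.

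I do not expect a genuine obstacle: the statement is a bookkeeping consequence of requirements~\ref{prop:b1}--\ref{prop:b5} once the per-level shape correspondence between $\rtree(G)$ and $\rtree(G')$ is in place. The one place that needs care is the count inside a single cycle $Q^H(v)$ --- checking that the only vertices there which can enter $\bnd{\ext(H)}$ are the $O(1)$ border edge vertices (via~\ref{prop:b2}) and the $O(1)$ extra vertices (via~\ref{prop:b3}), and that the degenerate case $\deg_H(v)=1$, where $X^H(v)$ is required to be nonempty, is still absorbed by the bound $|X^H(v)|\le\deg_G(v)-\deg_H(v)+1$. Everything else is substitution of the known $\rtree(G)$ bounds.
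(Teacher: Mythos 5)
Your proposal is correct and follows essentially the same route as the paper: bound each $v\in\bnd{H}$'s contribution to $\bnd{\ext(H)}$ by $2\deg_H(v)$ border edge vertices (via~\ref{prop:b2}) plus $|X^H(v)|\le\deg_G(v)-\deg_H(v)+1$ extra vertices (via~\ref{prop:b3}), conclude $|\bnd{\ext(H)}|\le(2D+1)|\bnd{H}|$, and plug this into the known bounds for $\rtree(G)$, handling leaves via~\ref{prop:b5}. The only cosmetic difference is that you make the use of~\ref{prop:b1} explicit, which the paper leaves implicit.
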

\begin{proof}
  Let $H\in\rtree(G)$.
  By~\ref{prop:b2} and~\ref{prop:b3}, each $v\in\bnd{H}$ contributes at most $2\deg_H(v)$ edge vertices of $v$
  and $|X^H(v)|\leq \deg_G(H)-\deg_H(v)+1$ extra vertices of $v$ to $\bnd{\ext(H)}$.
  Since $\deg_H(v)\leq \deg_G(v)\leq D$, $|\bnd{\ext(H)}|\leq (2D+1)|\bnd{H}|=O(|\bnd{\ext(H)}|)$.
  This implies both properties~\ref{ei:sumsz}~and~\ref{ei:geom}.

  Let $H$ be a leaf subgraph of $\rtree(G)$. Recall that $H$ has $O(1)$ edges.
  By~\ref{prop:b3}, \ref{prop:b5} and
  the structural requirements posed on $\rtree(G')$,
  for each $v\in V(H)$, $Q^H(v)$ has size at most $2\deg_H(v)+|X^H(v)|=O(1)$ and
  there are only $\sum_{v\in V(H)}|Q^H(v)|=O(1)$ vertex edges.
  Thus, $\ext(H)$ has at most $O(1)$ edges.
\end{proof}

Now we show how to actually build the subgraphs of $\rtree(G')$ so that all
the described requirements are met and that $\rtree(G')$ has all the needed
properties of a recursive decomposition.
This, combined with Lemmas~\ref{lem:hole-disjoint}, \ref{lem:hole-simple}~and~\ref{lem:border-size}
will finish the proof.

We will start with the root of $\rtree(G)$ and gradually move down the tree $\rtree(G)$,
introducing new edge vertices and edge edges of $G'$ once they are needed.

Each subgraph $\ext(H)$ is \emph{created} once. The boundary vertices of $\ext(H)$
do not change after it is created, but the graph $\ext(H)$ itself might
grow in a controlled way.
We start by creating the root $\ext(G)$ as a unique graph such that:
\begin{itemize}
\item $|Q^G_e(v)|=2$ for each $e\in E(G)$ such that $v$ is an endpoint of $e$.
\item $X^G(v)=\emptyset$ if $\deg_G(v)>1$ and $|X^G(v)|=1$ otherwise.
\end{itemize}
We set $\bnd{\ext(G)}=\emptyset$.
Clearly, these satisfies the requirements~\ref{prop:b1}-\ref{prop:b4}
as $\bnd{G}=\emptyset$ and there are no holes in the root subgraph of $\rtree(G)$.
Moreover, we can consider $\ext(G)$ as a leaf of $\rtree(G')$ at this
point, as its children have not been created yet.
The property~\ref{prop:b5} that characterized leaves is indeed satisfied
for $\ext(G)$ at this point.

Each subsequent step of the construction
will consist of taking a current leaf subgraph of $\rtree(G')$ and
creating its children.
We will proceed in such a way that the properties~\ref{prop:b1}-\ref{prop:b4}
will remain satisfied for all created subgraphs at all times.
On the other hand, the property~\ref{prop:b5} will cease to be satisfied
when subgraph stops to be a leaf of $\rtree(G')$.

Now let $H$ be a non-leaf subgraph of $\rtree(G)$ such that $\ext(H)$ has been created,
as opposed to its children.
Let $C_H=v_1,\ldots,v_k$ (where $v_{k+1}=v_1$) be the cycle separator
of Lemma~\ref{lem:cycle-separator}.

For each edge part $v_iv_{i+1}$ of $C_H$ corresponding to an edge $e\in E(H)$, we first create new edge vertices
$b_i,a_{i+1}$ by placing $b_i$ in $Q^H_e(v_i)$ between the only two (see~\ref{prop:b5})
vertices of $Q^H_e(v_i)$ on the cycle $Q^H(v_i)$.
Similarly, we put $a_{i+1}$ in $Q^H_e(v_{i+1})$ between the only
two vertices of $Q^H_e(v_{i+1})$ on the cycle $Q^H(v_{i+1})$.
The newly created vertices are connected with a pair of new edge edges of $e$
of the same direction as the neighboring two edge edges of $e$ in $\ext(H)$.
The insertion of the new vertices and edges
is propagated to all the ancestors of $H$.
Note that the insertion of new edge vertices does not break properties
\ref{prop:b1}-\ref{prop:b5} of $\ext(H)$ and its ancestors (recall that $H$ is non-leaf).

Let $v_iv_{i+1}$ be a hole part of $C_H$ inside the hole $h$ of $H$.
By Lemma~\ref{lem:cycle-separator}, this hole part goes between two distinct, non-neighboring corners $\angle_i,\angle_{i+1}$
of the hole $h$.
Let $j\in\{i,i+1\}$.
Let $e_1^j,e_2^j$ be the two, possibly equal
edges of $\angle_j$.
If $e_1^j\neq e_2^j$, pick $e_1^j$ to be the edge following $e_2^j$ in the
cycle bounding $h$ (this is well-defined, since $h$ has at least $4$ edges).
\begin{itemize}
\item If $e_1^j=e_2^j$, then $\deg_H(v_j)=1$ and for $j=i$ ($j=i+1$, resp.)
we set $b_i$ ($a_{i+1}$, resp.) to be any element of $X^H(v_j)$ -- recall that $X^H(v_j)\neq\emptyset$ in this case.
\item Observe that if $e_1^j\neq e_2^j$, then $e_1^j$ and $e_2^j$ cannot both correspond to edge parts of $C_H$.

    If $e_1^i$ lies on $C_H$, then set $b_i$ to be the first
    edge vertex of $Q^H_{e_2^i}(v_i)$.
    Otherwise, set $b_i$ to be the last edge vertex of $Q^H_{e_1^i}(v_i)$.

    Similarly, If $e_1^{i+1}$ lies on $C_H$, then set $a_{i+1}$ to be the first
    edge vertex of $Q^H_{e_2^{i+1}}(v_{i+1})$.
    Otherwise, set $a_{i+1}$ to be the last edge vertex of $Q^H_{e_1^{i+1}}(v_{i+1})$.
\end{itemize}

The vertices $a_i,b_i$ are defined in such a way that
$a_i\neq b_i$, $a_i,b_i\in Q^H(v_i)$ and $a_i$ and $b_i$
are not next to each other on $Q^H(v_i)$.
To see that, consider the parts $v_{i-1}v_i$, $v_iv_{i+1}$ of $C_H$.
If both parts are edge parts, then $a_i$ and $b_i$ are both edge vertices
and are not border edge vertices.
If both parts are hole parts, then $a_i$ and $b_i$ are both last
edge vertices of distinct edges incident to $v_i$.
If exactly one of $v_{i-1}v_i$ and $v_iv_{i+1}$ (say, $v_{i-1}v_i$) is an edge-part
of $C_H$, then $a_i$ is an edge vertex but not a border edge vertex of $Q^H(v_i)$.
If $\deg_H(v_i)=1$, then $b_i\in X^H(v_i)$ and otherwise $b_i$ is a border
edge vertex of some other edge of $Q^H(v_i)$.

For each $i$, let $P_{a_ib_i}$ ($P_{b_ia_i}$, resp.)
be the clockwise path $a_i\to b_i$ ($b_i\to a_i$, resp.) on $Q^H(v_i)$.
We add edges $e_{a_ib_i}=a_ib_i$ and $e_{b_ia_i}=b_ia_i$ inside the cycle $Q^H(v_i)$
of $\ext(H)$,
so that the interiors of directed cycles
$P_{a_ib_i}e_{b_ia_i}$, $P_{b_ia_i}e_{a_ib_i}$ do not intersect.
Note the added edges do not violate the embedding (see~\ref{prop:b5}).
Similarly as before, these edges are added to all ancestors of $\ext(H)$,
consistently with the embedding.

Now, we define a Jordan curve $C'$ going through vertices $a_1,b_1,a_2,b_2,\ldots,a_k,b_k$:
each part $a_i$ and $b_i$ goes ``between'' the added edges $e_{a_ib_i}$ and $e_{b_ia_i}$
(in other words, strictly inside the face bounded by these two edges),
whereas each part from $b_i$ to $a_{i+1}$ is either inside an original hole
of $H$ or goes between a pair of newly-added edge edges.
We define $\child_i(\ext(H))$ ($i=1,2$) to be the part of $\ext(H)$ weakly inside (outside, resp.) $C'$
iff $\child_i(\ext(H))$ is the subgraph of $H$ weakly inside (outside, resp.) $C_H$.
Consult Figure~\ref{fig:extended-split} for better understanding.

\begin{figure}
  \centering
  \includegraphics{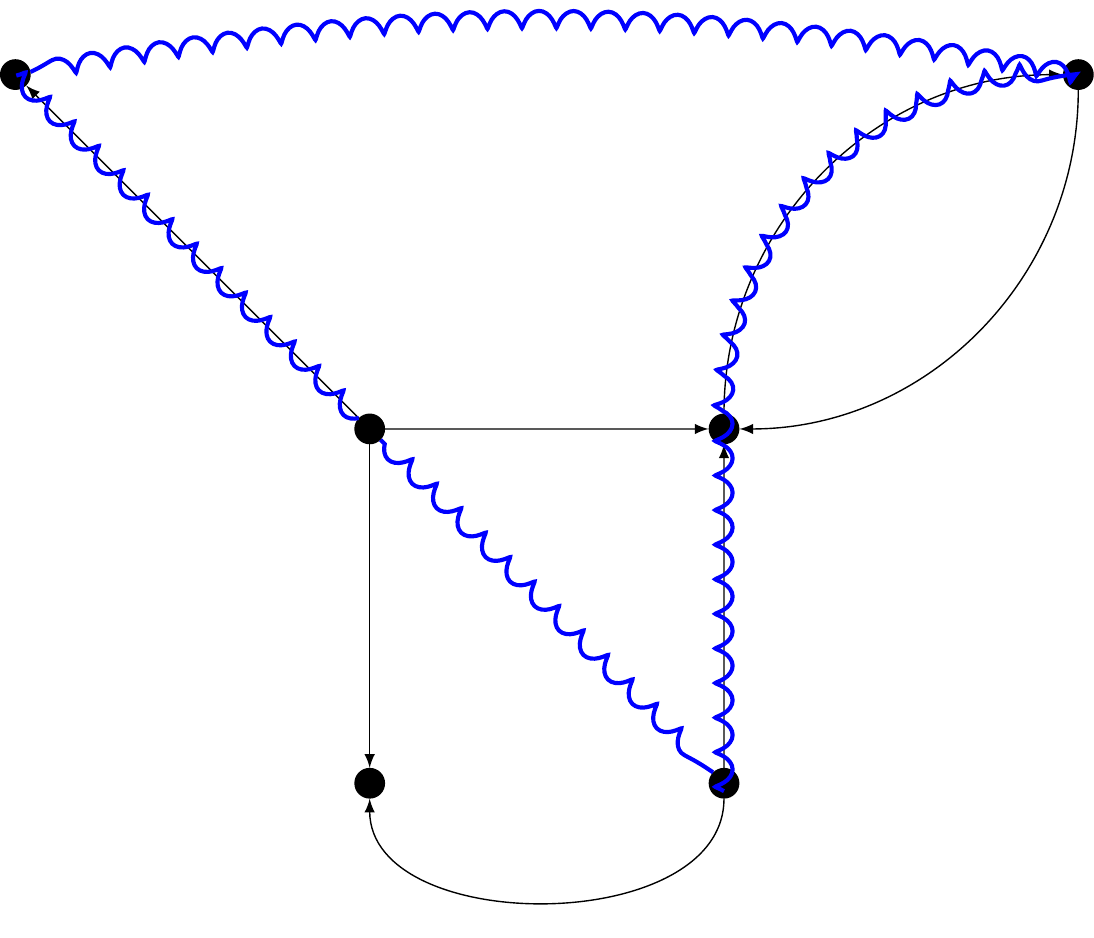}
  \includegraphics{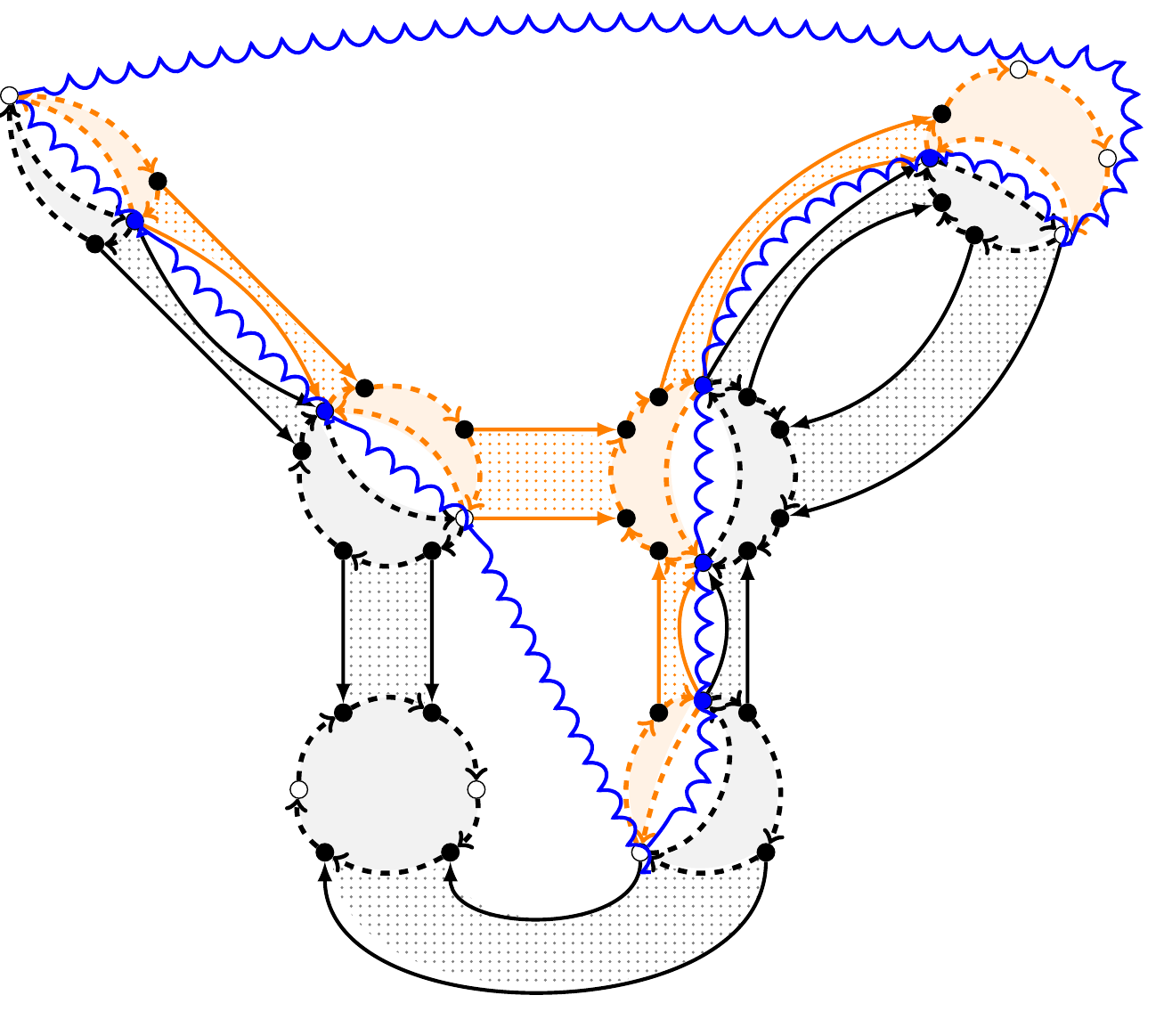}
\caption{The top picture shows a graph $H$ and a cycle separator $C_H$ (blue).
The bottom picture shows how the corresponding cycle separator $C'$ splits
$\ext(H)$ into two parts. The orange part is $\child_1(\ext(H))$, while
the rest -- $\child_2(\ext(H))$.
Compare to Figure~\ref{fig:extended} to see which edges were added.
The blue vertices are the newly-introduced edge vertices.
The white vertices represent extra vertices in either $\child_1(\ext(H))$ or $\child_2(\ext(H))$.}
\label{fig:extended-split}
\end{figure}

It is easy to see that $\child_i(\ext(H))$ satisfies all structural
properties posed on $\ext(\child_i(H))$ at the beginning of this section.
Observe that $\bnd(\child_i(\ext(H)))$ contains the vertices
$\{a_1,b_1,\ldots,a_k,b_k\}$ and those vertices of $\bnd{\ext(H)}$ that
lie strictly inside (outside resp.) $C'$.
The definition of $C'$ also guarantees, that $\child_1(\ext(H))$ and
$\child_2(\ext(H))$ have no edges in common.

Note that there are only two cases when for some $v_i\in C_H$ and some $j\in\{1,2\}$,
$\deg_H(v_i)=\deg_{\child_j(H)}(v_i)$.
The first one is when $v_{i-1}v_i$ and $v_{i}v_{i+1}$ are both edge parts
and additionally the corresponding edges are neighboring in the edge ring of $v_i$.
But then both $a_i$ and $b_i$ are non-border edge vertices of $v_i$ and
thus the partition with $C'$ does not introduce extra vertices of $v_i$ in
either $\child_1(\ext(H))$ or $\child_2(\ext(H))$ that were not extra vertices
in $\ext(H)$.
The second case is when (wlog.) $v_{i-1}v_i$ is an edge part of $C_H$ corresponding to an
edge of the cycle bounding $h$ and $v_{i}v_{i+1}$
is a hole part going through the corner $\angle_i$ of $h$ incident to $e$.
In this case $\deg_{\child_{3-j}(H)}(v_i)$ is $1$ and $\child_{3-j}(\ext(H))$ has
one new extra vertex of $v_i$.
However, no new extra vertices of $v_i$ are introduced in $\child_j(\ext(H))$.

Below we sketch how the properties
\ref{prop:b1}-\ref{prop:b5} are inductively satisfied for $\child_i(\ext(H))$
immediately after its creation and that they still hold later in the process.
\begin{itemize}
\item[\ref{prop:b1}]
Recall that the boundary $\bnd{\child_i(\ext(H))}$ is defined, by Lemma~\ref{lem:cycle-separator},
as $V(\child_i(\ext(H)))\cap (C'\cup\bnd{\ext(H)})$
    and thus is contained in $C'\cup\bnd{\ext(H)}$.
The property~\ref{prop:b1} for $\child_i(\ext(H))$ follows from the property~\ref{prop:b1}
for $\ext(H)$ and the fact
that for each $i$ we have $\orig(a_i)=\orig(b_i)=v_i$, i.e.,
$\orig(C')=C_H$ and $\bnd{\child_i(H)}\subseteq\bnd{H}\cup C_H$.
\item[\ref{prop:b2}]
When $\child_i(\ext(H))$ is created, it does
not contain edge edges that are not border edge edges.
Also, when we create the descendants of $\child_i(\ext(H))$,
only non-border edge edges are added to $\child_i(\ext(H))$
and the boundary $\bnd{\child_i(\ext(H))}$
is never extended.
\item[\ref{prop:b3}]
It is sufficient to show it immediately after creating $\child_j(\ext(H))$,
as no new extra vertices are added to $\child_j(\ext(H))$ later on.
The vertices $a_i,b_i$ are defined in such a way that the size of $X^{\child_i(H)}(v_i)$
can be greater than $X^H(v_i)$ by at most $1$ and moreover this can only happen if
$\deg_{\child_j(H)}(v_i)<\deg_{H}(v_i)$.
Thus $|X^{\child_j(H)}(v_i)|-|X^H(v_i)|\leq \deg_{H}(v_i)-\deg_{\child_j(H)}(v_i)$.
    By adding this inequality to $|X^H(v_i)|\leq \deg_G(v_i)-\deg_{H}(v_i)+1$,
which follows
from the property~\ref{prop:b3} for $H$, we get the property~\ref{prop:b3}
for $\child_i(H)$.
\item[\ref{prop:b4}]
  Without loss of generality, assume that $\child_i(\ext(H))$ is the subgraph
of $\ext(H)$ weakly inside $C'$.
Note that each part of $C'$ corresponding to the hole
part of $C_H$ inside a hole $h$, lies inside the original face of $\ext(H)$
    corresponding to $h$, by \ref{prop:b4} for $\ext(H)$.
On the other hand, for all holes $h$ that do not contain hole parts of $C_H$,
$h$ is entirely on one side of $C_H$, and thus the original face of $\ext(H)$
corresponding to $h$ is on one side of $C'$.
By combining this fact with property~\ref{prop:b4} of $\ext(H)$
and Lemma~\ref{lem:cycle-separator},
we get that the vertices $\bnd{\ext(H)}\cap V(\child_i(\ext(H)))$
lie either on the original faces of $\child_i(\ext(H))$ inside $C'$
corresponding to the holes of both $H$ and $\child_i(H)$ inside $C_H$,
or on the ``external'' original face of $\child_i(\ext(H))$, which
corresponds to the ``external'' hole $h^*$ of $\child_i(H)$.
The vertices $\bnd{\child_i(\ext(H))}\setminus\bnd{\ext(H)}$ are clearly
contained in $C'$ and thus lie on the ``external'' original face of $\child_i(\ext(H))$
as well.
\item[\ref{prop:b5}]
The new edges and vertices are introduced in $\ext(H)$ in such a way
that a child is given a separate pair of edge edges for each edge
of $\child_i(H)$.
Moreover, for each $v_i$, $\child_i(\ext(H))$ inherits
a contiguous part of the cycle going through $Q^H(v_i)$ and
a single new edge $a_ib_i$ or $b_ia_i$ that closes the cycle.
For all vertices $v\notin C_H$, no non-border vertex edges of $v$
are added to $\ext(H)$ and thus no such edges can exist in $\child_i(\ext(H))$.
\end{itemize}

\bibliographystyle{plain}
\bibliography{references}

\end{document}